\definecolor{darkgreen}{rgb}{0.2,0.8,0.2}
\definecolor{lightred}{RGB}{255,225,225}
\definecolor{lightgreen}{RGB}{225,255,225}
\definecolor{lightblue}{RGB}{225,255,255}
\definecolor{lightgray}{gray}{0.8}
\newcommand{\assume}[1]{\textcolor{red}{\bf\boldmath{#1}}}
\newcommand{\defcolor}{lightgreen}
\newcommand{\lemcolor}{lightblue}
\newcommand{\thmcolor}{lightred}
\newcommand{\rmkcolor}{lightgray}
\newcommand{\coloredbox}[2]{\fcolorbox{black}{#1}{#2}}
\newcommand{\defbox}[1]{\coloredbox{\defcolor}{#1}}
\newcommand{\lembox}[1]{\coloredbox{\lemcolor}{#1}}
\newcommand{\thmbox}[1]{\coloredbox{\thmcolor}{#1}}
\newcommand{\rmkbox}[1]{\coloredbox{\rmkcolor}{#1}}
\newcommand{\mybox}[4]{%
  \put#1{\makebox(0,0){%
      \coloredbox{#4}{%
        \begin{minipage}{4cm}
          #2\\
          \mbox{}\hfill{\scriptsize (#3 space)}
        \end{minipage}
      }}}
}
\newcommand{\myspacebox}[4]{%
  \put#1{\makebox(0,0){%
      \coloredbox{#4}{%
        \begin{minipage}{#2}
          \centering
          \mbox{}\\#3 space\\\mbox{}
        \end{minipage}
      }}}
}
\definecolor{metriccolor}{RGB}{225,255,255}
\definecolor{completecolor}{RGB}{195,255,255}
\definecolor{vectorcolor}{RGB}{255,225,225}
\definecolor{normedcolor}{RGB}{255,195,195}
\definecolor{innercolor}{RGB}{255,135,135}
\definecolor{hilbertcolor}{RGB}{212,162,212}
\newcommand{\completebox}[2]{\mybox{#1}{#2}{complete}{completecolor}}
\newcommand{\normedbox}[2]{\mybox{#1}{#2}{normed vector}{normedcolor}}
\newcommand{\innerbox}[2]{\mybox{#1}{#2}{inner product}{innercolor}}
\newcommand{\hilbertbox}[2]{\mybox{#1}{#2}{Hilbert}{hilbertcolor}}
\declaretheoremstyle[
  headfont=\normalfont\bfseries,
  notefont=\normalfont\bfseries\itshape,
  bodyfont=\normalfont]{mydef}
\declaretheoremstyle[
  headfont=\normalfont\bfseries,
  notefont=\normalfont\bfseries\itshape,
  bodyfont=\normalfont\itshape]{mythm}
\declaretheoremstyle[
  headfont=\normalfont\itshape,
  notefont=\normalfont\itshape,
  bodyfont=\normalfont]{myrmk}
\newcommand{\mydeclaretheoremshaded}[3]{
  \declaretheorem[
    #1,
    shaded={
      textwidth=0.99\textwidth,
      bgcolor=#2,
      rulecolor=black,
      rulewidth=0.5pt}]{#3}
}
\newcommand{\mydeclaretheoremstyle}[3]{
  \mydeclaretheoremshaded{numberlike=theorem, style=#1}{#2}{#3}
}
\newcommand{\BLM}{Babu\v{s}ka--Lax--Milgram}
\newcommand{\BNB}{Banach--Ne\v{c}as--Babu\v{s}ka}
\newcommand{\CS}{Cauchy--Schwarz}
\newcommand{\HB}{Hahn--Banach}
\newcommand{\LM}{Lax--Milgram}
\newcommand{\LMC}{Lax--Milgram--C\'ea}
\newcommand{\MP}{Milman--Pettis}
\newcommand{\RF}{Riesz--Fr\'echet}
\newcommand{\ZF}{Zermelo--Fraenkel}
\newcommand{\thref}[1]{\ref{#1} ({\em \nameref{#1}})}
\newcommand{\threfc}[2]{\ref{#1} ({\em \nameref{#1}}, \uline{#2})}
\newcommand{\proofpar}[1]{{\bf\boldmath{#1}.}}
\newcommand{\proofparskip}[1]{\medskip\noindent\proofpar{#1}}
\newcommand{\FEM}{Finite Element Method}
\newcommand{\vectorspace}{space}
\newcommand{\Normedvectorspace}{Normed vector space}
\newcommand{\normedvectorspace}{normed vector space}
\newcommand{\software}[1]{{\sf #1}}
\newcommand{\coq}{\software{Coq}}
\newcommand{\st}{{\,|\,}}
\newcommand{\ie}{i.e.}
\newcommand{\eg}{e.g.}
\newcommand{\half}{{\frac{1}{2}}}
\newcommand{\p}{\partial}
\renewcommand{\equiv}{\Leftrightarrow}
\newcommand{\Equiv}{\quad\Longleftrightarrow\quad}
\renewcommand{\implies}{\Rightarrow}
\newcommand{\Implies}{\quad\Longrightarrow\quad}
\newcommand{\conj}{\land}
\newcommand{\Conj}{\quad\conj\quad}
\newcommand{\disj}{\lor}
\newcommand{\Disj}{\quad\disj\quad}
\newcommand{\calB}{\mathcal{B}}
\newcommand{\calF}{\mathcal{F}}
\newcommand{\calL}{\mathcal{L}}
\newcommand{\calS}{\mathcal{S}}
\newcommand{\matC}{{\mathbb C}}
\newcommand{\matK}{{\mathbb K}}
\newcommand{\matN}{{\mathbb N}}
\newcommand{\matR}{{\mathbb R}}
\newcommand{\matRbar}{\overline{\matR}}
\newcommand{\matRplus}{{\matR^+}}
\newcommand{\eps}{\varepsilon}
\newcommand{\fhi}{\varphi}
\newcommand{\Closure}[1]{\overline{#1}}
\newcommand{\pOmega}{\p\Omega}
\newcommand{\Ball}{\calB}
\newcommand{\ClosedBall}{\Ball^c}
\newcommand{\Sphere}{\calS}
\newcommand{\cball}[3]{\ClosedBall_{#1} (#2, #3)}
\newcommand{\sphere}[3]{\Sphere_{#1} (#2, #3)}
\newcommand{\unitcB}{\ClosedBall_1}
\newcommand{\unitS}{\Sphere_1}
\newcommand{\Ep}{{E^\prime}}
\newcommand{\Fp}{{F^\prime}}
\newcommand{\Hp}{{H^\prime}}
\newcommand{\Np}{{N^\prime}}
\newcommand{\ap}{{a^\prime}}
\newcommand{\lp}{{l^\prime}}
\newcommand{\np}{{n^\prime}}
\newcommand{\pp}{{p^\prime}}
\newcommand{\qp}{{q^\prime}}
\newcommand{\up}{u^\prime}
\newcommand{\upp}{{u^{\prime\prime}}}
\newcommand{\vp}{{v^\prime}}
\newcommand{\vpp}{{v^{\prime\prime}}}
\newcommand{\wpr}{w^\prime}
\newcommand{\wpp}{w^{\prime\prime}}
\newcommand{\xp}{{x^\prime}}
\newcommand{\lambdap}{{\lambda^\prime}}
\newcommand{\lambdapp}{{\lambda^{\prime\prime}}}
\newcommand{\fhip}{{\fhi^\prime}}
\newcommand{\fhipp}{{\fhi^{\prime\prime}}}
\newcommand{\epsp}{\eps^\prime}
\newcommand{\Hh}{H_h}
\newcommand{\uh}{u_h}
\newcommand{\vh}{v_h}
\newcommand{\XxX}{{X \times X}}
\newcommand{\ExE}{{E \times E}}
\newcommand{\EpxE}{{\Ep \times E}}
\newcommand{\matKxE}{{\matK \times E}}
\newcommand{\FxF}{{F \times F}}
\newcommand{\matKxF}{{\matK \times F}}
\newcommand{\ExF}{{E \times F}}
\newcommand{\ExFxExF}{{(\ExF)\times(\ExF)}}
\newcommand{\matKxExF}{{\matK\times(\ExF)}}
\newcommand{\GxG}{{G \times G}}
\newcommand{\supX}[1]{{\sup (#1 (X))}}
\newcommand{\supXf}{\supX{f}}
\newcommand{\supXmf}{\supX{(- f)}}
\newcommand{\supXlambdaf}{\supX{(\lambda f)}}
\newcommand{\maxX}[1]{{\max (#1 (X))}}
\newcommand{\maxXf}{\maxX{f}}
\newcommand{\maxXmf}{\maxX{(- f)}}
\newcommand{\infX}[1]{{\inf (#1 (X))}}
\newcommand{\infXf}{\infX{f}}
\newcommand{\minX}[1]{{\min (#1 (X))}}
\newcommand{\minXf}{\minX{f}}
\newcommand{\floor}[1]{\left\lfloor #1 \right\rfloor}
\newcommand{\ceil}[1]{\left\lceil #1 \right\rceil}
\newcommand{\Span}[1]{{{\rm span} (#1)}}
\newcommand{\Line}[1]{\Span{\{#1\}}}
\newcommand{\Ker}[1]{{\ker (#1)}}
\newcommand{\FSp}[2]{{\calF \left(#1, #2\right)}}
\newcommand{\LSp}[2]{{\calL \left(#1, #2\right)}}
\newcommand{\LcSp}[2]{{\calL_c \left(#1, #2\right)}}
\newcommand{\Ld}[1]{{\calL_2 \left(#1\right)}}
\newcommand{\FXE}{\FSp{X}{E}}
\newcommand{\FXExFXE}{{\FXE \times \FXE}}
\newcommand{\matKxFXE}{{\matK \times \FXE}}
\newcommand{\FEF}{\FSp{E}{F}}
\newcommand{\LEF}{\LSp{E}{F}}
\newcommand{\LFG}{\LSp{F}{G}}
\newcommand{\LEG}{\LSp{E}{G}}
\newcommand{\LEFxLFG}{{\LEF \times \LFG}}
\newcommand{\LEmatK}{\LSp{E}{\matK}}
\newcommand{\LExEmatK}{\LSp{\ExE}{\matK}}
\newcommand{\LcEE}{\LcSp{E}{E}}
\newcommand{\LcEF}{\LcSp{E}{F}}
\newcommand{\LcFG}{\LcSp{F}{G}}
\newcommand{\LcEG}{\LcSp{E}{G}}
\newcommand{\LcEmatK}{\LcSp{E}{\matK}}
\newcommand{\Lcp}[1]{\LcSp{#1}{#1^\prime}}
\newcommand{\LcEEp}{\Lcp{E}}
\newcommand{\LcHH}{\LcSp{H}{H}}
\newcommand{\LcHHp}{\Lcp{H}}
\newcommand{\LcHpH}{\LcSp{H^\prime}{H}}
\newcommand{\LdE}{\Ld{E}}
\newcommand{\Identity}{{\rm Id}}
\newcommand{\identity}[1]{\Identity_{#1}}
\newcommand{\idE}{\identity{E}}
\newcommand{\idH}{\identity{H}}
\newcommand{\vertiii}[1]{
  {\left\vert\kern-0.25ex\left\vert\kern-0.25ex\left\vert #1
    \right\vert\kern-0.25ex\right\vert\kern-0.25ex\right\vert}}
\newcommand{\innerprod}[3]{\left(#2, #3\right)_{#1}}
\newcommand{\bra}[1]{\left< #1 \right|}
\newcommand{\ket}[1]{\left| #1 \right>}
\newcommand{\braket}[2]{\left< #1 | #2 \right>}
\newcommand{\dualprod}[3]{\braket{#2}{#3}_{#1^\prime,#1}}
\newcommand{\norm}[2]{\left\|#2\right\|_{#1}}
\newcommand{\tnorm}[2]{\vertiii{#2}_{#1}}
\newcommand{\psG}[2]{\innerprod{G}{#1}{#2}}
\newcommand{\psGdotdot}{\psG{\cdot}{\cdot}}
\newcommand{\psH}[2]{\innerprod{H}{#1}{#2}}
\newcommand{\psHdotdot}{\psH{\cdot}{\cdot}}
\newcommand{\pdE}[2]{\dualprod{E}{#1}{#2}}
\newcommand{\pdEdotdot}{\pdE{\cdot}{\cdot}}
\newcommand{\pdH}[2]{\dualprod{H}{#1}{#2}}
\newcommand{\PrenG}{N_G}
\newcommand{\prenG}[1]{\PrenG (#1)}
\newcommand{\PrenEF}{N_{E, F}}
\newcommand{\prenEF}[1]{\PrenEF (#1)}
\newcommand{\nrm}[1]{\norm{}{#1}}
\newcommand{\nrmdot}{\nrm{\cdot}}
\newcommand{\nE}[1]{\norm{E}{#1}}
\newcommand{\nEdot}{\nE{\cdot}}
\newcommand{\nExE}[2]{\norm{\ExE}{(#1, #2)}}
\newcommand{\nExEdotdot}{\nExE{\cdot}{\cdot}}
\newcommand{\nF}[1]{\norm{F}{#1}}
\newcommand{\nFdot}{\nF{\cdot}}
\newcommand{\nExF}[2]{\norm{\ExF}{(#1, #2)}}
\newcommand{\nExFdotdot}{\nExF{\cdot}{\cdot}}
\newcommand{\nG}[1]{\norm{G}{#1}}
\newcommand{\nGdot}{\nG{\cdot}}
\newcommand{\nH}[1]{\norm{H}{#1}}
\newcommand{\nHdot}{\nH{\cdot}}
\newcommand{\tnEF}[1]{\tnorm{F, E}{#1}}
\newcommand{\tnEFdot}{\tnEF{\cdot}}
\newcommand{\tnFG}[1]{\tnorm{G, F}{#1}}
\newcommand{\tnEG}[1]{\tnorm{G, E}{#1}}
\newcommand{\tnEmatK}[1]{\tnorm{\matK, E}{#1}}
\newcommand{\tnEmatKdot}{\tnEmatK{\cdot}}
\newcommand{\tnHH}[1]{\tnorm{H, H}{#1}}
\newcommand{\tnHHdot}{\tnHH{\cdot}}
\newcommand{\tnp}[2]{\tnorm{#1^\prime, #1}{#2}}
\newcommand{\tnEEp}[1]{\tnp{E}{#1}}
\newcommand{\tnHHp}[1]{\tnp{H}{#1}}
\newcommand{\dist}[1]{d_{#1}}
\newcommand{\dst}{\dist{}}
\newcommand{\dE}{\dist{E}}
\newcommand{\dExE}{\dist{\ExE}}
\newcommand{\dEF}{\dist{F, E}}
\newcommand{\nEp}[1]{\norm{E^\prime}{#1}}
\newcommand{\nEpdot}{\nEp{\cdot}}
\newcommand{\nHp}[1]{\norm{H^\prime}{#1}}
\newcommand{\nHpdot}{\nHp{\cdot}}
\newcommand{\Blf}{{\fhi}}
\newcommand{\blfE}[2]{{\Blf (#1, #2)}}
\newcommand{\BlfH}{a}
\newcommand{\blfH}[2]{{\BlfH (#1, #2)}}
\newcommand{\LfH}{f}
\newcommand{\lfH}[1]{{\LfH (#1)}}
\newcommand{\Hsob}[3]{H_{#2}^{#1}(#3)}
\newcommand{\HsobOm}[2]{\Hsob{#1}{#2}{\Omega}}
\newcommand{\HsobOmI}{\HsobOm{1}{}}
\newcommand{\HsobOmIO}{\HsobOm{1}{0}}
\newcommand{\Lsob}[1]{L^{2}(#1)}
\newcommand{\LsobOm}{\Lsob{\Omega}}
  \thanks[serena]{Équipe Serena.
    {\tt Francois.Clement@inria.fr}.}
  \thanks[lmac]{LMAC, UTC, BP 20529, FR-60205 Compiègne, France.
    {\tt Vincent.Martin@utc.fr}}
\begin{document}

\RRNo{8934}
\makeRR

\clearpage
\tableofcontents

\clearpage
\section{Introduction}

As stated and demonstrated in~\cite{bol:tcm:14}, formal proof tools are now
mature to address the verification of scientific computing programs.
One of the most thrilling aspects of the approach is that the round-off error
due to the use of IEEE-754 floating-point arithmetic can be fully taken into
account.
One of the most important issue in terms of manpower is that all the
mathematical notions and results that allow to establish the soundness of the
implemented algorithm must be formalized.

The long term purpose of this study is to formally prove programs using the
{\FEM}.
The {\FEM} is now widely used to solve partial differential equations, and
its success is partly due to its well established
mathematical foundation,
{\eg} see~\cite{cia:fem:02,qv:nap:97,eg:tpf:04,ztz:fem:13}.
It seems important now to verify the scientific
computing programs based on the {\FEM}, in
order to certify their results.
The present report is a first contribution toward this ultimate goal.

The {\LM} theorem is one of the key ingredients used to build the {\FEM}.
It is a way to establish existence and uniqueness of the solution to the weak
formulation and its discrete approximation;
it is valid for coercive linear operators set on Hilbert spaces
({\ie} complete inner product spaces over the field of real or complex
numbers).
A corollary known as the Céa's lemma provides a quantification of the error
between the computed approximation and the unknown solution.
In particular, the {\LM} theorem is sufficient to prove existence and
uniqueness of the solution to the (weak formulation of the) standard Poisson
problem defined as follows.
Knowing a function~$f$ defined over a regular and bounded domain~$\Omega$
of~$\matR^d$ with $d=1$, $2$, or~$3$, with its boundary denoted by~$\pOmega$,
\begin{equation}
  \label{e:laplace-problem-strong}
  \mbox{find~$u$ such that:}
  \left\{
    \begin{array}{rcll}
      - \Delta u & = & f & \mbox{in } \Omega, \\
      u & = & 0 & \mbox{on } \pOmega,
    \end{array}
  \right.
\end{equation}
where $\Delta=\frac{\p^2}{\p x^2}+\frac{\p^2}{\p y^2}+\frac{\p^2}{\p z^2}$ is
the Laplace operator.
Equation~\eqref{e:laplace-problem-strong} is the strong formulation of
Laplace problem, its weak formulation and the link with the {\LM} theorem is
given in Conclusion, perspectives, see
Section~\ref{s:conclusions-perspectives}.
We do not intend to limit ourselves to this particular problem, but we stress
that our work covers this standard problem that is the basis for the study of
many other physical problems.

Other mathematical tools can be used to establish existence and uniqueness of
the solution to weak problems.
For instance, the {\BNB} theorem for Banach spaces ({\ie} complete normed real
or complex vector spaces), from which one can deduce the {\LM} theorem, {\eg}
see~\cite{eg:tpf:04}, or the theory for mixed and saddle-point problems,
that is used for instance for some fluid problems, {\eg}
see~\cite{bre:eua:74,gr:fem:86}.
However, our choice is mainly guided by our limited manpower and by the
intuitionistic logic of the interactive theorem prover we intend to use: we
try to select an elementary and constructive path of proof.
This advocates to work in a first step with Hilbert spaces rather than Banach
spaces, and to try to avoid the use of {\HB} theorem whose proof is based on
Zorn's lemma (an equivalent of the axiom of choice in {\ZF} set theory).

Some other steps will be necessary for the formalization of the {\FEM}:
the measure theory is required to formalize Sobolev spaces such
as~$L^2(\Omega)$, $H^1(\Omega)$ and~$H_0^1(\Omega)$ on some reasonable
domain~$\Omega$, and establish that they are Hilbert spaces on which the
{\LM} theorem applies;
as well as the notion of distribution to set up the correct framework to deal
with weak formulations;
and finally chapters of the interpolation and approximation theories to
define the discrete finite element approximation spaces.

The purpose of this document is to provide the ``formal proof'' community
with a very detailed pen-and-paper proof of the {\LM} theorem.
The most basic notions and results such as ordered field properties
of~$\matR$ and properties of elementary functions over~$\matR$ are supposed
known and are not detailed further.
One of the key issues is to select in the literature the proof involving the
simplest notions, and in particular not to justify the result by applying a
more general statement.
Once a detailed proof of the {\LM} theorem is written, the next step is to
formalize all notions and results in a formal proof tool such as {\coq}%
\footnote{\href{http://coq.inria.fr/}{http://coq.inria.fr/}}.
At this point, which is not the subject of the present paper, it will be
necessary to take care of the specificities of the classical logic commonly
used in mathematics: in particular, determine where there is need for the law
of excluded middle, and discuss decidability issues.

\bigskip

The paper is organized as follows.
Different ways to prove variants of the {\LM} theorem collected from the
literature are first reviewed in Section~\ref{s:state-of-the-art}.
The chosen proof path is then sketched in
Section~\ref{s:statement-and-sketch-of-the-proof}, and fully detailed in
Section~\ref{s:detailed-proof}.
Finally, lists of statements and direct dependencies are gathered in the
appendix.

\section{State of the art}
\label{s:state-of-the-art}

We review some works of a few authors, mainly from the French school, that
provide some details about statements similar to the {\LM} theorem.

As usual, proofs provided in the literature are not comprehensive, and we
have to cover a series of books to collect all the details necessary for a
formalization in a formal proof tool such as~{\coq}.
Usually, Lecture Notes in undergraduate mathematics are very helpful and
we selected~\cite{gos:cms1:93,gos:cms2:93,gos:cms3:93} among many other
possible choices.

\subsection{Brézis}

In~\cite{bre:af:83}, the {\LM} theorem is stated as Corollary~V.8 (p.~84).
Its proof is obtained from Theorem~V.6 (Stampacchia, p.~83) and by means
similar to the ones used in the proof of Corollary~V.4 (p.~80) for the
characterization of the projection onto a closed subspace.

The proof of the Stampacchia theorem for a bilinear form on a closed convex
set has four main arguments:
the {\RF} representation theorem (Theorem~V.5 p.~81), the characterization of
the projection onto a closed convex set (Theorem~V.2 p.~79), the fixed point
theorem on a complete metric space (Theorem~V.7 p.~83), and the continuity of
the projection onto a closed convex set (Proposition~V.3 p.~80).

The proof of the fixed point theorem uses the notions of distance,
completeness, and sequential continuity
({\eg} see~\cite[Theorem~4.102 p.~115]{gos:cms2:93}).

The proof of the {\RF} representation theorem and the existence of the
projection onto a closed convex set share the possibility to use the notion
of reflexive space through Proposition~V.1 (p.~78) and Theorem~III.29 ({\MP},
p.~51).
The latter states that uniformly convex Banach spaces are reflexive
({\ie} isomorphic to their topological double dual), and its proof uses the
notions of weak and weak-$\star$ topologies.
In this case, the existence of the projection onto a closed convex set also
needs the notions of compactness and lower semi-continuity through
Corollary~III.20 (p.~46), and the call to {\HB} theorem which depends on
Zorn's lemma or the axiom of choice through Theorem~III.7 and Corollary~III.8
(p.~38).

More elementary proofs are also presented in~\cite{bre:af:83}.
The {\RF} theorem only needs the closed kernel lemma (for continuous linear
maps) and the already cited Corollary~V.4.
The existence of the projection onto a closed convex set can be obtained
through elementary and geometrical arguments.
Then, the uniqueness and the characterization of the projection onto a closed
convex set derives from the parallelogram identity and {\CS} inequality.

Complements about the projection onto a closed convex set subset can be found
in~\cite[Lemmas~14.30 and~14.32 pp.~225--228]{gos:cms3:93}.
See also~\cite[p.~90]{yos:fa:80} and~\cite[Theorem~A.28 p.~467]{eg:tpf:04}
for proofs of the {\RF} representation theorem.

\subsection{Ciarlet}

In~\cite{cia:fem:02}, the {\LM} theorem is stated as Theorem~1.1.3
(pp.~8--10).
The structure of the proof is similar to the one proposed
in~\cite{bre:af:83} for Stampacchia theorem, but simplified to the case of a
subspace instead of a closed convex subset
({\eg} see~\cite[Theorems~14.27 and~14.29 pp.~224--225]{gos:cms3:93}).

\subsection{Ern--Guermond}

In~\cite{eg:tpf:04}, the {\LM} theorem is stated as Lemma~2.2 (p.~83).
The first proof is obtained as a consequence of the more general {\BNB}
theorem set on a Banach space (Theorem~2.6 p.~85).
The proof is spread out in Section~A.2 through Theorem~A.43 (p.~472) for
the characterization of bijective Banach operators, Lemma~A.39 (p.~470) which
is a consequence of the closed range theorem (Theorem~A.34 p.~468, see
also~\cite[pp.~205--208]{yos:fa:80} and~\cite[p.~28]{bre:af:83}) and of the
open mapping theorem (Theorem~A.35 p.~469).

A simpler alternative proof without the use of the {\BNB} theorem is proposed
in Exercise~2.11 (p.~107) through the closed range theorem and a density
argument.
For the latter, Corollary~A.18 (p.~466) is a consequence of the {\HB} theorem
(Theorem~A.16 p.~465, see also~\cite[Theorem~5.19]{rud:rca:87}
and~\cite[p.~7]{bre:af:83}).

\subsection{Quarteroni--Valli}

In~\cite{qv:nap:97}, the {\LM} theorem is stated as Theorem~5.1.1 (p.~133).
A variant, also known as {\BLM} theorem, is stated for a bilinear form
defined over two different Hilbert spaces (Theorem~5.1.2 p.~135).
Their proofs are similar: they both use the {\RF} representation theorem and
the closed range theorem.
Note that when the bilinear form is symmetric, the {\RF} representation
theorem and a minimization argument are sufficient to build the proof
(Remark~5.1.1 p.~134).

\section{Statement and sketch of the proof}
\label{s:statement-and-sketch-of-the-proof}

Let~$H$ be a real Hilbert space.
Let~$\psHdotdot$ be its inner product, and~$\nHdot$ the associated norm.
Let~$\Hp$ be its topological dual ({\ie} the space of continuous linear forms
on~$H$).
Let~$\BlfH$ be a bilinear form on~$H$, and let $\LfH\in\Hp$ be a continuous
linear form on~$H$.
Let~$\Hh$ be a closed vector subspace of~$H$ (in practice, $\Hh$~is finite
dimensional).
The {\LM} theorem states existence and uniqueness of the solution to the
following general problems:
\begin{eqnarray}
  \label{e:general-problem}
   \mbox{find } u \in H \mbox{ such that:}
   & &
   \forall v \in H,\quad
   \blfH{u}{v} = \lfH{v}; \\
  \label{e:general-problem-discrete}
  \mbox{find } \uh \in \Hh \mbox{ such that:}
  & &
  \forall \vh \in \Hh,\quad
  \blfH{\uh}{\vh} = \lfH{\vh}.
\end{eqnarray}
The main statement is the following:
\begin{lmthm}
  Assume that~$\BlfH$ is bounded and coercive with constant $\alpha>0$.
  Then, there exists a unique $u\in H$ solution to
  Problem~\eqref{e:general-problem}.
  Moreover, $\nH{u}\leq\frac{1}{\alpha}\nHp{\LfH}$.
\end{lmthm}

\bigskip

\setlength{\unitlength}{1mm}


\begin{figure}[htb]
  \centering
  \begin{picture}(100,50)(0,5)
    \thicklines
    \innerbox{(0,50)}{Orthogonal projection\\onto complete subspace}
    \hilbertbox{(50,50)}{{\RF}\\representation of dual}
    \normedbox{(0,30)}{Representation of\\bounded bilinear form}
    \hilbertbox{(50,30)}{{\LM}}
    \completebox{(100,30)}{Fixed point theorem}
    \hilbertbox{(50,10)}{{\LMC\\on finite dim subspace}}
    \put(21.75,50){\vector(1,0){6.5}}
    \put(50,43){\vector(0,-1){8.25}}
    \put(21.75,30){\vector(1,0){6.5}}
    \put(78.25,30){\vector(-1,0){6.5}}
    \put(50,25.25){\vector(0,-1){8.25}}
  \end{picture}
  \caption{Hierarchy of results for a proof of the {\LM} theorem.}
  \label{fig:hier:lm}
\end{figure}

The ingredients for the proof were mainly collected from~\cite{bre:af:83}
and~\cite{cia:fem:02}.
The key arguments of the chosen proof path are (the hierarchy is sketched in
Figure~\ref{fig:hier:lm}):
\begin{itemize}
\item the representation lemma for bounded bilinear forms;
\item the {\RF} representation theorem;
\item the orthogonal projection theorem for a complete subspace;
\item the fixed point theorem for a contraction on a complete metric space.
\end{itemize}
Note that the same type of arguments can be used to prove the
more general Stampacchia theorem.
We give now more hints about the structure of the main steps of the proof.

\subsection{Sketch of the proof of the {\LMC} theorem}
\label{ss:sketch-of-proof-of-lmc-th}

\begin{lmcthm}
  Assume that~$\BlfH$ is bounded with continuity constant~$C\geq 0$ and
  coercive with constant $\alpha>0$.
  Then, there exist a unique $u\in H$ solution to
  Problem~\eqref{e:general-problem}, and a unique $\uh\in\Hh$ solution to
  Problem~\eqref{e:general-problem-discrete}.
  Moreover, $\nH{u}\leq\frac{1}{\alpha}\nHp{\LfH}$ and for all $\vh\in\Hh$,
  $\nH{u-\uh}\leq\frac{C}{\alpha}\nH{u-\vh}$.
\end{lmcthm}

The proof of the {\LMC} theorem goes as follows (this proof uses the notion
of finite dimensional subspace):
\begin{itemize}
\item a finite dimensional subspace is closed;
\item a closed subspace of a Hilbert space is a Hilbert space;
\item thus, the {\LM} theorem applies to~$\Hh$;
\item finally, Céa's error estimation is obtained from the Galerkin
  orthogonality property, and boundedness and coercivity of the bilinear
  form~$\BlfH$.
\end{itemize}

\subsection{Sketch of the proof of the {\LM} theorem}
\label{ss:sketch-of-proof-of-lm-th}

\begin{lmthm}
  Assume that~$\BlfH$ is bounded and coercive with constant $\alpha>0$.
  Then, there exists a unique $u\in H$ solution to
  Problem~\eqref{e:general-problem}.
  Moreover, $\nH{u}\leq\frac{1}{\alpha}\nHp{\LfH}$.
\end{lmthm}

The proof of the {\LM} theorem goes as follows (this proof uses the notions of
Lipschitz continuity, {\normedvectorspace}, bounded and coercive bilinear form,
inner product, orthogonal complement, and Hilbert space):
\begin{itemize}
\item Problem~\eqref{e:general-problem} is first shown to be equivalent to a
  fixed point problem for some contraction~$g$ on the complete metric
  space~$H$:
  \begin{itemize}
  \item the representation lemma for bounded bilinear forms exhibits a
    continuous linear form $A(u)$ such that $\blfH{u}{v}=(A(u))(v)$,
  \item then, the {\RF} representation theorem exhibits representatives
    $\tau(A(u))$ and $\tau(f)$ for both continuous linear forms, and
    Problem~\eqref{e:general-problem} is shown to be equivalent to the linear
    problem $\tau(A(u))=\tau(f)$,
  \item the affine function~$g$ is defined over~$H$ by
    $g(v)=v-\rho\tau(A(v))+\rho\tau(f)$ for some small enough number~$\rho$,
    then, Problem~\eqref{e:general-problem} is shown to be equivalent to find a
    fixed point of~$g$,
  \item finally, $g$ is shown to be a contraction;
  \end{itemize}
\item thus existence and uniqueness of the solution~$u$ are obtained from the
  fixed point theorem applied to~$g$;
\item finally, the estimation of the nonzero solution~$u$ is obtained from the
  coercivity of the bilinear form~$\BlfH$ and the continuity of the linear
  form~$\LfH$.
\end{itemize}

\subsection{Sketch of the proof of the representation lemma
  for bounded bilinear forms}
\label{ss:sketch-of-proof-of-repr-bilin-forms}

\begin{bilinlem}
  Let $(E,\nEdot)$ be a {\normedvectorspace}.
  Let~$\Blf$ be a bilinear form on~$E$.
  Assume that~$\Blf$ is bounded.
  Then, there exists a unique continuous linear map~$A$ from~$E$ to~$\Ep$
  such that for all $u,v\in E$, $\blfE{u}{v}=(A(u))(v)$.
  Moreover, for all~$C\geq 0$ continuity constant of~$\Blf$, we have
  $\tnEEp{A}\leq C$.
\end{bilinlem}

The proof of the representation lemma for bounded bilinear forms goes as
follows (this proof uses the notions of {\normedvectorspace}, continuous
linear map, topological dual, dual norm, and bounded bilinear form):
\begin{itemize}
\item existence of the representative~$A$ is obtained by construction:
  \begin{itemize}
  \item for each~$u$, the function $A_u=(v\mapsto\fhi(u,v))$ is shown to be
    a continuous linear form with $\nEp{A_u}\leq C\,\nE{u}$,
  \item then, the function $A=(u\mapsto A_u)$ is shown to be a continuous
    linear map from~$E$ to~$\Ep$ with $\tnEEp{A}\leq C$;
  \end{itemize}
\item uniqueness of the representative~$A$ follows from the fact that
  continuous linear maps between two {\normedvectorspace}s form a
  {\normedvectorspace}.
\end{itemize}

\subsection{Sketch of the proof of the {\RF} theorem}
\label{ss:sketch-of-proof-of-rf-th}

\begin{rfthm}
  Let $\fhi\in\Hp$ be a continuous linear form on~$H$.
  Then, there exists a unique vector $u\in H$ such that for all $v\in H$,
  $\fhi(v)=\psH{u}{v}$.
  Moreover, the mapping $\tau=(\fhi\mapsto u)$ is a continuous isometric
  isomorphism from~$\Hp$ onto~$H$.
\end{rfthm}

The proof of the {\RF} representation theorem goes as follows (this proof
uses the notions of kernel of a linear map, {\normedvectorspace}, operator
norm, continuous linear map, topological dual, dual norm, inner product
space, orthogonal projection onto a complete subspace, orthogonal complement,
and Hilbert space):
\begin{itemize}
\item uniqueness of the representative~$u_\fhi$ follows from the definiteness
  of the inner product;
\item existence of the representative~$u_\fhi$ is obtained by construction for
  a nonzero~$\fhi$:
  \begin{itemize}
  \item consider the orthogonal projection onto $F=\Ker{\fhi}$, which is
    closed, hence a complete subspace,
  \item a unit vector~$\xi_0$ in~$F^\bot$ such that $\fhi(\xi_0)\not=0$ is
    built from some~$u_0$ picked in the complement of~$F$, and using the
    theorem on the direct sum of a complete subspace and its orthogonal
    complement,
  \item the candidate $u=\fhi(\xi_0)\xi_0\in F^\bot$ is then shown to satisfy
    $\psH{u}{v}=\fhi(v)$;
  \end{itemize}
\item the mapping $\tau=(\fhi\mapsto u_\fhi)$ goes from the topological
  dual~$\Hp$ to the Hilbert space~$H$, its linearity follows the bilinearity of
  the inner product and of the application of linear maps;
\item injectivity of~$\tau$ is straightforward, and surjectivity comes from
  {\CS} inequality;
\item the isometric property of $\tau$ follows from the definition of the dual
  norm, and again from {\CS} inequality;
\item continuity of~$\tau$ follows from the isometric property.
\end{itemize}

\subsection{Sketch of the proof of the orthogonal projection theorem
  for a complete subspace}
\label{ss:sketch-of-proof-of-orth-proj}

\begin{projthm}
  Let $(G,\psGdotdot)$ be a real inner product space.
  Let~$F$ be a complete subspace of~$G$.
  Then, for all $u\in G$, there exists a unique $v\in F$ such that
  $\nG{u-v}=\min_{w\in F}\nG{u-w}$.
\end{projthm}

The proof of the orthogonal projection theorem for a complete subspace goes
as follows (this proof uses the notions of infimum, completeness, inner
product space, and convexity):
\begin{itemize}
\item the result is first shown for a nonempty complete convex subset~$K$:
  \begin{itemize}
  \item existence of the projection of~$u\in G$ onto~$K$ is built as the limit
    of a sequence:
    \begin{itemize}
    \item existence of a sequence $(w_n)_{n\in\matN}$ in~$K$ and a nonnegative
      number~$\delta$ such that $\nG{u-w_n}<\delta+\frac{1}{n+1}$ is first
      obtained from the fact that the function $(w\mapsto \nG{u-w})$ is bounded
      from below (by~$0$),
    \item this sequence is shown to be a Cauchy sequence using the
      parallelogram identity and the definition of convexity,
    \item hence, it is convergent in the complete subset~$K$,
    \item continuity of the norm ensures that the limit of the sequence
      realizes the minimum of the distance;
    \end{itemize}
  \item uniqueness of the projection follows again the parallelogram identity
    and the definition of convexity;
  \end{itemize}
\item a complete subspace is also a nonempty complete convex subset.
\end{itemize}

\subsection{Sketch of the proof of the fixed point theorem}
\label{ss:sketch-of-proof-of-fixed-point-th}

\begin{fpthm}
  Let $(X,d)$ be a complete metric space.
  Let~$f:X\rightarrow X$ be a contraction.
  Then, there exists a unique fixed point~$a\in X$ such that~$f(a)=a$.
  Moreover, all iterated function sequences associated with~$f$ are
  convergent with limit~$a$.
\end{fpthm}

The proof of the fixed point theorem goes as follows (this proof uses the
notions of distance, completeness and Lipschitz continuity).
\begin{itemize}
\item uniqueness of the fixed point is obtained from the properties of the
  distance;
\item existence of the fixed point is built from the sequence of iterates of
  the contraction:
  \begin{itemize}
  \item when nonstationary, the sequence is first proved to be a Cauchy
    sequence using the iterated triangle inequality and the formula for the sum
    of the first terms of a geometric series,
  \item the sequence is then convergent in a complete metric space,
  \item the limit of the sequence is finally proved to be a fixed point
    of the contraction from properties of the contraction and of the
    distance.
  \end{itemize}
\end{itemize}

\section{Detailed proof}
\label{s:detailed-proof}

\setlength{\unitlength}{1mm}


\begin{figure}[htb]
  \centering
  \begin{picture}(50,50)(0,5)
    \Thicklines
    \myspacebox{(0,50)}{3cm}{Vector}{vectorcolor}
    \myspacebox{(0,30)}{4cm}{Normed vector}{normedcolor}
    \myspacebox{(0,10)}{4cm}{Inner product}{innercolor}
    \myspacebox{(50,50)}{3cm}{Metric}{metriccolor}
    \myspacebox{(50,30)}{3cm}{Complete}{completecolor}
    \myspacebox{(50,10)}{3cm}{Hilbert}{hilbertcolor}
    \put(0,44.6){\vector(0,-1){9.25}}
    {\thinlines
      \put(0,24.6){\vector(0,-1){9.25}}
    }
    \put(0,16.6){\vector(0,-1){1.25}}
    \put(50,44.6){\vector(0,-1){9.25}}
    \put(50,24.6){\vector(0,-1){9.25}}
    \put(21.75,10){\vector(1,0){11.5}}
  \end{picture}
  \caption{%
    Hierarchy of notions used to build a Hilbert space.
    Thick arrows indicate inheritance: the target notion is built upon the
    source one, whereas the lower left thin arrow indicates that inner
    product spaces are only shown to be normed vector spaces.
  }
  \label{fig:hier:hilbert}
\end{figure}

The {\LM} theorem is stated on a Hilbert space.
The notion of Hilbert space is built from a
series of notions of spaces, see Figure~\ref{fig:hier:hilbert} for a sketch
of the hierarchy.
Thus, a large part of the present section collects standard definitions and
results from linear and bilinear algebra.
One of the main steps is the construction of the complete normed space of
continuous linear maps,
used in particular to obtain the notion of topological dual.
Then, another step is the
construction of bilinear forms, and their representation as linear maps with
values in the topological dual.
The results on finite dimensional vector spaces are avoided as much as
possible, as well as the results on Banach spaces.

A set of basic results from topology in metric spaces are necessary to
formulate the fixed point theorem, and to link completeness and closedness,
which is useful in particular to characterize finite dimensional spaces as
complete.

The last statement of this document is dedicated to the finite dimensional
case.
To prove Theorem~\ref{t:lax-milgram-cea-finite-dimensional-subspace}
({\LMC}), we use the closedness of finite dimensional subspaces, which is a
direct consequence of the closedness of the sum of a closed subspace and a
linear span.
Such a result is of course valid in any {\normedvectorspace}, but the general
proof is based on the equivalence of norms in a finite dimensional space, and
the latter needs more advanced results on continuity involving compactness.
To avoid that, we propose a much simpler proof that is only valid in inner
product spaces; which is fine here since we apply it on a Hilbert space.

\bigskip

Statements are displayed inside colored boxes.
Their nature can be identified at a glance by using the following color code:
\begin{center}
  \rmkbox{light gray is for remarks}, \qquad
  \defbox{light green for definitions},\\
  \lembox{light blue for lemmas}, \quad and \quad
  \thmbox{light red for theorems}.
\end{center}
Moreover, inside the bodies of proof for lemmas and theorems, the most
basic results are supposed to be known and are not detailed further;
they are displayed in \assume{bold red}.
This includes:
\begin{itemize}
  \item properties from propositional calculus;
  \item basic notions and results from set theory such as the complement of a
    subset, the composition of functions, injective and surjective functions;
  \item basic results from group theory;
  \item ordered field properties of~$\matR$,
    ordered set properties of~$\matRbar$;
  \item basic properties of the complete valued fields~$\matR$ and~$\matC$;
  \item definition and properties of basic functions over~$\matR$ such the
    square, square root, and exponential functions, and the discriminant of a
    quadratic polynomial;
  \item basic properties of geometric series (sum of the first terms).
\end{itemize}

\bigskip

This section is organized as follows.
Some facts about infima and suprema are first collected in
Section~\ref{ss:sup-inf}, they are useful to define the operator norm for
continuous linear maps, and orthogonal projections in inner product spaces.
Then, Section~\ref{ss:metric-space} is devoted to complements on metric
spaces, it concludes with the fixed point theorem.
Section~\ref{ss:vector-space} is for the general notion of vector spaces.
{\Normedvectorspace}s are introduced in Section~\ref{ss:normed-vector-space},
with the continuous linear map equivalency theorem.
Then, we add an inner product in Section~\ref{ss:inner-product-space} to
define inner product spaces and state a series of orthogonal projection
theorems.
Finally, Section~\ref{ss:hilbert-space} is dedicated to Hilbert spaces with
the {\RF} representation theorem, and variants of the {\LM} theorem.

\subsection{Supremum, infimum}
\label{ss:sup-inf}

\begin{remark}
  From the completeness of the set of real numbers, every nonempty subset
  of~$\matR$ that is bounded from above has a least upper bound in~$\matR$.
  On the affinely extended real number system~$\matRbar$, every nonempty
  subset has a least upper bound that may be~$+\infty$ (and a greatest lower
  bound that may be~$-\infty$).
  Thus, we have the following extended notions of supremum and infimum for a
  numerical function over a set.
\end{remark}

\begin{definition}[supremum]
  \label{d:supremum}
  Let~$X$ be a set.
  Let $f:X\rightarrow\matR$ be a function.
  The extended number~$L$ is the {\em supremum of~$f$ over~$X$}, and is
  denoted $L=\supXf$, iff
  it is the least upper bound of $f(X)=\{f(x)\st x\in X\}\subset\matR$:
  \begin{eqnarray}
    \label{e:supremum-upper-bound}
    \forall x \in X,
    & & f(x) \leq L; \\
    \label{e:supremum-least}
    \forall M \in \matRbar,
    & & (\forall x \in X,\quad f(x) \leq M)
    \Implies
    L \leq M.
  \end{eqnarray}
\end{definition}

\begin{lemma}[finite supremum]
  \label{l:finite-supremum}
  Let~$X$ be a set.
  Let $f:X\rightarrow\matR$ be a function.
  Assume that there exists a finite upper bound for~$f(X)$, {\ie} there
  exists $M\in\matR$ such that, for all $x\in X$, $f(x)\leq M$.
  Then, the supremum is finite and $L=\supXf$ iff
  \eqref{e:supremum-upper-bound}~and
  \begin{equation}
    \label{e:supremum-accum}
    \forall \eps > 0,\,
    \exists x_\eps \in X,\quad
    L - \eps < f (x_\eps).
  \end{equation}
\end{lemma}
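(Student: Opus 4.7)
The plan is to prove the two directions of the equivalence separately, after quickly disposing of finiteness.

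\medskip

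\textbf{Finiteness of $L$.} Suppose $L=\supXf$. Since by hypothesis some $M\in\matR$ satisfies $f(x)\leq M$ for all $x\in X$, condition \eqref{e:supremum-least} applied with this $M$ gives $L\leq M<+\infty$. Assuming $X$ nonempty (the only case of interest, since otherwise no finite real can play the role of a supremum in the sense of Definition~\ref{d:supremum}), pick any $x_0\in X$; then \eqref{e:supremum-upper-bound} yields $f(x_0)\leq L$, so $L>-\infty$. Hence $L\in\matR$.

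\medskip

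\textbf{Direct implication ($L=\supXf\implies$\eqref{e:supremum-upper-bound} and \eqref{e:supremum-accum}).} Property \eqref{e:supremum-upper-bound} is literally part of Definition~\ref{d:supremum}, so nothing to do. For \eqref{e:supremum-accum}, fix $\eps>0$ and argue by contradiction: if no $x_\eps\in X$ satisfies $L-\eps<f(x_\eps)$, then $f(x)\leq L-\eps$ for every $x\in X$, so $L-\eps$ is an upper bound of $f(X)$; by \eqref{e:supremum-least} applied with $M=L-\eps\in\matRbar$, we get $L\leq L-\eps$, hence $\eps\leq 0$, contradicting $\eps>0$.

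\medskip

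\textbf{Reverse implication (\eqref{e:supremum-upper-bound} and \eqref{e:supremum-accum} $\implies L=\supXf$).} Assumption \eqref{e:supremum-upper-bound} is the first clause of Definition~\ref{d:supremum}. For the second clause \eqref{e:supremum-least}, take any $M\in\matRbar$ with $f(x)\leq M$ for every $x\in X$ and show $L\leq M$. Again I argue by contradiction: suppose $M<L$. Since $L\in\matR$, the quantity $\eps:=L-M$ is a well-defined strictly positive real (note $M$ cannot be $+\infty$ because $M<L<+\infty$, and $M=-\infty$ is compatible with $\eps=+\infty>0$, but the argument still goes through by invoking \eqref{e:supremum-accum} with any positive real smaller than $L-M$; in the finite case simply take $\eps=L-M$). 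Then \eqref{e:supremum-accum} produces $x_\eps\in X$ with $f(x_\eps)>L-\eps=M$, contradicting the hypothesis that $M$ is an upper bound of $f(X)$.

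\medskip

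\textbf{Main obstacle.} The only delicate point is the bookkeeping for the extended reals in the reverse direction: one must check that the subtraction $L-\eps$ in \eqref{e:supremum-accum} and the comparison against an arbitrary $M\in\matRbar$ in \eqref{e:supremum-least} are compatible. Since $L$ has already been shown to be a finite real, and since the case $M=+\infty$ makes $L\leq M$ automatic, only the finite $M$ case requires the above $\eps$-construction; this is what makes the proof go through cleanly. The rest is purely formal manipulation of the definitions.
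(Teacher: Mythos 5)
Your proof is correct and follows essentially the same route as the paper's: both directions are handled by the same contradiction arguments, with your choice of $\eps=L-M$ in the reverse direction being a harmless variant of the paper's $\eps=\frac{L-M}{2}$. Your extra care about the cases $M=\pm\infty$ and the (implicit) nonemptiness of~$X$ is sound and in fact slightly more thorough than the paper, which disposes of finiteness with a bare appeal to the completeness of~$\matR$.
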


\begin{proof}
  From
  hypothesis, and
  completeness of~$\matR$,
  $\supXf$ is finite.
  Let~$L$ be a number.
  Assume that~$L$ is an upper bound of $f(X)$,
  {\ie}~\eqref{e:supremum-upper-bound}.
  
  \proofparskip{(\ref{e:supremum-least}) implies~(\ref{e:supremum-accum})}
  Assume that~\eqref{e:supremum-least} holds.
  Let $\eps>0$.
  Suppose that for all $x\in X$, $f(x)\leq L-\eps$.
  Then, $L-\eps$ is an upper bound for~$f(X)$.
  Thus, from
  hypothesis,
  we have $L\leq L-\eps$, and from
  \assume{ordered field properties of~$\matR$},
  $\eps\leq 0$, which is impossible.
  Hence, there exists $x\in X$, such that $L-\eps<f(x)$.
  
  \proofparskip{(\ref{e:supremum-accum}) implies~(\ref{e:supremum-least})}
  Conversely, assume now that~\eqref{e:supremum-accum} holds.
  Let~$M$ be an upper bound, {\ie} for all $x\in X$, $f(x)\leq M$.
  Suppose that $M<L$.
  Let $\eps=\frac{L-M}{2}>0$.
  Then, from
  hypotheses, and
  \assume{ordered field properties of~$\matR$},
  there exists $x_\eps\in X$ such that $f(x_\eps)>L-\eps=\frac{L+M}{2}>M$,
  which is impossible.
  Hence, we have $L\leq M$.
  
  Therefore, \eqref{e:supremum-upper-bound}~implies the equivalence
  between~\eqref{e:supremum-least} and~\eqref{e:supremum-accum}.
\end{proof}

\begin{lemma}[discrete lower accumulation]
  \label{l:discrete-lower-accumulation}
  Let~$X$ be a set.
  Let $f:X\rightarrow\matR$ be a function.
  Let~$L$ be a number.
  Then, \eqref{e:supremum-accum} iff
  \begin{equation}
    \label{e:supremum-accum-discrete}
    \forall n \in \matN,\,
    \exists x_n \in X,\quad
    L - \frac{1}{n + 1} < f (x_n).
  \end{equation}
\end{lemma}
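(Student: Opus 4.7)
The plan is to prove the two implications separately, both being quite direct. For the forward direction, assuming~\eqref{e:supremum-accum}, I would fix any $n \in \matN$ and instantiate the hypothesis with $\eps = \frac{1}{n+1}$, which is a positive real number by \assume{ordered field properties of~$\matR$}. This yields an $x_n \in X$ with $L - \frac{1}{n+1} < f(x_n)$, exactly as required by~\eqref{e:supremum-accum-discrete}.

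For the backward direction, assuming~\eqref{e:supremum-accum-discrete}, I would fix $\eps > 0$ and need to produce some $x_\eps \in X$ with $L - \eps < f(x_\eps)$. The key step is to select an integer $n$ such that $\frac{1}{n+1} \leq \eps$, which follows from the Archimedean property (an \assume{ordered field property of~$\matR$}) applied to $\frac{1}{\eps} > 0$: pick any $n \in \matN$ with $n + 1 \geq \frac{1}{\eps}$, equivalently $\frac{1}{n+1} \leq \eps$. The discrete hypothesis at this particular $n$ then provides $x_n \in X$ with $L - \frac{1}{n+1} < f(x_n)$, and combining with $L - \eps \leq L - \frac{1}{n+1}$ gives $L - \eps < f(x_n)$. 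Setting $x_\eps = x_n$ concludes.

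The main (very mild) obstacle is simply the invocation of the Archimedean property of~$\matR$, which the paper classifies among the basic assumed facts, so it can be cited in red without further justification. No other subtleties arise: the implication is, essentially, a reformulation of the ``$\forall \eps > 0$'' quantifier into a ``$\forall n \in \matN$'' quantifier through the density of the sequence $(\frac{1}{n+1})_{n \in \matN}$ near~$0$.
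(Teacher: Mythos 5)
Your proof is correct and follows essentially the same route as the paper: instantiate $\eps=\frac{1}{n+1}$ for the forward direction, and use the Archimedean property to choose a suitable $n$ for the converse. The only cosmetic difference is that you allow $\frac{1}{n+1}\leq\eps$ where the paper arranges a strict inequality; both combine correctly with the strict inequality from the hypothesis.
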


\begin{proof}
  \proofpar{(\ref{e:supremum-accum})
    implies~(\ref{e:supremum-accum-discrete})}
  Assume that~\eqref{e:supremum-accum} holds.
  Let $n\in\matN$.
  Let $\eps=\frac{1}{n+1}>0$.
  Then, from
  hypothesis,
  there exists $x_n=x_\eps\in X$ such that
  \begin{equation*}
    L - \frac{1}{n + 1} = L - \eps < f(x_\eps) = f(x_n).
  \end{equation*}
  
  \proofparskip{(\ref{e:supremum-accum-discrete})
    implies~(\ref{e:supremum-accum})}
  Conversely, assume now that~\eqref{e:supremum-accum-discrete} holds.
  Let $\eps>0$.
  From
  the \assume{Archimedean property of~$\matR$},
  there exists $n\in\matN$ such that $n>\frac{1}{\eps}-1$
  (e.g. $n=\floor{\frac{1}{\eps}}$).
  Then, from
  \assume{ordered field properties of~$\matR$}, and
  hypothesis,
  we have $\eps>\frac{1}{n+1}$, and there exists $x_\eps=x_n\in X$ such that
  \begin{equation*}
    L - \eps < L - \frac{1}{n + 1} < f(x_n) = f(x_\eps).
  \end{equation*}
\end{proof}



\begin{lemma}[supremum is positive scalar multiplicative]
  \label{l:supremum-is-positive-scalar-multiplicative}
  Let~$X$ be a set.
  let $f:X\rightarrow\matR$ be a function.
  Let $\lambda\geq 0$ be a nonnegative number.
  Then, $\supXlambdaf=\lambda\,\supXf$
  (\assume{with the convention that 0 times $+\infty$ is~0}).
\end{lemma}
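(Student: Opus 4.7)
My plan is to split on the sign of $\lambda$, since the convention $0 \cdot (+\infty) = 0$ makes the $\lambda = 0$ case essentially definitional, while the $\lambda > 0$ case is where the real work happens. For $\lambda = 0$, the function $\lambda f$ is identically zero on $X$, so (assuming $X$ nonempty) $\supXlambdaf = 0$; on the right-hand side, whether $\supXf$ is finite or equal to $+\infty$, the stated convention and ordinary multiplication both give $\lambda \supXf = 0$. (If $X$ is empty we have $\supXlambdaf = \supXf = -\infty$, and here the $\lambda = 0$ case degenerates unless one adopts a similar convention for $0 \cdot (-\infty)$; I would flag this as a side remark.)

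For $\lambda > 0$, I would apply Definition~\ref{d:supremum} directly to $\lambda f$ over $X$, taking the candidate supremum to be $\lambda L$ where $L = \supXf \in \matRbar$. The upper-bound condition~\eqref{e:supremum-upper-bound} is immediate: for every $x \in X$, $f(x) \leq L$, and multiplying by the positive scalar $\lambda$ preserves the inequality, so $(\lambda f)(x) \leq \lambda L$ (valid even when $L = +\infty$). For the least-upper-bound condition~\eqref{e:supremum-least}, I take any $M \in \matRbar$ such that $(\lambda f)(x) \leq M$ for all $x$ and divide by $\lambda > 0$ to obtain $f(x) \leq M/\lambda$; then Definition~\ref{d:supremum} applied to $f$ gives $L \leq M/\lambda$, hence $\lambda L \leq M$.

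The main subtlety, and the only place I would slow down, is the handling of the extended values $L = +\infty$ and $M = +\infty$. Specifically, when dividing $M \geq \lambda f(x)$ by $\lambda$, I need to know that multiplication and division by the strictly positive real $\lambda$ behave correctly on $\matRbar$; this relies on the assumed ordered-field properties of $\matR$ together with the conventions on $\pm\infty$. I would argue this case by case: if $M = +\infty$ the conclusion $\lambda L \leq M$ is automatic, and otherwise $M/\lambda$ is a genuine real number and the standard ordered-field argument applies. Once this case analysis is dispatched, conditions~\eqref{e:supremum-upper-bound} and~\eqref{e:supremum-least} are both satisfied by $\lambda L$, which by Definition~\ref{d:supremum} identifies $\supXlambdaf$ with $\lambda \supXf$ and concludes the proof.
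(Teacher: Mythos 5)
Your proof is correct and follows essentially the same route as the paper's: the case split on $\lambda=0$ versus $\lambda>0$, multiplying the upper bound $f(x)\leq L$ by $\lambda$, and dividing by $\lambda$ in the positive case are exactly the steps used there (the paper phrases the argument as the two inequalities $\supXlambdaf\leq\lambda\,\supXf$ and $\lambda\,\supXf\leq\supXlambdaf$ rather than as a direct verification of the two clauses of Definition~\ref{d:supremum}, but the content is identical). Your side remark about the empty set is apt — the paper's $\lambda=0$ case also tacitly assumes $X\not=\emptyset$ when it writes $(\lambda f)(X)=\{0\}$.
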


\begin{proof}
  Let $L=\supXf$ and $M=\supXlambdaf$ be extended numbers of~$\matRbar$.
  Let~$x\in X$.
  
  From
  Definition~\threfc{d:supremum}{$L$~is an upper bound for $f(X)$}, and
  \assume{ordered set properties of~$\matRbar$},
  we have
  \begin{equation*}
    (\lambda f) (x) = \lambda f (x) \leq \lambda L.
  \end{equation*}
  Thus, $\lambda L$ is an upper bound of $(\lambda f)(X)$.
  Hence, from
  Definition~\threfc{d:supremum}{%
    $M$~is the least upper bound of $(\lambda f)(X)$},
  we have $M\leq\lambda L$.
  
  \proofparskip{Case $\lambda=0$}
  Then, $\lambda f$ is the zero function (for all $x\in X$,
  $(\lambda f)(x)=\lambda f(x)=0$).
  Thus, $(\lambda f)(X)=\{0\}$, and from
  Definition~\threfc{d:supremum}{0~is the least upper bound of $\{0\}$},
  $M=0$.
  Hence, from
  \assume{ordered set properties of~$\matRbar$},
  we have
  \begin{equation*}
    \lambda L = 0 \, L = 0 \leq 0 = M.
  \end{equation*}
  
  \proofparskip{Case $\lambda\not=0$}
  Then, from
  hypothesis,
  $\lambda>0$.
  From
  Definition~\threfc{d:supremum}{$M$~is an upper bound for $(\lambda f)(X)$},
  \assume{field properties of~$\matR$}, and
  \assume{ordered set properties of~$\matRbar$},
  we have
  \begin{equation*}
    f (x)
    = \frac{1}{\lambda} \lambda f (x)
    = \frac{1}{\lambda} (\lambda f) (x)
    \leq \frac{M}{\lambda}.
  \end{equation*}
  Thus, $\frac{M}{\lambda}$ is an upper bound for~$f(X)$.
  Hence, from
  Definition~\threfc{d:supremum}{$L$~is the least upper bound of $f(X)$}, and
  \assume{ordered set properties of~$\matRbar$},
  we have $L\leq\frac{M}{\lambda}$, or equivalently $\lambda L\leq M$.
  
  Therefore, $M=\lambda L$.
\end{proof}

\begin{remark}
  As a consequence, $\supXlambdaf$ is finite iff $\lambda\supXf$ is finite.
\end{remark}

\begin{definition}[maximum]
  \label{d:maximum}
  Let~$X$ be a set.
  Let $f:X\rightarrow\matR$ be a function.
  The supremum of~$f$ over~$X$ is called {\em maximum of~$f$ over~$X$}, and
  it is denoted $\maxXf$, iff
  there exists~$y\in X$ such that $f(y)=\supXf$.
\end{definition}

\begin{lemma}[finite maximum]
  \label{l:finite-maximum}
  Let~$X$ be a set.
  Let $f:X\rightarrow\matR$ be a function.
  Let $y\in X$.
  Then,
  \begin{equation}
    \label{e:finite-maximum}
    f(y) = \maxXf
    \Equiv
    \forall x \in X,\quad
    f(x) \leq f(y).
  \end{equation}
\end{lemma}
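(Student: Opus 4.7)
The plan is to derive both directions of the equivalence directly from Definitions~\ref{d:supremum} and~\ref{d:maximum}, since the statement is essentially a repackaging of the characterization of the supremum in the case where it is attained. Since Definition~\ref{d:maximum} defines $\maxXf$ as a name for $\supXf$ together with an existence witness $y\in X$ realizing this value, the whole task reduces to relating the property ``$f(y)$ is an upper bound of $f(X)$'' to the property ``$f(y)=\supXf$''.

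For the direct implication ($\Rightarrow$), I would assume $f(y)=\maxXf$. By Definition~\ref{d:maximum}, this gives $f(y)=\supXf$. The conclusion $f(x)\leq f(y)$ for every $x\in X$ is then immediate from property~\eqref{e:supremum-upper-bound} of Definition~\ref{d:supremum}, which asserts that $\supXf$ is an upper bound of $f(X)$.

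For the reverse implication ($\Leftarrow$), I would assume that $f(y)$ is an upper bound for $f(X)$. Since $f(y)\in\matR$ is a \emph{finite} upper bound, Lemma~\ref{l:finite-supremum} ensures that $L=\supXf$ is finite. Two inequalities then pin down $L$: on one hand $f(y)\leq L$ by~\eqref{e:supremum-upper-bound} applied at the point $y\in X$, and on the other hand $L\leq f(y)$ by the least-upper-bound property~\eqref{e:supremum-least}, since $f(y)$ is itself an upper bound. Hence $f(y)=L=\supXf$, and the point $y\in X$ is precisely the existence witness required by Definition~\ref{d:maximum} to call this value $\maxXf$.

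No step here is especially delicate: the proof is an unfolding exercise. The only minor pitfall is keeping straight that $\maxXf$ is not a separate quantity from $\supXf$ but the same value endowed with an attainment witness; in the backward direction I would therefore make the witness explicit (it is $y$ itself) before invoking Definition~\ref{d:maximum}, and I would also note that the finiteness of $\supXf$, needed to apply Lemma~\ref{l:finite-supremum}, is given for free by the hypothesis $\forall x\in X,\ f(x)\leq f(y)$.
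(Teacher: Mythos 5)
Your proof is correct and follows essentially the same route as the paper's: the forward direction is the identical unfolding of Definitions~\ref{d:maximum} and~\ref{d:supremum}, and the backward direction likewise reduces to showing $f(y)=\supXf$ with $y$ as the attainment witness. The only cosmetic difference is that you establish $f(y)=\supXf$ by antisymmetry from the two clauses of Definition~\ref{d:supremum}, whereas the paper verifies the $\eps$-accumulation criterion of Lemma~\ref{l:finite-supremum} with the witness $x_\eps=y$; both are immediate.
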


\begin{proof}
  \proofpar{``Left'' implies ``right''}
  Assume that~$y$ realizes the maximum of~$f$ over~$X$.
  Let $x\in X$.
  Then, from hypothesis,
  Definition~\thref{d:maximum}, and
  Definition~\threfc{d:supremum}{$f(y)$~is an upper bound of $f(X)$},
  we have $f(x)\leq f(y)$.
  
  \proofparskip{``Right'' implies ``left''}
  Conversely, assume now that~$f(y)$ is an upper bound of~$f(X)$.
  Let $\eps>0$.
  Let $x_\eps=y\in X$.
  Then, from
  \assume{ordered field properties of~$\matR$},
  we have $f(y)-\eps<f(y)=f(x_\eps)$.
  Hence, from
  Lemma~\thref{l:finite-supremum}, and
  Definition~\thref{d:maximum},
  we have $f(y)=\maxXf$.
\end{proof}

\begin{definition}[infimum]
  \label{d:infimum}
  Let~$X$ be a set.
  Let $f:X\rightarrow\matR$ be a function.
  The extended number~$l$ is the {\em infimum of~$f$ over~$X$}, and is
  denoted $l=\infXf$, iff
  it is the greatest lower bound of $f(X)\subset\matR$:
  \begin{eqnarray}
    \label{e:infimum-lower-bound}
    \forall x \in X,
    & & l \leq f(x); \\
    \label{e:infimum-greatest}
    \forall m \in \matRbar,
    & & (\forall x \in X,\quad m \leq f(x))
    \Implies
    m \leq l.
  \end{eqnarray}
\end{definition}

\begin{lemma}[duality infimum-supremum]
  \label{l:duality-infimum-supremum}
  Let~$X$ be a set.
  Let $f:X\rightarrow\matR$ be a function.
  Then, $\infXf=-\supXmf$.
\end{lemma}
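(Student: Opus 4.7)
The plan is to exploit the bijection $t\mapsto -t$ on $\matRbar$, which reverses the order, so that lower bounds for $f(X)$ correspond bijectively to upper bounds for $(-f)(X)$. Set $l=\infXf$ and $L=\supXmf$; the goal is to establish both inequalities $l\leq -L$ and $-L\leq l$ in $\matRbar$.

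First I would show that $-L$ is a lower bound of $f(X)$. By Definition~\ref{d:supremum} applied to $-f$, for every $x\in X$ we have $(-f)(x)\leq L$, i.e.\ $-f(x)\leq L$; by order-reversal of negation on $\matRbar$ this yields $-L\leq f(x)$. Hence $-L$ satisfies~\eqref{e:infimum-lower-bound} for $f$, and by the ``greatest lower bound'' property~\eqref{e:infimum-greatest} applied with $m=-L$ we obtain $-L\leq l$.

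For the reverse inequality I would mirror the argument. By Definition~\ref{d:infimum} applied to $f$, for every $x\in X$ we have $l\leq f(x)$, so negating gives $(-f)(x)=-f(x)\leq -l$. Thus $-l$ is an upper bound for $(-f)(X)$, so~\eqref{e:supremum-upper-bound} holds for $-f$ with value $-l$, and applying the least upper bound property~\eqref{e:supremum-least} with $M=-l$ yields $L\leq -l$. Order-reversal of negation then gives $l\leq -L$.

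Combining the two inequalities in $\matRbar$ gives $l=-L$, which is the desired identity $\infXf=-\supXmf$. The only point requiring care is that all manipulations take place in the extended real line $\matRbar$, so the argument must not assume finiteness; this is fine because Definitions~\ref{d:supremum} and~\ref{d:infimum} are stated in $\matRbar$ and negation is an order-reversing bijection of $\matRbar$ onto itself. There is no real obstacle here — the lemma is essentially a direct translation through the involution $t\mapsto -t$, and the only thing to be careful about is to quote the bounding and extremality properties from Definitions~\ref{d:supremum} and~\ref{d:infimum} in the right order.
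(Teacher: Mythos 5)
Your proof is correct and follows essentially the same route as the paper: negation reverses the order on $\matRbar$, so lower bounds of $f(X)$ correspond to negatives of upper bounds of $(-f)(X)$, and the two extremality clauses of Definitions~\ref{d:supremum} and~\ref{d:infimum} transfer accordingly. The only cosmetic difference is that the paper verifies directly that $-\supXmf$ satisfies the two defining clauses of $\infXf$, whereas you derive the two inequalities $-L\leq l$ and $l\leq -L$ and conclude by antisymmetry; both are sound.
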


\begin{proof}
  Let $L=\supXmf$ be an extended number in~$\matRbar$.
  Let $l=-L$.
  
  Let $x\in X$.
  From
  Definition~\threfc{d:supremum}{$L$~is an upper bound of $(-f)(X)$}, and
  \assume{ordered set properties of~$\matRbar$},
  we have $l=-L\leq f(x)$.
  Hence, $l$~is a lower bound of~$f(X)$.
  
  Let~$m\in\matRbar$ be a lower bound of~$f(X)$.
  Let $x\in X$.
  Then, from
  \assume{ordered set properties of~$\matRbar$},
  we have $-f(x)\leq-m$.
  Thus, from
  Definition~\threfc{d:supremum}{%
    $L$~is the least upper bound of $(-f)(X)$}, and
  \assume{ordered set properties of~$\matRbar$},
  we have $m\leq-L=l$.
  Hence, $l$~is the greatest lower bound of~$f(X)$.
  
  Therefore, from
  Definition~\thref{d:infimum},
  $l=\infXf$.
\end{proof}

\begin{lemma}[finite infimum]
  \label{l:finite-infimum}
  Let~$X$ be a set.
  Let $f:X\rightarrow\matR$ be a function.
  Assume that there exists a finite lower bound for~$f(X)$, {\ie} there
  exists   $m\in\matR$ such that, for all $x\in X$, $m\leq f(x)$.
  Then, the infimum is finite and $l=\infXf$ iff
  \eqref{e:infimum-lower-bound}~and
  \begin{equation}
    \label{e:infimum-accum}
    \forall \eps > 0,\,
    \exists x_\eps \in X,\quad
    f(x_\eps) < l + \eps.
  \end{equation}
\end{lemma}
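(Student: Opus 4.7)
The plan is to reduce this statement to Lemma~\ref{l:finite-supremum} (\emph{finite supremum}) via the duality Lemma~\ref{l:duality-infimum-supremum} (\emph{duality infimum-supremum}), applied to the function $-f$. The finiteness part is immediate: if $m\in\matR$ is a lower bound for $f(X)$, then by ordered field properties of~$\matR$, $-m\in\matR$ is an upper bound for $(-f)(X)$, so by Lemma~\ref{l:finite-supremum} applied to~$-f$, the extended number $L=\supXmf$ is a finite real, and by Lemma~\ref{l:duality-infimum-supremum}, $\infXf=-L$ is finite.

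Next I would argue the equivalence of the two characterizations. Assume~(\ref{e:infimum-lower-bound}) holds for the number~$l$; then by ordered field properties of~$\matR$, for every $x\in X$ we have $(-f)(x)=-f(x)\leq -l$, so $-l$ is an upper bound for $(-f)(X)$, i.e. property~(\ref{e:supremum-upper-bound}) of Definition~\ref{d:supremum} holds for $-l$ and $-f$.

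Then I would translate~(\ref{e:infimum-greatest}) and~(\ref{e:infimum-accum}) for $l$ into~(\ref{e:supremum-least}) and~(\ref{e:supremum-accum}) for $-l$ applied to $-f$. For (\ref{e:infimum-greatest}): a number $m\in\matRbar$ satisfies $m\leq f(x)$ for all~$x$ iff $-m$ satisfies $(-f)(x)\leq -m$ for all~$x$, and $m\leq l$ iff $-l\leq -m$, so (\ref{e:infimum-greatest}) for $l$ is equivalent to (\ref{e:supremum-least}) for $-l$ applied to $-f$. For (\ref{e:infimum-accum}): for any $\eps>0$, the existence of $x_\eps\in X$ with $f(x_\eps)<l+\eps$ is, by ordered field properties, equivalent to the existence of $x_\eps\in X$ with $-l-\eps<(-f)(x_\eps)$, i.e. (\ref{e:supremum-accum}) for $-l$ applied to $-f$.

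By Lemma~\ref{l:finite-supremum} applied to $-f$ (whose hypothesis of a finite upper bound is met by $-m$), these two conditions on $-l$ are equivalent and they both characterize $-l=\supXmf$; by Lemma~\ref{l:duality-infimum-supremum}, this is equivalent to $l=\infXf$. The main (small) obstacle is purely bookkeeping: keeping the direction of every inequality straight when passing between $f$ and $-f$, between $l$ and $-l$, and between the roles of lower/upper bound; everything else is mechanical once the duality is invoked.
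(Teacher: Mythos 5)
Your proposal is correct and follows essentially the same route as the paper: establish that $-m$ is a finite upper bound for $(-f)(X)$, apply Lemma~\ref{l:finite-supremum} to $-f$, and transfer the characterization back through Lemma~\ref{l:duality-infimum-supremum} and ordered field properties of~$\matR$. The sign bookkeeping you flag is exactly what the paper's chain of equivalences carries out.
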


\begin{proof}
  Let $x\in X$.
  Then, from
  hypothesis, and
  \assume{ordered field properties of~$\matR$},
  we have $-f(x)\leq -m$.
  Thus, from
  Lemma~\thref{l:finite-supremum},
  $\supXmf$ is finite.
  Hence, from
  Lemma~\thref{l:duality-infimum-supremum},
  $\infXf=-\supXmf$ is finite too.
  
  From
  Lemma~\thref{l:duality-infimum-supremum},
  Lemma~\thref{l:finite-supremum}, and
  \assume{ordered field properties of~$\matR$},
  we have
  \begin{eqnarray*}
    l = \infXf
    & \equiv & -l = \supXmf \\
    & \equiv &
    \left\{
      \begin{array}{l}
        \forall x \in X,\quad
        -f (x) \leq -l \\
        \forall \eps > 0,\,
        \exists x_\eps \in X,\quad
        -l - \eps < -f (x_\eps)
      \end{array}
    \right. \\
    & \equiv &
    \left\{
      \begin{array}{l}
        \forall x \in X,\quad
        l \leq f (x) \\
        \forall \eps > 0,\,
        \exists x_\eps \in X,\quad
        f (x_\eps) < l + \eps.
      \end{array}
    \right. \\
  \end{eqnarray*}
\end{proof}

\begin{lemma}[discrete upper accumulation]
  \label{l:discrete-upper-accumulation}
  Let~$X$ be a set.
  Let $f:X\rightarrow\matR$ be a function.
  Let~$l$ be a number.
  Then,
  \begin{equation}
    \label{e:discrete-upper-accumulation}
    \forall \eps > 0,\,
    \exists x_\eps \in X,\quad
    f (x_\eps) < l + \eps
    \Equiv
    \forall n \in \matN,\,
    \exists x_n \in X,\quad
    f (x_n) < l + \frac{1}{n + 1}.
  \end{equation}
\end{lemma}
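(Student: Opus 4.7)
The plan is to mirror the proof of Lemma~\ref{l:discrete-lower-accumulation}, since the present statement is simply its strict-infimum counterpart: the inequalities ``$L-\varepsilon < f(x_\varepsilon)$'' and ``$L-\frac{1}{n+1} < f(x_n)$'' are here replaced by ``$f(x_\varepsilon) < l+\varepsilon$'' and ``$f(x_n) < l+\frac{1}{n+1}$''. Two natural routes present themselves: reprove the two implications directly, or reduce to the already-established lemma by duality $f \leftrightarrow -f$, $l \leftrightarrow -l$.

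For the direct route, the forward implication ($\Rightarrow$) is an instantiation: given $n \in \matN$, I would set $\eps := \frac{1}{n+1} > 0$, invoke the continuous hypothesis to produce $x_\eps \in X$, and take $x_n := x_\eps$, so that $f(x_n) = f(x_\eps) < l + \eps = l + \frac{1}{n+1}$. For the backward implication ($\Leftarrow$), given $\eps > 0$, I would call on the \assume{Archimedean property of~$\matR$} to obtain $n \in \matN$ with $n > \frac{1}{\eps} - 1$ (for instance $n = \floor{\frac{1}{\eps}}$), hence $\frac{1}{n+1} < \eps$; the discrete hypothesis then yields $x_n \in X$ with $f(x_n) < l + \frac{1}{n+1}$, and setting $x_\eps := x_n$ gives $f(x_\eps) = f(x_n) < l + \frac{1}{n+1} < l + \eps$.

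For the duality route, I would rewrite the two conditions using \assume{ordered field properties of~$\matR$}: $f(x) < l + \eps \Leftrightarrow -l - \eps < (-f)(x)$, and similarly for the discrete variant with $\frac{1}{n+1}$. These are precisely conditions~\eqref{e:supremum-accum} and~\eqref{e:supremum-accum-discrete} of Lemma~\ref{l:discrete-lower-accumulation} applied to the function $-f$ and the number $L := -l$, so the equivalence transfers at once. This approach is consistent with the style already used in Lemma~\ref{l:finite-infimum}, which deduces its statement from Lemma~\ref{l:finite-supremum} via Lemma~\ref{l:duality-infimum-supremum}.

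There is no real obstacle here; the only minor point worth watching is that the shift ``$+1$'' inside $\frac{1}{n+1}$ must be respected in the Archimedean step, which is why the inequality is written as $n > \frac{1}{\eps} - 1$ and allows $n = 0$ in the (degenerate) case $\eps > 1$. Of the two routes I would favor the duality one, since it leverages an already-proved result and keeps the document's hierarchy of lemmas tight.
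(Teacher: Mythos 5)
Your proposal is correct, and the route you favor — reducing to Lemma~\ref{l:discrete-lower-accumulation} applied to $-f$ with $L=-l$, using ordered field properties of~$\matR$ to flip the inequalities — is exactly the paper's proof. The direct route you sketch as an alternative is also sound, but the duality reduction is the one adopted here.
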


\begin{proof}
  Direct consequence of
  Lemma~\threfc{l:discrete-lower-accumulation}{for~$-f$ and $L=-l$}, and
  \assume{ordered field properties of~$\matR$}.
\end{proof}

\begin{lemma}[finite infimum discrete]
  \label{l:finite-infimum-discrete}
  Let~$X$ be a set.
  Let $f:X\rightarrow\matR$ be a function.
  Assume that there exists a finite lower bound for~$f(X)$, {\ie} there
  exists   $m\in\matR$ such that, for all $x\in X$, $m\leq f(x)$.
  Then, the infimum is finite and $l=\infXf$ iff
  \eqref{e:infimum-lower-bound}~and
  \begin{equation}
    \label{e:infimum-accum-discrete}
    \forall n \in \matN,\,
    \exists x_n \in X,\quad f(x_n) < l + \frac{1}{n + 1}.
  \end{equation}
\end{lemma}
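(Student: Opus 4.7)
The plan is to simply combine the two preceding lemmas, mirroring the relationship that \emph{finite supremum} (Lemma~\ref{l:finite-supremum}) has with \emph{discrete lower accumulation} (Lemma~\ref{l:discrete-lower-accumulation}). The statement to be proved is exactly \emph{finite infimum} (Lemma~\ref{l:finite-infimum}) with the $\eps$-characterization~\eqref{e:infimum-accum} replaced by its discrete counterpart~\eqref{e:infimum-accum-discrete}, so all the analytic work is already done upstream.

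First, I would apply Lemma~\ref{l:finite-infimum}: under the hypothesis that $f(X)$ admits a finite lower bound $m \in \matR$, the infimum is finite, and the equality $l = \infXf$ is equivalent to the conjunction of~\eqref{e:infimum-lower-bound} and~\eqref{e:infimum-accum}. Then I would invoke Lemma~\ref{l:discrete-upper-accumulation} to rewrite~\eqref{e:infimum-accum} as~\eqref{e:infimum-accum-discrete}; note that this second lemma is purely an equivalence between the continuous and discrete forms of the accumulation property and does not require any hypothesis on $f$ or $l$ beyond $l$ being a real number, which is ensured by the finiteness obtained from Lemma~\ref{l:finite-infimum}. Chaining these two equivalences yields the desired result.

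There is no real obstacle here: the proof is a two-step transitivity argument, analogous to how one could have proved Lemma~\ref{l:finite-supremum} in discrete form by combining it with Lemma~\ref{l:discrete-lower-accumulation}. The only point requiring mild care is ordering the invocations so that the finiteness of $l$ is secured (via Lemma~\ref{l:finite-infimum}) before Lemma~\ref{l:discrete-upper-accumulation} is applied, since the latter is stated for a number $l$ rather than an extended number; once $l \in \matR$ is known, substituting~\eqref{e:infimum-accum} by~\eqref{e:infimum-accum-discrete} in the equivalence is immediate.
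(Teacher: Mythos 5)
Your proof is correct and matches the paper's own argument exactly: the paper proves this lemma as a direct consequence of Lemma~\ref{l:finite-infimum} followed by Lemma~\ref{l:discrete-upper-accumulation}. Your remark about securing finiteness of~$l$ first is a sensible (if minor) refinement of that two-step chaining.
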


\begin{proof}
  Direct consequence of
  Lemma~\thref{l:finite-infimum}, and
  Lemma~\thref{l:discrete-upper-accumulation}.
\end{proof}



\begin{definition}[minimum]
  \label{d:minimum}
  Let~$X$ be a set.
  Let $f:X\rightarrow\matR$ be a function.
  The infimum of~$f$ over~$X$ is called {\em minimum of~$f$ over~$X$} and it
  is denoted $\minXf$, iff
  there exists~$y\in X$ such that $f(y)=\infXf$.
\end{definition}

\begin{lemma}[finite minimum]
  \label{l:finite-minimum}
  Let~$X$ be a set.
  Let $f:X\rightarrow\matR$ be a function.
  Let $y\in X$.
  Then,
  \begin{equation}
    \label{e:finite-minimum}
    f(y) = \minXf
    \Equiv
    \forall x \in X,\quad
    f(y) \leq f(x).
  \end{equation}
\end{lemma}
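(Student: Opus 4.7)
The plan is to exploit the duality between infimum and supremum that the paper has already established, reducing this statement to Lemma~\thref{l:finite-maximum} applied to $-f$. This mirrors how Lemma~\thref{l:finite-infimum} was deduced from Lemma~\thref{l:finite-supremum}, and how Lemma~\thref{l:discrete-upper-accumulation} was deduced from Lemma~\thref{l:discrete-lower-accumulation}, so it fits the surrounding pattern of the section.

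The chain of equivalences I would carry out is as follows. By Definition~\thref{d:minimum}, the left-hand side $f(y)=\minXf$ unfolds to $f(y)=\infXf$ together with $y\in X$ being a realizer (and $y\in X$ is given by hypothesis). By Lemma~\thref{l:duality-infimum-supremum} and ordered field properties of $\matR$, this is equivalent to $(-f)(y)=-\infXf=\supXmf$; since $y\in X$, Definition~\thref{d:maximum} applied to $-f$ then identifies this with $(-f)(y)=\maxXmf$. Applying Lemma~\thref{l:finite-maximum} to $-f$ yields the equivalent characterization $\forall x\in X,\ (-f)(x)\leq (-f)(y)$, and ordered field properties of $\matR$ rewrite this last statement as $\forall x\in X,\ f(y)\leq f(x)$, which is the right-hand side of~\eqref{e:finite-minimum}.

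The main obstacle is minor---essentially clean bookkeeping: one must track that the extremum is attained at the specified $y\in X$ throughout the chain, which is exactly what distinguishes a minimum from a mere infimum. An equally short direct alternative would mirror the proof of Lemma~\thref{l:finite-maximum} step by step, using the lower bound part of Definition~\thref{d:infimum} for the ``left implies right'' direction, and Lemma~\thref{l:finite-infimum} with $x_\eps=y$ (yielding $f(y)-\eps<f(y)$ from ordered field properties of $\matR$, then $\infXf=f(y)$) for the ``right implies left'' direction. Either route avoids new ideas and reuses only results already established in Section~\ref{ss:sup-inf}.
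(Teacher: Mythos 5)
Your proof is correct and follows essentially the same route as the paper: the paper's proof is exactly the chain $f(y)=\minXf \equiv -f(y)=\maxXmf \equiv \forall x\in X,\ -f(x)\leq -f(y) \equiv \forall x\in X,\ f(y)\leq f(x)$, justified by Definition~\ref{d:minimum}, Lemma~\ref{l:duality-infimum-supremum}, Lemma~\ref{l:finite-maximum}, and ordered field properties of~$\matR$. Your bookkeeping about the realizer $y$ is the right thing to watch, and your reduction handles it as the paper does.
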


\begin{proof}
  From
  Definition~\thref{d:minimum},
  Lemma~\thref{l:duality-infimum-supremum},
  Lemma~\thref{l:finite-maximum}, and
  \assume{ordered field properties of~$\matR$},
  we have
  \begin{eqnarray*}
    f(y) = \minXf
    & \equiv & -f(y) = \maxXmf \\
    & \equiv &
    \forall x \in X,\quad
    -f(x) \leq -f(y) \\
    & \equiv &
    \forall x \in X,\quad
    f(y) \leq f(x).
  \end{eqnarray*}
\end{proof}

\subsection{Metric space}
\label{ss:metric-space}

\begin{definition}[distance]
  \label{d:distance}
  Let~$X$ be a nonempty set.
  An application $d:\XxX\rightarrow\matR$ is a {\em distance over~$X$} iff
  it is nonnegative,
  symmetric,
  it separates points,
  and it satisfies the triangle inequality:
  \begin{eqnarray}
    \label{e:distance-nonnegative}
    \forall x, y \in X, & & d(x,y) \geq 0; \\
    \label{e:distance-symmetric}
    \forall x, y \in X, & & d(y,x) = d(x,y); \\
    \label{e:distance-definite}
    \forall x, y \in X, & & d(x,y) = 0 \Equiv x = y; \\
    \label{e:distance-triangle}
    \forall x, y, z \in X, & & d(x, z) \leq d(x,y) + d(y,z).
  \end{eqnarray}
\end{definition}

\begin{definition}[metric space]
  \label{d:metric-space}
  $(X,d)$ is a {\em metric space} iff
  $X$~is a nonempty set and $d$~is a distance over~$X$.
\end{definition}

\begin{lemma}[iterated triangle inequality]
  \label{l:iterated-triangle-inequality}
   Let $(X,d)$ be a metric space.
   Let $(x_n)_{n\in\matN}$ be a sequence of points of~$X$.
   Then,
   \begin{equation}
     \label{e:iterated-triangle-inequality}
     \forall n, p \in \matN,\quad
     d(x_n, x_{n + p}) \leq \sum_{i=0}^{p - 1} d(x_{n + i}, x_{n + i + 1}).
   \end{equation}
\end{lemma}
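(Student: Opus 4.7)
The plan is to fix $n \in \matN$ and proceed by induction on $p \in \matN$, using the triangle inequality from Definition~\ref{d:distance} as the essential step.

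For the base case $p = 0$, the left-hand side is $d(x_n, x_n)$, which equals $0$ by the definiteness property \eqref{e:distance-definite} (since $x_n = x_n$), while the right-hand side is an empty sum, equal to $0$ by convention; hence $0 \leq 0$ holds. Alternatively, if empty sums are not admitted in the formalization, one may take the base case $p = 1$, which reduces to the trivial inequality $d(x_n, x_{n+1}) \leq d(x_n, x_{n+1})$ (a single-term sum).

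For the inductive step, assume the inequality holds for some $p \in \matN$. To establish it for $p + 1$, apply the triangle inequality \eqref{e:distance-triangle} to the triple $(x_n, x_{n+p}, x_{n+p+1})$, which yields
\begin{equation*}
  d(x_n, x_{n + p + 1})
  \leq d(x_n, x_{n + p}) + d(x_{n + p}, x_{n + p + 1}).
\end{equation*}
By the inductive hypothesis, $d(x_n, x_{n+p}) \leq \sum_{i=0}^{p-1} d(x_{n+i}, x_{n+i+1})$, and adding $d(x_{n+p}, x_{n+p+1})$ to both sides (using the compatibility of $\leq$ with addition from the ordered field properties of $\matR$) gives exactly
\begin{equation*}
  d(x_n, x_{n + p + 1})
  \leq \sum_{i=0}^{p} d(x_{n + i}, x_{n + i + 1}),
\end{equation*}
which completes the induction.

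There is no genuine obstacle here: the proof is a textbook induction whose only substantive ingredient is a single application of the triangle inequality \eqref{e:distance-triangle}. The only delicate point, from the perspective of a later \coq{} formalization, is bookkeeping the indices of the finite sum (and possibly the handling of the empty sum in the $p = 0$ case), together with the fact that $n$ must be kept as a free parameter during the induction on $p$ rather than treating the induction with both variables simultaneously.
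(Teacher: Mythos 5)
Your proof is correct and follows exactly the same route as the paper: fix $n$, induct on $p$, settle the base case $p=0$ via $d(x_n,x_n)=0$ (point separation) against the empty sum, and push the inductive step with one application of the triangle inequality plus compatibility of $\leq$ with addition. The remarks on the empty-sum convention and on keeping $n$ as a free parameter are sensible formalization notes but do not change the argument.
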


\begin{proof}
  Let $n\in\matN$ be a natural number.
  For $p\in\matN$, let $P(p)$ be the property
  \begin{equation*}
    d(x_n, x_{n + p}) \leq \sum_{i=0}^{p - 1} d(x_{n + i}, x_{n + i + 1}).
  \end{equation*}
  
  \proofparskip{Induction: $P(0)$}
  From
  Definition~\threfc{d:distance}{$d$~separates points}, and
  \assume{ordered field properties of~$\matR$},
  $P(0)$ is obviously satisfied.
  
  \proofparskip{Induction: $P(p)$ implies $P(p+1)$}
  Let $p\in\matN$.
  Assume that $P(p)$ holds.
  Then, from
  Definition~\threfc{d:distance}{$d$~satisfies triangle inequality},
  we have $d(x_n,x_{n+p+1})\leq d(x_n,x_{n+p})+d(x_{n+p},x_{n+p+1})$.
  Hence, from
  hypothesis, and
  \assume{ordered field properties of~$\matR$},
  we have $P(p+1)$.
  
  Therefore, by induction on $p\in\matN$, we have, for all $p\in\matN$,
  $P(p)$.
\end{proof}

\subsubsection{Topology of balls}



\begin{definition}[closed ball]
  \label{d:closed-ball}
  Let $(X,d)$ be a metric space.
  Let $x\in X$ be a point.
  Let $r\geq 0$ be a nonnegative number.
  The {\em closed ball centered in~$x$ of radius~$r$}, denoted
  $\cball{d}{x}{r}$, is the subset of~$X$ defined by
  \begin{equation}
    \label{e:closed-ball}
    \cball{d}{x}{r} = \left\{ y \in X \st d(x, y) \leq r \right\}.
  \end{equation}
\end{definition}


\begin{definition}[sphere]
  \label{d:sphere}
  Let $(X,d)$ be a metric space.
  Let $x\in X$ be a point.
  Let $r\geq 0$ be a nonnegative number.
  The {\em sphere centered in~$x$ of radius~$r$}, denoted $\sphere{d}{x}{r}$,
  is the subset of~$X$ defined by
  \begin{equation}
    \label{e:sphere}
    \sphere{d}{x}{r} = \left\{ y \in X \st d(x, y) = r \right\}.
  \end{equation}
\end{definition}

\begin{definition}[open subset]
  \label{d:open-subset}
  Let $(X,d)$ be a metric space.
  A subset~$Y$ of~$X$ is {\em open (for distance~$d$)} iff
  \begin{equation}
    \label{e:open-subset}
    \forall x \in Y,\,
    \exists r > 0,\quad
    \cball{d}{x}{r} \subset Y.
  \end{equation}
\end{definition}


\begin{definition}[closed subset]
  \label{d:closed-subset}
  Let $(X,d)$ be a metric space.
  A subset~$Y$ of~$X$ is {\em closed (for distance~$d$)} iff
  $X\backslash Y$ is open for distance~$d$.
\end{definition}

\begin{lemma}[equivalent definition of closed subset]
  \label{l:equivalent-definition-of-closed-subset}
  Let $(X,d)$ be a metric space.
  A subset~$Y$ of~$X$ is {\em closed (for distance~$d$)} iff
  \begin{equation}
    \label{e:equivalent-definition-of-closed-subset}
     \forall x \in X \backslash Y,\,
    \exists r > 0,\quad
    \cball{d}{x}{r} \cap Y = \emptyset.
  \end{equation}
\end{lemma}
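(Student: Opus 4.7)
The plan is to unfold the definition of \emph{closed subset} and then translate the set-inclusion condition ``$\cball{d}{x}{r}\subset X\backslash Y$'' into the equivalent disjointness condition ``$\cball{d}{x}{r}\cap Y=\emptyset$.'' No topological machinery is required beyond what has already been introduced in Definitions~\ref{d:open-subset} and~\ref{d:closed-subset}; the work is almost entirely set-theoretic bookkeeping.

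Concretely, I would proceed in two short steps. First, by Definition~\ref{d:closed-subset}, the subset $Y$ is closed for $d$ if and only if $X\backslash Y$ is open for $d$. Applying Definition~\ref{d:open-subset} to the subset $X\backslash Y$, this is in turn equivalent to
\begin{equation*}
  \forall x \in X \backslash Y,\,
  \exists r > 0,\quad
  \cball{d}{x}{r} \subset X \backslash Y.
\end{equation*}
Second, I would appeal to basic set-theoretic properties (complement and intersection): for any subset $A$ of $X$, the inclusion $A\subset X\backslash Y$ is equivalent to $A\cap Y=\emptyset$, since the former says that no element of $A$ lies in $Y$ and the latter says exactly the same. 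Specializing $A=\cball{d}{x}{r}$ and plugging this equivalence into the above quantified statement yields~\eqref{e:equivalent-definition-of-closed-subset}, and conversely.

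There is really no difficult step here; if anything, the only subtlety is to spell out carefully the set-theoretic equivalence $A\subset X\backslash Y \Leftrightarrow A\cap Y=\emptyset$, which in the paper's presentation style is flagged as one of the ``basic results from set theory'' displayed in \assume{bold red}. Consequently, I would present the proof as a short chain of equivalences, citing Definition~\ref{d:closed-subset}, then Definition~\ref{d:open-subset} (instantiated at the complement $X\backslash Y$), and finally invoking the assumed set-theoretic fact about complements and disjointness to finish.
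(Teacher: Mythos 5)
Your proof is correct and follows exactly the same route as the paper, which simply cites Definition~\ref{d:closed-subset}, Definition~\ref{d:open-subset} applied to the complement $X\backslash Y$, and the assumed set-theoretic fact that $A\subset X\backslash Y$ is equivalent to $A\cap Y=\emptyset$. Your version merely spells out the chain of equivalences that the paper leaves implicit.
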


\begin{proof}
  Direct consequence of
  Definition~\thref{d:closed-subset},
  Definition~\thref{d:open-subset}, and
  \assume{the definition of the complement from set theory}.
\end{proof}

\begin{lemma}[singleton is closed]
  \label{l:singleton-is-closed}
  Let $(X,d)$ be a metric space.
  Let $x\in X$ be a point.
  Then $\{x\}$ is closed.
\end{lemma}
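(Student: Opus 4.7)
The plan is to invoke Lemma~\ref{l:equivalent-definition-of-closed-subset} to reduce the goal to showing that for every point $y\in X\backslash\{x\}$ there exists a radius $r>0$ such that $\cball{d}{y}{r}\cap\{x\}=\emptyset$. Since $\{x\}$ is a singleton, this last condition is equivalent to $x\notin\cball{d}{y}{r}$, i.e.\ $d(y,x)>r$, so the whole problem boils down to exhibiting a positive $r$ strictly smaller than $d(y,x)$.

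First, I would fix $y\in X\backslash\{x\}$ and observe that $y\neq x$. By the separation property~\eqref{e:distance-definite} in Definition~\ref{d:distance}, this forces $d(x,y)\neq 0$; combined with the nonnegativity~\eqref{e:distance-nonnegative}, we obtain $d(x,y)>0$. A canonical choice is then $r=\tfrac{1}{2}d(x,y)$, which is strictly positive and, by ordered field properties of~$\matR$, strictly less than $d(x,y)$.

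Next I would check that this $r$ works. Using the symmetry~\eqref{e:distance-symmetric}, $d(y,x)=d(x,y)>r$, so $x$ does not satisfy the defining inequality $d(y,\cdot)\leq r$ of $\cball{d}{y}{r}$ from Definition~\ref{d:closed-ball}. Therefore $x\notin\cball{d}{y}{r}$, and since $\{x\}$ contains only~$x$, this yields $\cball{d}{y}{r}\cap\{x\}=\emptyset$. Applying Lemma~\ref{l:equivalent-definition-of-closed-subset} concludes that $\{x\}$ is closed.

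There is no real obstacle here: the argument is a direct chaining of the four axioms of a distance with the equivalent characterization of closedness. The only mildly delicate point is being explicit that $y\neq x$ implies $d(x,y)>0$ (which needs both the separation property and nonnegativity), since in the target formalization such basic facts must be derived rather than assumed.
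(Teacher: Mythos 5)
Your proof is correct and follows essentially the same route as the paper: reduce to Lemma~\ref{l:equivalent-definition-of-closed-subset}, take $r=\tfrac{1}{2}d(x,y)>0$ via the separation property, and conclude that the closed ball around $y$ misses $x$. Your explicit remark that nonnegativity is needed to pass from $d(x,y)\neq 0$ to $d(x,y)>0$ is a useful precision for formalization but not a different argument.
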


\begin{proof}
  Let $\xp\in X$ be a point.
  Assume that $\xp\not=x$.
  Then, from
  Definition~\threfc{d:distance}{$d$~separates points, contrapositive}, and
  \assume{ordered field properties of~$\matR$},
  $\eps=\half\,d(\xp,x)$ is positive.
  Hence, $d(\xp,x)>\eps$ and $\cball{d}{\xp}{\eps}\cap\{x\}=\emptyset$.
  
  Therefore, from
  Lemma~\thref{l:equivalent-definition-of-closed-subset},
  $\{x\}$ is closed.
\end{proof}

\begin{definition}[closure]
  \label{d:closure}
  Let $(X,d)$ be a metric space.
  Let~$Y$ be a subset of~$X$.
  The {\em closure of $Y$}, denoted $\Closure{Y}$, is the subset
  \begin{equation}
    \label{e:closure}
    \Closure{Y} =
    \left\{
      x \in X \st
      \forall \eps > 0,\,
      \cball{d}{x}{\eps} \cap Y \not= \emptyset
    \right\}.
  \end{equation}
\end{definition}



\begin{definition}[convergent sequence]
  \label{d:convergent-sequence}
  Let $(X,d)$ be a metric space.
  Let $l\in X$.
  A sequence $(x_n)_{n\in \matN}$ of~$X$ is {\em convergent with limit~$l$}
  iff
  \begin{equation}
    \label{e:convergent-sequence}
    \forall \eps > 0,\,
    \exists N \in \matN,\,
    \forall n \in \matN,\quad
    n\geq N \Implies d(x_n, l) \leq \eps.
  \end{equation}
\end{definition}

\begin{lemma}[variant of point separation]
  \label{l:variant-of-point-separation}
  Let $(X,d)$ be a metric space.
  Let $x,\xp\in X$ such that for all $\eps>0$, we have $d(x,\xp)\leq\eps$.
  Then, $x=\xp$.
\end{lemma}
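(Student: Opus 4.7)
The plan is to argue by contraposition, exhibiting a specific $\eps > 0$ that violates the hypothesis whenever $x \neq \xp$. Equivalently, one can argue by contradiction: assume $x \neq \xp$, and produce a small enough $\eps$ for which $d(x, \xp) \leq \eps$ fails.

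First I would use Definition~\thref{d:distance} (point separation property~\eqref{e:distance-definite}, contrapositive form) to deduce from $x \neq \xp$ that $d(x, \xp) \neq 0$. Combined with the nonnegativity of the distance (equation~\eqref{e:distance-nonnegative}) and ordered field properties of $\matR$, this yields $d(x, \xp) > 0$. Then I would set $\eps = \half\, d(x, \xp)$, which is positive by the same ordered field properties. Applying the hypothesis with this particular $\eps$, we would get $d(x, \xp) \leq \half\, d(x, \xp)$, hence $\half\, d(x, \xp) \leq 0$, contradicting the strict positivity just established.

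There is essentially no obstacle here: the lemma is a direct consequence of the definiteness axiom of a distance, and the only subtlety is choosing a suitable positive $\eps$ strictly smaller than $d(x, \xp)$ to trigger the contradiction. The choice $\eps = \half\, d(x, \xp)$ is convenient and avoids invoking anything beyond the axioms of Definition~\thref{d:distance} and basic ordered field arithmetic. The proof structure mirrors the one used for Lemma~\thref{l:singleton-is-closed} in the same section, which also exploits $\half\, d(\cdot, \cdot)$ to separate distinct points.
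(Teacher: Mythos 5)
Your proof is correct and follows essentially the same route as the paper's: assume positivity of $d(x,\xp)$ (directly, or via $x\not=\xp$ and the contrapositive of definiteness), instantiate the hypothesis with $\eps=\half\,d(x,\xp)$, and derive a contradiction with ordered field properties of~$\matR$, concluding with nonnegativity and point separation of the distance. The only cosmetic difference is that the paper phrases the contradiction as $1\leq\half$ after dividing by $d(x,\xp)>0$, while you phrase it as $\half\,d(x,\xp)\leq 0$; both are the same argument.
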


\begin{proof}
  Assume that $d(x,\xp)>0$.
  Let $\eps=\frac{d(x,\xp)}{2}$.
  Then, $0<d(x,\xp)\leq\eps=\frac{d(x,\xp)}{2}$.
  Hence, from
  \assume{ordered field properties of~$\matR$ (with $d(x,\xp)>0$)},
  we have $0<1\leq\half$, which is wrong.
  Thus, from
  Definition~\threfc{d:distance}{$d$~is nonnegative},
  we have $d(x,\xp)=0$.
  
  Therefore, from
  Definition~\threfc{d:distance}{$d$~separates points},
  we have $x=\xp$.
\end{proof}

\begin{lemma}[limit is unique]
  \label{l:limit-is-unique}
  Let $(X,d)$ be a metric space.
  Let $(x_n)_{n\in \matN}$ be a convergent sequence of~$X$.
  Then, the limit of the sequence is unique.
  The limit is denoted $\lim_{n\rightarrow+\infty}x_n$.
\end{lemma}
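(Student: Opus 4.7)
The plan is to show uniqueness by a classical contradiction-free argument: suppose $l$ and $\lp$ are both limits of $(x_n)_{n\in\matN}$, then derive that $d(l,\lp) \leq \eps$ for every positive $\eps$, and finally invoke Lemma~\thref{l:variant-of-point-separation} to conclude $l=\lp$. This reuses the just-established ``variant of point separation,'' which is exactly tailored to this type of argument and removes the need for any explicit contradiction step.

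More precisely, I would start by fixing two points $l,\lp\in X$ satisfying Definition~\thref{d:convergent-sequence} for the same sequence. Given an arbitrary $\eps>0$, I would apply that definition with $\eps/2$ (which is positive by \assume{ordered field properties of~$\matR$}) to both $l$ and $\lp$, obtaining ranks $N$ and $\Np$ in~$\matN$ beyond which $d(x_n,l)\leq\eps/2$ and $d(x_n,\lp)\leq\eps/2$ respectively. Taking $n=\max(N,\Np)$ (which exists in~$\matN$) ensures both inequalities simultaneously hold at that index.

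The key computation is then a short chain combining Definition~\threfc{d:distance}{symmetry} and Definition~\threfc{d:distance}{triangle inequality}:
\begin{equation*}
  d(l, \lp) \leq d(l, x_n) + d(x_n, \lp) = d(x_n, l) + d(x_n, \lp) \leq \frac{\eps}{2} + \frac{\eps}{2} = \eps.
\end{equation*}
Since this holds for every $\eps>0$, Lemma~\thref{l:variant-of-point-separation} yields $l=\lp$, which is the desired uniqueness.

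I do not expect any real obstacle: both ingredients (the triangle inequality and the variant of point separation) have been established just above, and the halving trick on~$\eps$ is the standard way to merge two convergence conditions into one. The only minor care is to take the maximum of the two ranks so that a single index~$n$ witnesses both inequalities, and to remember to invoke symmetry of~$d$ to align the arguments of the two distances with those appearing in Definition~\thref{d:convergent-sequence}.
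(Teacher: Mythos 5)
Your proposal is correct and follows essentially the same route as the paper's proof: halve $\eps$, take the maximum of the two ranks, combine the triangle inequality with symmetry of~$d$, and conclude via Lemma~\ref{l:variant-of-point-separation} (\emph{\nameref{l:variant-of-point-separation}}). No gaps.
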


\begin{proof}
  Let $l,\lp\in X$ be two limits of the sequence.
  Let $\eps>0$.
  Then, from
  \assume{ordered field properties of~$\matR$}, and
  Definition~\threfc{d:convergent-sequence}{with $\frac{\eps}{2}>0$},
  let $N,\Np\in\matN$ such that, for all $n,\np\in\matN$, $n\geq N$
  and $\np\geq\Np$ implies $d(x_n,l)\leq\frac{\eps}{2}$ and
  $d(x_{\np},\lp)\leq\frac{\eps}{2}$.
  Let $M=\max(N,\Np)$.
  Let $p\in\matN$.
  Assume that $p\geq M$.
  Then, from
  \assume{the definition of the maximum},
  we have $d(x_p,l)\leq\frac{\eps}{2}$ and $d(x_p,\lp)\leq\frac{\eps}{2}$.
  Hence, from
  Definition~\threfc{d:distance}{%
    $d$~is nonnegative, satisfies triangle inequality, and is symmetric}, and
  \assume{ordered field properties of~$\matR$},
  we have
  \begin{equation*}
    0 \leq d(l, \lp)
    \leq d(l, x_p) + d(x_p, \lp)
    = d(x_p, l) + d(x_p, \lp)
    \leq \frac{\eps}{2} + \frac{\eps}{2}
    = \eps.
  \end{equation*}
  
  Therefore, from
  Lemma~\thref{l:variant-of-point-separation},
  we have $l=\lp$.
\end{proof}

\begin{lemma}[closure is limit of sequences]
  \label{l:closure-is-limit-of-sequences}
  Let $(X,d)$ be a metric space.
  Let~$Y$ be a nonempty subset of~$X$.
  Let $a\in X$ be a point.
  Then,
  \begin{equation}
    \label{e:closure-is-limit-of-sequences}
    a \in \Closure{Y}
    \Equiv
    \exists (a_n)_{n \in \matN} \in Y^\matN,\quad
    a = \lim_{n \rightarrow \infty} a_n.
  \end{equation}
\end{lemma}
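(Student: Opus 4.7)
The plan is to prove the equivalence by two direct implications, using the definition of the closure and of a convergent sequence together with the Archimedean property of~$\matR$.

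For the forward direction ($a \in \Closure{Y}$ implies the existence of a sequence), I would use the radius~$\frac{1}{n+1}$ in the definition of the closure. More precisely, for every $n \in \matN$, the ball $\cball{d}{a}{\frac{1}{n+1}}$ meets~$Y$, hence there exists $a_n \in \cball{d}{a}{\frac{1}{n+1}} \cap Y$. Then I would show that the sequence $(a_n)_{n\in\matN}$, which lies in~$Y$, converges to~$a$: given $\eps>0$, the Archimedean property of~$\matR$ yields $N\in\matN$ with $\frac{1}{N+1}\leq\eps$, and for every $n\geq N$ we have $d(a_n,a)\leq\frac{1}{n+1}\leq\frac{1}{N+1}\leq\eps$, which matches Definition~\thref{d:convergent-sequence}.

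For the backward direction (convergence implies $a \in \Closure{Y}$), I would fix an arbitrary $\eps>0$, use Definition~\thref{d:convergent-sequence} to obtain an index $N\in\matN$ such that $d(a_N,a)\leq\eps$, and observe that $a_N\in Y$ by hypothesis, so $a_N\in\cball{d}{a}{\eps}\cap Y$, which is therefore nonempty. Hence $a$ satisfies the characterizing property of Definition~\thref{d:closure}.

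The only non-routine point is the forward direction, where the construction of the sequence $(a_n)_{n\in\matN}$ requires a choice of a witness in $\cball{d}{a}{\frac{1}{n+1}}\cap Y$ for each $n$. At the pen-and-paper level this is unproblematic, but it is the kind of step that will require explicit justification in a \coq{} formalization (countable dependent choice, or a suitably strong choice axiom in the intuitionistic setting mentioned in the introduction). Apart from that, both implications reduce to \assume{ordered field properties of~$\matR$}, the \assume{Archimedean property of~$\matR$}, and direct unfolding of Definitions~\thref{d:closure} and~\thref{d:convergent-sequence}.
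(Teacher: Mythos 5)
Your proof is correct and follows essentially the same route as the paper: extract a sequence using balls of shrinking radius for the forward direction, and exhibit a single point $a_N$ in $\cball{d}{a}{\eps}\cap Y$ for the backward direction. The only (cosmetic) difference is that you use radius $\frac{1}{n+1}$ for all $n$, whereas the paper uses $\frac{1}{n}$ for $n\geq 1$ and picks $a_0$ arbitrarily in the nonempty set~$Y$; your variant avoids that special case, and your remark about the countable choice needed to build the sequence is a pertinent observation for the intended formalization.
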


\begin{proof}
  \proofpar{``Left'' implies ``right''}
  Assume that~$a\in\Closure{Y}$.
  Let~$a_0$ be a point of~$Y$.
  Let $n\in\matN$.
  Assume that $n\geq 1$.
  Then, $\frac{1}{n}>0$, and from
  Definition~\thref{d:closure},
  let~$a_n$ be in the nonempty intersection
  $\cball{d}{a}{\frac{1}{n}}\cap Y$.
  From
  Definition~\thref{d:closed-ball}, and
  Definition~\threfc{d:distance}{$d$~is symmetric},
  we have $a_n\in Y$ and $d(a_n,a)\leq\frac{1}{n}$.
  Let $\eps>0$.
  Let $N=\ceil{\frac{1}{\eps}}$.
  Let $n\in\matN$.
  Assume that $n\geq N$.
  Then, from
  \assume{ordered field properties of~$\matR$}, and
  \assume{the definition of ceiling function},
  we have
  \begin{equation*}
    d(a_n, a) \leq \frac{1}{n}
    \leq \frac{1}{N}
    \leq \eps.
  \end{equation*}
  Hence, from
  Definition~\thref{d:convergent-sequence},
  the sequence $(a_n)_{n\in\matN}$ is convergent with limit~$a$.
  
  \proofparskip{``Right'' implies ``left''}
  Assume now that there exists a convergent sequence $(a_n)_{n\in\matN}$
  in~$Y$ with limit~$a$.
  Let $\eps>0$.
  Then, from
  Definition~\thref{d:convergent-sequence},
  let $N\in\matN$ such that for all $n\in\matN$, $n\geq N$ implies
  $d(a_n,a)\leq\eps$.
  Thus, $a_N$ belongs to the ball $\cball{d}{a}{\eps}$.
  Hence, from
  Definition~\thref{d:closure},
  $a$~belongs to the closure~$\Closure{Y}$.
\end{proof}

\begin{lemma}[closed equals closure]
  \label{l:closed-equals-closure}
  Let $(X,d)$ be a metric space.
  Let~$Y$ be a nonempty subset of~$X$.
  Then,
  \begin{equation}
    \label{e:closed-equals-closure}
    Y \mbox{ is closed}
    \Equiv
    Y = \Closure{Y}.
  \end{equation}
\end{lemma}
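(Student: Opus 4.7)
The plan is to prove the equivalence by first establishing the always-valid inclusion $Y \subset \Closure{Y}$, and then reducing the equivalence to the inclusion $\Closure{Y} \subset Y$, which will be handled via Lemma~\ref{l:equivalent-definition-of-closed-subset}.

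First I would show $Y \subset \Closure{Y}$ unconditionally: given $y \in Y$ and any $\eps > 0$, the point $y$ itself lies in $\cball{d}{y}{\eps}$ (since $d(y,y) = 0 \leq \eps$ by Definition~\ref{d:distance}, points separation), so $y \in \cball{d}{y}{\eps} \cap Y$, which is therefore nonempty. Thus $y \in \Closure{Y}$ by Definition~\ref{d:closure}. Given this, the equivalence $Y = \Closure{Y} \Leftrightarrow \Closure{Y} \subset Y$ is immediate from basic set theory.

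Next, for the direction ``$Y$ closed implies $\Closure{Y} \subset Y$'', I would argue by contrapositive: pick $x \in X \setminus Y$ and show $x \notin \Closure{Y}$. Since $Y$ is closed, Lemma~\ref{l:equivalent-definition-of-closed-subset} provides some $r > 0$ such that $\cball{d}{x}{r} \cap Y = \emptyset$. Taking $\eps = r$ witnesses the negation of the defining property of $\Closure{Y}$ in Definition~\ref{d:closure}, so $x \notin \Closure{Y}$.

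For the converse direction ``$Y = \Closure{Y}$ implies $Y$ is closed'', I would apply Lemma~\ref{l:equivalent-definition-of-closed-subset} directly. Let $x \in X \setminus Y$. By hypothesis $x \notin \Closure{Y}$, and negating Definition~\ref{d:closure} yields some $\eps > 0$ with $\cball{d}{x}{\eps} \cap Y = \emptyset$. This exhibits the required radius, so $Y$ is closed. No step here is a serious obstacle; the only subtlety is being careful that the equivalence uses the characterization of closedness from Lemma~\ref{l:equivalent-definition-of-closed-subset} rather than unfolding back to Definition~\ref{d:closed-subset} and Definition~\ref{d:open-subset}, which would add an unnecessary layer.
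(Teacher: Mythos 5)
Your proposal is correct and follows essentially the same route as the paper: both directions go through Lemma~\ref{l:equivalent-definition-of-closed-subset} and the negation of Definition~\ref{d:closure}, with the inclusion $Y\subset\Closure{Y}$ noted as immediate (the paper calls it obvious where you justify it via $d(y,y)=0$). The only cosmetic difference is that you phrase the forward direction as a contrapositive where the paper argues by contradiction.
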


\begin{proof}
  \proofpar{``Left'' implies ``right''}
  Assume that $Y$ is closed.
  Then, from
  Definition~\thref{d:closed-subset},
  $X\backslash Y$ is open.
  Let $a\in\Closure{Y}$.
  Then, from
  Definition~\thref{d:closure},
  for all $\eps>0$, we have $\cball{d}{a}{\eps}\cap Y\not=\emptyset$.
  Assume that $a\not\in Y$.
  Then, from
  \assume{the definition of the complement from set theory}, and
  Lemma~\thref{l:equivalent-definition-of-closed-subset},
  there exists $\eps>0$ such that $\cball{d}{a}{\eps}\cap Y=\emptyset$.
  Which is impossible.
  Thus, $a$~belongs to~$Y$.
  Hence, $\Closure{Y}\subset Y$.
  Moreover, from
  Definition~\thref{d:closure},
  $Y$ is obviously a subset of~$\Closure{Y}$.
  Therefore, $Y=\Closure{Y}$.
  
  \proofparskip{``Right'' implies ``left''}
  Assume now that $Y=\Closure{Y}$.
  Let $x\in X\backslash Y$.
  From
  \assume{the definition of the complement from set theory}, and
  hypothesis,
  $x$ does not belong to $Y=\Closure{Y}$.
  Thus, from
  Definition~\thref{d:closure},
  there exists $\eps>0$ such that $\cball{d}{x}{\eps}\cap Y=\emptyset$.
  Hence, from
  Lemma~\thref{l:equivalent-definition-of-closed-subset},
  $Y$ is closed.
\end{proof}

\begin{lemma}[closed is limit of sequences]
  \label{l:closed-is-limit-of-sequences}
  Let $(X,d)$ be a metric space.
  Let~$Y$ be a nonempty subset of~$X$.
  Then,
  \begin{equation}
    \label{e:closed-is-limit-of-sequences}
    Y \mbox{ is closed}
    \Equiv
    (\forall (a_n)_{n \in \matN} \in Y^\matN,\,
    \forall a \in X,\quad
    a = \lim_{n \rightarrow \infty} a_n
    \Implies
    a \in Y).
  \end{equation}
\end{lemma}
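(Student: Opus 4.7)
The plan is to obtain this statement as an almost immediate corollary of the two previously established characterizations: Lemma~\ref{l:closure-is-limit-of-sequences} (closure is limit of sequences) and Lemma~\ref{l:closed-equals-closure} (closed equals closure). No direct manipulation with balls or the complement is needed here; everything is funneled through the closure.

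For the forward implication, I would assume $Y$ is closed. By Lemma~\ref{l:closed-equals-closure} this gives $Y=\Closure{Y}$. Now take any sequence $(a_n)_{n\in\matN}$ in~$Y$ and any $a\in X$ with $a=\lim_{n\rightarrow\infty}a_n$. By the ``right implies left'' direction of Lemma~\ref{l:closure-is-limit-of-sequences}, this witnesses $a\in\Closure{Y}$, and hence $a\in Y$ by the equality just obtained.

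For the converse, I would assume the sequential property and prove $Y=\Closure{Y}$, then invoke Lemma~\ref{l:closed-equals-closure}. The inclusion $Y\subset\Closure{Y}$ is trivial from Definition~\ref{d:closure} (given any $x\in Y$, every closed ball centered at~$x$ contains $x$ itself, so its intersection with $Y$ is nonempty). For the reverse inclusion, pick $a\in\Closure{Y}$. By the ``left implies right'' direction of Lemma~\ref{l:closure-is-limit-of-sequences}, there exists a sequence $(a_n)_{n\in\matN}$ in~$Y$ converging to~$a$; the hypothesis then yields $a\in Y$. Thus $\Closure{Y}\subset Y$, so $Y=\Closure{Y}$ and Lemma~\ref{l:closed-equals-closure} concludes.

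I do not expect any substantive obstacle: the two cited lemmas together already encode both halves of the equivalence, and the argument is essentially just chaining them. The only minor care points are to handle the triviality $Y\subset\Closure{Y}$ cleanly (it is unfolded directly from the definition, with $Y$ nonempty used only to make the statement itself sensible) and to make sure the direction of each application of Lemma~\ref{l:closure-is-limit-of-sequences} is used correctly in each half.
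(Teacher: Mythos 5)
Your proof is correct and follows exactly the paper's route: the paper also derives this lemma as a direct consequence of Lemma~\ref{l:closed-equals-closure}, Definition~\ref{d:closure}, and Lemma~\ref{l:closure-is-limit-of-sequences}, which you have simply spelled out in both directions.
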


\begin{proof}
  Direct consequence of
  Lemma~\thref{l:closed-equals-closure},
  Definition~\thref{d:closure}, and
  Lemma~\thref{l:closure-is-limit-of-sequences}.
\end{proof}



\begin{definition}[stationary sequence]
  \label{d:stationary-sequence}
  Let $(X,d)$ be a metric space.
  A sequence $(x_n)_{n\in \matN}$ of~$X$ is {\em stationary} iff
  \begin{equation}
    \label{e:stationary-sequence}
    \exists N \in \matN,\,
    \forall n \in \matN,\quad
    n\geq N \Implies x_n = x_N.
  \end{equation}
  $N$ is {\em a rank} from which the sequence is stationary and $x_N$ is
  {\em the stationary value}.
\end{definition}

\begin{lemma}[stationary sequence is convergent]
  \label{l:stationary-sequence-is-convergent}
  Let $(X,d)$ be a metric space.
  Let $(x_n)_{n\in \matN}$ be a stationary sequence of~$X$.
  Then, $(x_n)_{n\in \matN}$ is convergent with the stationary value as
  limit.
\end{lemma}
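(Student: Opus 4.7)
The plan is to apply Definition~\thref{d:convergent-sequence} directly, taking the stationary value $x_N$ as the proposed limit, where $N$ is the rank of stationarity furnished by Definition~\thref{d:stationary-sequence}. The key observation is that beyond this rank, every term of the sequence is literally equal to $x_N$, so the distance $d(x_n, x_N)$ vanishes and the convergence condition is met for \emph{any} choice of $\eps > 0$, with the same $N$ serving as the convergence threshold.

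Concretely, I would first invoke Definition~\thref{d:stationary-sequence} to obtain a rank $N \in \matN$ such that for all $n \in \matN$ with $n \geq N$, one has $x_n = x_N$. I would then fix an arbitrary $\eps > 0$ and reuse this same $N$ as the threshold required by Definition~\thref{d:convergent-sequence}. For $n \geq N$, the equality $x_n = x_N$ combined with Definition~\threfc{d:distance}{$d$~separates points} yields $d(x_n, x_N) = d(x_N, x_N) = 0$, which is at most $\eps$ by ordered field properties of $\matR$. Applying Lemma~\thref{l:limit-is-unique} then justifies naming $x_N$ \emph{the} limit.

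There is essentially no obstacle: the statement is a direct chaining of the two definitions. The only step worth being explicit about, in line with the pedagogical style of the surrounding text, is the appeal to the separation property of the distance to conclude $d(x_N, x_N) = 0$, rather than treating this as an unstated convention. Once this identification is made, the argument closes immediately.
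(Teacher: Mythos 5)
Your proof is correct and follows exactly the paper's argument: extract the rank $N$ from Definition~\ref{d:stationary-sequence}, reuse it as the convergence threshold for every $\eps>0$, and conclude $d(x_n,x_N)=d(x_N,x_N)=0\leq\eps$ via the separation property of the distance. The extra appeal to Lemma~\ref{l:limit-is-unique} to justify the definite article is a harmless refinement the paper omits.
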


\begin{proof}
  From
  Definition~\thref{d:stationary-sequence},
  let $N\in\matN$ such that for all $n\in\matN$, $n\geq N$ implies
  $x_n=x_N$.
  Let $\eps>0$.
  Let $n\in\matN$.
  Assume that $n\geq N$.
  Then, from
  Definition~\threfc{d:distance}{$d$~separates points},
  we have $d(x_n,x_N)=d(x_N,x_N)=0\leq\eps$.
  Hence, from
  Definition~\thref{d:convergent-sequence},
  $(x_n)_{n\in \matN}$ is convergent with limit~$x_N$.
\end{proof}

\subsubsection{Completeness}



%

\bigskip

\begin{definition}[Cauchy sequence]
  \label{d:cauchy-sequence}
  Let $(X,d)$ be a metric space.
  Let $(x_n)_{n\in \matN}$ be a sequence of~$X$.
  $(x_n)_{n\in \matN}$ is a {\em Cauchy sequence} iff
  \begin{equation}
    \label{e:cauchy-sequence}
    \forall \eps > 0,\,
    \exists N \in \matN,\,
    \forall p, q \in \matN,\quad
    p \geq N \Conj q \geq N \Implies d(x_p, x_q) \leq \eps.
  \end{equation}
\end{definition}

\begin{lemma}[equivalent definition of Cauchy sequence]
  \label{l:equivalent-definition-of-cauchy-sequence}
  Let $(X,d)$ be a metric space.
  Let $(x_n)_{n\in \matN}$ be a sequence of~$X$.
  $(x_n)_{n\in \matN}$ is a Cauchy sequence iff
  \begin{equation}
  \label{e:cauchy-sequence2}
  \forall \eps >0,\,
  \exists N \in \matN,\,
  \forall p, k \in \matN,\quad
  p \geq N \Implies d(x_p, x_{p+k}) \leq \eps.
  \end{equation}
\end{lemma}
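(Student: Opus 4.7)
The plan is to prove both directions of the equivalence directly from Definition~\thref{d:cauchy-sequence} by translating between the indexing $(p,q)$ and the indexing $(p, p+k)$. The only nontrivial ingredient beyond elementary properties of~$\matN$ is the symmetry of the distance (property~\eqref{e:distance-symmetric} from Definition~\thref{d:distance}), which I will need in the backward direction to handle the case $q < p$.

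For the forward direction, assume $(x_n)_{n\in\matN}$ is a Cauchy sequence and fix $\eps>0$. Let $N$ be the rank given by Definition~\thref{d:cauchy-sequence}. For any $p \geq N$ and any $k\in\matN$, we have $p+k \geq p \geq N$ by ordered field (or rather ordered semiring) properties of~$\matN$, so taking $q = p+k$ yields $d(x_p, x_{p+k}) \leq \eps$, which is exactly~\eqref{e:cauchy-sequence2}.

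For the backward direction, assume~\eqref{e:cauchy-sequence2} holds and fix $\eps>0$. Let $N$ be the corresponding rank. Take arbitrary $p,q \in \matN$ with $p \geq N$ and $q \geq N$. I will split on whether $p \leq q$ or $q < p$. In the first case, set $k = q - p \in \matN$, so $q = p + k$ and the hypothesis gives $d(x_p, x_q) = d(x_p, x_{p+k}) \leq \eps$. In the second case, set $k = p - q \in \matN$, so $p = q + k$, and the hypothesis applied at rank $q \geq N$ gives $d(x_q, x_p) = d(x_q, x_{q+k}) \leq \eps$; then symmetry of~$d$ yields $d(x_p, x_q) = d(x_q, x_p) \leq \eps$. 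In both cases~\eqref{e:cauchy-sequence} is satisfied with the same rank~$N$.

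The proof is essentially bookkeeping on indices; the only mild obstacle is remembering to invoke the symmetry of $d$ in the $q < p$ branch of the backward direction, since otherwise the substitution $k = q - p$ would not live in~$\matN$. One could alternatively unify the two cases by taking $k = |p-q|$ and the smaller of $p,q$ as the base index, but the explicit case split is more transparent and closer to what a formal proof assistant would naturally produce.
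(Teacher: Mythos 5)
Your proof is correct and follows essentially the same route as the paper: the forward direction takes $q=p+k\geq N$, and the backward direction splits on the order of $p$ and $q$ (the paper phrases this via $\min$ and $\max$, which is the unification you mention as an alternative) and invokes the symmetry of~$d$ in the $q<p$ branch exactly as you do.
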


\begin{proof}
  \proofpar{(\ref{e:cauchy-sequence}) implies~(\ref{e:cauchy-sequence2})}
  Assume that $(x_n)_{n\in \matN}$ is a Cauchy sequence.
  Let $\eps>0$.
  From
  Definition~\thref{d:cauchy-sequence},
  let $N\in\matN$ such that for all $p,q\in\matN$, $p\geq N$ and $q\geq N$
  implies $d(x_p,x_q)\leq\eps$.
  Let $p,k\in\matN$.
  Assume that $p\geq N$.
  Then, we also have $q=p+k\geq N$.
  Thus, $d(x_p,x_{p+k})=d(x_p,x_q)\leq\eps$.
  
  \proofparskip{(\ref{e:cauchy-sequence2}) implies~(\ref{e:cauchy-sequence})}
  Conversely, assume now that $(x_n)_{n\in \matN}$
  satisfies~\eqref{e:cauchy-sequence2}.
  Let $\eps>0$.
  Then, let $N\in\matN$ such that for all $p,k\in\matN$, $p\geq N$ implies
  $d(x_p,x_{p+k})\leq\eps$.
  Let~$\pp,\qp\in\matN$.
  Assume that $\pp\geq N$ and $\qp\geq N$.
  Then, we also have $p=\min(\pp,\qp)\geq N$.
  Let $k=\max(\pp,\qp)-p\geq 0$.
  Then, we have $d(x_p,x_{p+k})\leq\eps$.
  Assume that $\pp\leq\qp$.
  Then, $p=\pp$ and $p+k=\qp$.
  Hence, we have $d(x_p,x_q)=d(x_p,x_{p+k})\leq\eps$.
  Conversely, assume that $\pp>\qp$.
  Then, $p=\qp$ and $p+k=\pp$.
  Hence, from
  Definition~\threfc{d:distance}{$d$~is symmetric},
  we have $d(x_p,x_q)=d(x_{p+k},x_p)=d(x_p,x_{p+k})\leq\eps$.
\end{proof}

\begin{lemma}[convergent sequence is Cauchy]
  \label{l:convergent-sequence-is-cauchy}
  Let $(X,d)$ be a metric space.
  Let $(x_n)_{n\in \matN}$ be a sequence of~$X$.
  Assume that $(x_n)_{n\in \matN}$ is a convergent sequence.
  Then, $(x_n)_{n\in \matN}$ is a Cauchy sequence.
\end{lemma}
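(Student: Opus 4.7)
The plan is to exploit the existence of a limit to sandwich any two far-enough terms $x_p$ and $x_q$ around it, using the triangle inequality and the classical $\eps/2$ trick.

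First, I would introduce $l = \lim_{n \rightarrow +\infty} x_n$, which is well-defined by hypothesis and unique by Lemma~\thref{l:limit-is-unique}. Given $\eps > 0$, the ordered field properties of $\matR$ guarantee $\frac{\eps}{2} > 0$, so Definition~\threfc{d:convergent-sequence}{applied with $\frac{\eps}{2}$ in place of $\eps$} yields $N \in \matN$ such that for every $n \in \matN$, $n \geq N$ implies $d(x_n, l) \leq \frac{\eps}{2}$.

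Next, I would take arbitrary $p, q \in \matN$ with $p \geq N$ and $q \geq N$, and bound $d(x_p, x_q)$ by routing through~$l$. Using Definition~\threfc{d:distance}{$d$~satisfies the triangle inequality and is symmetric}, together with the bound just obtained applied twice and ordered field properties of~$\matR$, one gets
\[
  d(x_p, x_q) \leq d(x_p, l) + d(l, x_q) = d(x_p, l) + d(x_q, l) \leq \frac{\eps}{2} + \frac{\eps}{2} = \eps.
\]
By Definition~\thref{d:cauchy-sequence}, this shows $(x_n)_{n \in \matN}$ is a Cauchy sequence.

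There is no real obstacle here: the only subtlety is the systematic use of $\frac{\eps}{2}$ rather than $\eps$ when invoking convergence, which is the standard device that makes the triangle inequality yield exactly $\eps$ at the end. All remaining steps are direct applications of the definitions of distance, convergence, and Cauchy sequence already established in the excerpt.
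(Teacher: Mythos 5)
Your proposal is correct and follows essentially the same route as the paper's proof: introduce the limit via Lemma~\ref{l:limit-is-unique}, apply Definition~\ref{d:convergent-sequence} with $\frac{\eps}{2}$, and conclude with the triangle inequality and symmetry of the distance. No gaps.
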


\begin{proof}
  Let $\eps>0$.
  From
  Lemma~\thref{l:limit-is-unique},
  let~$l=\lim_{n\rightarrow+\infty}\in X$.
  From
  Definition~\thref{d:convergent-sequence},
  let $N\in\matN$ such that for all $n\in\matN$, $n\geq N$ implies
  $d(x_n,l)\leq\frac{\eps}{2}$.
  Let $p,q\geq N$.
  Then, from
  Definition~\threfc{d:distance}{%
    $d$~satisfies triangle inequality and is symmetric}, and
  \assume{field properties of~$\matR$}
  we have
  \begin{equation*}
    d (x_p, x_q)
    \leq d (x_p, l) + d (l, x_q)
    = d(x_p, l) + d (x_q, l)
    \leq \frac{\eps}{2} + \frac{\eps}{2}
    = \eps.
  \end{equation*}
  
  Therefore, from
  Definition~\thref{d:cauchy-sequence},
  $(x_n)_{n\in \matN}$ is a Cauchy sequence.
\end{proof}

\begin{definition}[complete subset]
  \label{d:complete-subset}
  Let $(X,d)$ be a metric space.
  A subset~$Y$ of~$X$ is {\em complete (for distance~$d$)} iff
  all Cauchy sequences of~$Y$ converge in~$Y$.
\end{definition}

\begin{definition}[complete metric space]
  \label{d:complete-metric-space}
  Let~$X$ be a set.
  Let~$d$ be a distance over~$X$.
  $(X,d)$ is a {\em complete metric space} iff
  $(X,d)$ is a metric space and~$X$ is complete for distance~$d$.
\end{definition}

\begin{lemma}[closed subset of complete is complete]
  \label{l:closed-subset-of-complete-is-complete}
  Let $(X,d)$ be a complete metric space.
  Let~$Y$ be a closed subset of~$X$.
  Then, $Y$~is complete.
\end{lemma}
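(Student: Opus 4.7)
The plan is to unpack the definition of completeness for the subset $Y$, lift any Cauchy sequence in $Y$ to a Cauchy sequence in $X$ to get convergence there, and then use closedness of $Y$ to trap the limit inside $Y$.

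More concretely, I would proceed as follows. Let $(y_n)_{n\in\matN}$ be an arbitrary Cauchy sequence of $Y$. Since $Y\subset X$, the sequence $(y_n)_{n\in\matN}$ is also a sequence of $X$, and the Cauchy condition from Definition~\ref{d:cauchy-sequence} refers only to the distance $d$, which is the same whether we view the sequence in $Y$ or in $X$; so $(y_n)_{n\in\matN}$ is a Cauchy sequence of $X$. By Definition~\ref{d:complete-metric-space} applied to the hypothesis that $(X,d)$ is complete, together with Definition~\ref{d:complete-subset}, this Cauchy sequence converges in $X$ to some $l\in X$, with $l=\lim_{n\rightarrow+\infty} y_n$ by Lemma~\ref{l:limit-is-unique}.

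The remaining step is to show that $l\in Y$, so that convergence actually takes place inside $Y$. This is precisely where closedness enters: by Lemma~\ref{l:closed-is-limit-of-sequences}, since $Y$ is closed, the limit of any convergent sequence of elements of $Y$ belongs to $Y$; applied to $(y_n)_{n\in\matN}\in Y^{\matN}$ with limit $l$, this yields $l\in Y$. Hence $(y_n)_{n\in\matN}$ converges in $Y$, and since the Cauchy sequence was arbitrary, Definition~\ref{d:complete-subset} gives that $Y$ is complete.

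The only subtlety I anticipate is the nonemptiness hypothesis in Lemma~\ref{l:closed-is-limit-of-sequences}: that lemma is stated for a nonempty subset $Y$. If $Y$ is empty, there are no Cauchy sequences of $Y$ at all, so the completeness condition from Definition~\ref{d:complete-subset} is vacuously satisfied; otherwise $Y$ is nonempty and the lemma applies directly. So the argument splits cleanly, and no real obstacle arises; the proof is essentially a one-line application of Lemma~\ref{l:closed-is-limit-of-sequences} once completeness of $X$ has supplied a candidate limit.
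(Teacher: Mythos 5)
Your proof is correct and follows essentially the same route as the paper's: lift the Cauchy sequence to $X$, use completeness of $X$ to get a limit, and use closedness of $Y$ to keep the limit in $Y$ (the paper goes through Lemma~\ref{l:closure-is-limit-of-sequences} and Lemma~\ref{l:closed-equals-closure} separately, whereas you invoke their combination, Lemma~\ref{l:closed-is-limit-of-sequences}, directly). Your remark about the empty case is a point the paper silently skips, since the closure lemmas are stated only for nonempty~$Y$; your vacuous-truth resolution is the right one.
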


\begin{proof}
  Let $(y_n)_{n\in\matN}$ be a sequence in~$Y$.
  Assume that $(y_n)_{n\in\matN}$ is a Cauchy sequence.
  Since~$Y$ is a subset of~$X$, $(y_n)_{n\in\matN}$ is also a Cauchy sequence
  in~$X$.
  Then, from
  Definition~\threfc{d:complete-metric-space}{$X$~is complete}, and
  Definition~\thref{d:complete-subset},
  the sequence $(y_n)_{n\in\matN}$ is convergent with limit $y\in X$.
  
  Moreover, from
  Lemma~\thref{l:closure-is-limit-of-sequences},
  the limit~$a$ belongs to the closure~$\Closure{Y}$.
  Hence, from
  Lemma~\threfc{l:closed-equals-closure}{$Y$~is closed},
  the limit~$y$ belongs to~$Y$.
  
  Therefore, from
  Definition~\thref{d:complete-subset},
  $Y$~is complete.
\end{proof}

\subsubsection{Continuity}

\begin{remark}
  The distance allows the definition of balls centered at each point of a
  metric space forming neighborhoods for these points.
  Hence, a metric space can be seen as a topological space.
\end{remark}

\begin{remark}
  The distance also allows the definition of entourages making metric spaces
  specific cases of uniform spaces.
  Let~$(X,d)$ be a metric space.
  Then, the sets
  \[
  U_r = \{ (x, \xp) \in \XxX \st d(x, \xp) \leq r \}
  \]
  for all nonnegative numbers~$r$ form a fundamental system of entourages for
  the standard uniform structure of~$X$.
  See Theorem~\ref{t:equivalent-definition-of-lipschitz-continuity} below.
\end{remark}

\begin{definition}[continuity in a point]
  \label{d:continuity-in-a-point}
  Let $(X,d_X)$ and $(Y,d_Y)$ be metric spaces.
  Let $x\in X$.
  Let $f:X\rightarrow Y$ be a mapping.
  $f$~is {\em continuous in~$x$} iff
  \begin{equation}
    \label{e:continuity-in-a-point}
    \forall \eps > 0,\,
    \exists \delta > 0,\,
    \forall  \xp \in X,\quad
    d_X(x, \xp) \leq \delta
    \Implies
    d_Y(f(x), f(\xp)) \leq \eps.
  \end{equation}
\end{definition}

\begin{definition}[pointwise continuity]
  \label{d:pointwise-continuity}
  Let $(X,d_X)$ and $(Y,d_Y)$ be metric spaces.
  Let $f:X\rightarrow Y$ be a mapping.
  $f$~is {\em (pointwise) continuous} iff
  $f$~is continuous in all points of~$X$.
\end{definition}

\begin{lemma}[compatibility of limit with continuous functions]
  \label{l:compatibility-of-limit-with-continuous-functions}
  Let $(X,d_X)$ and $(Y,d_Y)$ be metric spaces.
  Let $f:X\rightarrow Y$ be a mapping.
  Assume that $f$~is pointwise continuous.
  Then,
  for all sequence $(x_n)_{n\in\matN}$ of~$X$,
  for all $x\in X$, we have
  \begin{equation}
    \label{e:compatibility-of-limit-with-continuous-functions}
    (x_n)_{n \in \matN} \mbox{ is convergent with limit } x
    \Implies
    (f (x_n))_{n \in \matN} \mbox{ is convergent with limit } f (x).
  \end{equation}
\end{lemma}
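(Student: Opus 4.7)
The plan is to unfold both definitions and combine them in a straightforward $\varepsilon$-$\delta$ argument. Fix a sequence $(x_n)_{n\in\matN}$ of $X$ and a point $x \in X$, and assume that $(x_n)_{n\in\matN}$ converges to $x$. The goal, by Definition~\ref{d:convergent-sequence}, is to show that for every $\eps > 0$ there exists $N \in \matN$ such that $d_Y(f(x_n), f(x)) \leq \eps$ for all $n \geq N$.

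To produce such an $N$, I would proceed as follows. Let $\eps > 0$. Since $f$ is pointwise continuous, it is in particular continuous at $x$ by Definition~\ref{d:pointwise-continuity}; applying Definition~\ref{d:continuity-in-a-point} yields some $\delta > 0$ such that $d_X(x, \xp) \leq \delta$ implies $d_Y(f(x), f(\xp)) \leq \eps$ for all $\xp \in X$. Now apply Definition~\ref{d:convergent-sequence} to the convergent sequence $(x_n)_{n\in\matN}$ with this particular $\delta > 0$: there exists $N \in \matN$ such that for all $n \geq N$, $d_X(x_n, x) \leq \delta$. Using the symmetry of $d_X$ (Definition~\thref{d:distance}), this gives $d_X(x, x_n) \leq \delta$, and plugging $\xp = x_n$ into the continuity estimate yields $d_Y(f(x_n), f(x)) \leq \eps$, which is what we wanted.

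The argument is almost mechanical; there is no genuine obstacle. The only point requiring minor care is the asymmetry of the role played by the two arguments of the distance in Definition~\thref{d:continuity-in-a-point} versus Definition~\thref{d:convergent-sequence}, which is resolved by one application of the symmetry axiom of the distance. Finally, one concludes by Definition~\thref{d:convergent-sequence} that $(f(x_n))_{n\in\matN}$ is convergent with limit $f(x)$.
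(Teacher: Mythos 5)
Your proof is correct and follows essentially the same route as the paper's: fix $\eps>0$, use continuity at~$x$ to get a $\delta>0$, then use convergence of $(x_n)_{n\in\matN}$ with that $\delta$ to produce the rank~$N$. Your explicit appeal to the symmetry of~$d_X$ to reconcile $d_X(x,\xp)$ with $d_X(x_n,x)$ is a point the paper silently elides, so if anything your version is slightly more careful.
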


\begin{proof}
  Let $(x_n)_{n\in\matN}$ be a sequence in~$X$.
  Let $x\in X$.
  Assume that $(x_n)_{n\in\matN}$ is convergent with limit~$x$.
  Let $\eps>0$.
  Then, from
  Definition~\threfc{d:continuity-in-a-point}{at point~$x$},
  there exists $\alpha>0$ such that,
  \begin{equation*}
    \forall n \in \matN,\quad
    d_X(x_n, x) \leq \alpha
    \Implies
    d_Y(f (x_n), f (x)) \leq \eps.
  \end{equation*}
  And from
  Definition~\threfc{d:convergent-sequence}{with $\alpha>0$},
  there exists $N\in\matN$ such that,
  \begin{equation*}
    \forall n \in \matN,\quad
    n \geq N
    \Implies
    d_X(x_n, x) \leq \alpha.
  \end{equation*}
  Thus,
  \begin{equation*}
    \forall n \in \matN,\quad
    n \geq N
    \Implies
    d_Y(f (x_n), f (x)) \leq \eps.
  \end{equation*}
  Hence, from
  Definition~\thref{d:convergent-sequence},
  the sequence $(f(x_n))_{n\in\matN}$ is convergent with limit~$f(x)$.
\end{proof}

\begin{definition}[uniform continuity]
  \label{d:uniform-continuity}
  Let $(X,d_X)$ and $(Y,d_Y)$ be metric spaces.
  Let $f:X\rightarrow Y$ be a mapping.
  $f$~is {\em uniformly continuous} iff
  \begin{equation}
    \label{e:uniform-continuity}
    \forall \eps > 0,\,
    \exists \delta > 0,\,
    \forall  x, \xp \in X,\quad
    d_X(x, \xp) \leq \delta
    \Implies
    d_Y(f(x), f(\xp)) \leq \eps.
  \end{equation}
\end{definition}

\begin{definition}[Lipschitz continuity]
  \label{d:lipschitz-continuity}
  Let $(X,d_X)$ and $(Y,d_Y)$ be metric spaces.
  Let $f:X\rightarrow Y$ be a mapping.
  Let $k\geq 0$ be a nonnegative number.
  $f$~is {\em $k$-Lipschitz continuous} iff
  \begin{equation}
    \label{e:lipschitz-continuity}
    \forall x, \xp \in X,\quad
    d_Y (f(x), f(\xp)) \leq k \, d_X (x, \xp).
  \end{equation}
  Then, $k$~is called {\em Lipschitz constant of~$f$}.
\end{definition}

\begin{theorem}[equivalent definition of Lipschitz continuity]
  \label{t:equivalent-definition-of-lipschitz-continuity}
  Let $(X,d_X)$ and $(Y,d_Y)$ be metric spaces.
  Let $f:X\rightarrow Y$ be a mapping.
  Let $k\geq 0$ be a nonnegative number.
  $f$~is {\em $k$-Lipschitz continuous} iff
  \begin{equation}
    \label{e:lipschitz-continuity2}
    \forall x, \xp \in X,\,
    \forall r \geq 0,\quad
    d_X (x, \xp) \leq r
    \Implies
    d_Y (f(x), f(\xp)) \leq kr.
  \end{equation}
\end{theorem}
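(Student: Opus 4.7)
The plan is to prove the two directions of the equivalence separately, using Definition~\ref{d:lipschitz-continuity} on one side and elementary ordered field properties of~$\matR$ together with the nonnegativity of distances on the other.

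For the forward direction, I would assume that $f$ is $k$-Lipschitz continuous, fix arbitrary $x,\xp\in X$ and $r\geq 0$ with $d_X(x,\xp)\leq r$, and then apply Definition~\ref{d:lipschitz-continuity} to obtain $d_Y(f(x),f(\xp))\leq k\,d_X(x,\xp)$. Multiplying the inequality $d_X(x,\xp)\leq r$ by the nonnegative constant~$k$ yields $k\,d_X(x,\xp)\leq kr$, and transitivity of the order on~$\matR$ gives $d_Y(f(x),f(\xp))\leq kr$, which is exactly~\eqref{e:lipschitz-continuity2}.

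For the converse direction, I would assume~\eqref{e:lipschitz-continuity2} and fix arbitrary $x,\xp\in X$. The key observation is to specialize $r$ to the value $r=d_X(x,\xp)$, which is a valid nonnegative choice by Definition~\ref{d:distance} (nonnegativity of~$d_X$). The premise $d_X(x,\xp)\leq r$ is then trivially satisfied (it is an equality), so the hypothesis delivers $d_Y(f(x),f(\xp))\leq kr = k\,d_X(x,\xp)$, which is exactly Definition~\ref{d:lipschitz-continuity}.

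No step here looks difficult; the only subtle point is making sure that one uses $k\geq 0$ in the forward direction (to preserve the order when multiplying by~$k$) and the nonnegativity of $d_X$ in the converse direction (to justify the choice $r=d_X(x,\xp)$ as a valid instantiation of a nonnegative number). Both facts are immediate from the stated hypotheses and from Definition~\ref{d:distance}, so the proof is essentially bookkeeping and can be carried out in two short paragraphs labelled as the two implications.
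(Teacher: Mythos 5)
Your proof is correct and follows essentially the same route as the paper's: the forward direction multiplies $d_X(x,\xp)\leq r$ by $k\geq 0$ and chains the inequalities, and the converse instantiates $r=d_X(x,\xp)$, justified by the nonnegativity of the distance. Nothing to add.
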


\begin{proof}
  \proofpar{``Left'' implies ``right''}
  Assume that~$f$ is $k$-Lipschitz continuous.
  Let $x,\xp\in X$.
  Let $r\geq 0$.
  Assume that $d_X(x,\xp)\leq r$.
  Then, from
  Definition~\thref{d:lipschitz-continuity},
  we have
  \begin{equation*}
    d_Y (f (x), f (\xp)) \leq k \, d_X (x, \xp) \leq kr.
  \end{equation*}
  
  \proofparskip{``Right'' implies ``left''}
  Conversely, assume now that~$f$ satisfies~\eqref{e:lipschitz-continuity2}.
  Let $x,\xp\in X$.
  Let $r=d_X(x,\xp)$.
  From
  Definition~\threfc{d:distance}{$d_X$~is nonnegative},
  $r$~is also nonnegative.
  From
  \assume{ordered field properties of~$\matR$},
  we have $d_X(x,\xp)\leq r$.
  Hence, from
  hypothesis,
  we have
  \begin{equation*}
    d_Y (f (x), f (\xp)) \leq kr = k \, d_X (x, \xp).
  \end{equation*}
  Therefore, from
  Definition~\thref{d:lipschitz-continuity},
  $f$~is $k$-Lipschitz continuous.
\end{proof}

\begin{definition}[contraction]
  \label{d:contraction}
  Let $(X,d)$ be a metric space.
  Let $f:X\rightarrow Y$ be a mapping.
  Let $k\geq 0$ be a nonnegative number.
  $f$~is a {\em $k$-contraction} iff
  $f$~is $k$-Lipschitz continuous with $k<1$.
\end{definition}

\begin{lemma}[uniform continuous is continuous]
  \label{l:uniform-continuous-is-continuous}
  Let $(X,d_X)$ and $(Y,d_Y)$ be metric spaces.
  Let $f:X\rightarrow Y$ be an uniformly continuous mapping.
  Then, $f$~is continuous.
\end{lemma}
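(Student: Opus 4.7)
The plan is to unfold the definitions and observe that uniform continuity is literally a stronger statement than pointwise continuity, differing only in the position of the quantifier over the first argument. Concretely, I would first invoke Definition~\ref{d:pointwise-continuity} to reduce the goal ``$f$ is continuous'' to ``$f$ is continuous in~$x$ for every $x\in X$'' in the sense of Definition~\ref{d:continuity-in-a-point}. So I fix an arbitrary $x\in X$ and an arbitrary $\eps>0$, and aim to produce a $\delta>0$ such that $d_X(x,\xp)\leq\delta$ implies $d_Y(f(x),f(\xp))\leq\eps$ for all $\xp\in X$.

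The key step is to apply Definition~\ref{d:uniform-continuity} to the chosen $\eps>0$. This directly yields a $\delta>0$ satisfying
\begin{equation*}
  \forall x, \xp \in X,\quad d_X(x, \xp) \leq \delta \Implies d_Y(f(x), f(\xp)) \leq \eps.
\end{equation*}
Since this holds for every pair $(x,\xp)$, in particular it holds with the first argument specialized to the fixed $x$ under consideration, while keeping $\xp$ universally quantified. This specialized statement is exactly~\eqref{e:continuity-in-a-point}, so the chosen $\delta$ witnesses continuity of $f$ at $x$.

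There is essentially no obstacle here: the only thing to notice is that the $\delta$ produced by uniform continuity depends only on $\eps$, not on~$x$, which is precisely why it remains valid after fixing~$x$. The proof is a single application of Definition~\ref{d:uniform-continuity} followed by instantiation of its first universal quantifier, with no appeal to ordered field properties of~$\matR$ or any previous lemma beyond the two definitions involved.
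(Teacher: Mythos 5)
Your proof is correct and follows exactly the route the paper takes: the paper states the lemma as a direct consequence of Definitions~\ref{d:uniform-continuity}, \ref{d:pointwise-continuity}, and~\ref{d:continuity-in-a-point}, and your argument simply spells out the quantifier instantiation that this entails.
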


\begin{proof}
  Direct consequence of
  Definition~\thref{d:uniform-continuity},
  Definition~\thref{d:pointwise-continuity}, and
  Definition~\thref{d:continuity-in-a-point}.
\end{proof}

\begin{lemma}[zero-Lipschitz continuous is constant]
  \label{l:zero-lipschitz-continuous-is-constant}
  Let $(X,d_X)$ and $(Y,d_Y)$ be metric spaces.
  Let $f:X\rightarrow Y$ be a 0-Lipschitz continuous mapping.
  Then, $f$ is constant.
\end{lemma}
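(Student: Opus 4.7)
The plan is to unfold the definition of 0-Lipschitz continuity and invoke the basic axioms of a distance, namely nonnegativity and separation of points. This is a routine verification; there is no real obstacle.

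First, I would pick two arbitrary points $x, \xp \in X$ and apply Definition~\ref{d:lipschitz-continuity} with Lipschitz constant $k = 0$, which gives
\begin{equation*}
  d_Y(f(x), f(\xp)) \leq 0 \cdot d_X(x, \xp) = 0,
\end{equation*}
using \textbf{ordered field properties of~$\matR$}. Combined with Definition~\ref{d:distance} (nonnegativity of $d_Y$), this forces $d_Y(f(x), f(\xp)) = 0$.

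Then I would invoke Definition~\ref{d:distance} (separation of points) to conclude $f(x) = f(\xp)$. Since $x, \xp$ were arbitrary, $f$ takes a single value on $X$, i.e. $f$ is constant. The only point to be mildly careful about is that $X$ is nonempty (guaranteed by Definition~\ref{d:metric-space}), so there exists at least one value to which $f$ is constantly equal; but since the conclusion is a universally quantified equality $f(x) = f(\xp)$, even this is not strictly needed.
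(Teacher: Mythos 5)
Your proof is correct and follows exactly the same route as the paper: unfold Definition~\ref{d:lipschitz-continuity} with $k=0$, use nonnegativity of $d_Y$ to conclude $d_Y(f(x),f(\xp))=0$, and then use point separation to get $f(x)=f(\xp)$. The remark about nonemptiness of $X$ is a harmless extra observation that, as you note, is not needed for the universally quantified conclusion.
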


\begin{proof}
  Let $x,\xp\in X$.
  Then, from
  Definition~\thref{d:lipschitz-continuity}, and
  Definition~\threfc{d:distance}{$d_Y$~is nonnegative and separates points},
  we have $d_Y(f(x),f(\xp))=0$ and $f(x)=f(\xp)$.
\end{proof}

\begin{lemma}[Lipschitz continuous is uniform continuous]
  \label{l:lipschitz-continuous-is-uniform-continuous}
  Let $(X,d_X)$ and $(Y,d_Y)$ be metric spaces.
  Let $f:X\rightarrow Y$ be a Lipschitz continuous mapping.
  Then, $f$~is uniformly continuous.
\end{lemma}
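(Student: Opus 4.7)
The plan is to unfold Definition~\thref{d:lipschitz-continuity} to obtain a Lipschitz constant~$k\geq 0$ such that $d_Y(f(x),f(\xp))\leq k\,d_X(x,\xp)$ for all $x,\xp\in X$, and then, given any $\eps>0$, exhibit a positive~$\delta$ witnessing the uniform continuity condition of Definition~\thref{d:uniform-continuity}. The natural candidate is $\delta=\frac{\eps}{k}$, but since~$k$ may vanish the argument splits into two cases.

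If $k=0$, I would invoke Lemma~\thref{l:zero-lipschitz-continuous-is-constant} to conclude that~$f$ is constant; then Definition~\threfc{d:distance}{$d_Y$~separates points} gives $d_Y(f(x),f(\xp))=0\leq\eps$ for all $x,\xp\in X$ and any $\eps>0$, so any positive~$\delta$ (for instance $\delta=1$) discharges the implication trivially. If $k>0$, I would let $\eps>0$ and set $\delta=\frac{\eps}{k}$, which is positive by \assume{ordered field properties of~$\matR$}; then for any $x,\xp\in X$ with $d_X(x,\xp)\leq\delta$, Theorem~\threfc{t:equivalent-definition-of-lipschitz-continuity}{with $r=\delta$} yields $d_Y(f(x),f(\xp))\leq k\delta=\eps$, closing the proof.

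The argument is almost entirely bookkeeping; the only real obstacle, and the step most relevant for the intended {\coq} formalization, is the case split on whether $k=0$ or $k>0$, since the division defining~$\delta$ is not available when $k=0$. In classical real analysis the distinction is painless, but it is worth flagging as the one location where a decidability argument on equality in~$\matR$ (or an equivalent use of the law of excluded middle) will have to be supplied.
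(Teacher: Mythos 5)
Your proposal is correct and follows essentially the same route as the paper's proof: both split on $k=0$ versus $k>0$, handle the former via Lemma~\ref{l:zero-lipschitz-continuous-is-constant} with an arbitrary $\delta$, and take $\delta=\frac{\eps}{k}$ in the latter case. The only cosmetic difference is your appeal to Theorem~\ref{t:equivalent-definition-of-lipschitz-continuity}, where the paper applies the Lipschitz inequality directly; the remark on the decidability of $k=0$ is a pertinent observation for the formalization but not a gap.
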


\begin{proof}
  From
  Definition~\thref{d:lipschitz-continuity},
  let $k\geq 0$ be the Lipschitz constant of~$f$.
  Let~$\eps>0$ be a positive number.
  
  \proofparskip{Case $k=0$}
  Then, from
  Lemma~\thref{l:zero-lipschitz-continuous-is-constant},
  $f$~is a constant function.
  Let $\delta=1>0$.
  Let $x,\xp\in X$.
  Assume that $d_X(x,\xp)<\delta$.
  Then, we have $d_Y(f(x),f(\xp))=0<\eps$.
  Hence, from
  Definition~\thref{d:uniform-continuity},
  $f$~is uniformly continuous.
  
  \proofparskip{Case $k\not=0$}
  Then, $k>0$.
  From
  \assume{ordered field properties of~$\matR$},
  let $\delta=\frac{\eps}{k}>0$.
  Let $x,\xp\in X$.
  Assume that $d_X(x,\xp)\leq\delta$.
  Then, from
  \assume{ordered field properties of~$\matR$},
  we have
  \begin{equation*}
    d_Y(f(x), f(\xp))
    \leq k \, d_X(x, \xp)
    \leq k \delta
    =\eps.
  \end{equation*}
  
  Hence, from
  Definition~\thref{d:uniform-continuity},
  $f$~is uniformly continuous.
\end{proof}

\subsubsection{Fixed point theorem}

\begin{definition}[iterated function sequence]
  \label{d:iterated-function-sequence}
  Let $(X,d)$ be a metric space.
  Let $f:X\rightarrow X$ be a mapping.
  An {\em iterated function sequence associated with~$f$} is a sequence
  of~$X$ defined by
  \begin{equation}
    \label{e:iterated-function-sequence}
    x_0 \in X
    \Conj
    \forall n \in \matN,\quad
    x_{n+1} = f(x_n).
  \end{equation}
\end{definition}

\begin{lemma}[stationary iterated function sequence]
  \label{l:stationary-iterated-function-sequence}
  Let $(X,d)$ be a metric space.
  Let $f:X\rightarrow X$ be a mapping.
  Let $(x_n)_{n\in\matN}$ be an iterated function sequence associated
  with~$f$ such that
  \begin{equation}
    \label{e:stationary-iterated-function-sequence}
    \exists N \in \matN,\quad
    x_{N + 1} = f (x_N) = x_N.
  \end{equation}
  Then, the sequence $(x_n)_{n\in\matN}$ is stationary.
\end{lemma}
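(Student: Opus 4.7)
The plan is to prove by induction on $n \geq N$ that $x_n = x_N$, and then conclude directly from Definition~\ref{d:stationary-sequence} that the sequence is stationary with stationary rank~$N$ and stationary value~$x_N$.

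More precisely, from the hypothesis let $N \in \matN$ be such that $x_{N+1} = f(x_N) = x_N$. For $n \in \matN$ with $n \geq N$, let $P(n)$ be the property $x_n = x_N$. The base case $P(N)$ is trivial. For the inductive step, assume $P(n)$ holds for some $n \geq N$, so that $x_n = x_N$. Then, from Definition~\ref{d:iterated-function-sequence} (which gives $x_{n+1} = f(x_n)$), the induction hypothesis, and the fact that $f(x_N) = x_N$, we get
\begin{equation*}
  x_{n+1} = f(x_n) = f(x_N) = x_N,
\end{equation*}
which is $P(n+1)$. By induction on $n \geq N$, we conclude that for all $n \in \matN$ with $n \geq N$, $x_n = x_N$.

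Thus the sequence satisfies the condition of Definition~\ref{d:stationary-sequence} with rank $N$ and stationary value $x_N$, so $(x_n)_{n\in\matN}$ is stationary. There is essentially no obstacle here: the only slight subtlety is noticing that the induction uses the fixed-point-at-rank-$N$ hypothesis $f(x_N) = x_N$ at every step (not just once), which is what allows the equality $x_{n+1} = f(x_N) = x_N$ to propagate indefinitely.
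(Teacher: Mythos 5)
Your proof is correct and follows essentially the same route as the paper: an induction propagating the equality $x_n = x_N$ forward using $x_{n+1} = f(x_n) = f(x_N) = x_N$, then invoking the definition of a stationary sequence. The only cosmetic difference is the indexing (you induct on $n \geq N$ with the trivial base case $x_N = x_N$, whereas the paper inducts on the offset $i$ with base case $x_{N+1} = x_N$ from the hypothesis); the substance is identical.
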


\begin{proof}
  For $i\in\matN$, let $P(i)$ be the property $x_{N+i+1}=x_N$.
  
  \proofparskip{Induction: $P(0)$}
  Property $P(0)$ holds by hypothesis.
  
  \proofparskip{Induction: $P(i)$ implies $P(i+1)$}
  Let $i\in\matN$.
  Assume that $P(i)$ holds.
  Then, from
  Definition~\thref{d:iterated-function-sequence}, and
  hypothesis,
  we have
  \begin{equation*}
    x_{N + i + 2}
    = f (x_{N + i + 1})
    = f (x_N)
    = x_N.
  \end{equation*}
  Hence, $P(i+1)$ holds.
  
  Therefore, by induction on $i\in\matN$, we have, for all $i\in\matN$,
  $P(i)$, and from
  Definition~\thref{d:stationary-sequence},
  the sequence $(x_n)_{n\in\matN}$ is stationary.
\end{proof}

\begin{lemma}[iterate Lipschitz continuous mapping]
  \label{l:iterate-lipschitz-continuous-mapping}
  Let $(X,d)$ be a metric space.
  Let $k\geq 0$.
  Let $f:X\rightarrow X$ be a $k$-Lipschitz continuous mapping.
  Let $(x_n)_{n\in\matN}$ be an iterated function sequence associated
  with~$f$.
  Then,
  \begin{equation}
    \label{e:contraction}
    \forall n \in \matN,\quad
    d(x_n, x_{n+1}) \leq k^n \, d(x_0, x_1).
  \end{equation}
\end{lemma}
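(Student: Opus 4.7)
The plan is a straightforward induction on $n \in \matN$, using the recursive definition of the iterated function sequence together with the Lipschitz estimate at each step. Let $P(n)$ denote the property $d(x_n, x_{n+1}) \leq k^n \, d(x_0, x_1)$.

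For the base case $P(0)$, the inequality reduces to $d(x_0, x_1) \leq k^0 \, d(x_0, x_1) = d(x_0, x_1)$, which holds as an equality by the multiplicative identity and ordered field properties of~$\matR$.

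For the inductive step, assume $P(n)$ holds. First apply Definition~\ref{d:iterated-function-sequence} twice to rewrite $x_{n+1} = f(x_n)$ and $x_{n+2} = f(x_{n+1})$, so that
\begin{equation*}
  d(x_{n+1}, x_{n+2}) = d(f(x_n), f(x_{n+1})).
\end{equation*}
Then invoke Definition~\ref{d:lipschitz-continuity} (with the pair $x_n, x_{n+1}$) to bound this by $k \, d(x_n, x_{n+1})$. Finally, apply the induction hypothesis $P(n)$ and use that $k \geq 0$ preserves inequalities under multiplication (ordered field properties of~$\matR$) to conclude
\begin{equation*}
  d(x_{n+1}, x_{n+2}) \leq k \, d(x_n, x_{n+1}) \leq k \cdot k^n \, d(x_0, x_1) = k^{n+1} \, d(x_0, x_1),
\end{equation*}
which is $P(n+1)$.

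I do not anticipate any real obstacle: the argument is a textbook induction whose only ingredients are the definition of the sequence, the Lipschitz bound, and elementary monotonicity of multiplication by a nonnegative number. The one minor subtlety worth flagging is that the statement is asserted for all $k \geq 0$ including $k=0$, but this case is subsumed because $0^0 = 1$ is used only at $n=0$ (where the inequality is an equality), and for $n \geq 1$ with $k = 0$ both sides vanish thanks to nonnegativity of~$d$ from Definition~\ref{d:distance}, so no separate case analysis is needed.
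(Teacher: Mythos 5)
Your proof is correct and follows essentially the same route as the paper: induction on $n$, with the base case settled by $k^0=1$ (the convention $0^0=1$ covering $k=0$) and the inductive step chaining the definition of the iterated sequence, the Lipschitz bound, and monotonicity of multiplication by $k\geq 0$. Your closing remark on the $k=0$ case matches the paper's appeal to the same convention, so no further changes are needed.
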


\begin{proof}
  For $n\in\matN$, let $P(n)$ be the property
  $d(x_n,x_{n+1})\leq k^n\,d(x_0,x_1)$.
  
  \proofparskip{Induction: $P(0)$}
  Property $P(0)$ is a direct consequence of convention $0^0=1$ and
  \assume{ordered field properties of~$\matR$}.
  
  \proofparskip{Induction: $P(n)$ implies $P(n+1)$}
  Let $n\in\matN$ be a natural number.
  Assume that $P(n)$ holds.
  Then, from
  Definition~\thref{d:iterated-function-sequence},
  Definition~\thref{d:lipschitz-continuity},
  \assume{field properties of~$\matR$}, and
  hypotheses,
  we have
  \begin{eqnarray*}
    d(x_{n + 1}, x_{n + 2})
    & = & d(f(x_n), f(x_{n + 1})) \\
    & \leq & k \, d(x_n, x_{n + 1}) \\
    & \leq & k \, k^n \, d(x_0,x_1) \\
    & = & k^{n + 1} \, d(x_0,x_1).
  \end{eqnarray*}
  Hence, $P(n+1)$ holds.
  
  Therefore, by induction on $n\in\matN$, we have, for all $n\in\matN$,
  $P(n)$.
\end{proof}

\begin{lemma}[convergent iterated function sequence]
  \label{l:convergent-iterated-function-sequence}
  Let $(X,d)$ be a metric space.
  Let $f:X\rightarrow X$ be a Lipschitz continuous mapping.
  Let $(x_n)_{n\in\matN}$ be a convergent iterated function sequence
  associated with~$f$.
  Then, the limit of the sequence is a fixed point of~$f$.
\end{lemma}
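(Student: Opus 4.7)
The plan is to identify the limit $a=\lim_{n\rightarrow+\infty} x_n$ (guaranteed unique by Lemma~\ref{l:limit-is-unique}) and show that $f(a)=a$ by playing two convergent characterizations of the shifted sequence $(x_{n+1})_{n\in\matN}$ against each other, then invoke uniqueness of the limit.

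First I would upgrade the Lipschitz hypothesis on~$f$ to pointwise continuity: by Lemma~\ref{l:lipschitz-continuous-is-uniform-continuous}, $f$~is uniformly continuous, and by Lemma~\ref{l:uniform-continuous-is-continuous}, $f$~is (pointwise) continuous. Applying Lemma~\ref{l:compatibility-of-limit-with-continuous-functions} to the convergent sequence $(x_n)_{n\in\matN}$ with limit~$a$ then yields that the image sequence $(f(x_n))_{n\in\matN}$ is convergent with limit~$f(a)$. By Definition~\ref{d:iterated-function-sequence}, $f(x_n)=x_{n+1}$ for all $n\in\matN$, so this says $(x_{n+1})_{n\in\matN}$ is convergent with limit~$f(a)$.

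Second I would show that the same shifted sequence $(x_{n+1})_{n\in\matN}$ also converges to~$a$. This is a routine index-shift argument from Definition~\ref{d:convergent-sequence}: given $\eps>0$, pick $N\in\matN$ such that $n\geq N$ implies $d(x_n,a)\leq\eps$; then for any $n\geq N$, we still have $n+1\geq N$, hence $d(x_{n+1},a)\leq\eps$. Combined with the previous step, the sequence $(x_{n+1})_{n\in\matN}$ has both~$a$ and~$f(a)$ as limits, so Lemma~\ref{l:limit-is-unique} gives $f(a)=a$, which, since $a\in X$, means $a$~is a fixed point of~$f$ in the sense required.

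No step is particularly hard here; the only delicate point is being careful that the shifted sequence is bona fide the image sequence under~$f$, and that both convergence statements for~$(x_{n+1})_{n\in\matN}$ are expressed in the same ambient metric space~$(X,d)$ so that Lemma~\ref{l:limit-is-unique} legitimately applies. Everything else is a direct chaining of previously established lemmas, so the proof should be short.
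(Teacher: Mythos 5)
Your proof is correct, but it takes a different route from the paper's. The paper does not invoke Lemma~\ref{l:compatibility-of-limit-with-continuous-functions} here: it re-derives sequential continuity by hand, estimating $d(x_n,f(a))=d(f(x_{n-1}),f(a))\leq k\,d(x_{n-1},a)$ after choosing $N$ so that $d(x_{n},a)\leq\frac{\eps}{k}$, which forces a separate case $k=0$ (where $f$ is constant and the iterated sequence is stationary from rank~1) since one cannot divide by the Lipschitz constant there. You instead chain Lemma~\ref{l:lipschitz-continuous-is-uniform-continuous}, Lemma~\ref{l:uniform-continuous-is-continuous} and Lemma~\ref{l:compatibility-of-limit-with-continuous-functions} to get that $(f(x_n))_{n\in\matN}=(x_{n+1})_{n\in\matN}$ converges to $f(a)$, note by an index shift that the same shifted sequence converges to~$a$, and conclude by Lemma~\ref{l:limit-is-unique}. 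Your version buys a shorter argument with no case analysis and better reuse of the lemma stack; the paper's version buys an explicit quantitative estimate and stays closer to the raw Lipschitz data, at the price of the $k=0$ split. Both endings are the same appeal to uniqueness of limits, and your care about the shifted sequence living in the same space $(X,d)$ is exactly the right point to watch.
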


\begin{proof}
  From
  Definition~\thref{d:lipschitz-continuity},
  let $k\geq 0$ be the Lipschitz constant of~$f$.
  From
  Definition~\thref{d:convergent-sequence},
  let $a=\lim_{n\rightarrow+\infty}x_n\in X$ be the limit of the sequence.
  
  \proofparskip{Case $k=0$}
  Then, from
  Lemma~\thref{l:zero-lipschitz-continuous-is-constant},
  $f$~is constant of value~$f(a)$.
  Thus, from
  Definition~\thref{d:stationary-sequence},
  the sequence $(x_n)_{n\in\matN}$ is stationary from rank~1.
  Hence, from
  Lemma~\thref{l:stationary-sequence-is-convergent},
  $(x_n)_{n\in\matN}$ is convergent with limit~$f(a)$.
  
  \proofparskip{Case $k\not=0$}
  Then, from
  Definition~\thref{d:lipschitz-continuity},
  we have $k>0$.
  Let $\eps>0$.
  From
  Definition~\thref{d:convergent-sequence},
  let $N\in\matN$ such that, for all $n\in\matN$, $n\geq N$ implies
  $d(x_n,a)\leq\frac{\eps}{k}$.
  Let $\Np=N+1$.
  Let $n\in\matN$.
  Assume that $n\geq\Np$.
  Then, $n-1\geq N$.
  Thus, from
  Definition~\thref{d:iterated-function-sequence},
  Definition~\thref{d:lipschitz-continuity}, and
  \assume{ordered field properties of~$\matR$},
  we have
  \begin{equation*}
    d(x_n, f(a))
    = d(f(x_{n - 1}), f(a))
    \leq k \, d(x_{n - 1}, a)
    \leq \eps.
  \end{equation*}
  Hence, from
  Definition~\thref{d:convergent-sequence},
  the sequence $(x_n)_{n\in\matN}$ is convergent with limit~$f(a)$.
  
  Therefore, in both cases, from
  Lemma~\thref{l:limit-is-unique},
  $f(a)=a$.
\end{proof}

\begin{theorem}[fixed point]
  \label{t:fixed-point}
  Let $(X,d)$ be a complete metric space.
  Let~$f:X\rightarrow X$ be a contraction.
  Then, there exists a unique fixed point~$a\in X$ such that~$f(a)=a$.
  Moreover, all iterated function sequences associated with~$f$ are
  convergent with limit~$a$.
\end{theorem}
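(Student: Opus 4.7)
The plan is to establish uniqueness in one line from the contraction property, and then construct the fixed point as the limit of any iterated function sequence. The moreover clause comes for free once existence and uniqueness are in hand.

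For uniqueness, suppose $a, a' \in X$ both satisfy $f(a) = a$ and $f(a') = a'$. By Definition~\ref{d:contraction} there is some $k \in [0,1)$ with $d(f(x), f(y)) \leq k\, d(x, y)$ for all $x, y \in X$, hence $d(a, a') = d(f(a), f(a')) \leq k\, d(a, a')$. Since $k < 1$, ordered field properties of $\matR$ force $d(a, a') = 0$, and by Definition~\ref{d:distance} (separation of points) we conclude $a = a'$.

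For existence, I pick an arbitrary iterated function sequence $(x_n)_{n \in \matN}$ associated with $f$ (Definition~\ref{d:iterated-function-sequence}) and distinguish two cases. If there exists $N$ with $x_{N+1} = x_N$, then Lemma~\ref{l:stationary-iterated-function-sequence} makes the sequence stationary at $x_N$; by Lemma~\ref{l:stationary-sequence-is-convergent} it converges to $x_N$, and $x_N$ is already a fixed point. Otherwise, I prove that $(x_n)_{n \in \matN}$ is Cauchy, whence completeness of $X$ (Definition~\ref{d:complete-metric-space}) gives a limit, and Lemma~\ref{l:convergent-iterated-function-sequence} shows that this limit is a fixed point of $f$.

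The main obstacle is the Cauchy estimate in the nonstationary case. Combining the iterated triangle inequality of Lemma~\ref{l:iterated-triangle-inequality} with the geometric decay $d(x_{n+i}, x_{n+i+1}) \leq k^{n+i}\, d(x_0, x_1)$ from Lemma~\ref{l:iterate-lipschitz-continuous-mapping}, I get
\[
d(x_n, x_{n+p}) \leq \sum_{i=0}^{p-1} k^{n+i}\, d(x_0, x_1) = k^n \, d(x_0, x_1) \sum_{i=0}^{p-1} k^i \leq \frac{k^n}{1-k}\, d(x_0, x_1),
\]
using the closed form for a partial geometric sum and $k < 1$. Given $\eps > 0$, since $k^n \to 0$ one chooses $N$ large enough that $\frac{k^N}{1-k}\, d(x_0, x_1) \leq \eps$ (note $d(x_0, x_1) > 0$, since otherwise Definition~\ref{d:iterated-function-sequence} would yield $x_1 = x_0 = f(x_0)$, contradicting the nonstationary assumption through Lemma~\ref{l:stationary-iterated-function-sequence}); then Lemma~\ref{l:equivalent-definition-of-cauchy-sequence} delivers the Cauchy property.

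Finally, for the moreover clause, the argument above shows that every iterated function sequence converges to some fixed point of $f$; uniqueness then forces that limit to be the unique fixed point $a$, regardless of the starting value $x_0 \in X$.
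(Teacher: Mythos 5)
Your proof is correct and follows essentially the same route as the paper: uniqueness from the contraction inequality, existence via the Cauchy estimate $d(x_n,x_{n+p})\leq\frac{k^n}{1-k}\,d(x_0,x_1)$ obtained from Lemmas~\ref{l:iterated-triangle-inequality} and~\ref{l:iterate-lipschitz-continuous-mapping} plus the geometric sum, then completeness and Lemma~\ref{l:convergent-iterated-function-sequence}. The only difference is cosmetic: you split on ``eventually stationary'' versus not, while the paper splits on $k=0$, $x_1=x_0$, and the remaining case (where it pins down $N$ explicitly via logarithms rather than appealing to $k^n\to 0$); both organizations cover the same ground.
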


\begin{proof}
  \proofpar{Uniqueness}
  Let $a,\ap\in X$ be two fixed points of~$f$.
  Then, from
  Definition~\thref{d:contraction}, and
  Definition~\thref{d:lipschitz-continuity},
  we have $d(a,\ap)=d(f(a),f(\ap))\leq k\,d(a,\ap)$.
  Thus, from
  \assume{ordered field properties of~$\matR$},
  Definition~\threfc{d:contraction}{$k<1$}, and
  Definition~\threfc{d:distance}{$d$~is nonnegative},
  we have $0\leq(1-k)\,d(a,\ap)\leq 0$.
  Therefore, from
  the zero-product property of~$\matR$,
  Definition~\threfc{d:contraction}{$k\not=1$}, and
  Definition~\threfc{d:distance}{$d$~separates points},
  we have $a=\ap$.
  
  \proofparskip{Convergence of iterated function sequences and existence}
  Let $x_0\in X$.
  Let $(x_n)_{n\in\matN}$ be an iterated function sequence associated
  with~$f$.
  Let $p,m\in\matN$.
  Then, from
  Lemma~\thref{l:iterated-triangle-inequality},
  Lemma~\thref{l:iterate-lipschitz-continuous-mapping},
  \assume{field properties of~$\matR$},
  \assume{the formula for the sum of the first terms of a geometric series},
  and Definition~\threfc{d:contraction}{$0\leq k<1$},
  we have
  \begin{eqnarray*}
    d(x_p, x_{p+m})
    & \leq & \sum_{i = 0}^{m-1} d(x_{p+i}, x_{p+i+1}) \\
    & \leq & \left( \sum_{i = 0}^{m-1} k^{p+i} \right) d(x_0, x_1) \\
    & = & k^p \, \frac{1 - k^m}{1 - k} \, d(x_0, x_1) \\
    & \leq & \frac{k^p}{1 - k} \, d(x_0, x_1).
  \end{eqnarray*}
  
  \proofparskip{Case $k=0$}
  Then, from
  Lemma~\thref{l:zero-lipschitz-continuous-is-constant},
  $f$~is constant.
  Let $a=f(x_0)$.
  Then, for all $x\in X$, $f(x)=a$.
  In particular, $f(a)=a$, and for all $n\in\matN$, $x_{n+1}=f(x_n)=a$.
  Thus, from
  Definition~\thref{d:stationary-sequence},
  the sequence $(x_n)_{n\in\matN}$ is stationary from rank~1 with stationary
  value~$a$.
  
  \proofparskip{Case $x_1=x_0$}
  Then, from
  Lemma~\thref{l:stationary-iterated-function-sequence},
  the sequence $(x_n)_{n\in\matN}$ is stationary from rank~0 with stationary
  value~$x_0=x_1=f(x_0)=a$.
  
  Hence, in both cases, from
  Lemma~\thref{l:stationary-sequence-is-convergent},
  the sequence $(x_n)_{n\in\matN}$ is convergent with limit~$a$.
  
  \proofparskip{Case $k\not=0$ and $x_1\not=x_0$}
  Then, from
  Definition~\threfc{d:contraction}{$0\leq k<1$}, and
  Definition~\threfc{d:distance}{$d$~separates points, contrapositive},
  we have $0<k<1$ and $d(x_0,x_1)\not=0$.
  Let $\eps>0$.
  Let
  \begin{equation*}
    \zeta = \frac{(1 - k) \eps}{d(x_0, x_1)} > 0,\quad
    \xi = \max \left( 0, \frac{\ln \, \zeta}{\ln \, k} \right) \geq 0,\quad
    N = \ceil{\xi} \in \matN.
  \end{equation*}
  Let $p,m\in\matN$.
  Assume that $p\geq N$.
  Then, from
  \assume{the definition of ceiling and max functions},
  we have $p\geq\xi\geq\frac{\ln\,\zeta}{\ln\,k}$.
  Thus, from
  \assume{ordered field properties of~$\matR$ ($\ln\,k$ is negative)}, and
  \assume{increase of the exponential function},
  we have $p\ln\,k\leq\ln\,\zeta$, hence $k^p\leq\zeta$, and finally
  \begin{equation*}
    \frac{k^p}{1 - k} \, d(x_0, x_1) \leq \eps.
  \end{equation*}
  Hence, from
  Lemma~\thref{l:equivalent-definition-of-cauchy-sequence},
  Definition~\thref{d:complete-metric-space}, and
  Definition~\thref{d:complete-subset},
  $(x_n)_{n\in\matN}$ is a Cauchy sequence that is convergent with
  limit~$a\in X$.
  
  Therefore, in all cases, from
  Lemma~\thref{l:convergent-iterated-function-sequence},
  $a$~is a fixed point of~$f$.
\end{proof}

\subsection{Vector space}
\label{ss:vector-space}

\begin{remark}
  Statements and proofs are presented in the case of vector spaces over the
  field of real numbers~$\matR$, but most can be generalized with minor or no
  alteration to the case of vector spaces over the field of complex
  numbers~$\matC$.
  When the same statement holds for both cases, the field is
  denoted~$\matK$.
  Note that in both cases, $\matR\subset\matK$.
\end{remark}

\subsubsection{Basic notions and notations}

\begin{definition}[vector space]
  \label{d:space}
  Let~$E$ be a set equipped with two vector operations:
  an addition ($+:\ExE\rightarrow E$) and
  a scalar multiplication ($\cdot:\matKxE\rightarrow E$).
  $(E,+,\cdot)$ is a {\em vector space over field~$\matK$}, or simply
  $E$~is a {\em space}, iff
  $(E,+)$ is an abelian group with identity element~$0_E$ (zero vector),
  and scalar multiplication is
  distributive wrt vector addition and field addition,
  compatible with field multiplication,
  and admits~$1_\matK$ as identity element (simply denoted~1):
  \begin{eqnarray}
    \label{e:distributivity-wrt-vector-addition}
    \forall \lambda \in \matK,\,
    \forall u, v \in E, & &
    \lambda \cdot (u + v) = \lambda \cdot u + \lambda \cdot v; \\
    \label{e:distributivity-wrt-field-addition}
    \forall \lambda, \mu \in \matK,\,
    \forall u \in E, & &
    (\lambda + \mu) \cdot u = \lambda \cdot u + \mu \cdot u; \\
    \label{e:scalar-multiplication-field-multiplication-compatibility}
    \forall \lambda, \mu \in \matK,\,
    \forall u \in E, & &
    \lambda \cdot (\mu \cdot u) = (\lambda \mu) \cdot u; \\
    \label{e:scalar-multiplication-has-identity-element-one}
    \forall u \in E, & &
    1 \cdot u = u.
  \end{eqnarray}
\end{definition}

\begin{remark}
  The $\cdot$ infix sign in the scalar multiplication may be omitted.
\end{remark}

\begin{remark}
  Vector spaces over~$\matR$ are called {\em real} {\vectorspace}s, and
  vector spaces over~$\matC$ are called {\em complex} {\vectorspace}s.
\end{remark}

\begin{definition}[set of mappings to space]
  \label{d:set-of-mappings-to-space}
  Let~$X$ be a set.
  Let~$E$ be a {\vectorspace}.
  The {\em set of mappings from~$X$ to~$E$} is denoted $\FXE$.
\end{definition}

\begin{definition}[linear map]
  \label{d:linear-map}
  Let $(E,+_E,\cdot_E)$ and $(F,+_F,\cdot_F)$ be {\vectorspace}s.
  A mapping $f:E\rightarrow F$ is a {\em linear map from~$E$ to~$F$} iff
  it preserves vector operations, {\ie} iff
  it is additive and homogeneous of degree~1:
  \begin{eqnarray}
    \label{e:linear-map-additive}
    \forall u, v \in E,
    & &
    f (u +_E v) = f (u) +_F f (v); \\
    \label{e:linear-map-homogeneous}
    \forall \lambda \in \matK,
    \forall u \in E,\,
    & &
    f (\lambda \cdot_E u) = \lambda \cdot_F f (u).
  \end{eqnarray}
\end{definition}

\begin{definition}[set of linear maps]
  \label{d:set-of-linear-maps}
  Let $E,F$ be {\vectorspace}s.
  The {\em set of linear maps from~$E$ to~$F$} is denoted $\LEF$.
\end{definition}

\begin{definition}[linear form]
  \label{d:linear-form}
  Let~$E$ be a vector space over field~$\matK$.
  A {\em linear form on~$E$} is a linear map from~$E$ to~$\matK$.
\end{definition}


\begin{definition}[bilinear map]
  \label{d:bilinear-map}
  Let $(E,+_E,\cdot_E)$, $(F,+_F,\cdot_F)$ and $(G,+_G,\cdot_G)$ be
  {\vectorspace}s.
  A mapping $\fhi:\ExF\rightarrow G$ is a
  {\em bilinear map from~$\ExF$ to~$G$} iff
  it is left additive, right additive, and left and right homogeneous of
  degree~1:
  \begin{eqnarray}
    \label{e:bl-left-additive}
    \forall u, v \in E,\,
    \forall w \in F, & &
    \fhi (u +_E v, w) = \fhi (u, w) +_G \fhi (v, w); \\
    \label{e:bl-right-additive}
    \forall u \in E,\,
    \forall v, w \in F, & &
    \fhi (u, v +_F w) = \fhi (u, v) +_G \fhi (u, w); \\
    \label{e:bl-homogeneous}
    \forall \lambda \in \matK,\,
    \forall u \in E,\,
    \forall v \in F, & &
    \fhi (\lambda \cdot_E u, v)
    = \lambda \cdot_G \fhi (u, v)
    = \fhi (u, \lambda \cdot_F v).
  \end{eqnarray}
\end{definition}


\begin{definition}[bilinear form]
  \label{d:bilinear-form}
  Let~$E$ be a {\vectorspace}.
  A {\em bilinear form on~$E$} is a bilinear map from~$\ExE$ to~$\matK$.
\end{definition}

\begin{definition}[set of bilinear forms]
  \label{d:set-of-bilinear-forms}
  Let $E$ be a {\vectorspace}.
  The {\em set of bilinear forms on~$E$} is denoted $\LdE=\LExEmatK$.
\end{definition}

\subsubsection{Linear algebra}

\begin{lemma}[zero times yields zero]
  \label{l:zero-times-yields-zero}
  Let $(E,+,\cdot)$ be a {\vectorspace}.
  Then,
  \begin{equation}
    \label{e:zero-times-yields-zero}
    \forall u \in E,\quad 0 \cdot u = 0_E.
  \end{equation}
\end{lemma}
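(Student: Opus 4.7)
The plan is to exploit the single axiom of scalar multiplication that mixes the scalar 0 in a nontrivial way, namely distributivity with respect to field addition (equation~\eqref{e:distributivity-wrt-field-addition}), and then cancel using the group structure of~$(E,+)$.

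First I would fix $u\in E$ and observe that, by the field properties of~$\matK$, we have $0=0+0$, so that $0\cdot u=(0+0)\cdot u$. Applying the distributivity of scalar multiplication over field addition then yields
\begin{equation*}
  0 \cdot u = 0 \cdot u + 0 \cdot u.
\end{equation*}
At this point the whole statement is reduced to an identity inside the additive group $(E,+)$.

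Next I would use that, by Definition~\ref{d:space}, $(E,+)$ is an abelian group with identity element~$0_E$. In particular, the element $0\cdot u\in E$ admits an additive inverse; adding it to both sides of the equality above and using associativity together with the defining property of the inverse and of~$0_E$ gives $0_E=0\cdot u$, which is the desired equality. Equivalently, one may simply invoke the left cancellation law in~$(E,+)$ applied to the equation $0\cdot u+0_E=0\cdot u+0\cdot u$.

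There is essentially no obstacle here: the only care required is to stay within the axioms that have been granted, so all manipulations must be justified by the vector space axioms of Definition~\ref{d:space} together with the basic group-theoretic consequences of~$(E,+)$ being an abelian group (which the paper explicitly displays in bold red as admitted background). The routine part is writing the chain of equalities and attaching the correct justification to each step, in the same citation style as the surrounding proofs.
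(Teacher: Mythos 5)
Your proof is correct, but it takes a slightly different route from the paper's. You decompose the scalar as $0=0+0$, obtain $0\cdot u=0\cdot u+0\cdot u$ by distributivity over field addition, and then cancel $0\cdot u$ in the abelian group $(E,+)$. The paper instead decomposes as $0+1$: it writes $0\cdot u=0\cdot u+u+(-u)=(0+1)\cdot u+(-u)=1\cdot u+(-u)=0_E$, which trades the cancellation step for an appeal to axiom~\eqref{e:scalar-multiplication-has-identity-element-one} (that $1$ is the identity for scalar multiplication). Your version is marginally leaner on vector-space axioms --- it needs only distributivity wrt field addition plus the admitted group-theoretic cancellation law --- whereas the paper's version stays entirely within a forward chain of equalities without ever ``subtracting from both sides,'' which is arguably closer to the citation-per-step style used throughout the document. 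Either argument is acceptable and both would be cited the same way in the dependency lists (Definition~\ref{d:space} plus admitted field/group properties).
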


\begin{proof}
  Let $u\in E$ be a vector.
  From
  Definition~\threfc{d:space}{%
    $(E,+)$ is an abelian group, scalar multiplication admits~1 as identity
    element and is distributive wrt field addition}, and
  \assume{field properties of~$\matK$},
  we have
  \begin{equation*}
    0 \cdot u
    = 0 \cdot u + u + (- u)
    = 0 \cdot u + 1 \cdot u + (- u)
    = (0 + 1) \cdot u + (- u)
    = 1 \cdot u + (- u)
    = u + (- u)
    = 0_E
  \end{equation*}
\end{proof}

\begin{lemma}[minus times yields opposite vector]
  \label{l:minus-times-yields-opposite-vector}
  Let $(E,+,\cdot)$ be a {\vectorspace}.
  Then,
  \begin{equation}
    \label{e:minus-times-yields-opposite-vector}
    \forall \lambda \in \matK,\,
    \forall u \in E,\quad
    (- \lambda) \cdot u = - (\lambda \cdot u).
  \end{equation}
\end{lemma}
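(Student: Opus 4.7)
The plan is to show that $(-\lambda)\cdot u$ plays the role of the additive inverse of $\lambda\cdot u$ in the abelian group $(E,+)$, and then invoke the uniqueness of inverses in a group. Since the abelian group structure of $(E,+)$ is part of Definition~\ref{d:space}, this uniqueness is a basic group-theoretic fact that the paper places among the assumed background.

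More precisely, I would first fix $\lambda\in\matK$ and $u\in E$, and compute the sum $\lambda\cdot u + (-\lambda)\cdot u$. Using distributivity with respect to field addition, equation~\eqref{e:distributivity-wrt-field-addition}, this becomes $(\lambda+(-\lambda))\cdot u$. By the field properties of $\matK$, we have $\lambda+(-\lambda)=0$, so the expression reduces to $0\cdot u$. Applying Lemma~\ref{l:zero-times-yields-zero} then gives $0_E$.

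Having established $\lambda\cdot u + (-\lambda)\cdot u = 0_E$, I would then conclude: by the definition of the additive inverse in the abelian group $(E,+)$ and its uniqueness, $(-\lambda)\cdot u$ must equal $-(\lambda\cdot u)$. The proof therefore fits in a single displayed calculation followed by a one-line appeal to uniqueness of inverses.

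No real obstacle is expected: the only subtle point is making sure the manipulation is justified entirely from the axioms listed in Definition~\ref{d:space} together with Lemma~\ref{l:zero-times-yields-zero} and the assumed field and group properties, rather than invoking any later structure (norm, continuity, etc.). In the style of this paper, each step must be tagged with either a definition reference, the earlier lemma, or an \textbf{assume}d basic fact about groups or fields.
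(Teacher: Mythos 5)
Your proposal is correct and matches the paper's proof essentially line for line: both compute $\lambda\cdot u + (-\lambda)\cdot u = (\lambda-\lambda)\cdot u = 0\cdot u = 0_E$ via distributivity with respect to field addition and Lemma~\ref{l:zero-times-yields-zero}, then conclude by uniqueness of the additive inverse in the abelian group $(E,+)$. Nothing to add.
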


\begin{proof}
  Let $\lambda\in\matK$ be a scalar.
  Let $u\in E$ be a vector.
  From
  Definition~\threfc{d:space}{%
    scalar multiplication is distributive wrt field addition},
  \assume{field properties of~$\matK$}, and
  Lemma~\thref{l:zero-times-yields-zero},
  we have
  \begin{equation*}
    \lambda \cdot u + (- \lambda) \cdot u
    = (\lambda - \lambda) \cdot u
    = 0 \cdot u
    = 0_E.
  \end{equation*}
  Therefore, from
  Definition~\threfc{d:space}{$(E,+)$ is an abelian group},
  $(-\lambda)\cdot u$ is the opposite of $\lambda\cdot u$.
\end{proof}

\begin{definition}[vector subtraction]
  \label{d:vector-subtraction}
  Let $(E,+,\cdot)$ be a {\vectorspace}.
  {\em Vector subtraction}, denoted by the infix operator~$-$, is defined by
  \begin{equation}
    \label{e:vector-subtraction}
    \forall u, v \in E,\quad
    u - v = u + (- v).
  \end{equation}
\end{definition}

\begin{definition}[scalar division]
  \label{d:scalar-division}
  Let $(E,+,\cdot)$ be a {\vectorspace}.
  {\em Scalar division}, denoted by the infix operator~$/$, is defined by
  \begin{equation}
    \label{e:scalar-division}
    \forall \lambda \in \matK^\star,\,
    \forall u \in E,\quad
    \frac{u}{\lambda} = \frac{1}{\lambda} \cdot u.
  \end{equation}
\end{definition}

\begin{lemma}[times zero yields zero]
  \label{l:times-zero-yields-zero}
  Let $(E,+,\cdot)$ be a {\vectorspace}.
  Then,
  \begin{equation}
    \label{e:times-zero-yields-zero}
    \forall \lambda \in \matK,\quad \lambda \cdot 0_E = 0_E.
  \end{equation}
\end{lemma}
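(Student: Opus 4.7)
The plan is to fix an arbitrary scalar $\lambda \in \matK$ and show that $\lambda \cdot 0_E$ equals the zero vector. I would mirror the strategy of Lemma~\ref{l:zero-times-yields-zero} but transport the argument to the vector side: exploit the fact that $0_E$ is absorbed by vector addition in the abelian group $(E,+)$, apply distributivity of scalar multiplication over vector addition, and conclude by cancellation in the group.

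Concretely, I would first use the identity element property of $0_E$ to rewrite $0_E = 0_E + 0_E$, then apply the scalar $\lambda$ and invoke Definition~\ref{d:space} (distributivity wrt vector addition) to obtain $\lambda \cdot 0_E = \lambda \cdot 0_E + \lambda \cdot 0_E$. Finally, adding the opposite of $\lambda \cdot 0_E$ to both sides, which exists because $(E,+)$ is a group, yields $0_E = \lambda \cdot 0_E$. Every step relies only on the abelian group axioms and on the distributivity law~(\ref{e:distributivity-wrt-vector-addition}), both available from Definition~\ref{d:space}.

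A second, equally short path, closer in spirit to the proof of Lemma~\ref{l:zero-times-yields-zero}, would pick an arbitrary $u \in E$ (for instance $u = 0_E$ itself), rewrite $0_E = 0 \cdot u$ by that lemma, then apply the compatibility of scalar multiplication with field multiplication from Definition~\ref{d:space} to get $\lambda \cdot 0_E = \lambda \cdot (0 \cdot u) = (\lambda \cdot 0) \cdot u$, and finally conclude via the field properties of $\matK$ and a second application of Lemma~\ref{l:zero-times-yields-zero} that $(\lambda \cdot 0) \cdot u = 0 \cdot u = 0_E$.

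I do not anticipate any real obstacle: both routes are short and mechanical, staying entirely within the toolbox already developed for Definition~\ref{d:space}. The only pitfall worth flagging, with a view toward a formal proof assistant, is to avoid any circularity with Lemma~\ref{l:zero-times-yields-zero}; since that earlier lemma concerns $0 \cdot u$ rather than $\lambda \cdot 0_E$, there is no actual circularity, but the dependency and the use of the abelian group cancellation (or of an explicit element $u \in E$, which is legitimate because vector spaces are nonempty by Definition~\ref{d:space}) should be tracked explicitly.
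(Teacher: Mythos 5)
Your first route is correct and is essentially the paper's own argument: the paper writes $\lambda\cdot 0_E=\lambda\cdot(0_E-0_E)=\lambda\cdot 0_E-\lambda\cdot 0_E=0_E$, which is the same use of distributivity over vector addition plus cancellation in the abelian group $(E,+)$ that you describe (your version with $0_E=0_E+0_E$ and explicit cancellation is if anything slightly cleaner, since it avoids the implicit step $\lambda\cdot(-0_E)=-(\lambda\cdot 0_E)$). Your second route via compatibility with field multiplication and Lemma~\ref{l:zero-times-yields-zero} would also work, but it is not needed.
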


\begin{proof}
  Let $\lambda\in\matK$ be a scalar.
  From
  Definition~\threfc{d:space}{%
    $(E,+)$ is an abelian group and scalar multiplication is distributive wrt
    vector addition}, and
  Definition~\thref{d:vector-subtraction},
  we have
  \begin{equation*}
    \lambda \cdot 0_E
    = \lambda \cdot (0_E - 0_E)
    = \lambda \cdot 0_E - \lambda \cdot 0_E
    = 0_E.
  \end{equation*}
\end{proof}

\begin{lemma}[zero-product property]
  \label{l:zero-product-property}
  Let $(E,+,\cdot)$ be a {\vectorspace}.
  Then,
  \begin{equation}
    \label{e:zero-product-property}
    \forall \lambda \in \matK,\,
    \forall u \in E,\quad
    \lambda \cdot u = 0_E \Equiv \lambda = 0 \Disj u = 0_E.
  \end{equation}
\end{lemma}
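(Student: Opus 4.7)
The plan is to prove the equivalence by addressing each direction separately, with the reverse direction being an immediate consequence of two earlier lemmas and the forward direction requiring a case split on whether $\lambda$ is zero.

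First I would handle the easier ``right implies left'' direction. Fix $\lambda \in \matK$ and $u \in E$ and assume $\lambda = 0 \lor u = 0_E$. In the case $\lambda = 0$, Lemma~\ref{l:zero-times-yields-zero} gives $\lambda \cdot u = 0 \cdot u = 0_E$ directly. In the case $u = 0_E$, Lemma~\ref{l:times-zero-yields-zero} gives $\lambda \cdot u = \lambda \cdot 0_E = 0_E$. So in both cases the conclusion holds.

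For the ``left implies right'' direction, assume $\lambda \cdot u = 0_E$. The natural approach is to argue by contradiction on $\lambda = 0$: suppose $\lambda \neq 0$ and derive $u = 0_E$. Since $\matK$ is a field, $\lambda^{-1} = \frac{1}{\lambda}$ exists. Applying scalar multiplication by $\lambda^{-1}$ to both sides of $\lambda \cdot u = 0_E$, Lemma~\ref{l:times-zero-yields-zero} gives $\lambda^{-1} \cdot (\lambda \cdot u) = \lambda^{-1} \cdot 0_E = 0_E$. On the other hand, by the compatibility property~\eqref{e:scalar-multiplication-field-multiplication-compatibility} of \thref{d:space} and the fact that $\lambda^{-1} \lambda = 1$ in $\matK$, together with~\eqref{e:scalar-multiplication-has-identity-element-one}, we have $\lambda^{-1} \cdot (\lambda \cdot u) = (\lambda^{-1} \lambda) \cdot u = 1 \cdot u = u$. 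Combining these two equalities yields $u = 0_E$, which establishes $\lambda = 0 \lor u = 0_E$.

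The main obstacle is essentially bookkeeping rather than mathematical: the argument requires the field axiom that every nonzero element of $\matK$ has a multiplicative inverse, and hinges on a case split on whether $\lambda = 0$. In a classical setting this is unproblematic since equality in $\matR$ or $\matC$ is decidable, but as the paper emphasizes, this is precisely the kind of step where the eventual \coq{} formalization will need to invoke decidability of equality on the scalar field (rather than excluded middle in full generality). No other subtle point arises; the two auxiliary lemmas \ref{l:zero-times-yields-zero} and \ref{l:times-zero-yields-zero} together with the compatibility axioms of Definition~\ref{d:space} provide all needed machinery.
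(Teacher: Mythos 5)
Your proof is correct and follows essentially the same route as the paper: the reverse implication via Lemmas~\ref{l:zero-times-yields-zero} and~\ref{l:times-zero-yields-zero}, and the forward implication by assuming $\lambda\not=0$ and computing $u = 1\cdot u = (\frac{1}{\lambda}\lambda)\cdot u = \frac{1}{\lambda}\cdot(\lambda\cdot u) = \frac{1}{\lambda}\cdot 0_E = 0_E$. Your closing remark about deciding $\lambda=0$ corresponds exactly to the paper's appeal to $(P\conj\neg Q\implies R)\equiv(P\implies Q\disj R)$.
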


\begin{proof}
  Let $\lambda\in\matK$ be a scalar.
  Let $u\in E$ be a vector.
  
  \proofparskip{``Left'' implies ``right''}
  Assume that $\lambda=0$ or $u=0_E$
  Then, from
  Lemma~\thref{l:zero-times-yields-zero}, and
  Lemma~\thref{l:times-zero-yields-zero},
  we have $\lambda\cdot u=0_E$.
  
  \proofparskip{``Right'' implies ``left''}
  Assume that $\lambda\cdot u=0_E$ and $\lambda\not=0$
  Then, from
  Definition~\threfc{d:space}{%
    scalar multiplication admits~1 as identity element and is compatible with
    field multiplication},
  \assume{field properties of~$\matK$}, and
  Lemma~\thref{l:times-zero-yields-zero},
  we have
  \begin{equation*}
    u
    = 1 \cdot u
    = \left( \frac{1}{\lambda} \lambda \right) \cdot u
    = \frac{1}{\lambda} \cdot (\lambda \cdot u)
    = \frac{1}{\lambda} \cdot 0_E
    = 0_E.
  \end{equation*}
  Hence, since
  \assume{$(P\conj\neg Q\implies R)\equiv(P\implies Q\disj R)$},
  $\lambda\cdot u=0_E$ implies $\lambda=0$ or $u=0_E$.
\end{proof}

\begin{definition}[subspace]
  \label{d:subspace}
  Let $(E,+,\cdot)$ be a {\vectorspace}.
  Let $F\subset E$ be a subset of~$E$.
  Let~$+_{|F}$ be the restrictions of~$+$ to~$\FxF$.
  Let~$\cdot_{|F}$ be the restrictions of~$\cdot$ to~$\matKxF$.
  $F$~is a {\em vector subspace of~$E$}, or simply a {\em subspace of~$E$},
  iff
  $(F,+_{|F},\cdot_{|F})$~is a {\vectorspace}.
\end{definition}

\begin{remark}
  In particular, a subspace is closed under restricted vector operations.
\end{remark}

\begin{remark}
  Usually, restrictions~$+_{|F}$ and~$\cdot_{|F}$ are still denoted~$+$
  and~$\cdot$.
\end{remark}

\begin{lemma}[trivial subspaces]
  \label{l:trivial-subspaces}
  Let~$E$ be a {\vectorspace}.
  Then, $E$~and $\{0_E\}$ are subspaces of~$E$.
\end{lemma}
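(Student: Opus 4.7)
The plan is to verify Definition~\thref{d:subspace} in both cases, which requires checking that the restricted addition and scalar multiplication make $E$ (respectively $\{0_E\}$) into a {\vectorspace} in its own right. The approach for each case is simply to exhibit closure under the restricted operations and then argue that the eight axioms of Definition~\thref{d:space} transfer verbatim from the ambient space, since every element involved in each universally quantified axiom lies in the subset under consideration.

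For $E$ itself, the argument is almost tautological. The restrictions $+_{|E}$ and $\cdot_{|E}$ to $\ExE$ and $\matKxE$ coincide with the original operations $+$ and $\cdot$ by \assume{the definition of restriction from set theory}, and $(E,+,\cdot)$ is a {\vectorspace} by hypothesis; hence $(E,+_{|E},\cdot_{|E})$ is a {\vectorspace}.

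For $\{0_E\}$, I would first establish closure. From Definition~\threfc{d:space}{$(E,+)$ is an abelian group with identity element~$0_E$}, we have $0_E+0_E=0_E\in\{0_E\}$, so $+$ restricts to a binary operation on $\{0_E\}$. From Lemma~\thref{l:times-zero-yields-zero}, for every $\lambda\in\matK$, $\lambda\cdot 0_E=0_E\in\{0_E\}$, so $\cdot$ restricts to a scalar multiplication $\matK\times\{0_E\}\rightarrow\{0_E\}$. Then I would check that $(\{0_E\},+_{|\{0_E\}})$ is an abelian group: associativity, commutativity, identity ($0_E$ itself), and inverse ($0_E$ is its own inverse, since $0_E+0_E=0_E$) all reduce to statements about the single element $0_E$, and each is inherited from the corresponding axiom on $E$ by instantiating all vector variables to $0_E$. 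Finally, the four scalar multiplication axioms~\eqref{e:distributivity-wrt-vector-addition}--\eqref{e:scalar-multiplication-has-identity-element-one} on $\{0_E\}$ are again obtained by instantiating the $E$-axioms at $u=v=0_E$; both sides of each identity collapse to $0_E$ by Lemma~\thref{l:times-zero-yields-zero} and Definition~\threfc{d:space}{$(E,+)$ is an abelian group with identity~$0_E$}.

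There is no real obstacle here: the entire content of the lemma is just the observation that $0_E$ absorbs both vector operations. The only care needed is bookkeeping, namely to check closure \emph{before} invoking Definition~\thref{d:subspace}, since without closure the restrictions would not even be well-defined as operations on the proposed subset.
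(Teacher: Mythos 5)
Your proof is correct and follows the same route as the paper, which simply notes that $E$ and $\{0_E\}$ are subsets of~$E$ and that each is trivially a {\vectorspace}, then invokes Definition~\ref{d:subspace}. You merely spell out the closure and axiom checks that the paper's one-line proof leaves implicit, and your citation of Lemma~\ref{l:times-zero-yields-zero} for $\lambda\cdot 0_E=0_E$ is legitimate since that lemma precedes this one.
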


\begin{proof}
  $E$~and $\{0_E\}$ are trivially subsets of~$E$.
  $E$~is a {\vectorspace}.
  $\{0_E\}$ is trivially a {\vectorspace}.
  Therefore, from
  Definition~\thref{d:subspace},
  $E$~and $\{0_E\}$ are subspaces of~$E$.
\end{proof}

\begin{lemma}[closed under vector operations is subspace]
  \label{l:closed-under-vector-operations-is-subspace}
  Let~$E$ be a {\vectorspace}.
  Let~$F$ be a subset of~$E$.
  $F$~is a subspace of~$E$ iff
  $0_E\in F$
  and~$F$ is closed under vector addition and scalar multiplication:
  \begin{eqnarray}
    \label{e:subspace-vector-addition}
    \forall u, v \in F,
    & & u + v \in F; \\
    \label{e:subspace-scalar-multiplication}
    \forall \lambda \in \matK,
    \forall u \in F,
    & & \lambda u \in F.
  \end{eqnarray}
\end{lemma}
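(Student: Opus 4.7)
The plan is to prove both directions of the equivalence by unpacking Definition~\thref{d:space} applied to the candidate triple $(F, +_{|F}, \cdot_{|F})$.

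For the forward implication, assume $F$ is a subspace of $E$. By Definition~\thref{d:subspace}, the restrictions $+_{|F}$ and $\cdot_{|F}$ are well-defined operations making $(F, +_{|F}, \cdot_{|F})$ a vector space. The fact that these restrictions take values in $F$ is exactly closure, namely conditions~\eqref{e:subspace-vector-addition} and~\eqref{e:subspace-scalar-multiplication}. To see that $0_E \in F$, I note that $(F, +_{|F})$ is an abelian group with some identity $0_F \in F \subset E$, so $0_F + 0_F = 0_F$ in $E$; cancellation in the ambient abelian group $(E, +)$ then forces $0_F = 0_E$, so $0_E \in F$.

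For the backward implication, assume $0_E \in F$ together with conditions~\eqref{e:subspace-vector-addition} and~\eqref{e:subspace-scalar-multiplication}. Closure ensures that $+_{|F}$ and $\cdot_{|F}$ are well-defined mappings with codomain $F$, hence genuine operations on $F$. The axioms of Definition~\thref{d:space}, namely associativity and commutativity of addition, the two distributivity laws~\eqref{e:distributivity-wrt-vector-addition} and~\eqref{e:distributivity-wrt-field-addition}, the compatibility~\eqref{e:scalar-multiplication-field-multiplication-compatibility}, and the identity~\eqref{e:scalar-multiplication-has-identity-element-one}, all hold universally on $E$, hence a fortiori on $F$. The element $0_E$ plays the role of additive identity in $F$ since it belongs to $F$ by hypothesis. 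The remaining point is the existence of additive inverses inside $F$: for $u \in F$, closure under scalar multiplication with $\lambda = -1$ gives $(-1) \cdot u \in F$, and Lemma~\thref{l:minus-times-yields-opposite-vector} identifies this element with $-(1 \cdot u) = -u$.

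The main subtle point is the backward direction's treatment of additive inverses: the two closure conditions as stated do not mention inverses, so one must convert closure under scalar multiplication into closure under taking opposites via Lemma~\thref{l:minus-times-yields-opposite-vector}. Everything else is either closure by hypothesis, direct inheritance of axioms from $E$, or the brief cancellation argument identifying $0_F$ with $0_E$.
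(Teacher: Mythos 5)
Your proof is correct and follows essentially the same route as the paper's: both directions hinge on obtaining additive inverses in $F$ from closure under scalar multiplication via Lemma~\ref{l:minus-times-yields-opposite-vector} (applied to $(-1)\cdot u$), with the remaining axioms inherited from~$E$. The only cosmetic difference is that the paper packages the abelian-group verification through the subgroup criterion ($u-v\in F$) from group theory, whereas you check the group axioms directly and spell out the cancellation argument identifying $0_F$ with $0_E$, which the paper leaves implicit.
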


\begin{proof}
  \proofpar{``If''}
  Assume that~$F$ contains~$0_E$ and is closed under vector addition and
  scalar multiplication.
  Then, $F$ is closed under the restriction to~$F$ of vector operations.
  Let $u,v\in F$ be vectors.
  Then, from
  Lemma~\threfc{l:minus-times-yields-opposite-vector}{with $\lambda=1$},
  $-v=(-1)v$ belongs to~$F$, and $u-v=u+(-v)$ also belongs to~$F$.
  Thus, from
  \assume{group theory},
  $(F,+_{|F})$ is a subgroup of $(E,+)$.
  Hence, from
  Definition~\threfc{d:space}{$(E,+)$ is an abelian group}, and
  \assume{group theory},
  $(F,+_{|F})$ is also an abelian group and $0_F=0_E$.
  Since~$F$ is a subset of~$E$, and~$E$ is a {\vectorspace},
  properties~\eqref{e:distributivity-wrt-vector-addition}
  to~\eqref{e:scalar-multiplication-has-identity-element-one} are trivially
  satisfied over~$F$.
  Therefore, from
  Definition~\thref{d:subspace},
  $F$~is a subspace of~$E$.
  
  \proofparskip{``Only if''}
  Conversely, assume now that~$F$ is a subspace of~$E$.
  Then, from
  Definition~\threfc{d:subspace}{$F$~is a {\vectorspace}}, and
  Definition~\threfc{d:space}{$(F,+_{|F})$ is an abelian group},
  $F$~contains $0_F=0_E$ and~$F$ is closed under the restriction to~$F$ of
  vector operations.
  Therefore, $F$~is closed under vector addition and scalar multiplication.
\end{proof}

\begin{lemma}[closed under linear combination is subspace]
  \label{l:closed-under-linear-combination-is-subspace}
  Let~$E$ be a {\vectorspace}.
  Let~$F$ be a subset of~$E$.
  $F$~is a subspace of~$E$ iff
  $0_E\in F$
  and~$F$ is closed under linear combination:
  \begin{equation}
    \label{e:subspace-linear-combination}
    \forall \lambda, \mu \in \matK,\,
    \forall u, v \in F,\quad
    \lambda u + \mu v \in F.
  \end{equation}
\end{lemma}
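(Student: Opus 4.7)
The plan is to reduce this statement directly to Lemma~\ref{l:closed-under-vector-operations-is-subspace} (\emph{closed under vector operations is subspace}), since the hypothesis ``closed under linear combination'' is essentially a packaging of ``closed under addition and scalar multiplication'' together. So the proof will have two implications, each handled by quoting the previous lemma and then doing a one-line algebraic manipulation.

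For the forward direction, assume $F$ is a subspace of~$E$. By Lemma~\ref{l:closed-under-vector-operations-is-subspace} we get that $0_E \in F$ and that $F$ is closed under vector addition and scalar multiplication. Given $\lambda,\mu \in \matK$ and $u,v \in F$, apply closure under scalar multiplication to obtain $\lambda u,\, \mu v \in F$, and then closure under vector addition to conclude $\lambda u + \mu v \in F$, which is~\eqref{e:subspace-linear-combination}.

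For the reverse direction, assume $0_E \in F$ and~\eqref{e:subspace-linear-combination} holds. We wish to apply Lemma~\ref{l:closed-under-vector-operations-is-subspace}, so it suffices to check closure under vector addition and scalar multiplication. For~\eqref{e:subspace-vector-addition}, given $u,v \in F$, apply the linear-combination hypothesis with $\lambda = \mu = 1$ and use Definition~\ref{d:space} (scalar multiplication admits~$1$ as identity) to get $u + v = 1\cdot u + 1\cdot v \in F$. For~\eqref{e:subspace-scalar-multiplication}, given $\lambda \in \matK$ and $u \in F$, apply the linear-combination hypothesis with $\mu = 0$ and $v = u$, then use Lemma~\ref{l:zero-times-yields-zero} together with the fact that $0_E$ is the neutral element of~$+$ (Definition~\ref{d:space}) to rewrite $\lambda u + 0\cdot u = \lambda u + 0_E = \lambda u$, giving $\lambda u \in F$.

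There is no real obstacle: both directions are bookkeeping that amount to specializing or combining the two closure properties. The only point requiring a little care is the reverse direction for scalar multiplication, where one must pick the right $(\lambda,\mu,u,v)$ tuple (here $(\lambda,0,u,u)$, valid because we know $F$ is nonempty since $0_E \in F$, so in particular $u \in F$ can itself be reused) and then invoke Lemma~\ref{l:zero-times-yields-zero} to simplify the expression back to $\lambda u$.
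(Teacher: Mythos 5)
Your proof is correct and follows essentially the same route as the paper: both directions reduce to Lemma~\ref{l:closed-under-vector-operations-is-subspace}, with $u+v=1\cdot u+1\cdot v$ for additive closure and a degenerate linear combination for scalar closure. The only cosmetic difference is that for scalar multiplication you take $\lambda u + 0\cdot u$ (with $v=u$) where the paper takes $\lambda u + 0\cdot 0_E$; both collapse correctly via Lemma~\ref{l:zero-times-yields-zero} (resp.\ Lemma~\ref{l:zero-product-property}).
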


\begin{proof}
  \proofpar{``If''}
  Assume that~$F$ contains~$0_E$ and is closed under linear combination.
  Let $u,v\in F$ be vectors.
  Let~$\lambda\in\matK$ be a scalar.
  Then, from
  Definition~\threfc{d:space}{%
    scalar multiplication in~$E$ admits~1 as identity element},
  $u+v=1u+1v$ belongs to~$F$, and from
  Lemma~\thref{l:zero-product-property}, and
  Definition~\threfc{d:space}{$(E,+)$ is an abelian group},
  $\lambda u=\lambda u+0\cdot 0_E$ belongs to~$F$.
  Thus, $F$ contains~$0_E$ and is closed under vector operations.
  Therefore, from
  Lemma~\thref{l:closed-under-vector-operations-is-subspace},
  $F$~is a subspace of~$E$.
  
  \proofparskip{``Only if''}
  Conversely, assume now that~$F$ is a subspace of~$E$.
  Let $\lambda,\mu\in\matK$ be scalars.
  Let $u,v\in F$ be vectors.
  Then, from
  Lemma~\thref{l:closed-under-vector-operations-is-subspace},
  $F$ contains~$0_E$, $F$~is closed under scalar multiplication, hence
  $\up=\lambda u$ and $\vp=\mu v$ belong to~$F$, and~$F$ is closed
  by vector addition, hence $\up+\vp=\lambda u+\mu v$ belongs
  to~$F$.
  Therefore, $F$~is closed under linear combination.
\end{proof}

\begin{definition}[linear span]
  \label{d:linear-span}
  Let~$E$ be a {\vectorspace}.
  Let~$u\in E$ be a vector.
  The {\em linear span of~$u$}, denoted $\Line{u}$, is defined by
  \begin{equation}
    \label{e:linear-span}
    \Line{u} = \{ \lambda  u \st \lambda \in \matK \}.
  \end{equation}
\end{definition}


  


\begin{definition}[sum of subspaces]
  \label{d:sum-of-subspaces}
  Let~$E$ be a {\vectorspace}.
  Let $F,\Fp$ be subspaces of~$E$.
  The {\em sum of~$F$ and~$\Fp$} is the subset of~$E$ defined by
  \begin{equation}
    \label{e:sum-of-subspaces}
    F + \Fp = \{ u + \up \st u \in F,\, \up \in \Fp \}.
  \end{equation}
\end{definition}

\begin{definition}[finite dimensional subspace]
  \label{d:finite-dimensional-subspace}
  Let~$E$ be a {\vectorspace}.
  Let~$F$ be a subspace of~$E$.
  {\em $F$~is a finite dimensional subspace} iff
  there exists $n\in\matN$, and $u_1,\ldots,u_n\in E$ such that
  \begin{equation}
    \label{e:finite-dimensional-subspace}
    \begin{array}{rcl}
      F & = &
      \Span{\{u_1, \ldots, u_n\}} =
      \Line{u_1} + \ldots + \Line{u_n} \\
      & = &
      \{ \lambda_1 u_1 + \ldots + \lambda_n u_n
      \st \lambda_1, \ldots, \lambda_n \in \matK \}.
    \end{array}
  \end{equation}
\end{definition}

\begin{definition}[direct sum of subspaces]
  \label{d:direct-sum-of-subspaces}
  Let~$E$ be a {\vectorspace}.
  Let $F,\Fp$ be subspaces of~$E$.
  The sum $F+\Fp$ is called {\em direct sum}, and it is denoted
  $F\oplus\Fp$, iff
  all vectors of the sum admit a unique decomposition:
  \begin{equation}
    \label{e:unique-decomposition}
    \forall u, v \in F,\,
    \forall \up, \vp \in \Fp,\quad
    u + \up = v + \vp \Implies u = v \Conj \up = \vp.
  \end{equation}
\end{definition}

\begin{lemma}[equivalent definitions of direct sum]
  \label{l:equivalent-definition-of-direct-sum}
  Let~$E$ be a {\vectorspace}.
  Let $F,\Fp$ be subspaces of~$E$.
  The sum $F+\Fp$ is direct iff
  one of the following equivalent properties is satisfied:
  \begin{equation}
    \label{e:zero-intersection}
    F \cap \Fp = \{ 0_E \};
  \end{equation}
  \begin{equation}
    \label{e:unique-zero-decomposition}
    \forall u \in F,\,
    \forall \up \in \Fp,\quad
    u + \up = 0_E \Implies u = \up = 0_E.
  \end{equation}
\end{lemma}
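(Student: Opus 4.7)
The plan is to prove the chain of implications (direct sum) $\Implies$ \eqref{e:unique-zero-decomposition} $\Implies$ \eqref{e:zero-intersection} $\Implies$ (direct sum), so that all three properties are equivalent. None of the steps should require anything beyond the subspace closure properties established in Lemma~\thref{l:closed-under-vector-operations-is-subspace} and Lemma~\thref{l:closed-under-linear-combination-is-subspace}, together with the abelian group structure of~$(E,+)$.

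First I would show that the direct sum definition implies~\eqref{e:unique-zero-decomposition}: given $u\in F$ and $\up\in\Fp$ with $u+\up=0_E$, I apply Definition~\thref{d:direct-sum-of-subspaces} with $v=0_F=0_E\in F$ and $\vp=0_{\Fp}=0_E\in\Fp$, using the identity $v+\vp=0_E+0_E=0_E=u+\up$ from the group axioms, which forces $u=v=0_E$ and $\up=\vp=0_E$. Next, to get \eqref{e:zero-intersection} from~\eqref{e:unique-zero-decomposition}, I would take any $w\in F\cap\Fp$; since~$\Fp$ is a subspace, $-w\in\Fp$, hence $u=w\in F$ and $\up=-w\in\Fp$ satisfy $u+\up=0_E$, so~\eqref{e:unique-zero-decomposition} yields $w=0_E$. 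The reverse inclusion $\{0_E\}\subset F\cap\Fp$ is immediate since both subspaces contain~$0_E$ by Lemma~\thref{l:closed-under-vector-operations-is-subspace}.

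To close the loop, I would prove \eqref{e:zero-intersection} implies the direct sum property. Let $u,v\in F$ and $\up,\vp\in\Fp$ with $u+\up=v+\vp$. Rearranging in the abelian group $(E,+)$ gives $u-v=\vp-\up$. The left-hand side belongs to~$F$ and the right-hand side belongs to~$\Fp$ (each by Lemma~\thref{l:closed-under-linear-combination-is-subspace} applied to the relevant subspace), so the common value lies in $F\cap\Fp=\{0_E\}$. Consequently $u-v=0_E$ and $\vp-\up=0_E$, which yields $u=v$ and $\up=\vp$ by the group axioms, matching Definition~\thref{d:direct-sum-of-subspaces}.

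I do not anticipate any real obstacle; the whole argument is a routine chase through subspace closure and the uniqueness of the opposite in $(E,+)$. The only minor subtlety is being pedantic about every invocation of $0_F=0_{\Fp}=0_E$ and the fact that $-w\in\Fp$ whenever $w\in\Fp$, both of which are immediate from Lemma~\thref{l:closed-under-vector-operations-is-subspace} combined with Lemma~\thref{l:minus-times-yields-opposite-vector}. This level of bookkeeping is exactly what the Coq formalization target demands, so the detailed proof should spell out each citation rather than collapse the three implications into a single verbal argument.
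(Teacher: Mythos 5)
Your proof is correct and follows essentially the same strategy as the paper: a cyclic chain of three implications using only the abelian group structure and subspace closure. The only difference is that you traverse the cycle in the opposite orientation (direct sum $\Rightarrow$~\eqref{e:unique-zero-decomposition} $\Rightarrow$~\eqref{e:zero-intersection} $\Rightarrow$ direct sum, whereas the paper proves direct sum $\Rightarrow$~\eqref{e:zero-intersection} $\Rightarrow$~\eqref{e:unique-zero-decomposition} $\Rightarrow$ direct sum), which changes no mathematical content.
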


\begin{proof}
  \proofpar{(\ref{e:unique-decomposition})
    implies~(\ref{e:zero-intersection})}
  Assume that the sum $F+\Fp$ is direct.
  Let $v\in F\cap\Fp$ be a vector in the intersection.
  Then, from
  Definition~\threfc{d:space}{$(E,+)$ is an abelian group},
  $v$~admits two decompositions, $v=v+0_E=0_E+v$.
  Thus, from
  Definition~\thref{d:direct-sum-of-subspaces},
  $v=0_E$.
  
  \proofparskip{(\ref{e:zero-intersection})
    implies~(\ref{e:unique-zero-decomposition})}
  Assume now that $F\cap\Fp=\{0_E\}$.
  Let $u\in F$ and $\up\in\Fp$ be vectors.
  Assume that $u+\up=0_E$.
  Then, from
  Lemma~\threfc{l:minus-times-yields-opposite-vector}{with $\lambda=1$}, and
  Lemma~\threfc{l:closed-under-vector-operations-is-subspace}{%
    scalar multiplication},
  $u=-\up=(-1)\up$ belongs to~$\Fp$ and $\up=-u=(-1)u$ belongs to~$F$.
  Hence, $u,\up\in F\cap\Fp$, and $u=\up=0_E$.
  
  \proofparskip{(\ref{e:unique-zero-decomposition})
    implies~(\ref{e:unique-decomposition})}
  Assume finally that~$0_E$ admits a unique decomposition.
  Let $u,v\in F$ and $\up,\vp\in\Fp$ be vectors.
  Assume that $u+\up=v+\vp$.
  Then, from
  Definition~\threfc{d:space}{$(E,+)$ is an abelian group}, and
  Definition~\thref{d:vector-subtraction},
  we have $(u-v)+(\up-\vp)=0_E$.
  Thus, from
  hypothesis,
  we have $u-v=\up-\vp=0_E$, and from
  Definition~\threfc{d:space}{$(E,+)$ is an abelian group},
  we have $u=v$ and $\up=\vp$.
  
  Therefore, all three properties are equivalent.
\end{proof}



\begin{lemma}[direct sum with linear span]
  \label{l:direct-sum-with-linear-span}
  Let~$E$ be a {\vectorspace}.
  Let $F$ be a subspace of~$E$.
  Let $u\in E$ be a vector.
  Assume that $u\not\in F$.
  Then, the sum $F+\Line{u}$ is direct.
\end{lemma}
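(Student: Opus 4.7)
The plan is to reduce the claim to the zero-intersection characterization of a direct sum, namely property~\eqref{e:zero-intersection} in Lemma~\ref{l:equivalent-definition-of-direct-sum}, and then derive a contradiction with the hypothesis $u\not\in F$.

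First, I would show that $F\cap\Line{u}=\{0_E\}$. The inclusion $\{0_E\}\subset F\cap\Line{u}$ is immediate: since $F$ is a subspace of~$E$, Lemma~\ref{l:closed-under-vector-operations-is-subspace} gives $0_E\in F$; and by Lemma~\ref{l:zero-times-yields-zero}, $0_E=0\cdot u$, so $0_E\in\Line{u}$ by Definition~\ref{d:linear-span}. For the reverse inclusion, take $v\in F\cap\Line{u}$. Then, from Definition~\ref{d:linear-span}, there exists $\lambda\in\matK$ such that $v=\lambda\cdot u$. I would then split on whether $\lambda=0$.

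If $\lambda=0$, then Lemma~\ref{l:zero-times-yields-zero} yields $v=0_E$, as desired. If $\lambda\neq 0$, then using Definition~\ref{d:space} (compatibility of scalar multiplication with field multiplication, and $1$ as identity) together with field properties of~$\matK$, I can write
\begin{equation*}
  \frac{1}{\lambda}\cdot v
  = \frac{1}{\lambda}\cdot(\lambda\cdot u)
  = \left(\frac{1}{\lambda}\lambda\right)\cdot u
  = 1\cdot u
  = u.
\end{equation*}
Since $v\in F$ and $F$ is a subspace, Lemma~\ref{l:closed-under-vector-operations-is-subspace} (closure under scalar multiplication) gives $\frac{1}{\lambda}\cdot v\in F$, hence $u\in F$, contradicting the hypothesis $u\not\in F$. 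Therefore $\lambda=0$ and $v=0_E$, establishing $F\cap\Line{u}\subset\{0_E\}$.

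Having shown $F\cap\Line{u}=\{0_E\}$, I would finally invoke Lemma~\ref{l:equivalent-definition-of-direct-sum} applied to the subspaces $F$ and $\Line{u}$ (noting that $\Line{u}$ is a subspace of~$E$ by Lemma~\ref{l:closed-under-linear-combination-is-subspace}, or directly by construction) to conclude that the sum $F+\Line{u}$ is direct. The only subtle step is the case analysis on~$\lambda$; the rest is just bookkeeping with the subspace axioms.
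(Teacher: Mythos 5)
Your proof is correct and follows essentially the same route as the paper: reduce to the zero-intersection characterization of Lemma~\ref{l:equivalent-definition-of-direct-sum}, write an element of $F\cap\Line{u}$ as $\lambda u$, and show $\lambda\neq 0$ would force $u=\frac{1}{\lambda}v\in F$, contradicting $u\not\in F$. The paper phrases the case analysis as a contradiction from $v\neq 0_E$ (via the contrapositive of the zero-product property) rather than a split on $\lambda$, but the argument is the same.
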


\begin{proof}
  From
  Lemma~\threfc{l:closed-under-vector-operations-is-subspace}{%
    $F$ and $\Line{u}$ are subspace},
  we have $0_E\in F$ and $0_E\in\Line{u}$.
  Let $v\in F\cap\Line{u}$.
  Assume that $v\not=0_E$.
  Then, from
  Definition~\thref{d:linear-span},
  let $\lambda\in\matK$ such that $v=\lambda u$.
  Thus, from
  hypothesis, and
  Lemma~\threfc{l:zero-product-property}{contrapositive},
  we have $\lambda\not=0_\matK$.
  Hence, from
  \assume{field properties of~$\matK$}, and
  Lemma~\thref{l:closed-under-vector-operations-is-subspace},
  we have $\frac{1}{\lambda}\,v=u\in F$.
  Which is impossible by hypothesis.
  
  Therefore, from
  Lemma~\thref{l:equivalent-definition-of-direct-sum},
  the sum $F+\Line{u}$ is direct.
\end{proof}

 
 

\begin{definition}[product vector operations]
  \label{d:product-vector-operations}
  Let $(E,+_E,\cdot_E)$ and $(F,+_F,\cdot_F)$ be {\vectorspace}s.
  The {\em product vector operations induced on $\ExF$} are the
  mappings $+_\ExF:\ExFxExF\rightarrow\ExF$ and
  $\cdot_\ExF:\matKxExF\rightarrow\ExF$ defined by
  \begin{eqnarray}
    \label{e:product-vector-addition}
    \forall (u, v), (\up, \vp) \in \ExF, & &
    (u, v) +_\ExF (\up, \vp) = (u +_E \up, v +_F \vp); \\
    \label{e:product-scalar-multiplication}
    \forall \lambda \in \matK,\,
    \forall (u, v) \in \ExF, & &
    \lambda \cdot_\ExF (u, v) = (\lambda \cdot_E u, \lambda \cdot_F v).
  \end{eqnarray}
\end{definition}

\begin{lemma}[product is space]
  \label{l:product-is-space}
  Let $(E,+_E,\cdot_E)$ and $(F,+_F,\cdot_F)$ be {\vectorspace}s.
  Let~$+_\ExF$ and~$\cdot_\ExF$ be the product vector operations induced on
  $\ExF$.
  Then, $(\ExF,+_\ExF,\cdot_\ExF)$ is a {\vectorspace}.
\end{lemma}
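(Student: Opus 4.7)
The plan is to verify each clause of Definition~\ref{d:space} on the product, reducing everything componentwise via Definition~\ref{d:product-vector-operations} to the corresponding property already known to hold in~$E$ and in~$F$. There is no conceptual obstacle here, only a bookkeeping one: the proof is a systematic unfolding of definitions.

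First I would establish that $(\ExF, +_\ExF)$ is an abelian group with identity element $0_\ExF = (0_E, 0_F)$ and opposites given by $-(u,v) = (-u, -v)$. Associativity and commutativity of $+_\ExF$ follow by expanding both sides with~\eqref{e:product-vector-addition} and applying the fact that $(E,+_E)$ and $(F,+_F)$ are abelian groups in each coordinate; the identity and inverse laws are checked similarly. This is a direct appeal to Definition~\thref{d:space} applied to~$E$ and to~$F$ separately.

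Next I would verify the four scalar multiplication axioms~\eqref{e:distributivity-wrt-vector-addition}--\eqref{e:scalar-multiplication-has-identity-element-one}. For any $\lambda, \mu \in \matK$ and $(u,v), (\up,\vp) \in \ExF$, expanding with~\eqref{e:product-vector-addition} and~\eqref{e:product-scalar-multiplication} reduces each identity to its pair of componentwise counterparts in~$E$ and~$F$. For instance, distributivity wrt vector addition becomes
\begin{equation*}
  \lambda \cdot_\ExF ((u,v) +_\ExF (\up,\vp))
  = (\lambda \cdot_E (u +_E \up),\, \lambda \cdot_F (v +_F \vp)),
\end{equation*}
which by~\eqref{e:distributivity-wrt-vector-addition} applied in~$E$ and in~$F$ equals $\lambda \cdot_\ExF (u,v) +_\ExF \lambda \cdot_\ExF (\up,\vp)$. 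The three remaining axioms follow the same pattern.

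The hard part, if any, is merely to be exhaustive and not to skip the (trivial but numerous) verifications; formalizing this in Coq will amount to a straightforward componentwise destructuring. Having checked the abelian group structure and the four scalar multiplication axioms, Definition~\thref{d:space} yields that $(\ExF, +_\ExF, \cdot_\ExF)$ is a {\vectorspace}.
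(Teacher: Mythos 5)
Your proof is correct and follows essentially the same route as the paper: establish that $(\ExF,+_\ExF)$ is an abelian group with identity $(0_E,0_F)$ (the paper simply cites group theory for the product of abelian groups, where you check the axioms componentwise), then verify the four scalar multiplication axioms by unfolding Definition~\thref{d:product-vector-operations} componentwise. Nothing is missing.
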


\begin{proof}
  From
  \assume{group theory},
  $(\ExF,+_\ExF)$ is an abelian group with identity element
  $0_\ExF=(0_E,0_F)$.
  Distributivity of the product scalar multiplication wrt product vector
  addition and field addition, compatibility of the product scalar
  multiplication with field multiplication, and 1~is the identity element for
  the product scalar multiplication are direct consequences of
  Definition~\thref{d:space}, and
  Definition~\thref{d:product-vector-operations}.
  
  Therefore, from
  Definition~\thref{d:space},
  $(\ExF,+_\ExF,\cdot_\ExF)$ is a {\vectorspace}.
\end{proof}

\begin{definition}[inherited vector operations]
  \label{d:inherited-vector-operations}
  Let~$X$ be a set.
  Let~$(E,+_E,\cdot_E)$ be a {\vectorspace}.
  The {\em vector operations inherited on $\FXE$} are the
  mappings $+_\FXE:\FXExFXE\rightarrow\FXE$
  and $\cdot_\FXE:\matKxFXE\rightarrow\FXE$ defined
  by
  \begin{eqnarray}
    \label{e:inherited-vector-addition}
    \forall f, g \in \FXE,\,
    \forall x \in X, & &
    (f +_\FXE g) (x) = f (x) +_E g (x); \\
    \label{e:inherited-scalar-multiplication}
    \forall \lambda \in \matK,\,
    \forall f \in \FXE,\,
    \forall x \in X, & &
    (\lambda \cdot_\FXE f) (x) = \lambda \cdot_E f (x).
  \end{eqnarray}
\end{definition}

\begin{remark}
  Usually, inherited vector operations are denoted the same way as the vector
  operations of the target space.
\end{remark}

\begin{lemma}[space of mappings to a space]
  \label{l:space-of-mappings-to-space}
  Let~$X$ be a set.
  Let~$(E,+_E,\cdot_E)$ be a {\vectorspace}.
  Let~$+_\FXE$ and~$\cdot_\FXE$ be the vector operations
  inherited on $\FXE$.
  Then,\\
  $(\FXE,+_\FXE,\cdot_\FXE)$ is a {\vectorspace}.
\end{lemma}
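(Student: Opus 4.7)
The proof is a routine pointwise verification of the axioms in Definition~\thref{d:space}, each of which reduces to the corresponding axiom in the target space $E$ through Definition~\thref{d:inherited-vector-operations}. The plan is first to exhibit an abelian group structure on $(\FXE,+_\FXE)$, then to check the four scalar multiplication axioms~\eqref{e:distributivity-wrt-vector-addition} to~\eqref{e:scalar-multiplication-has-identity-element-one}, and finally to invoke Definition~\thref{d:space} to conclude.

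For the abelian group structure, I would designate as candidate zero the constant mapping $0_\FXE : x \mapsto 0_E$ and, for $f \in \FXE$, as candidate opposite the mapping $-f : x \mapsto -(f(x))$. Each of associativity, commutativity, the neutral element property of $0_\FXE$, and the opposite property then reduces, for an arbitrary $x \in X$, to the corresponding identity in the abelian group $(E,+_E)$; for instance, associativity unfolds as $((f +_\FXE g) +_\FXE h)(x) = (f(x) +_E g(x)) +_E h(x) = f(x) +_E (g(x) +_E h(x)) = (f +_\FXE (g +_\FXE h))(x)$, using Definition~\thref{d:inherited-vector-operations} on both sides and associativity in $E$ in the middle. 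Since two mappings from $X$ to $E$ are equal as soon as they agree at every point, this yields that $(\FXE,+_\FXE)$ is an abelian group with identity element $0_\FXE$. The four scalar multiplication axioms are then treated in exactly the same way: pick $\lambda,\mu\in\matK$ and $f,g\in\FXE$, evaluate both sides of each axiom at an arbitrary $x\in X$, unfold $+_\FXE$ and $\cdot_\FXE$ via Definition~\thref{d:inherited-vector-operations}, and close the identity by applying the corresponding axiom of Definition~\thref{d:space} to the vectors $f(x), g(x) \in E$ and the scalars $\lambda,\mu$.

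There is no genuine obstacle here: the only care needed is bookkeeping to distinguish the inherited operations $+_\FXE, \cdot_\FXE$ from their target-space counterparts $+_E, \cdot_E$, and to remember that pointwise equality of mappings is the working notion of equality in $\FXE$. Once the abelian group structure and the four scalar axioms are in place, Definition~\thref{d:space} immediately delivers that $(\FXE,+_\FXE,\cdot_\FXE)$ is a vector space.
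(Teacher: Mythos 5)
Your proof is correct and follows essentially the same route as the paper: reduce everything pointwise to the axioms of $E$ via Definition~\ref{d:inherited-vector-operations}, with the constant zero mapping as identity, and conclude by Definition~\ref{d:space}. The only difference is that you spell out the abelian-group verifications that the paper delegates wholesale to assumed group theory.
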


\begin{proof}
  From
  Definition~\thref{d:set-of-mappings-to-space}, and
  \assume{group theory},
  $(\FXE,+_\FXE)$ is an abelian group with identity element
  \begin{equation*}
    0_\FXE = (x \in X \mapsto 0_E).
  \end{equation*}
  Distributivity of the inherited scalar multiplication wrt inherited vector
  addition and field addition, compatibility of the inherited scalar
  multiplication with field multiplication, and 1~is the identity element for
  the inherited scalar multiplication are direct consequences of
  Definition~\thref{d:space}, and
  Definition~\thref{d:inherited-vector-operations}.
  
  Therefore, from
  Definition~\thref{d:space},
  $(\FXE,+_\FXE,\cdot_\FXE)$ is a {\vectorspace}.
\end{proof}

\begin{lemma}[linear map preserves zero]
  \label{l:linear-map-preserves-zero}
  Let~$E$ and~$F$ be {\vectorspace}s.
  Let~$f$ be a linear map from~$E$ to~$F$.
  Then, $f(0_E)=0_F$.
\end{lemma}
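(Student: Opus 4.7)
The plan is to combine the homogeneity property of linear maps with the already-established fact that scaling any vector by zero yields the zero vector. The argument is a short chain of three equalities and does not require induction or any construction.

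First, I would rewrite $0_E$ on the left using Lemma~\ref{l:zero-times-yields-zero} applied in~$E$ with, say, $u = 0_E$ (equivalently, one could invoke Lemma~\ref{l:times-zero-yields-zero}): this gives $0_E = 0 \cdot_E 0_E$, so $f(0_E) = f(0 \cdot_E 0_E)$. Next, I would apply the homogeneity clause~\eqref{e:linear-map-homogeneous} of Definition~\ref{d:linear-map} with $\lambda = 0$ and the vector $0_E$, turning $f(0 \cdot_E 0_E)$ into $0 \cdot_F f(0_E)$. Finally, I would invoke Lemma~\ref{l:zero-times-yields-zero} once more, this time in the codomain~$F$ with $u = f(0_E) \in F$, to conclude that $0 \cdot_F f(0_E) = 0_F$. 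Stringing these three equalities together yields $f(0_E) = 0_F$.

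An equally short alternative, if one prefers to avoid reusing the scalar-multiplication lemmas, is to use additivity instead: since $0_E + 0_E = 0_E$ because $(E,+)$ is an abelian group by Definition~\ref{d:space}, additivity~\eqref{e:linear-map-additive} gives $f(0_E) = f(0_E) +_F f(0_E)$, and cancelling $f(0_E)$ in the abelian group $(F,+_F)$ yields $f(0_E) = 0_F$. Either path is entirely routine; there is no substantive obstacle, so the only choice is essentially stylistic, namely which of the two linearity clauses one wishes to exercise here.
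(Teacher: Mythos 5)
Your primary argument is exactly the paper's proof: rewrite $0_E$ as $0\cdot_E 0_E$ via Lemma~\ref{l:zero-times-yields-zero}, apply homogeneity of degree~1 from Definition~\ref{d:linear-map}, and apply Lemma~\ref{l:zero-times-yields-zero} again in~$F$. It is correct as stated; the additivity-plus-cancellation alternative you mention is also valid but not what the paper uses.
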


\begin{proof}
  From
  Lemma~\thref{l:zero-times-yields-zero}, and
  Definition~\threfc{d:linear-map}{$f$~is homogeneous of degree~1},
  we have
  \begin{equation*}
    f (0_E) = f (0_\matK \cdot_E 0_E) = 0_\matK \cdot_F f (0_E) = 0_F.
  \end{equation*}
\end{proof}

\begin{lemma}[linear map preserves linear combinations]
  \label{l:linear-map-preserves-linear-combinations}
  Let $(E,+_E,\cdot_E)$ and $(F,+_F,\cdot_F)$ be {\vectorspace}s.
  Let $f:E\rightarrow F$ be a mapping from~$E$ to~$F$.
  Then, $f$~is a linear map from~$E$ to~$F$ iff
  it preserves linear combinations:
  \begin{equation}
    \label{e:linear-map-linear-combination}
    \forall \lambda, \mu \in \matK,\,
    \forall u, v \in E,\quad
    f (\lambda \cdot_E u +_E \mu \cdot_E v) =
    \lambda \cdot_F f (u) +_F \mu \cdot_F f (v).
  \end{equation}
\end{lemma}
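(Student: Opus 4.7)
The plan is to prove the equivalence by two direct implications, using only the axioms of \Vectorspace{}s and the basic lemmas already established.

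For the ``only if'' direction, I would assume that $f$ is a linear map from~$E$ to~$F$. Fix $\lambda,\mu\in\matK$ and $u,v\in E$. Then I would apply the additivity property~\eqref{e:linear-map-additive} of Definition~\thref{d:linear-map} to the pair $\lambda\cdot_E u$ and $\mu\cdot_E v$ to split $f(\lambda\cdot_E u+_E\mu\cdot_E v)$ into $f(\lambda\cdot_E u)+_F f(\mu\cdot_E v)$, and then apply the homogeneity property~\eqref{e:linear-map-homogeneous} twice to extract the scalars, obtaining $\lambda\cdot_F f(u)+_F\mu\cdot_F f(v)$.

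For the ``if'' direction, I would assume that $f$ satisfies~\eqref{e:linear-map-linear-combination} and recover additivity and homogeneity separately. To get additivity, I would specialize the hypothesis with $\lambda=\mu=1_\matK$ and use the identity axiom~\eqref{e:scalar-multiplication-has-identity-element-one} of Definition~\thref{d:space} (namely $1\cdot w=w$) on both sides to get $f(u+_E v)=f(u)+_F f(v)$. To get homogeneity, I would specialize the hypothesis with $\mu=0_\matK$ (and $v$ arbitrary, e.g.\ $v=u$), which produces $f(\lambda\cdot_E u+_E 0_\matK\cdot_E v)=\lambda\cdot_F f(u)+_F 0_\matK\cdot_F f(v)$; then Lemma~\thref{l:zero-times-yields-zero} applied in both~$E$ and~$F$ replaces the two $0_\matK\cdot$ terms by $0_E$ and $0_F$ respectively, and Definition~\threfc{d:space}{$(E,+_E)$ and $(F,+_F)$ are abelian groups} absorbs them, yielding $f(\lambda\cdot_E u)=\lambda\cdot_F f(u)$. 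Combining the two properties gives linearity via Definition~\thref{d:linear-map}.

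No step looks genuinely difficult: everything reduces to picking the right values of the scalars and invoking the axioms and the already proven lemma that $0\cdot w=0_E$. The only minor subtlety is being careful to distinguish $0_\matK$, $0_E$ and $0_F$ when reducing the homogeneity case, and making sure the ``dummy'' vector $v$ used to extract homogeneity is actually available (any $v\in E$ works, and $E$ is nonempty since it contains $0_E$), so the argument goes through without any circularity.
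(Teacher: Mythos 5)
Your proposal is correct and follows essentially the same route as the paper's proof: additivity is recovered by specializing to $\lambda=\mu=1$ and using the identity axiom, and homogeneity by specializing to $\mu=0$ and absorbing the zero term via Lemma~\ref{l:zero-times-yields-zero} and the abelian group structure (the paper takes $v=0_E$ as the dummy vector where you take $v=u$, an immaterial difference). The forward direction is the same mechanical application of additivity followed by homogeneity twice.
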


\begin{proof}
  \proofpar{``If''}
  Assume that~\eqref{e:linear-map-linear-combination} holds.
  Let~$u,v\in E$ be vectors.
  Then, from
  Definition~\threfc{d:space}{%
    scalar multiplications in~$E$ and~$F$ admit~1 as identity element},
  we have
  \begin{equation*}
    f (u +_E v)
    = f (1 \cdot_E u +_E 1 \cdot_E v)
    = 1 \cdot_F f (u) +_F 1 \cdot_F f (v)
    = f (u) +_F f (v).
  \end{equation*}
  Hence, $f$ is additive.
  Let~$\lambda\in\matK$ be a scalar.
  Let~$u$ be a vector.
  Then, from
  Lemma~\threfc{l:zero-times-yields-zero}{in~$E$ and~$F$}, and
  Definition~\threfc{d:space}{$(E,+_E)$ and $(F,+_F)$ are abelian groups},
  we have
  \begin{equation*}
    f (\lambda \cdot_E u)
    = f (\lambda \cdot_E u +_E 0 \cdot_E 0_E)
    = \lambda \cdot_F f (u) +_F 0 \cdot_F f (0_E)
    = \lambda \cdot_F f (u).
  \end{equation*}
  Hence, $f$~is homogeneous of degree~1.
  Therefore, from
  Definition~\thref{d:linear-map},
  $f$~is a linear map from~$E$ to~$F$.
  
  \proofparskip{``Only if''}
  Conversely, assume now that~$f$ is a linear map from~$E$ to~$F$.
  Let~$\lambda,\mu\in\matK$ be scalars.
  Let~$u,v\in E$ be vectors.
  Then, from
  Definition~\threfc{d:linear-map}{$f$~is additive},
  $f(\lambda\cdot_Eu+_E\mu\cdot_Ev)=f(\lambda\cdot_Eu)+_Ff(\mu\cdot_Ev)$,
  and ($f$~is homogeneous of degree~1)
  $f(\lambda\cdot_Eu)=\lambda\cdot_Ff(u)$ and
  $f(\mu\cdot_Ev)=\mu\cdot_Ff(v)$.
  Hence, we have
  \begin{equation*}
    f (\lambda \cdot_E u +_E \mu \cdot_E v)
    = \lambda \cdot_F f (u) +_F \mu \cdot_F f (v).
  \end{equation*}
\end{proof}

\begin{lemma}[space of linear maps]
  \label{l:space-of-linear-maps}
  Let $E$ and $(F,+_F,\cdot_F)$ be {\vectorspace}s.
  Let~$+_\FEF$ and~$\cdot_\FEF$ be the vector operations inherited on
  $\FEF$.
  Then, $\LEF$ is a subspace of\\
  $(\FEF,+_\FEF,\cdot_\FEF)$.
\end{lemma}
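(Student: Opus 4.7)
The plan is to apply Lemma~\ref{l:closed-under-linear-combination-is-subspace} to the subset $\LEF$ of the {\vectorspace} $(\FEF, +_\FEF, \cdot_\FEF)$ provided by Lemma~\ref{l:space-of-mappings-to-space}. This reduces the goal to two verifications: first, that the zero function $0_\FEF = (u \in E \mapsto 0_F)$ is a linear map from~$E$ to~$F$; second, that for every $\lambda, \mu \in \matK$ and every $f, g \in \LEF$, the mapping $\lambda \cdot_\FEF f +_\FEF \mu \cdot_\FEF g$ is again a linear map from~$E$ to~$F$.

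For the first point, I would invoke Lemma~\ref{l:linear-map-preserves-linear-combinations}: given $\alpha, \beta \in \matK$ and $u, v \in E$, the value of $0_\FEF$ at $\alpha \cdot_E u +_E \beta \cdot_E v$ is $0_F$, which by Definition~\ref{d:space} (since $(F, +_F)$ is an abelian group and by Lemma~\ref{l:times-zero-yields-zero}) equals $\alpha \cdot_F 0_F +_F \beta \cdot_F 0_F$, that is, $\alpha \cdot_F 0_\FEF(u) +_F \beta \cdot_F 0_\FEF(v)$. Hence $0_\FEF \in \LEF$.

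For the second point, let $h = \lambda \cdot_\FEF f +_\FEF \mu \cdot_\FEF g$. Using Lemma~\ref{l:linear-map-preserves-linear-combinations} again, it suffices to check that for all $\alpha, \beta \in \matK$ and $u, v \in E$,
\begin{equation*}
h(\alpha \cdot_E u +_E \beta \cdot_E v)
= \alpha \cdot_F h(u) +_F \beta \cdot_F h(v).
\end{equation*}
Unfolding~$h$ via Definition~\ref{d:inherited-vector-operations}, then applying the linearity of~$f$ and~$g$ through Lemma~\ref{l:linear-map-preserves-linear-combinations}, one gets $\lambda \cdot_F (\alpha \cdot_F f(u) +_F \beta \cdot_F f(v)) +_F \mu \cdot_F (\alpha \cdot_F g(u) +_F \beta \cdot_F g(v))$. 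The last step is a rearrangement inside~$F$ using the abelian group structure, distributivity (\ref{e:distributivity-wrt-vector-addition})--(\ref{e:distributivity-wrt-field-addition}), and compatibility of scalar multiplication with field multiplication~(\ref{e:scalar-multiplication-field-multiplication-compatibility}) to factor out $\alpha$ and $\beta$ from the left, and then fold back through Definition~\ref{d:inherited-vector-operations} to recover $\alpha \cdot_F h(u) +_F \beta \cdot_F h(v)$.

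There is no real obstacle here; the only delicate part is the algebraic rearrangement inside~$F$, which is purely a matter of bookkeeping with the vector space axioms of~$F$ and the field axioms of~$\matK$. Once the two hypotheses of Lemma~\ref{l:closed-under-linear-combination-is-subspace} are established, the conclusion that $\LEF$ is a subspace of $(\FEF, +_\FEF, \cdot_\FEF)$ follows immediately.
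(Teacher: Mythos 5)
Your proof is correct and follows essentially the same route as the paper: establish that $0_\FEF$ is linear and that $\LEF$ is closed under linear combinations, then conclude via Lemma~\ref{l:closed-under-linear-combination-is-subspace}. The only cosmetic difference is that you route the linearity checks through Lemma~\ref{l:linear-map-preserves-linear-combinations} where the paper appeals directly to Definition~\ref{d:linear-map}; the content is identical.
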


\begin{proof}
  From
  Definition~\thref{d:set-of-linear-maps}, and
  Lemma~\thref{l:space-of-mappings-to-space},
  $\LEF$ is a subset of $\FEF$.
  From
  Lemma~\thref{l:times-zero-yields-zero}, and
  Definition~\threfc{d:space}{$(F,+_F)$~is an abelian group},
  $0_\FEF$~trivially preserves vector operations.
  Hence, from
  Definition~\thref{d:linear-map},
  $0_\FEF$~is a linear map from~$E$ to~$F$.
  From
  Definition~\thref{d:inherited-vector-operations},
  Definition~\thref{d:linear-map}, and
  Definition~\threfc{d:space}{%
    $(E,+_E)$ and $(F,+_F)$ are abelian groups and scalar multiplications
    in~$E$ and~$F$ are compatible with field multiplication},
  $\LEF$ is trivially closed under linear combination.
  
  Therefore, from
  Lemma~\thref{l:closed-under-linear-combination-is-subspace},
  $\LEF$ is a subspace of $\FEF$.
\end{proof}

\begin{definition}[identity map]
  \label{d:identity-map}
  Let~$E$ be a {\vectorspace}.
  The {\em identity map on~$E$} is the mapping $\idE:E\rightarrow E$ defined
  by
  \begin{equation}
    \label{e:identity-map}
    \forall u \in E,\quad
    \idE (u) = u.
  \end{equation}
\end{definition}

\begin{lemma}[identity map is linear map]
  \label{l:identity-map-is-linear-map}
  Let~$E$ be a {\vectorspace}.
  Then, the identity map~$\idE$ is a linear map.
\end{lemma}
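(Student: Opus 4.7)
The plan is to verify directly the definition of a linear map for $\idE$, or equivalently, to apply Lemma~\ref{l:linear-map-preserves-linear-combinations} so that a single computation suffices. I will take the latter route since it collapses both axioms~\eqref{e:linear-map-additive} and~\eqref{e:linear-map-homogeneous} into one check.

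More precisely, I would fix arbitrary scalars $\lambda, \mu \in \matK$ and arbitrary vectors $u, v \in E$, and then compute
\begin{equation*}
  \idE(\lambda \cdot u + \mu \cdot v)
  = \lambda \cdot u + \mu \cdot v
  = \lambda \cdot \idE(u) + \mu \cdot \idE(v),
\end{equation*}
where each equality is a direct application of Definition~\ref{d:identity-map} (applied once on the left, then twice on the right). Combining with Lemma~\ref{l:linear-map-preserves-linear-combinations} yields that $\idE$ is a linear map from $E$ to $E$.

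There is no real obstacle here: the statement is essentially the tautology that evaluating the identity at a linear combination returns the same linear combination. The only thing to be careful about in a formalization is to invoke the correct definition of $\idE$ (rather than computing with $+$ and $\cdot$ as if they had special properties), and to make sure that the invocation of Lemma~\ref{l:linear-map-preserves-linear-combinations} is phrased in the form ``if $f$ preserves linear combinations then $f$ is linear'' rather than its converse.
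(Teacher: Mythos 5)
Your proof is correct. The only difference from the paper is the route: the paper's proof is a direct consequence of Definition~\ref{d:identity-map} and Definition~\ref{d:linear-map}, i.e.\ it checks the two axioms \eqref{e:linear-map-additive} and \eqref{e:linear-map-homogeneous} separately (each being the tautology $\idE(u+v)=u+v=\idE(u)+\idE(v)$, resp.\ $\idE(\lambda u)=\lambda u=\lambda\,\idE(u)$), whereas you route through the characterization in Lemma~\ref{l:linear-map-preserves-linear-combinations} and do a single combined check. Both are valid, and you correctly use the ``preserves linear combinations $\Rightarrow$ linear'' direction of that lemma. What your route buys is one computation instead of two; what it costs is an extra dependency on Lemma~\ref{l:linear-map-preserves-linear-combinations} (which the paper's dependency graph does not record for this statement), and that lemma is itself proved from Definition~\ref{d:linear-map}, so for formalization the paper's direct check is marginally lighter. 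Your closing caution about invoking the right direction of the equivalence is exactly the right thing to watch in a proof assistant.
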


\begin{proof}
  Direct consequence of
  Definition~\thref{d:identity-map}, and
  Definition~\thref{d:linear-map}.
\end{proof}

\begin{lemma}[composition of linear maps is bilinear]
  \label{l:composition-of-linear-maps-is-bilinear}
  Let $E,F,G$ be {\vectorspace}s.
  Then, the composition of functions is a bilinear map from $\LEFxLFG$ to
  $\LEG$.
\end{lemma}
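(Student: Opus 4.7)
The plan is to first verify well-definedness, then check each of the four bilinearity conditions pointwise on $E$.

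For well-definedness, given $f \in \LEF$ and $g \in \LFG$, I would show that $g \circ f \in \LEG$, that is, the composition of two linear maps is linear. The cleanest route is to apply Lemma~\ref{l:linear-map-preserves-linear-combinations}: for any $\lambda, \mu \in \matK$ and $u, v \in E$, two successive applications of that lemma (first to $f$, then to $g$) give
\begin{equation*}
  (g \circ f)(\lambda u + \mu v)
  = g(\lambda f(u) + \mu f(v))
  = \lambda\,g(f(u)) + \mu\,g(f(v)),
\end{equation*}
whence $g \circ f$ is linear.

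Next, denote by $\mathrm{comp}: \LEFxLFG \to \LEG$ the composition map $\mathrm{comp}(f, g) = g \circ f$, and check the four conditions of Definition~\ref{d:bilinear-map}. For each condition, the strategy is the same: fix an arbitrary $x \in E$, evaluate both sides at $x$ using Definition~\ref{d:inherited-vector-operations} (which unpacks the addition and scalar multiplication in $\LEF$, $\LFG$, and $\LEG$ as pointwise operations in $F$ and $G$), and then use either functoriality of composition or linearity of $g$ to match the two sides. Concretely, left additivity $\mathrm{comp}(f_1 + f_2, g) = \mathrm{comp}(f_1, g) + \mathrm{comp}(f_2, g)$ and left homogeneity $\mathrm{comp}(\lambda f, g) = \lambda\,\mathrm{comp}(f, g)$ both rely on linearity of $g$ via Definition~\ref{d:linear-map} applied at the point $f_1(x) + f_2(x)$ (respectively $\lambda f(x)$); right additivity and right homogeneity follow immediately from Definition~\ref{d:inherited-vector-operations} without invoking linearity.

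There is no real obstacle here: the proof is a mechanical pointwise verification, and the only mild subtlety is keeping track of in which space each addition and scalar multiplication lives (in $\LEF$, in $\LFG$, in $\LEG$, or in $F$ or $G$ after pointwise evaluation). To avoid repetition, I would structure the write-up as four short paragraphs, each beginning ``Let $x \in E$'' and closing by invoking the appropriate pointwise equality and, where needed, Lemma~\ref{l:linear-map-preserves-linear-combinations} or Definition~\ref{d:linear-map}. Finally, a concluding sentence applies Definition~\ref{d:bilinear-map} to assemble these four identities into the statement that $\mathrm{comp}$ is a bilinear map from $\LEFxLFG$ to $\LEG$.
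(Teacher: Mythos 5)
Your proposal is correct and follows essentially the same route as the paper: linearity of $g \circ f$ via two applications of Lemma~\ref{l:linear-map-preserves-linear-combinations}, then a pointwise check of the bilinearity conditions using Definition~\ref{d:inherited-vector-operations}, with the linearity of $g$ invoked exactly where needed (for additivity and homogeneity in the $f$-slot) and nothing extra for the $g$-slot. The only cosmetic omission is the preliminary remark that $\LEF$, $\LFG$ and $\LEG$ are themselves {\vectorspace}s (Lemmas~\ref{l:space-of-linear-maps} and~\ref{l:product-is-space}), which the paper states explicitly so that ``bilinear map from $\LEFxLFG$ to $\LEG$'' is well posed.
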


\begin{proof}
  From
  Lemma~\threfc{l:space-of-linear-maps}{%
    $\LEF$, $\LFG$ and $\LEG$ are spaces}, and
  Lemma~\thref{l:product-is-space},
  $\LEFxLFG$ is a space.
  
  Let $f\in\LEF$ and $g\in\LFG$ be linear maps.
  Let $\lambda,\mu\in\matK$ be scalars.
  Let~$u,v\in E$ be vectors.
  Then, from
  \assume{the definition of composition of functions}, and
  Lemma~\threfc{l:linear-map-preserves-linear-combinations}{for~$f$ and~$g$},
  we have
  \begin{eqnarray*}
    (g \circ f) (\lambda u + \mu v)
    & = & g (f (\lambda u + \mu v)) \\
    & = & g (\lambda f (u) + \mu f (v)) \\
    & = & \lambda g (f (u)) + \mu g (f (v)) \\
    & = & \lambda (g \circ f) (u) + \mu (g \circ f) (v).
  \end{eqnarray*}
  Hence, $g\circ f$ belongs to $\LEG$, and composition is a mapping from
  space $\LEFxLFG$ to space $\LEG$.
  
  From
  \assume{the definition of composition of functions}, and
  Definition~\thref{d:inherited-vector-operations},
  composition of linear maps is trivially left additive and left homogeneous
  of degree~1.
  From
  \assume{the definition of composition of functions},
  Definition~\thref{d:inherited-vector-operations}, and
  Definition~\threfc{d:linear-map}{%
    left argument ``$g$'' is additive and homogeneous of degree~1},
  composition of linear maps is trivially right additive and right
  homogeneous of degree~1.
  
  Therefore, from
  Definition~\thref{d:bilinear-map},
  composition of linear maps is a bilinear map from $\LEFxLFG$ to $\LEG$.
\end{proof}




\begin{definition}[isomorphism]
  \label{d:isomorphism}
  Let~$E$ and~$F$ be {\vectorspace}s.
  An {\em isomorphism from~$E$ onto~$F$} is a linear map from~$E$ to~$F$ that
  is bijective.
\end{definition}


  

\begin{definition}[kernel]
  \label{d:kernel}
  Let~$E$ and~$F$ be {\vectorspace}s.
  Let $f\in\LEF$ be a linear map from~$E$ to~$F$.
  The {\em kernel of~$f$} (or {\em null space of~$f$}), denoted $\Ker{f}$, is
  the subset of~$E$ defined by
  \begin{equation}
    \label{e:kernel}
    \Ker{f} = \{ u \in E \st f (u) = 0_F \}.
  \end{equation}
\end{definition}

\begin{lemma}[kernel is subspace]
  \label{l:kernel-is-subspace}
  Let~$E$ and~$F$ be {\vectorspace}s.
  Let $f\in\LEF$ be a linear map from~$E$ to~$F$.
  Then, $\Ker{f}$ is a subspace of~$E$.
\end{lemma}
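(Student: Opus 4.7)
The plan is to verify the two conditions of Lemma~\ref{l:closed-under-linear-combination-is-subspace} (closed under linear combination is subspace): namely that $0_E$ belongs to $\Ker{f}$ and that $\Ker{f}$ is closed under linear combination. Since $\Ker{f}$ is defined as a subset of~$E$ by construction (Definition~\ref{d:kernel}), this will directly yield the conclusion.

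First, I would check that $0_E \in \Ker{f}$. By Lemma~\ref{l:linear-map-preserves-zero}, any linear map sends $0_E$ to $0_F$, so $f(0_E) = 0_F$, which by Definition~\ref{d:kernel} means $0_E \in \Ker{f}$.

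Next, I would show closure under linear combination. Let $\lambda, \mu \in \matK$ and $u, v \in \Ker{f}$. By Definition~\ref{d:kernel}, $f(u) = 0_F$ and $f(v) = 0_F$. Applying Lemma~\ref{l:linear-map-preserves-linear-combinations}, together with the fact (from Definition~\ref{d:space} and Lemma~\ref{l:times-zero-yields-zero}, or directly from the abelian group structure of $(F,+_F)$) that $\lambda \cdot_F 0_F + \mu \cdot_F 0_F = 0_F$, we obtain
\begin{equation*}
  f(\lambda u + \mu v) = \lambda f(u) + \mu f(v) = \lambda \cdot 0_F + \mu \cdot 0_F = 0_F,
\end{equation*}
so $\lambda u + \mu v \in \Ker{f}$ by Definition~\ref{d:kernel}.

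Since both hypotheses of Lemma~\ref{l:closed-under-linear-combination-is-subspace} are satisfied, $\Ker{f}$ is a subspace of~$E$. There is no real obstacle in this proof: it is a routine verification that merely threads together the definition of the kernel, the preservation of linear combinations by $f$, and the subspace criterion. The only minor point of care is to justify that $\lambda \cdot_F 0_F + \mu \cdot_F 0_F = 0_F$, which is immediate from the vector space axioms but should be cited explicitly in the spirit of the rest of the document.
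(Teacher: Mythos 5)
Your proof is correct and follows essentially the same route as the paper: membership of $0_E$ via Lemma~\ref{l:linear-map-preserves-zero}, closure under linear combination via Lemma~\ref{l:linear-map-preserves-linear-combinations} together with Lemma~\ref{l:times-zero-yields-zero} and the group structure of $(F,+)$, and conclusion by Lemma~\ref{l:closed-under-linear-combination-is-subspace}. No gaps.
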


\begin{proof}
  From
  Lemma~\thref{l:linear-map-preserves-zero}, and
  Definition~\thref{d:kernel},
  $0_E$ belongs to $\Ker{f}$.
  Let $\lambda,\mu\in\matK$ be scalars.
  Let $u,v\in\Ker{f}$ be vectors in the kernel.
  Then, from
  Lemma~\thref{l:linear-map-preserves-linear-combinations},
  Definition~\thref{d:kernel},
  Lemma~\thref{l:times-zero-yields-zero}, and
  Definition~\threfc{d:space}{$(F,+)$ is an abelian group},
  we have
  \begin{equation*}
    f (\lambda u + \mu v)
    = \lambda f(u) + \mu f(v)
    = \lambda 0_F + \mu 0_F
    = 0_F.
  \end{equation*}
  Hence, from
  Definition~\thref{d:kernel},
  $\lambda u+\mu v$ belongs to~$\Ker{f}$.
  
  Therefore, from
  Lemma~\thref{l:closed-under-linear-combination-is-subspace},
  $\Ker{f}$ is a subspace of~$E$.
\end{proof}

\begin{lemma}[injective linear map has zero kernel]
  \label{l:injective-linear-map-has-zero-kernel}
  Let~$E$ and~$F$ be {\vectorspace}s.
  Let $f\in\LEF$ be a linear map from~$E$ to~$F$.
  Then, $f$~is injective iff $\Ker{f}=\{0_E\}$.
\end{lemma}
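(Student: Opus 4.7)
The plan is a standard two-direction proof relying on the characterization of kernel membership, the fact that linear maps preserve zero (Lemma~\ref{l:linear-map-preserves-zero}), and the linearity of~$f$ through the preservation of linear combinations (Lemma~\ref{l:linear-map-preserves-linear-combinations}).

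For the forward implication, I would assume that~$f$ is injective. From Lemma~\ref{l:linear-map-preserves-zero} and Definition~\ref{d:kernel}, $0_E$ belongs to $\Ker{f}$, hence $\{0_E\}\subset\Ker{f}$. Conversely, let $u\in\Ker{f}$; then from Definition~\ref{d:kernel} and Lemma~\ref{l:linear-map-preserves-zero}, we have $f(u)=0_F=f(0_E)$, and injectivity of~$f$ forces $u=0_E$. Hence $\Ker{f}\subset\{0_E\}$, and by double inclusion $\Ker{f}=\{0_E\}$.

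For the reverse implication, I would assume $\Ker{f}=\{0_E\}$ and pick $u,v\in E$ with $f(u)=f(v)$. Using Lemma~\ref{l:linear-map-preserves-linear-combinations} (with $\lambda=1$ and $\mu=-1$, together with Lemma~\ref{l:minus-times-yields-opposite-vector} and Definition~\ref{d:vector-subtraction}), together with the fact that $(F,+_F)$ is an abelian group from Definition~\ref{d:space}, I get $f(u-v)=f(u)-f(v)=0_F$. Thus $u-v\in\Ker{f}=\{0_E\}$, so $u-v=0_E$, and again using the abelian group structure of $(E,+_E)$, we conclude $u=v$. Hence~$f$ is injective (using the standard definition of injectivity from set theory).

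I do not expect any serious obstacle here: the only subtle point is to be careful that subtraction in~$F$ is well-defined through Definition~\ref{d:vector-subtraction} and that the step $f(u)-f(v)=f(u-v)$ really follows from the preservation of linear combinations rather than from additivity plus homogeneity separately (which would require an extra invocation of Lemma~\ref{l:minus-times-yields-opposite-vector} inside~$F$). Writing the proof via Lemma~\ref{l:linear-map-preserves-linear-combinations} in one shot is the cleanest route.
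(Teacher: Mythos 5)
Your proof is correct and follows essentially the same route as the paper's: injectivity plus $f(0_E)=0_F$ gives $\Ker{f}\subset\{0_E\}$, and conversely $f(u)=f(v)$ implies $f(u-v)=0_F$ so a trivial kernel forces $u=v$. The only cosmetic differences are that you make the inclusion $\{0_E\}\subset\Ker{f}$ explicit (the paper leaves it implicit) and that you route $f(u-v)=f(u)-f(v)$ through Lemma~\ref{l:linear-map-preserves-linear-combinations} rather than directly through Definition~\ref{d:linear-map}; both are fine.
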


\begin{proof}
  \proofpar{``If''}
  Assume that $\Ker{f}=\{0_E\}$.
  Let $u,v\in E$ be vectors.
  Assume that $f(u)=f(v)$.
  Then, from
  Definition~\thref{d:vector-subtraction},
  Definition~\thref{d:linear-map}, and
  Definition~\threfc{d:space}{$(F,+)$ is an abelian group},
  we have $f(u-v)=0_F$, hence $u-v$ belongs to $\Ker{f}$.
  Thus $u-v=0_E$, and from
  Definition~\thref{d:vector-subtraction}, and
  Definition~\threfc{d:space}{$(E,+)$ is an abelian group},
  we have $u=v$.
  Therefore, from
  \assume{the definition of injectivity},
  $f$ is injective.
  
  \proofparskip{``Only if''}
  Conversely, assume now that~$f$ is injective.
  Let~$u\in\Ker{f}$ be a vector in the kernel.
  Then, from
  Definition~\thref{d:kernel},
  we have $f(u)=0_F$.
  Thus, from
  Lemma~\thref{l:linear-map-preserves-zero},
  we have $f(u)=f(0_E)$.
  Therefore, from
  \assume{the definition of injectivity},
  $u=0_E$.
\end{proof}

\begin{lemma}[K is space]
  \label{l:k-is-space}
  The commutative field~$\matK$ equipped with its addition and multiplication
  is a {\vectorspace}.
\end{lemma}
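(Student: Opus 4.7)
The plan is to verify directly the five clauses of Definition~\ref{d:space} for the structure $(\matK, +, \cdot)$, where the vector addition $+$ is taken to be the field addition of $\matK$ and the scalar multiplication $\cdot$ is taken to be the field multiplication of $\matK$. All conditions will reduce to elementary field axioms, so no construction or nontrivial argument is required; the work is purely a matter of rewriting each vector-space axiom as the corresponding field property under this identification.

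More concretely, I would proceed as follows. First, since $\matK$ is a commutative field, $(\matK, +)$ is an abelian group with identity element $0_\matK$, which is precisely the zero vector $0_E$ required by Definition~\ref{d:space}. Next, the distributivity of scalar multiplication with respect to vector addition, equation~\eqref{e:distributivity-wrt-vector-addition}, becomes $\lambda \cdot (u + v) = \lambda \cdot u + \lambda \cdot v$ for $\lambda, u, v \in \matK$, which is the left distributivity of field multiplication over field addition. Similarly, the distributivity with respect to field addition, equation~\eqref{e:distributivity-wrt-field-addition}, becomes the right distributivity. The compatibility with field multiplication, equation~\eqref{e:scalar-multiplication-field-multiplication-compatibility}, becomes the associativity of field multiplication. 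Finally, equation~\eqref{e:scalar-multiplication-has-identity-element-one}, $1 \cdot u = u$, is just the defining property of $1_\matK$ as the multiplicative identity of the field.

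There is no real obstacle here: each clause is a direct restatement of one of the \textbf{field properties of $\matK$}, which the paper explicitly admits as a basic ingredient. The only care needed is to make the notational identification between the two roles of $+$ (as vector addition and as field addition) and of $\cdot$ (as scalar multiplication and as field multiplication) explicit at the start of the proof, so that each invocation of a field axiom matches exactly the corresponding vector-space axiom. Conclusion then follows by applying Definition~\ref{d:space}.
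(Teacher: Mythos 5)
Your proof is correct and takes essentially the same route as the paper, which simply states that the result is a direct consequence of the commutative field structure; you have just spelled out the routine identification of the vector-space axioms of Definition~\ref{d:space} with the corresponding field axioms of~$\matK$.
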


\begin{proof}
  Direct consequence of the commutative field structure.
\end{proof}

\subsection{Normed vector space}
\label{ss:normed-vector-space}

\begin{definition}[norm]
  \label{d:norm}
  Let~$E$ be a {\vectorspace}.
  An application $\nrmdot:E\rightarrow\matR$ is a {\em norm over~$E$} iff
  it separates points (or it is definite),
  it is absolutely homogeneous of degree~1,
  and it satisfies the triangle inequality:
  \begin{eqnarray}
    \label{e:norm-point-separation}
    \forall u \in E, & & \nrm{u} = 0 \Implies u = 0_E; \\
    \label{e:norm-absolute-homogeneity}
    \forall \lambda \in \matK,\,
    \forall u \in E, & & \nrm{\lambda u} = | \lambda | \, \nrm{u}; \\
    \label{e:norm-triangle-inequality}
    \forall u, v \in E, & & \nrm{u + v} \leq  \nrm{u} + \nrm{v}.
  \end{eqnarray}
\end{definition}

\begin{remark}
  \label{r:absolute-value-over-k}
  The absolute value over field~$\matK$ is a function
  $|\cdot|:\matK\rightarrow\matR$ that is nonnegative, definite,
  multiplicative, and satisfies the triangle inequality.
  It is the modulus for the field of complex numbers.
\end{remark}

\begin{definition}[normed vector space]
  \label{d:normed-space}
  $(E,\nrmdot)$ is a {\em normed vector space}, or simply a
  {\em {\normedvectorspace}}, iff
  $E$~is a {\vectorspace} and $\nrmdot$~is a norm over~$E$.
\end{definition}



\begin{lemma}[K is {\normedvectorspace}]
  \label{l:k-is-normed-space}
  The commutative field~$\matK$ equipped with its absolute value is a normed
  space.
\end{lemma}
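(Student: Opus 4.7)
The plan is to apply Definition~\ref{d:normed-space} directly: I need to verify (i)~that $\matK$ is a vector space, and (ii)~that the absolute value $|\cdot|:\matK\rightarrow\matR$ is a norm over $\matK$ in the sense of Definition~\ref{d:norm}.

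For (i), I would simply invoke Lemma~\ref{l:k-is-space}, which already states that $\matK$ equipped with its field addition and field multiplication forms a vector space (over itself). Here the scalar multiplication and the internal multiplication of $\matK$ coincide, which will be important for establishing absolute homogeneity below.

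For (ii), the three properties of a norm must be checked in turn, each one being a direct translation of one of the properties of the absolute value listed in Remark~\ref{r:absolute-value-over-k}. First, point separation~\eqref{e:norm-point-separation} follows from the definiteness of $|\cdot|$ (``$|u|=0$ implies $u=0_\matK$''). Second, absolute homogeneity~\eqref{e:norm-absolute-homogeneity} reads, for $\lambda,u\in\matK$, $|\lambda u|=|\lambda|\,|u|$; since the scalar multiplication of the vector space $\matK$ is field multiplication, this is exactly the multiplicativity of the absolute value. Third, the triangle inequality~\eqref{e:norm-triangle-inequality} is literally the triangle inequality for $|\cdot|$ given in Remark~\ref{r:absolute-value-over-k}.

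There is no real obstacle here: the entire proof is a bookkeeping verification that the axioms of a norm are in one-to-one correspondence with a subset of the axioms of the absolute value. The only point deserving a word of justification is the identification of scalar multiplication (with scalars in $\matK$ acting on vectors in $\matK$) with the field multiplication of $\matK$, which is the very structure used in Lemma~\ref{l:k-is-space}; once this is made explicit, absolute homogeneity is nothing more than multiplicativity of $|\cdot|$. Conclude by invoking Definition~\ref{d:normed-space}.
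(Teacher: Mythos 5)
Your proof is correct and follows essentially the same route as the paper, which simply declares the result a direct consequence of Definition~\ref{d:normed-space}, Definition~\ref{d:norm}, and the properties of the absolute value recalled in Remark~\ref{r:absolute-value-over-k}. Your additional explicit appeal to Lemma~\ref{l:k-is-space} and your remark that scalar multiplication coincides with field multiplication (so that absolute homogeneity is exactly multiplicativity of $|\cdot|$) only make explicit what the paper leaves implicit.
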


\begin{proof}
  Direct consequence of
  Definition~\thref{d:normed-space},
  Definition~\thref{d:norm}, and
  \assume{properties of the absolute value over~$\matK$}
  (see Remark~\ref{r:absolute-value-over-k}).
\end{proof}

\begin{lemma}[norm preserves zero]
  \label{l:norm-preserves-zero}
  Let $(E,\nrmdot)$ be a {\normedvectorspace}.
  Then, $\nrm{0_E}=0$.
\end{lemma}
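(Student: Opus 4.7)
The plan is to derive $\nrm{0_E}=0$ directly from absolute homogeneity of the norm (property~\eqref{e:norm-absolute-homogeneity}) by choosing the scalar $\lambda=0$, combined with Lemma~\ref{l:zero-times-yields-zero} which tells us that $0\cdot u=0_E$ for any vector $u$.

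More precisely, I would first pick an arbitrary vector in $E$, for instance $u=0_E$ itself (this is legal since $E$ is a vector space and thus contains its zero vector). Then I would rewrite $\nrm{0_E}$ as $\nrm{0\cdot 0_E}$ using Lemma~\ref{l:zero-times-yields-zero}, apply absolute homogeneity (Definition~\ref{d:norm}, \eqref{e:norm-absolute-homogeneity}) with $\lambda=0$ to get $\nrm{0\cdot 0_E}=|0|\,\nrm{0_E}$, and finally invoke the basic properties of the absolute value over~$\matK$ (see Remark~\ref{r:absolute-value-over-k}), namely $|0|=0$, together with the field property $0\cdot x=0$ in $\matR$, to conclude.

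The chain of equalities is therefore
\begin{equation*}
  \nrm{0_E} = \nrm{0 \cdot 0_E} = |0| \, \nrm{0_E} = 0 \cdot \nrm{0_E} = 0.
\end{equation*}
There is no real obstacle here: the result is a one-line consequence of absolute homogeneity and the compatibility between the zero scalar and the zero vector. The only thing to be careful about is to cite the right dependencies explicitly, namely Lemma~\ref{l:zero-times-yields-zero} for $0\cdot 0_E=0_E$, Definition~\ref{d:norm} for absolute homogeneity, and the properties of the absolute value on $\matK$ together with ordered field properties of $\matR$ for the final arithmetic step.
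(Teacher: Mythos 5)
Your proof is correct and is essentially identical to the paper's own argument: both rewrite $\nrm{0_E}$ as $\nrm{0\cdot 0_E}$ via Lemma~\ref{l:zero-times-yields-zero}, apply absolute homogeneity with $\lambda=0$, and finish with $|0|=0$ and the field properties of~$\matR$. The dependencies you cite match those in the paper.
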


\begin{proof}
  From
  Definition~\threfc{d:normed-space}{$E$~is a {\vectorspace}}, and
  Definition~\thref{d:space},
  $0_E$~belongs to~$E$.
  From
  Lemma~\thref{l:zero-times-yields-zero},
  Definition~\threfc{d:norm}{%
    $\nrmdot$~is absolutely homogeneous of degree~1},
  \assume{definition of the absolute value}, and
  \assume{field properties of~$\matR$},
  we have
  \begin{equation*}
    \nrm{0_E}
    = \nrm{0_\matK \cdot 0_E}
    = | 0_\matK | \, \nrm{0_E}
    = 0_\matR \nrm{0_E}
    = 0_\matR.
  \end{equation*}
\end{proof}

\begin{lemma}[norm is nonnegative]
  \label{l:norm-is-nonnegative}
  Let $(E,\nrmdot)$ be a {\normedvectorspace}.
  Then, $\nrmdot$~is nonnegative.
\end{lemma}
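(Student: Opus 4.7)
The plan is to show that for every $u \in E$ we have $\nrm{u} \geq 0$ by applying the triangle inequality to the trivial decomposition $0_E = u + (-u)$ and using that $\nrm{0_E} = 0$.

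First I would fix an arbitrary $u \in E$ and use the fact that $(E,+)$ is an abelian group (from Definition~\thref{d:normed-space} and Definition~\thref{d:space}) to write $0_E = u + (-u)$. Applying the norm gives $\nrm{0_E} = \nrm{u + (-u)}$, and Lemma~\thref{l:norm-preserves-zero} rewrites the left-hand side as $0$. By the triangle inequality from Definition~\thref{d:norm}, the right-hand side is bounded above by $\nrm{u} + \nrm{-u}$.

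Next I would handle $\nrm{-u}$: by Lemma~\thref{l:minus-times-yields-opposite-vector} applied with $\lambda = 1$, we have $-u = (-1) \cdot u$, and absolute homogeneity of the norm together with the property $|-1| = 1$ of the absolute value on $\matK$ (cf.~Remark~\ref{r:absolute-value-over-k}) yields $\nrm{-u} = |-1| \, \nrm{u} = \nrm{u}$. Combining, we obtain $0 \leq \nrm{u} + \nrm{u} = 2 \nrm{u}$, so ordered field properties of $\matR$ give $\nrm{u} \geq 0$.

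There is no real obstacle here; the only step that requires a small detour is justifying $\nrm{-u} = \nrm{u}$, which routes through the identification of the additive inverse $-u$ with the scalar multiple $(-1) \cdot u$ via Lemma~\thref{l:minus-times-yields-opposite-vector}. Everything else is a direct application of the axioms of a norm together with Lemma~\thref{l:norm-preserves-zero}.
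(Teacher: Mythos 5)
Your proof is correct and follows essentially the same route as the paper: both apply the triangle inequality to the decomposition $0_E = u + (-u)$, use absolute homogeneity to get $\nrm{-u} = \nrm{u}$, and conclude $0 = \nrm{0_E} \leq 2\nrm{u}$ (the paper merely rearranges this as $\nrm{u} = \half(\nrm{u}+\nrm{-u}) \geq \half\nrm{u-u} = 0$). No gap.
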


\begin{proof}
  From
  Definition~\thref{d:normed-space},
  $E$~is a {\vectorspace}.
  Let $u\in E$ be a vector.
  Then, from
  Definition~\threfc{d:norm}{%
    $\nrmdot$~is absolutely homogeneous of degree~1 and satisfies triangle
    inequality},
  Definition~\threfc{d:space}{$(E,+)$ is an abelian group}, and
  \assume{ordered field properties of~$\matR$},
  we have we $\nrm{-u}=\nrm{u}$ and
  \begin{equation*}
    \nrm{u}
    = \half (\nrm{u} + \nrm{- u})
    \geq \half \nrm{u - u}
    = \half \nrm{0_E}
    = 0.
  \end{equation*}
\end{proof}

\begin{lemma}[normalization by nonzero]
  \label{l:normalization-by-nonzero}
  Let $(E,\nrmdot)$ be a {\normedvectorspace}.
  Then,
  \begin{equation}
    \label{e:normalization-by-nonzero}
    \forall \lambda \in \matK,\,
    \forall u \in E,\quad
    u \neq 0 \Implies \nrm{\lambda \frac{u}{\nrm{u}}} = | \lambda |.
  \end{equation}
\end{lemma}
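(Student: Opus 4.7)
The plan is to reduce the expression $\nrm{\lambda \frac{u}{\nrm{u}}}$ to $|\lambda|$ through direct application of the norm axioms, using the fact that $u\neq 0_E$ to legitimize dividing by $\nrm{u}$. First I would fix $\lambda\in\matK$ and $u\in E$ with $u\neq 0_E$, then observe that, by the contrapositive of the point-separation axiom~\eqref{e:norm-point-separation}, $\nrm{u}\neq 0$. Combined with Lemma~\thref{l:norm-is-nonnegative}, this gives $\nrm{u}>0$, so $\frac{1}{\nrm{u}}$ is a well-defined nonzero real scalar and in particular $\left|\nrm{u}\right|=\nrm{u}$.

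Next I would unfold the scalar division using Definition~\thref{d:scalar-division} to write $\frac{u}{\nrm{u}}=\frac{1}{\nrm{u}}\cdot u$, and then apply the compatibility of scalar multiplication with field multiplication (from Definition~\thref{d:space}) to collapse
\begin{equation*}
  \lambda\cdot\left(\frac{1}{\nrm{u}}\cdot u\right)=\left(\frac{\lambda}{\nrm{u}}\right)\cdot u.
\end{equation*}
A single use of absolute homogeneity~\eqref{e:norm-absolute-homogeneity} then yields
\begin{equation*}
  \nrm{\lambda\frac{u}{\nrm{u}}}=\left|\frac{\lambda}{\nrm{u}}\right|\,\nrm{u}=\frac{|\lambda|}{\nrm{u}}\,\nrm{u},
\end{equation*}
using multiplicativity of the absolute value over~$\matK$ (see Remark~\ref{r:absolute-value-over-k}) and the fact that $\left|\nrm{u}\right|=\nrm{u}$ since $\nrm{u}>0$. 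Finally, field properties of~$\matR$ simplify this to $|\lambda|$, as desired.

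There is no real obstacle here: the argument is a chain of definitional rewrites and one application of absolute homogeneity. The only point requiring genuine care, from a formalization standpoint, is the justification that dividing by $\nrm{u}$ is legal, which rests on combining the contrapositive of~\eqref{e:norm-point-separation} with Lemma~\thref{l:norm-is-nonnegative} to get strict positivity of $\nrm{u}$; this is also what allows us to drop the outer absolute value on $\nrm{u}$ at the end.
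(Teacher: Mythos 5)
Your proof is correct and follows essentially the same route as the paper, which justifies the lemma as a direct consequence of Definition~\thref{d:scalar-division}, the definiteness and absolute homogeneity of the norm, Lemma~\thref{l:norm-is-nonnegative}, and field properties of~$\matR$. You have simply spelled out the intermediate rewrites (including the compatibility of scalar multiplication with field multiplication) that the paper leaves implicit, and your emphasis on deriving $\nrm{u}>0$ before dividing is exactly the right point of care.
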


\begin{proof}
  Direct consequence of
  Definition~\thref{d:scalar-division},
  Definition~\threfc{d:norm}{%
    $\nrmdot$~is definite and absolutely homogeneous of degree~1},
  Lemma~\thref{l:norm-is-nonnegative}, and
  \assume{field properties of~$\matR$}.
\end{proof}

\begin{definition}[distance associated with norm]
  \label{d:distance-associated-with-norm}
  Let $(E,\nrmdot)$ be a {\normedvectorspace}.
  The {\em distance associated with norm~$\nrmdot$} is the mapping
  $\dst:\ExE\rightarrow\matR$ defined by
  \begin{equation}
    \label{e:dE}
    \forall u, v \in E,\quad \dst (u, v) = \nrm{u - v}.
  \end{equation}
\end{definition}

\begin{remark}
  The mapping $\dst$ will be proved below to be a distance; hence its name.
\end{remark}

\begin{lemma}[norm gives distance]
  \label{l:norm-gives-distance}
  Let $(E,\nrmdot)$ be a {\normedvectorspace}.
  Let~$\dst$ be the distance associated with norm~$\nrmdot$.
  Then, $(E,\dst)$ is a metric space.
\end{lemma}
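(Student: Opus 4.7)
The plan is to check directly each of the four axioms in Definition~\thref{d:distance} together with the nonemptiness requirement in Definition~\thref{d:metric-space}. Nonemptiness is immediate: being a {\normedvectorspace} makes~$E$ a {\vectorspace} (Definition~\thref{d:normed-space}), so~$E$ contains~$0_E$ (Definition~\thref{d:space}). Each of the four distance axioms should then follow by translating the corresponding norm property through the defining identity $\dst(u,v)=\nrm{u-v}$.

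For nonnegativity I would cite Lemma~\thref{l:norm-is-nonnegative} and rewrite with Definition~\thref{d:distance-associated-with-norm}. For the triangle inequality I would first use that $(E,+)$ is an abelian group (Definition~\thref{d:space}) to rewrite $u-w=(u-v)+(v-w)$, and then invoke the triangle inequality of the norm from Definition~\thref{d:norm}. For the separation property, the implication $u=v\Implies\dst(u,v)=0$ follows from Lemma~\thref{l:norm-preserves-zero} applied to the zero vector $u-v$, while the converse implication is exactly Definition~\threfc{d:norm}{$\nrmdot$~separates points} applied to the vector $u-v$, together with the abelian group structure to pass from $u-v=0_E$ to $u=v$.

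The only step needing any manipulation is symmetry. Here I would rewrite $v-u=(-1)\cdot(u-v)$ using Lemma~\thref{l:minus-times-yields-opposite-vector} (with $\lambda=1$) and the abelian group structure, then apply the absolute homogeneity of the norm from Definition~\thref{d:norm}, using $|-1|=1$ from the \assume{properties of the absolute value over~$\matK$}. This is arguably the ``main obstacle'', but it remains entirely routine. Once the four properties are verified, Definition~\thref{d:distance} identifies~$\dst$ as a distance over~$E$, and Definition~\thref{d:metric-space} concludes that $(E,\dst)$ is a metric space.
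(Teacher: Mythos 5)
Your proof is correct and follows essentially the same route as the paper: translate each distance axiom through $\dst(u,v)=\nrm{u-v}$, using nonnegativity of the norm, absolute homogeneity with $\lambda=-1$ for symmetry, definiteness for separation, and the norm's triangle inequality after rewriting $u-w=(u-v)+(v-w)$. You are in fact slightly more careful than the paper's one-paragraph proof, in that you explicitly note the nonemptiness of~$E$ and the easy direction of the separation property via Lemma~\ref{l:norm-preserves-zero}.
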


\begin{proof}
  From
  Definition~\thref{d:distance-associated-with-norm},
  Lemma~\thref{l:norm-is-nonnegative},
  Definition~\thref{d:vector-subtraction}, and
  Definition~\threfc{d:norm}{%
    $\nrmdot$~is absolutely homogeneous of degree~1 with $\lambda=-1$,
    definite and satisfies triangle inequality},
  $\dst$~is nonnegative and symmetric, separates points, and satisfies the
  triangle inequality.
  Thus, from
  Definition~\thref{d:distance},
  $\dst$~is a distance over~$E$.
  Therefore, from
  Definition~\thref{d:metric-space},
  $(E,\dst)$ is a metric space.
\end{proof}

\begin{lemma}[linear span is closed]
  \label{l:linear-span-is-closed}
  Let $(E,\nrmdot)$ be a {\normedvectorspace}.
  Let~$\dst$ be the distance associated with norm~$\nrmdot$.
  Let~$u\in E$ be a vector.
  Then, $\Line{u}$ is closed for distance~$\dst$.
\end{lemma}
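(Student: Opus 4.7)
My plan is to apply Lemma~\ref{l:closed-is-limit-of-sequences}, which characterizes closedness of a nonempty subset by stability under sequential limits for the distance~$\dst$. I would first observe that $\Line{u}$ is nonempty, since $0_E = 0\cdot u\in\Line{u}$ by Lemma~\ref{l:zero-times-yields-zero} and Definition~\ref{d:linear-span}. I then split on whether $u=0_E$: in that degenerate case, $\Line{u}$ reduces to the singleton $\{0_E\}$ via Lemma~\ref{l:times-zero-yields-zero}, which is closed by Lemma~\ref{l:singleton-is-closed}.

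The substantive case is $u\neq 0_E$. Given a sequence $(a_n)_{n\in\matN}$ in $\Line{u}$ converging to some $a\in E$, I would write $a_n=\lambda_n u$ with $\lambda_n\in\matK$ via Definition~\ref{d:linear-span}. The key idea is to transfer convergence in $E$ back to the scalar coefficients. By Lemma~\ref{l:convergent-sequence-is-cauchy}, $(a_n)_{n\in\matN}$ is Cauchy; then Definition~\ref{d:distance-associated-with-norm} together with absolute homogeneity of the norm (Definition~\ref{d:norm}) yields
\[
\dst(a_n, a_m) = \nrm{(\lambda_n-\lambda_m)\,u} = |\lambda_n-\lambda_m|\,\nrm{u}.
\]
Since $u\neq 0_E$, point separation of the norm and Lemma~\ref{l:norm-is-nonnegative} give $\nrm{u}>0$, so dividing proves that $(\lambda_n)_{n\in\matN}$ is a Cauchy sequence in $(\matK,|\cdot|)$. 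Completeness of~$\matK$ (a basic property of the valued fields~$\matR$ and~$\matC$, assumed throughout the document) then produces $\lambda\in\matK$ with $\lambda_n\to\lambda$.

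To conclude, the same identity $\dst(\lambda_n u,\lambda u)=|\lambda_n-\lambda|\,\nrm{u}$ shows that $(a_n)_{n\in\matN}$ also converges to $\lambda u$. By uniqueness of the limit (Lemma~\ref{l:limit-is-unique}), $a=\lambda u\in\Line{u}$, and closedness follows from Lemma~\ref{l:closed-is-limit-of-sequences}. I do not anticipate a real obstacle here: the only points of care are the nonemptyness check needed to apply the sequence characterization, the case split on~$u$ to guard the division by~$\nrm{u}$, and the invocation of completeness of~$\matK$; the one-line computation reducing vector convergence to scalar convergence via absolute homogeneity is the crux.
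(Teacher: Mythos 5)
Your proposal is correct and follows essentially the same route as the paper: the case split on $u=0_E$ (singleton closed), the identity $\dst(\lambda u,\lambdap u)=|\lambda-\lambdap|\,\nrm{u}$ to transfer Cauchy-ness to the scalar coefficients, completeness of~$\matK$, and the conclusion via Lemma~\ref{l:closed-is-limit-of-sequences}. The explicit nonemptyness check and the appeal to uniqueness of limits are minor points the paper handles implicitly; otherwise the two arguments coincide.
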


\begin{proof}
  Let $\lambda,\lambdap\in\matK$ be scalars.
  From
  Definition~\thref{d:distance-associated-with-norm},
  Definition~\threfc{d:space}{%
    scalar multiplication is distributive wrt field addition},
  Definition~\thref{d:vector-subtraction}, and
  Definition~\threfc{d:norm}{%
    $\nrmdot$ is absolutely homogeneous of degree~1},
  we have
  \begin{equation}
    \label{e:linear-span-is-closed-aux}
    \dst (\lambda u, \lambdap u)
    = \nrm{\lambda u - \lambdap u}
    = \nrm{(\lambda - \lambdap) u}
    = | \lambda - \lambdap | \nrm{u}.
  \end{equation}

  \proofparskip{Case $u=0_E$}
  Direct consequence of
  Definition~\thref{d:linear-span}, and
  Lemma~\threfc{l:singleton-is-closed}{$\Line{u}=\{0_E\}$}.
  
  \proofparskip{Case $u\not=0_E$}
  Then, from
  Definition~\threfc{d:norm}{$\nrmdot$ is definite, contrapositive}, and
  Lemma~\thref{l:norm-is-nonnegative},
  we have $\nrm{u}>0$.
  Let $(\lambda_n u)_{n\in\matN}$ be a sequence in $\Line{u}$.
  Assume that this sequence is convergent.
  Let $\eps>0$.
  
  From
  Lemma~\thref{l:convergent-sequence-is-cauchy},
  \assume{ordered field properties of~$\matR$ (with $\nrm{u}>0$)},
  Definition~\threfc{d:cauchy-sequence}{%
    $(\lambda_n u)_{n\in\matN}$ is a Cauchy sequence with
    $\eps\nrm{u}>0$}, and
  Equation~\eqref{e:linear-span-is-closed-aux},
  there exists $N\in\matN$ such that for all $p,q\in\matN$, $p,q\geq N$
  implies
  \begin{equation*}
    | \lambda_p - \lambda_q |
    = \frac{\dst (\lambda_p u, \lambda_q u)}{\nrm{u}}
    \leq \frac{\eps \nrm{u}}{\nrm{u}}
    = \eps.
  \end{equation*}
  Hence, from
  Definition~\thref{d:cauchy-sequence},
  Definition~\threfc{d:complete-subset}{%
    $(\lambda_n)_{n\in\matN}$ is a Cauchy sequence in~$\matK$ complete},
  Lemma~\threfc{l:limit-is-unique}{$(\matK, |\cdot|)$ is a metric space},
  let $\lambda\in\matK$ be the limit $\lim_{n\rightarrow+\infty}\lambda_n$.
  
  Then, from
  \assume{ordered field properties of~$\matR$ (with $\nrm{u}>0$)},
  Definition~\threfc{d:convergent-sequence}{%
    $(\lambda_n)_{n\in\matN}$ is convergent with limit~$\lambda$, with
    $\frac{\eps}{\nrm{u}}>0$}, and
  Equation~\eqref{e:linear-span-is-closed-aux},
  there exists $N\in\matN$ such that for all $n\in\matN$, $n\geq N$ implies
  \begin{equation*}
    \dst (\lambda_n u, \lambda u)
    = | \lambda_n - \lambda | \nrm{u}
    \leq \frac{\eps}{\nrm{u}} \nrm{u}
    = \eps.
  \end{equation*}
  Hence, from
  Definition~\thref{d:convergent-sequence},
  $(\lambda_n u)_{n\in\matN}$ has limit $\lambda u\in\Line{u}$.
  Therefore, from
  Lemma~\thref{l:closed-is-limit-of-sequences},
  $\Line{u}$ is closed (for distance~$\dst$).
\end{proof}

\begin{definition}[closed unit ball]
  \label{d:closed-unit-ball}
  Let $(E,\nrmdot)$ be a {\normedvectorspace}.
  Let~$\dst$ be the distance associated with norm~$\nrmdot$.
  The {\em closed unit ball of~$E$} is $\cball{\dst}{0_E}{1}$ in the metric
  space $(E,\dst)$.
\end{definition}

\begin{lemma}[equivalent definition of closed unit ball]
  \label{l:equivalent-definition-of-closed-unit-ball}
  Let $(E,\nrmdot)$ be a {\normedvectorspace}.
  Let~$\unitcB$ be the closed unit ball in~$E$.
  Then, $\unitcB=\{u\in E\st \nrm{u}\leq 1\}$.
\end{lemma}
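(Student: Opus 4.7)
The plan is to unfold the two layers of definition and then simplify the resulting norm using absolute homogeneity. By Definition~\ref{d:closed-unit-ball}, the closed unit ball $\unitcB$ equals $\cball{\dst}{0_E}{1}$ where $\dst$ is the distance associated with the norm $\nrmdot$. Applying Definition~\ref{d:closed-ball} (closed ball), this set is $\{u\in E\st \dst(0_E,u)\leq 1\}$. So the task reduces to establishing, for every $u\in E$, the identity $\dst(0_E,u)=\nrm{u}$.

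To get this identity, I would first invoke Definition~\ref{d:distance-associated-with-norm} to rewrite $\dst(0_E,u)=\nrm{0_E-u}$. Next, using Definition~\ref{d:vector-subtraction} and the fact from Definition~\ref{d:space} that $(E,+)$ is an abelian group with identity~$0_E$, one has $0_E-u=0_E+(-u)=-u$. Finally, applying Definition~\ref{d:norm} (absolute homogeneity of degree~1) with $\lambda=-1$, together with the standard fact that $|-1|=1$ (a property of the absolute value over~$\matK$, see Remark~\ref{r:absolute-value-over-k}), yields $\nrm{-u}=|-1|\,\nrm{u}=\nrm{u}$. Chaining these equalities gives $\dst(0_E,u)=\nrm{u}$, and substituting in the expression for $\unitcB$ produces the claimed characterization.

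There is no real obstacle here; the proof is purely a matter of chasing through the definitions in the right order, and the only nontrivial micro-step is the identification $\nrm{-u}=\nrm{u}$ via absolute homogeneity (which was already used implicitly inside the proof of Lemma~\ref{l:norm-is-nonnegative}, so it is routine). The bodies of Lemma~\ref{l:norm-gives-distance} and Lemma~\ref{l:norm-preserves-zero} could alternatively be cited to shortcut parts of the argument, but invoking absolute homogeneity directly is the cleanest route. The proof should therefore be just a short \emph{direct consequence} style proof, listing the four definitions and the ordered field properties needed to conclude equality of the two sets by double inclusion (or equivalently by rewriting the defining predicate).
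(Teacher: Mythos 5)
Your proof is correct and follows essentially the same route as the paper, which simply declares the result a direct consequence of Definitions~\ref{d:closed-unit-ball}, \ref{d:distance-associated-with-norm}, \ref{d:closed-ball} and Lemma~\ref{l:norm-gives-distance}. You merely spell out the micro-step $\dst(0_E,u)=\nrm{0_E-u}=\nrm{-u}=\nrm{u}$ that the paper leaves implicit, which is a reasonable level of additional detail.
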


\begin{proof}
  Direct consequence of
  Definition~\thref{d:closed-unit-ball},
  Definition~\thref{d:distance-associated-with-norm},
  Lemma~\thref{l:norm-gives-distance}, and
  Definition~\thref{d:closed-ball}.
\end{proof}




\begin{definition}[unit sphere]
  \label{d:unit-sphere}
  Let $(E,\nrmdot)$ be a {\normedvectorspace}.
  Let~$\dst$ be the distance associated with norm~$\nrmdot$.
  The {\em unit sphere of~$E$} is $\sphere{\dst}{0_E}{1}$ in the metric space
  $(E,\dst)$.
\end{definition}

\begin{lemma}[equivalent definition of unit sphere]
  \label{l:equivalent-definition-of-unit-sphere}
  Let $(E,\nrmdot)$ be a {\normedvectorspace}.
  Let~$\unitS$ be the unit sphere in~$E$.
  Then, $\unitS=\{u\in E\st \nrm{u}=1\}$.
\end{lemma}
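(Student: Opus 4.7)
The plan is to unfold the definitions in sequence, following the same pattern used in the proof of Lemma~\ref{l:equivalent-definition-of-closed-unit-ball}. By Definition~\ref{d:unit-sphere}, $\unitS = \sphere{\dst}{0_E}{1}$, where $\dst$ is the distance associated with the norm $\nrmdot$. By Lemma~\ref{l:norm-gives-distance}, $(E,\dst)$ is a metric space, so Definition~\ref{d:sphere} applies and gives
\begin{equation*}
  \unitS = \left\{ u \in E \st \dst(0_E, u) = 1 \right\}.
\end{equation*}

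Next, I would rewrite $\dst(0_E, u)$ as $\nrm{u}$ for each $u \in E$. By Definition~\ref{d:distance-associated-with-norm}, $\dst(0_E, u) = \nrm{0_E - u}$. Using Definition~\ref{d:vector-subtraction} and the fact that $(E,+)$ is an abelian group (Definition~\ref{d:space}), we have $0_E - u = -u = (-1)\cdot u$ (via Lemma~\ref{l:minus-times-yields-opposite-vector}). Then absolute homogeneity of degree~1 (Definition~\ref{d:norm}) and the property $|-1|=1$ of the absolute value (see Remark~\ref{r:absolute-value-over-k}) give $\nrm{0_E - u} = \nrm{(-1)\cdot u} = |-1|\,\nrm{u} = \nrm{u}$.

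Substituting this equality back into the characterisation of $\unitS$ yields
\begin{equation*}
  \unitS = \left\{ u \in E \st \nrm{u} = 1 \right\},
\end{equation*}
which is the desired result. There is no real obstacle here: every step is a direct unfolding of a definition or an invocation of an already-established elementary property of the norm and the associated distance. The only minor care needed is to justify the passage from $\nrm{0_E - u}$ to $\nrm{u}$ rather than treating it as obvious, exactly as was done in related lemmas (such as the proof of Lemma~\ref{l:norm-is-nonnegative}, which uses $\nrm{-u} = \nrm{u}$ in the same way).
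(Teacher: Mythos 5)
Your proof is correct and follows the same route as the paper, which simply cites Definition~\ref{d:unit-sphere}, Definition~\ref{d:distance-associated-with-norm}, Lemma~\ref{l:norm-gives-distance}, and Definition~\ref{d:sphere} as a direct consequence. You merely make explicit the small step $\dst(0_E,u)=\nrm{0_E-u}=\nrm{u}$ via absolute homogeneity with $\lambda=-1$, which the paper leaves implicit.
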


\begin{proof}
  Direct consequence of
  Definition~\thref{d:unit-sphere},
  Definition~\thref{d:distance-associated-with-norm},
  Lemma~\thref{l:norm-gives-distance}, and
  Definition~\thref{d:sphere}.
\end{proof}

\begin{lemma}[zero on unit sphere is zero]
  \label{l:zero-on-unit-sphere-is-zero}
  Let $(E,\nrmdot)$ be a {\normedvectorspace}.
  Let~$\unitS$ be the unit sphere in~$E$.
  Let~$F$ be a {\vectorspace}.
  Let $f\in\LEF$ be a linear map from~$E$ to~$F$.
  Then, $f=0_\LEF$ iff $f$~is zero on~$\unitS$.
\end{lemma}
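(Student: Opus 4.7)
The plan is to prove each direction of the equivalence separately, with the forward direction being essentially trivial and the reverse direction relying on the fact that every nonzero vector can be normalized to sit on the unit sphere.

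For the forward direction, assume $f = 0_\LEF$. From Lemma~\ref{l:space-of-linear-maps} and Definition~\ref{d:inherited-vector-operations}, the zero element of $\LEF$ is the mapping sending every $u \in E$ to $0_F$; in particular, its restriction to $\unitS$ is zero. So this direction just unfolds the definition of $0_\LEF$.

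For the reverse direction, assume $f$ is zero on $\unitS$ and let $u \in E$ be arbitrary. I will split into two cases. If $u = 0_E$, then $f(u) = 0_F$ directly by Lemma~\ref{l:linear-map-preserves-zero}. If $u \neq 0_E$, then by Definition~\threfc{d:norm}{$\nrmdot$ separates points, contrapositive} together with Lemma~\ref{l:norm-is-nonnegative}, we have $\nrm{u} > 0$. Then Lemma~\threfc{l:normalization-by-nonzero}{with $\lambda = 1$} gives $\nrm{\frac{u}{\nrm{u}}} = 1$, so by Lemma~\ref{l:equivalent-definition-of-unit-sphere} the vector $\frac{u}{\nrm{u}}$ belongs to~$\unitS$. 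By hypothesis, $f(\frac{u}{\nrm{u}}) = 0_F$. Using Definition~\ref{d:scalar-division} we write $u = \nrm{u} \cdot \frac{u}{\nrm{u}}$, and by Definition~\threfc{d:linear-map}{$f$ is homogeneous of degree~1} together with Lemma~\ref{l:times-zero-yields-zero}, we obtain
\begin{equation*}
  f(u) = f\!\left(\nrm{u} \cdot \tfrac{u}{\nrm{u}}\right) = \nrm{u} \cdot f\!\left(\tfrac{u}{\nrm{u}}\right) = \nrm{u} \cdot 0_F = 0_F.
\end{equation*}
Hence $f(u) = 0_F$ for every $u \in E$, which by Definition~\ref{d:inherited-vector-operations} means $f = 0_\LEF$.

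There is no real obstacle here; the only subtlety is being careful to treat the case $u = 0_E$ separately, since dividing by $\nrm{u}$ requires $u \neq 0_E$. Everything else is a direct application of homogeneity of $f$ and the normalization lemma.
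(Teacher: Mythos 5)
Your proof is correct and follows essentially the same route as the paper's: the trivial direction by unfolding $0_\LEF$, and the converse by the case split on $u=0_E$ versus $u\not=0_E$, normalizing the nonzero vector onto $\unitS$ and applying homogeneity of degree~1. No gaps.
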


\begin{proof}
  \proofpar{``If''}
  Assume that $f$~is zero on the unit sphere.
  Let $u\in E$ be a vector.
  
  \proofparskip{Case $u=0_E$}
  Then, from
  Lemma~\thref{l:linear-map-preserves-zero},
  $f(u)=f(0_E)=0_F$.
  
  \proofparskip{Case $u\not=0_E$}
  Then, from
  Lemma~\threfc{l:normalization-by-nonzero}{with $\lambda=1$}, and
  Lemma~\thref{l:equivalent-definition-of-unit-sphere},
  $\xi=\frac{u}{\nrm{u}}$ belongs to~$\unitS$.
  Thus, from
  Definition~\threfc{d:space}{%
    scalar multiplication is compatible with field multiplication}, and
  \assume{field properties of~$\matR$},
  we have $u=\nrm{u}\,\xi$.
  Hence, from
  Definition~\threfc{d:linear-map}{homogeneity of degree~1},
  hypothesis, and
  Lemma~\thref{l:times-zero-yields-zero},
  we have
  \begin{equation*}
    f (u)
    = f (\nrm{u} \, \xi)
    = \nrm{u} \, f (\xi)
    = \nrm{u} \, 0_F
    = 0_F.
  \end{equation*}
  
  Therefore, in both cases, $f=0_\LEF$.
  
  \proofparskip{``Only if''}
  Conversely, assume now that $f=0_\LEF$.
  Then, from
  Lemma~\threfc{l:equivalent-definition-of-unit-sphere}{%
    $\unitS$~is a subset of~$E$},
  $f$~is also zero on the unit sphere.
\end{proof}

\begin{lemma}[reverse triangle inequality]
  \label{l:reverse-triangle-inequality}
  Let $(E,\nrmdot)$ be a {\normedvectorspace}.
  Then,
  \begin{equation}
    \label{e:reverse-triangle-inequality}
    \forall u, v \in E,\quad
    \left| \nrm{u} - \nrm{v} \right| \leq \nrm{u - v}.
  \end{equation}
\end{lemma}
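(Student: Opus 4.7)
The plan is to apply the triangle inequality twice, in both directions, and then combine the resulting two inequalities via the definition of the absolute value on~$\matR$. Fix arbitrary $u,v\in E$.

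First, I would write $u = (u-v) + v$ using Definition~\ref{d:vector-subtraction} and the fact that $(E,+)$ is an abelian group from Definition~\ref{d:space}. Applying Definition~\ref{d:norm} (triangle inequality) then yields $\nrm{u} \leq \nrm{u-v} + \nrm{v}$, hence $\nrm{u} - \nrm{v} \leq \nrm{u-v}$ by ordered field properties of~$\matR$.

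Next, by the symmetric argument starting from $v = (v-u) + u$, I obtain $\nrm{v} - \nrm{u} \leq \nrm{v-u}$. To match the right-hand side of the target inequality, I use that $v-u = (-1)(u-v)$ (from Lemma~\ref{l:minus-times-yields-opposite-vector} and Definition~\ref{d:vector-subtraction}) together with Definition~\ref{d:norm} (absolute homogeneity of degree~1) to conclude $\nrm{v-u} = |-1|\,\nrm{u-v} = \nrm{u-v}$. Therefore $-(\nrm{u}-\nrm{v}) \leq \nrm{u-v}$.

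Finally, combining $\nrm{u}-\nrm{v} \leq \nrm{u-v}$ and $-(\nrm{u}-\nrm{v}) \leq \nrm{u-v}$ with the standard characterization of the absolute value on~$\matR$ (namely $|x|\leq y \equiv -y\leq x \leq y$ for $y\geq 0$, applied using Lemma~\ref{l:norm-is-nonnegative} so that $\nrm{u-v}\geq 0$) gives $|\nrm{u}-\nrm{v}|\leq \nrm{u-v}$. There is no real obstacle here; the only mild subtlety is to make sure the algebraic manipulation $\nrm{v-u}=\nrm{u-v}$ is justified from the absolute homogeneity axiom rather than invoked tacitly as ``symmetry of the induced distance'', since we are working directly at the norm level in this lemma.
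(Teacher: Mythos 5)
Your proof is correct and follows essentially the same route as the paper's: apply the triangle inequality to $u=(u-v)+v$ and symmetrically, use absolute homogeneity with $\lambda=-1$ to identify $\nrm{v-u}$ with $\nrm{u-v}$, and conclude via the characterization of the absolute value on~$\matR$. The extra care you take in justifying $\nrm{v-u}=\nrm{u-v}$ from the homogeneity axiom matches exactly what the paper does.
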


\begin{proof}
  Let $u,v\in E$ be vectors.
  Then, from
  Definition~\threfc{d:normed-space}{$\nrmdot$~is a norm}, and
  Definition~\threfc{d:norm}{$\nrmdot$~satisfies triangle inequality},
  we have $\nrm{u}\leq \nrm{u-v}+\nrm{v}$.
  Hence, from
  \assume{ordered field properties of~$\matR$},
  we have $\nrm{u}-\nrm{v}\leq \nrm{u-v}$.
  Thus, from
  Definition~\threfc{d:norm}{%
    $\nrmdot$~is absolutely homogeneous of degree~1 with $\lambda=-1$},
  we have
  \begin{equation*}
    \nrm{v} - \nrm{u}
    \leq \nrm{v - u}
    = \nrm{u - v}.
  \end{equation*}
  Therefore, from
  \assume{properties of the absolute value in~$\matR$},
  we have $\left|\nrm{u}-\nrm{v}\right|\leq \nrm{u-v}$.
\end{proof}

\begin{lemma}[norm is one-Lipschitz continuous]
  \label{l:norm-is-one-lipschitz-continuous}
  Let $(E,\nrmdot)$ be a {\normedvectorspace}.
  Let~$\dst$ be the distance associated with norm~$\nrmdot$.
  Then, $\nrmdot$~is 1-Lipschitz continuous from $(E,\dst)$ to
  $(\matR,|\cdot|)$.
\end{lemma}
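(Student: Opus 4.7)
The plan is to unfold Definition~\ref{d:lipschitz-continuity} with $k = 1$, source space $(E, \dst)$, and target space $(\matR, |\cdot|)$, and then match the resulting inequality against the reverse triangle inequality (Lemma~\ref{l:reverse-triangle-inequality}) just established. So I would first check that $(\matR, |\cdot|)$ qualifies as a metric space in the sense of Definition~\ref{d:metric-space}: by Lemma~\ref{l:k-is-normed-space} applied with $\matK = \matR$, the absolute value is a norm on~$\matR$, and by Lemma~\ref{l:norm-gives-distance} the associated distance $(x, y) \mapsto |x - y|$ turns $\matR$ into a metric space.

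Next I would fix arbitrary vectors $u, v \in E$ and aim to show $|\nrm{u} - \nrm{v}| \leq 1 \cdot \dst(u, v)$. Using Definition~\ref{d:distance-associated-with-norm} to rewrite $\dst(u, v) = \nrm{u - v}$, the goal reduces to
\[
|\nrm{u} - \nrm{v}| \leq \nrm{u - v},
\]
which is exactly the content of Lemma~\ref{l:reverse-triangle-inequality}. Concluding via Definition~\ref{d:lipschitz-continuity} with Lipschitz constant~$1$ then finishes the argument.

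There is essentially no obstacle: 1-Lipschitz continuity of the norm is merely a reformulation of the reverse triangle inequality in the vocabulary of Lipschitz maps between metric spaces. The only bookkeeping to watch is the careful identification of the two distances in play, namely $\dst$ on the source~$E$ arising from $\nrmdot$ via Definition~\ref{d:distance-associated-with-norm}, and the distance on the target~$\matR$ arising from the absolute value via Lemma~\ref{l:norm-gives-distance}, so that Definition~\ref{d:lipschitz-continuity} is invoked with the correct ambient metrics.
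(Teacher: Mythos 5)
Your proposal is correct and matches the paper's proof exactly: the paper also derives the result directly from Lemma~\ref{l:reverse-triangle-inequality}, Definition~\ref{d:distance-associated-with-norm}, and Definition~\ref{d:lipschitz-continuity} with $k=1$. The extra bookkeeping you do to verify that $(\matR,|\cdot|)$ is a metric space is harmless and consistent with the paper's conventions.
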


\begin{proof}
  Direct consequence of
  Lemma~\thref{l:reverse-triangle-inequality},
  Definition~\thref{d:distance-associated-with-norm}, and
  Definition~\threfc{d:lipschitz-continuity}{with $k=1$}.
\end{proof}

\begin{lemma}[norm is uniformly continuous]
  \label{l:norm-is-uniformly-continuous}
  Let $(E,\nrmdot)$ be a {\normedvectorspace}.
  Let~$\dst$ be the distance associated with norm~$\nrmdot$.
  Then, $\nrmdot$~is uniformly continuous from $(E,\dst)$ to
  $(\matR,|\cdot|)$.
\end{lemma}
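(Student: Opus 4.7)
The statement follows by chaining two already established results, so I expect essentially no obstacle. The plan is to observe that uniform continuity of $\nrmdot$ is a direct corollary of its Lipschitz continuity, which has itself just been proved.

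More precisely, I would first invoke Lemma~\ref{l:norm-is-one-lipschitz-continuous} to obtain that $\nrmdot$ is $1$-Lipschitz continuous as a mapping from $(E,\dst)$ to $(\matR,|\cdot|)$. In particular, it is Lipschitz continuous (with Lipschitz constant $k=1 \geq 0$), so the hypothesis of the next step is satisfied without any additional verification.

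Then I would apply Lemma~\ref{l:lipschitz-continuous-is-uniform-continuous} to this Lipschitz continuous mapping to conclude that $\nrmdot$ is uniformly continuous from $(E,\dst)$ to $(\matR,|\cdot|)$, which is exactly the claim. Note that $(E,\dst)$ is indeed a metric space by Lemma~\ref{l:norm-gives-distance}, and $(\matR,|\cdot|)$ is a metric space in the standard way (since $|\cdot|$ is the distance derived from the absolute value norm on $\matR$, cf.\ Lemma~\ref{l:k-is-normed-space} and Lemma~\ref{l:norm-gives-distance}), so both ambient structures required by the invoked lemmas are in place.

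The only thing worth a moment's care is the typographical use of \assume{ordered field properties of~$\matR$} inside the cited chain: since everything has been factored through the two prior lemmas, the proof here reduces to a one-line citation and no direct estimate on $\eps$ and $\delta$ needs to be redone.
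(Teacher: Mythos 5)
Your proof is correct and matches the paper's argument exactly: the paper also derives the result as a direct consequence of Lemma~\ref{l:norm-is-one-lipschitz-continuous} followed by Lemma~\ref{l:lipschitz-continuous-is-uniform-continuous}. Your extra remarks on the ambient metric-space structures are harmless bookkeeping that the paper leaves implicit.
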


\begin{proof}
  Direct consequence of
  Lemma~\thref{l:norm-is-one-lipschitz-continuous}, and
  Definition~\thref{l:lipschitz-continuous-is-uniform-continuous}.
\end{proof}

\begin{lemma}[norm is continuous]
  \label{l:norm-is-continuous}
  Let $(E,\nrmdot)$ be a {\normedvectorspace}.
  Let~$\dst$ be the distance associated with norm~$\nrmdot$.
  Then, $\nrmdot$~is continuous from $(E,\dst)$ to $(\matR,|\cdot|)$.
\end{lemma}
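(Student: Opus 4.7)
The plan is to simply chain the two immediately preceding lemmas. Lemma~\thref{l:norm-is-uniformly-continuous} has just established that $\nrmdot$ is uniformly continuous from $(E,\dst)$ to $(\matR,|\cdot|)$, and Lemma~\thref{l:uniform-continuous-is-continuous} asserts that every uniformly continuous mapping between metric spaces is pointwise continuous in the sense of Definition~\thref{d:pointwise-continuity}. Composing these two facts yields the conclusion directly, with no new $\eps$--$\delta$ argument required.

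No further construction is needed because the substantive work was done earlier: the reverse triangle inequality (Lemma~\thref{l:reverse-triangle-inequality}) gave $1$-Lipschitz continuity of the norm (Lemma~\thref{l:norm-is-one-lipschitz-continuous}), which was then promoted to uniform continuity by the general implication Lemma~\thref{l:lipschitz-continuous-is-uniform-continuous}. The present lemma sits at the weakest level of the implication chain ``Lipschitz, then uniform, then pointwise'', and its role is chiefly to provide a convenient entry point for later arguments that need only pointwise continuity of the norm, for instance when one wishes to pass a limit inside $\nrmdot$ via Lemma~\thref{l:compatibility-of-limit-with-continuous-functions}.

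There is essentially no mathematical obstacle. The only formal point to verify is that the source metric space $(E,\dst)$ and the target metric space $(\matR,|\cdot|)$ appearing in the two cited lemmas match those in the statement, which is immediate by inspection (both are stated exactly for a {\normedvectorspace} $(E,\nrmdot)$ with its associated distance and for $\matR$ with its absolute value). Consequently I expect the written proof to reduce to a one-line invocation, in the same style as the preceding proofs of Lemma~\thref{l:norm-is-one-lipschitz-continuous} and Lemma~\thref{l:norm-is-uniformly-continuous}.
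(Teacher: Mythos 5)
Your proposal matches the paper's proof exactly: the lemma is obtained as a direct consequence of Lemma~\ref{l:norm-is-uniformly-continuous} and Lemma~\ref{l:uniform-continuous-is-continuous}, with no further argument needed. Your surrounding remarks about the Lipschitz--uniform--pointwise chain are accurate and consistent with the paper's structure.
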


\begin{proof}
  Direct consequence of
  Lemma~\thref{l:norm-is-uniformly-continuous}, and
  Lemma~\thref{l:uniform-continuous-is-continuous}.
\end{proof}

\begin{definition}[linear isometry]
  \label{d:linear-isometry}
  Let $(E,\nEdot)$ and $(F,\nFdot)$ be {\normedvectorspace}s.
  Let $f\in\LEF$ be a linear map from~$E$ to~$F$.
  $f$~is a {\em linear isometry from~$E$ to~$F$} iff
  it preserves the norm:
  \begin{equation}
    \label{e:linear-isometry}
    \forall u \in E,\quad
    \nF{f (u)} = \nE{u}.
  \end{equation}
\end{definition}

\begin{lemma}[identity map is linear isometry]
  \label{l:identity-map-is-linear-isometry}
  Let $(E,\nrmdot)$ be a {\normedvectorspace}.
  Then, the identity map~$\idE$ is a linear isometry.
\end{lemma}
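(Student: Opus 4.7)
The plan is to unfold the definition of linear isometry (Definition~\ref{d:linear-isometry}), which requires two ingredients: that $\idE$ is a linear map in $\LEF$ with $F=E$, and that it preserves the norm pointwise. Both ingredients are essentially immediate from earlier material in the paper, so this will be a very short verification rather than a substantive argument.

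First, I would invoke Lemma~\ref{l:identity-map-is-linear-map} to conclude that $\idE \in \LSp{E}{E}$, which handles the linearity requirement. This requires $(E,\nrmdot)$ being a {\normedvectorspace} to imply $E$~is a {\vectorspace}, which follows from Definition~\ref{d:normed-space}.

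Next, for the norm-preservation condition, I would take an arbitrary $u \in E$ and compute $\nrm{\idE(u)} = \nrm{u}$ directly from Definition~\ref{d:identity-map}. Thus the condition~\eqref{e:linear-isometry} of Definition~\ref{d:linear-isometry} is trivially satisfied (with $F = E$ and $\nFdot = \nEdot = \nrmdot$).

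I don't anticipate any obstacle here: the result is just a concatenation of the identity map being linear and the identity map leaving every vector (hence its norm) unchanged. The concluding sentence would simply invoke Definition~\ref{d:linear-isometry} to assemble both facts into the statement that $\idE$ is a linear isometry from~$E$ to~$E$.
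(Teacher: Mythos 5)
Your proof is correct and follows exactly the same route as the paper, which cites Lemma~\ref{l:identity-map-is-linear-map}, Definition~\ref{d:identity-map}, and Definition~\ref{d:linear-isometry} as a direct consequence. Your version merely spells out the trivial computation $\nrm{\idE(u)}=\nrm{u}$ that the paper leaves implicit.
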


\begin{proof}
  Direct consequence of
  Lemma~\thref{l:identity-map-is-linear-map},
  Definition~\thref{d:identity-map}, and
  Definition~\thref{d:linear-isometry}.
\end{proof}

\begin{definition}[product norm]
  \label{d:product-norm}
  Let $(E,\nEdot)$ and $(F,\nFdot)$ be {\normedvectorspace}s.
  The {\em product norm induced over $\ExF$} is the mapping
  $\nExFdotdot:\ExF\rightarrow\matR$ defined by
  \begin{equation}
    \label{e:product-norm}
    \forall (u, v) \in \ExF,\quad
    \nExF{u}{v} = \nE{u} + \nF{v}.
  \end{equation}
\end{definition}

\begin{remark}
  The mapping~$\nExFdotdot$ will be proved below to be a norm; hence its
  name and notation.
\end{remark}

\begin{remark}
  The norm~$\nExFdotdot$ is the $L^1$-like norm over the product $\ExF$.
  $L^p$-like norms for $p\geq 1$ and~$p=+\infty$ are also possible;
  they are all equivalent norms.
\end{remark}

\begin{lemma}[product is {\normedvectorspace}]
  \label{l:product-is-normed-space}
  Let $(E,\nEdot)$ and $(F,\nFdot)$ be {\normedvectorspace}s.
  Let~$\nExFdotdot$ be the product norm induced over $\ExF$.
  Then, $(\ExF,\nExFdotdot)$ is a {\normedvectorspace}.
\end{lemma}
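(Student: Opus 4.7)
The plan is to apply the relevant definitions directly, reducing everything to the two component norms. First I would invoke Lemma~\ref{l:product-is-space} to get that $(\ExF, +_\ExF, \cdot_\ExF)$ is a {\vectorspace}, so by Definition~\ref{d:normed-space} it only remains to check that $\nExFdotdot$ satisfies the three axioms of Definition~\ref{d:norm}: point separation, absolute homogeneity of degree~1, and the triangle inequality.

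For point separation, I would assume $\nExF{u}{v} = \nE{u} + \nF{v} = 0$ for some $(u,v) \in \ExF$. Since both $\nE{u}$ and $\nF{v}$ are nonnegative by Lemma~\ref{l:norm-is-nonnegative}, ordered field properties of~$\matR$ force $\nE{u} = 0$ and $\nF{v} = 0$; then Definition~\ref{d:norm} applied to $\nEdot$ and $\nFdot$ gives $u = 0_E$ and $v = 0_F$, so $(u,v) = 0_\ExF$ (the zero of the product space, as identified in the proof of Lemma~\ref{l:product-is-space}).

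For absolute homogeneity, I would pick $\lambda \in \matK$ and $(u,v) \in \ExF$, unfold $\lambda \cdot_\ExF (u,v) = (\lambda \cdot_E u, \lambda \cdot_F v)$ via Definition~\ref{d:product-vector-operations}, then compute $\nExF{\lambda u}{\lambda v} = \nE{\lambda u} + \nF{\lambda v} = |\lambda|\nE{u} + |\lambda|\nF{v} = |\lambda| \, \nExF{u}{v}$, using Definition~\ref{d:norm} (absolute homogeneity) for $\nEdot$ and $\nFdot$ together with field properties of~$\matR$. For the triangle inequality, given $(u,v), (\up,\vp) \in \ExF$, I would again unfold the sum via Definition~\ref{d:product-vector-operations} and chain the triangle inequalities for $\nEdot$ and $\nFdot$ to obtain $\nE{u + \up} + \nF{v + \vp} \leq \nE{u} + \nE{\up} + \nF{v} + \nF{\vp} = \nExF{u}{v} + \nExF{\up}{\vp}$.

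There is no real obstacle here; the result is essentially bookkeeping across a product. The only subtlety worth highlighting in writing is being careful to distinguish the three zero vectors $0_E$, $0_F$, and $0_\ExF = (0_E, 0_F)$, and to cite Definition~\ref{d:product-vector-operations} whenever the product vector operations are unfolded, so that each axiom for $\nExFdotdot$ is visibly reduced to the corresponding axiom for $\nEdot$ and $\nFdot$. Once the three properties are established, a final appeal to Definition~\ref{d:normed-space} concludes.
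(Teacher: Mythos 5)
Your proposal is correct and follows essentially the same route as the paper's proof: reduce to Lemma~\ref{l:product-is-space} for the vector-space structure, then verify definiteness via nonnegativity of the component norms, and check absolute homogeneity and the triangle inequality by unfolding the product operations componentwise. No gaps; the care you note about distinguishing $0_E$, $0_F$, and $0_\ExF=(0_E,0_F)$ is exactly the point the paper also makes explicit.
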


\begin{proof}
  From
  Lemma~\thref{l:product-is-space},
  $\ExF$, equipped with product vector operations of
  Definition~\thref{d:product-vector-operations},
  is a {\vectorspace}.
  
  Let $(u,v)\in\ExF$ be vectors.
  Assume that $\nExF{u}{v}=0$.
  Then, from
  Definition~\thref{d:product-norm},
  we have $\nE{u}+\nF{v}=0$.
  And, from
  Lemma~\threfc{l:norm-is-nonnegative}{for~$\nEdot$ and~$\nFdot$}, and
  \assume{ordered field properties of~$\matR$},
  we have $\nE{u}=\nF{v}=0$.
  Thus, from
  Definition~\threfc{d:norm}{$\nEdot$ and $\nFdot$ are definite}, and
  Lemma~\threfc{l:product-is-space}{$0_\ExF=(0_E,0_F)$},
  we have $(u,v)=0_\ExF$.
  Hence, $\nExFdotdot$~is definite.
  
  Let $\lambda\in\matK$ be a scalar.
  Let $(u,v)\in\ExF$ be vectors.
  Then, from
  Definition~\threfc{d:product-vector-operations}{scalar multiplication},
  Definition~\thref{d:product-norm},
  Definition~\threfc{d:norm}{%
    $\nEdot$ and~$\nFdot$ are absolutely homogeneous of degree~1}, and
  \assume{field properties of~$\matR$},
  we have
  \begin{multline*}
    \norm{\ExF}{\lambda (u, v)}
    = \nExF{\lambda u}{\lambda v}
    = \nE{\lambda u} + \nF{\lambda v} \\
    = | \lambda | \, \nE{u} + | \lambda | \, \nF{v}
    = | \lambda | (\nE{u} + \nF{v})
    = | \lambda | \, \nExF{u}{v}.
  \end{multline*}
  Hence, $\nExFdotdot$~is absolutely homogeneous of degree~1.
  
  Let $(u,v),(\up,\vp)\in\ExF$ be vectors.
  Then, from
  Definition~\threfc{d:product-vector-operations}{vector addition},
  Definition~\thref{d:product-norm},
  Definition~\threfc{d:norm}{%
    $\nEdot$ and~$\nFdot$ satisfy triangle inequality}, and
  \assume{field properties of~$\matR$},
  we have
  \begin{multline*}
    \norm{\ExF}{(u, v) + (\up, \vp)}
    = \nExF{u + \up}{v + \vp}
    = \nE{u + \up} + \nF{v + \vp} \\
    \leq (\nE{u} + \nE{\up}) + (\nF{v} + \nF{\vp})
    = (\nE{u} + \nF{v}) + (\nE{\up} + \nF{\vp}) \\
    = \nExF{u}{v} + \nExF{\up}{\vp}.
  \end{multline*}
  Hence, $\nExFdotdot$~satisfies triangle inequality.
  
  Therefore, from
  Definition~\thref{d:norm},
  $\nExFdotdot$~is a norm over $\ExF$,
  hence, from
  Definition~\thref{d:normed-space},
  $(\ExF,\nExFdotdot)$ is a {\normedvectorspace}.
\end{proof}

\begin{lemma}[vector addition is continuous]
  \label{l:vector-addition-is-continuous}
  Let $(E,\nEdot)$ be a {\normedvectorspace}.
  From
  Lemma~\thref{l:product-is-normed-space},
  let~$\nExEdotdot$ be the product norm induced over~$\ExE$.
  Let~$\dE$ and~$\dExE$ be the distances associated with norms~$\nEdot$
  and~$\nExEdotdot$.
  Then, the vector addition is continuous from $(\ExE,\dExE)$ to $(E,\dE)$.
\end{lemma}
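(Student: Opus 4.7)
The plan is to establish that vector addition is in fact 1-Lipschitz continuous from $(\ExE,\dExE)$ to $(E,\dE)$; continuity then follows immediately by chaining Lemma~\ref{l:lipschitz-continuous-is-uniform-continuous} with Lemma~\ref{l:uniform-continuous-is-continuous}. This is the natural route because the product norm of Definition~\ref{d:product-norm} is the $L^1$-like one, which is exactly tailored to make the triangle inequality yield a Lipschitz bound with constant~$1$.

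The core step is a short chain of equalities and one triangle inequality. Given arbitrary $(u,v),(\up,\vp)\in\ExE$, I would unfold $\dE(u+v,\up+\vp)$ using Definition~\ref{d:distance-associated-with-norm}, rearrange inside the norm using the abelian group structure from Definition~\ref{d:space} and vector subtraction from Definition~\ref{d:vector-subtraction}, invoke the triangle inequality from Definition~\ref{d:norm} for $\nEdot$, and finally read off the resulting sum of norms as the product norm of the difference, via Definition~\ref{d:product-vector-operations} and Definition~\ref{d:product-norm}. Concretely,
\begin{equation*}
\dE(u+v,\up+\vp)
= \nE{(u-\up)+(v-\vp)}
\leq \nE{u-\up}+\nE{v-\vp}
= \nExE{u-\up}{v-\vp}
= \dExE((u,v),(\up,\vp)).
\end{equation*}
This matches Definition~\ref{d:lipschitz-continuity} with $k=1$, so by Lemma~\ref{l:lipschitz-continuous-is-uniform-continuous} the addition is uniformly continuous, and by Lemma~\ref{l:uniform-continuous-is-continuous} it is continuous.

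There is no genuine obstacle: the product norm has been set up precisely so that the triangle inequality lands directly on the target. The only minor bookkeeping is to make sure that, in the passage $(u,v)-(\up,\vp)=(u-\up,v-\vp)$, I am really applying the product vector operations of Definition~\ref{d:product-vector-operations} together with Lemma~\ref{l:minus-times-yields-opposite-vector} (to turn $-(\up,\vp)$ into $(-\up,-\vp)$), rather than tacitly assuming a componentwise difference. Once that check is made, the four lines above conclude the proof.
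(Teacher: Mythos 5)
Your proof is correct and follows essentially the same route as the paper: the identical key chain $\dE(u+v,\up+\vp)=\nE{(u-\up)+(v-\vp)}\leq\nE{u-\up}+\nE{v-\vp}=\dExE((u,v),(\up,\vp))$ is the heart of both arguments. The only difference is cosmetic: the paper concludes by directly verifying the $\eps$--$\delta$ definition of continuity in a point with $\delta=\eps$, whereas you package the inequality as $1$-Lipschitz continuity and chain Lemmas~\ref{l:lipschitz-continuous-is-uniform-continuous} and~\ref{l:uniform-continuous-is-continuous}, which is equally valid.
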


\begin{proof}
  Let $u,v,\up,\vp\in E$.
  From
  Definition~\thref{d:distance-associated-with-norm},
  Definition~\threfc{d:space}{$(E,+)$ is an abelian group},
  Definition~\threfc{d:norm}{$\nEdot$ satisfies triangle inequality},
  Definition~\thref{d:product-norm}, and
  Definition~\thref{d:product-vector-operations},
  we have
  \begin{multline*}
    \dE (u + v, \up + \vp)
    = \nE{(u + v) - (\up + \vp)}
    = \nE{u - \up + v - \vp} \\
    \leq \nE{u - \up} + \nE{v - \vp}
    = \nExE{u - \up}{v - \vp}
    = \norm{\ExE}{(u, v) - (\up, \vp)} \\
    = \dExE ((u, v), (\up, \vp)).
  \end{multline*}
  
  Let $u,v\in E$ and $\eps>0$.
  Set $\delta=\eps$, then for all $\up,\vp\in E$, we have
  \begin{equation*}
    d_\ExE ((u, v), (\up, \vp)) \leq \delta
    \Implies
    \dE (u + v, \up + \vp) \leq \delta = \eps.
  \end{equation*}
  Therefore, from
  Definition~\thref{d:continuity-in-a-point}, and
  Definition~\thref{d:pointwise-continuity},
  the vector addition is (pointwise) continuous from $(\ExE,\dExE)$ to
  $(E,\dE)$.
\end{proof}

\begin{lemma}[scalar multiplication is continuous]
  \label{l:scalar-multiplication-is-continuous}
  Let $(E,\nrmdot)$ be a {\normedvectorspace}.
  Let~$\dst$ be the distance associated with norm~$\nrmdot$.
  Let $\lambda\in\matK$ be a scalar.
  Then, the scalar multiplication by~$\lambda$ is continuous from $(E,\dst)$
  to itself.
\end{lemma}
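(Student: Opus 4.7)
The plan is to exhibit the scalar multiplication by $\lambda$ as a $|\lambda|$-Lipschitz continuous map, and then invoke the already-established chain Lipschitz~$\Rightarrow$~uniformly continuous~$\Rightarrow$~continuous.

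First, I would compute, for any $u,\up\in E$, the distance between the two images. Using Definition~\ref{d:distance-associated-with-norm}, then rewriting $\lambda u - \lambda\up = \lambda(u-\up)$ via Definition~\ref{d:vector-subtraction}, Lemma~\ref{l:minus-times-yields-opposite-vector}, and the distributivity of scalar multiplication over vector addition from Definition~\ref{d:space}, and finally applying Definition~\ref{d:norm} (absolute homogeneity of degree~1) followed again by Definition~\ref{d:distance-associated-with-norm}, I would obtain
\begin{equation*}
\dst(\lambda u,\lambda\up)
= \nrm{\lambda u - \lambda\up}
= \nrm{\lambda(u-\up)}
= |\lambda|\,\nrm{u-\up}
= |\lambda|\,\dst(u,\up).
\end{equation*}

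Since $|\lambda|\geq 0$ (by the general properties of the absolute value over~$\matK$ recalled in Remark~\ref{r:absolute-value-over-k}), this is exactly the statement of Definition~\ref{d:lipschitz-continuity} with Lipschitz constant $k=|\lambda|$. Hence the scalar multiplication by~$\lambda$, viewed as a map from $(E,\dst)$ to $(E,\dst)$, is $|\lambda|$-Lipschitz continuous. Then Lemma~\ref{l:lipschitz-continuous-is-uniform-continuous} yields that it is uniformly continuous, and Lemma~\ref{l:uniform-continuous-is-continuous} yields pointwise continuity, which is the desired conclusion.

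There is essentially no obstacle: the only point requiring a tiny bit of care is the chain of rewritings turning $\lambda u-\lambda\up$ into $\lambda(u-\up)$, which must be justified from the primitive axioms of a {\vectorspace} rather than taken for granted. Note also that, in contrast with the proof of Lemma~\ref{l:lipschitz-continuous-is-uniform-continuous}, no case split on $\lambda=0$ versus $\lambda\neq 0$ is needed here, since Definition~\ref{d:lipschitz-continuity} already allows the Lipschitz constant to vanish (and indeed Lemma~\ref{l:zero-lipschitz-continuous-is-constant} confirms that the $\lambda=0$ case corresponds to the constant map $u\mapsto 0_E$, which is trivially continuous).
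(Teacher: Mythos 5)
Your proof is correct, and it reaches the conclusion by a slightly different route than the paper. The central computation is the same: both you and the paper derive $\dst(\lambda u,\lambda\up)=\nrm{\lambda(u-\up)}=|\lambda|\,\dst(u,\up)$ from distributivity and absolute homogeneity. Where you diverge is in how this identity is converted into continuity. The paper verifies Definition~\ref{d:continuity-in-a-point} directly, choosing $\delta=\eps/|\lambda|$, which forces a case split on $\lambda=0_\matK$ versus $\lambda\not=0_\matK$ (the zero case being handled separately via Lemma~\ref{l:zero-times-yields-zero} and the fact that $\dst$ separates points). You instead read the identity as $|\lambda|$-Lipschitz continuity and invoke the chain Lemma~\ref{l:lipschitz-continuous-is-uniform-continuous} followed by Lemma~\ref{l:uniform-continuous-is-continuous}; since Definition~\ref{d:lipschitz-continuity} admits $k=0$ and the case analysis on a vanishing Lipschitz constant is already buried inside Lemma~\ref{l:lipschitz-continuous-is-uniform-continuous}, no case split is needed on your side. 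This is exactly the strategy the paper itself uses a few statements earlier for Lemmas~\ref{l:norm-is-one-lipschitz-continuous}--\ref{l:norm-is-continuous}, so it is entirely in the spirit of the development; it buys a shorter, split-free proof and a slightly stronger intermediate fact (Lipschitz, hence uniform, continuity), at the cost of adding two lemmas to the dependency list of this statement. Your rewriting of $\lambda u-\lambda\up$ into $\lambda(u-\up)$ via Definition~\ref{d:vector-subtraction}, Lemma~\ref{l:minus-times-yields-opposite-vector} and Definition~\ref{d:space} is adequately justified (the paper cites essentially the same ingredients), so there is no gap.
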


\begin{proof}
  \proofparskip{Case $\lambda=0_\matK$}
  Let $u\in E$ and $\eps>0$.
  Set $\delta=1$.
  Then, for all $\up\in E$, from
  Lemma~\thref{l:zero-times-yields-zero}, and
  Definition~\threfc{d:distance}{$\dst$ separates points},
  we have
  \begin{equation*}
    \dst (u, \up) \leq \delta = 1
    \Implies
    \dst (\lambda u, \lambda \up) = \dst (0_E, 0_E) = 0 \leq \eps.
  \end{equation*}
  Therefore, from
  Definition~\thref{d:continuity-in-a-point}, and
  Definition~\thref{d:pointwise-continuity},
  the scalar multiplication by~$0_E$ is (pointwise) continuous from
  $(E,\dst)$ to itself.
  
  \proofparskip{Case $\lambda\not=0_\matK$}
  Let $u,\up\in E$.
  From
  Definition~\thref{d:distance-associated-with-norm},
  Definition~\threfc{d:space}{%
    scalar multiplication is distributive wrt vector addition}, and
  Definition~\threfc{d:norm}{%
    $\nrmdot$ is absolutely homogeneous of degree~1},
  we have
  \begin{equation*}
    \dst (\lambda u, \lambda \up)
    = \nrm{\lambda u - \lambda \up}
    = | \lambda | \nrm{u - \up}
    = | \lambda | \dst (u, \up).
  \end{equation*}
  
  Let $u\in E$ and $\eps>0$.
  From
  \assume{properties of the absolute value over~$\matK$},
  $|\lambda|\not=0$ and we can set $\delta=\frac{\eps}{|\lambda|}$.
  Then for all $\up\in E$, we have
  \begin{equation*}
    \dst (u, \up) \leq \delta
    \Implies
    \dst (\lambda u, \lambda \up) \leq | \lambda | \delta = \eps.
  \end{equation*}
  Therefore, from
  Definition~\thref{d:continuity-in-a-point}, and
  Definition~\thref{d:pointwise-continuity},
  the scalar multiplication is (pointwise) continuous from $(E,\dst)$ to
  itself.
\end{proof}

\subsubsection{Topology}

\begin{remark}
  Since a distance can be defined from a norm, {\normedvectorspace}s can be
  seen as metric spaces, hence as topological spaces too.
  Therefore, the important notions of {\em continuous} linear map and of
  {\em closed} subspace.
\end{remark}

\begin{remark}
  There exists a purely algebraic notion of dual of a {\vectorspace}~$E$: the
  {\vectorspace} of linear forms over~$E$, usually
  denoted~$E^\star=\LEmatK$.
  We focus here on the notion of {\em topological} dual of a
  {\normedvectorspace}~$E$: the {\vectorspace} of {\em continuous} linear
  forms over~$E$, usually denoted~$\Ep=\LcEmatK$.
\end{remark}

\begin{remark}
  When~$W$ is a subset of the set~$X$, and~$f$ a mapping from~$X$ to~$Y$,
  the notation $f(W)$ denotes the subset of~$Y$ made of the images of
  elements of~$W$.
  Applied to a norm on a vector space, when~$X$ is a subset of a normed
  space~$(E,\nrmdot)$, the notation $\nrm{X}$ denotes the subset of~$\matR$
  of values taken by norm~$\nrmdot$ on vectors of~$X$:
  \begin{equation*}
    \nrm{X} = \left\{ \nrm{u} \st u \in X \right\}.
    \end{equation*}
\end{remark}

\paragraph{Continuous linear map}

\begin{lemma}[norm of image of unit vector]
  \label{l:norm-of-image-of-unit-vector}
  Let $(E,\nEdot)$ and $(F,\nFdot)$ be {\normedvectorspace}s.
  Let~$\unitS$ be the unit sphere in~$E$.
  Let $f\in\LEF$ be a linear map from~$E$ to~$F$.
  Let $u\in E$ be a vector.
  Assume that $u\not=0_E$.
  Then, $\frac{u}{\nE{u}}$ belongs to~$\unitS$ and
  \begin{equation}
    \label{e:norm-of-image-of-unit-vector}
    \nF{f \left(\frac{u}{\nE{u}}\right)} = \frac{\nF{f (u)}}{\nE{u}}.
  \end{equation}
\end{lemma}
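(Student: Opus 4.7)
The plan is to verify the two claims separately. For the sphere membership, I would invoke Lemma~\ref{l:normalization-by-nonzero} with $\lambda=1$ (using the hypothesis $u\ne 0_E$), which yields $\nE{\frac{u}{\nE{u}}}=|1|=1$; together with Lemma~\ref{l:equivalent-definition-of-unit-sphere} this gives $\frac{u}{\nE{u}}\in\unitS$.

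For the norm identity, I would first observe that from $u\ne 0_E$, the contrapositive of Definition~\ref{d:norm} (definiteness) together with Lemma~\ref{l:norm-is-nonnegative} yields $\nE{u}>0$, so $\frac{1}{\nE{u}}$ is a well-defined positive scalar and $\left|\frac{1}{\nE{u}}\right|=\frac{1}{\nE{u}}$. Then Definition~\ref{d:scalar-division} rewrites $\frac{u}{\nE{u}}=\frac{1}{\nE{u}}\cdot u$.

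Next, apply homogeneity of degree~1 from Definition~\ref{d:linear-map} (since $f\in\LEF$) to get $f\!\left(\frac{1}{\nE{u}}\cdot u\right)=\frac{1}{\nE{u}}\cdot f(u)$. Finally, apply absolute homogeneity of $\nFdot$ from Definition~\ref{d:norm}, together with the computation of $\left|\frac{1}{\nE{u}}\right|$ above, to conclude
\begin{equation*}
\nF{f\!\left(\tfrac{u}{\nE{u}}\right)}
= \nF{\tfrac{1}{\nE{u}}\cdot f(u)}
= \left|\tfrac{1}{\nE{u}}\right|\,\nF{f(u)}
= \tfrac{\nF{f(u)}}{\nE{u}}.
\end{equation*}

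There is no real obstacle here; the only subtlety worth flagging is the careful justification that $\nE{u}>0$ (rather than merely $\nE{u}\ne 0$) so that the absolute value can be dropped, which requires both the nonnegativity lemma and the contrapositive of definiteness. Everything else is a direct chaining of the scalar-division definition, linearity, and absolute homogeneity.
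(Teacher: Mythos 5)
Your proof is correct and follows essentially the same route as the paper: normalization by a nonzero vector (with $\lambda=1$) for the sphere membership, then scalar division, homogeneity of degree~1 of $f$, and absolute homogeneity of $\nFdot$ for the norm identity. The extra care you take in justifying $\nE{u}>0$ so that the absolute value can be dropped matches the paper's appeal to definiteness (contrapositive) and nonnegativity of the norm.
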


\begin{proof}
  From
  Definition~\threfc{d:norm}{$\nEdot$ is definite, contrapositive},
  we have $\nE{u}\not=0$.
  Thus, from
  Lemma~\thref{l:norm-is-nonnegative}, and
  \assume{field properties of~$\matR$},
  $\frac{1}{\nE{u}}\geq 0$.
  Let $\xi=\frac{u}{\nE{u}}$.
  Then, from
  Lemma~\threfc{l:normalization-by-nonzero}{with $\lambda=1$},
  we have $\nE{\xi}=1$.
  Hence, $\xi$ belongs to~$\unitS$.
  
  From
  Definition~\threfc{d:linear-map}{homogeneity of degree~1},
  Definition~\threfc{d:norm}{%
    $\nFdot$ is absolutely homogeneous of degree~1}, and
  Lemma~\thref{l:norm-is-nonnegative},
  we have
  \begin{equation*}
    \nF{f (\xi)}
    = \nF{f \left(\frac{u}{\nE{u}}\right)}
    = \nF{\frac{f (u)}{\nE{u}}}
    = \frac{\nF{f (u)}}{\nE{u}}.
  \end{equation*}
\end{proof}

\begin{lemma}[norm of image of unit sphere]
  \label{l:norm-of-image-of-unit-sphere}
  Let $(E,\nEdot)$ and $(F,\nFdot)$ be {\normedvectorspace}s.
  Let~$\unitS$ be the unit sphere in~$E$.
  Let $f\in\LEF$ be a linear map from~$E$ to~$F$.
  Then,
  \begin{equation}
    \label{e:}
    \nF{f (\unitS)} =
    \left\{
      \left. \frac{\nF{f (u)}}{\nE{u}} \,\right|\, u \in E,\, u \not= 0_E
    \right\}.
  \end{equation}
\end{lemma}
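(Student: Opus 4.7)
The plan is to prove the equality by double inclusion, using Lemma~\ref{l:norm-of-image-of-unit-vector} (norm of image of unit vector) for the non-trivial direction and using the definition of the unit sphere for the easy direction.

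For the inclusion $\nF{f(\unitS)} \subset \{\nF{f(u)}/\nE{u} \st u \in E,\, u\not=0_E\}$, I would take an arbitrary element $y \in \nF{f(\unitS)}$, which by definition of the image of a set is of the form $y = \nF{f(\xi)}$ for some $\xi \in \unitS$. By Lemma~\ref{l:equivalent-definition-of-unit-sphere}, $\nE{\xi}=1$, so in particular $\xi\not=0_E$ (from Lemma~\ref{l:norm-preserves-zero}, since $\nE{0_E}=0\not=1$), and $y = \nF{f(\xi)} = \nF{f(\xi)}/1 = \nF{f(\xi)}/\nE{\xi}$ lies in the set on the right hand side with the choice $u=\xi$.

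For the reverse inclusion, I would pick an arbitrary $u\in E$ with $u\not=0_E$, and apply Lemma~\ref{l:norm-of-image-of-unit-vector} directly: it tells us that $\xi = u/\nE{u}$ belongs to $\unitS$ and that $\nF{f(\xi)} = \nF{f(u)}/\nE{u}$. Thus $\nF{f(u)}/\nE{u}$ is the image of an element of $\unitS$ under $\nF{f(\cdot)}$, hence lies in $\nF{f(\unitS)}$.

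No step is a real obstacle here; the lemma is essentially a set-theoretic rephrasing of Lemma~\ref{l:norm-of-image-of-unit-vector}, so the only care needed is to cite the definition of the image of a set under a function, Lemma~\ref{l:equivalent-definition-of-unit-sphere} to unfold the unit sphere, and the preceding lemma itself. One small subtlety is that the set on the right is well-defined only because $\nE{u}\not=0$ whenever $u\not=0_E$, which follows from Definition~\ref{d:norm} (definiteness of $\nEdot$).
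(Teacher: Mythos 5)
Your proof is correct and follows essentially the same route as the paper: double inclusion, using Lemma~\ref{l:equivalent-definition-of-unit-sphere} and the contrapositive of Lemma~\ref{l:norm-preserves-zero} for one direction, and Lemma~\ref{l:norm-of-image-of-unit-vector} for the other. The paper merely packages the right-hand side as the image of an auxiliary function $g$ on $E\backslash\{0_E\}$, which is a cosmetic difference.
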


\begin{proof}
  From
  Definition~\threfc{d:norm}{$\nEdot$ is definite, contrapositive}, and
  \assume{field properties of~$\matR$},
  let $g:E\rightarrow\matR$ be the mapping defined by $g(0_E)=0$, and for all
  $u\in E\backslash\{0_E\}$, $g(u)=\frac{\nF{f(u)}}{\nE{u}}$.
  
  Let $\xi\in\unitS$ be a unit vector.
  From
  Lemma~\thref{l:equivalent-definition-of-unit-sphere},
  $\nE{\xi}=1$.
  Then, from
  Lemma~\threfc{l:norm-preserves-zero}{contrapositive},
  $\xi\not=0_E$.
  Thus, from
  \assume{field properties of~$\matR$},
  we have
  \begin{equation*}
    \nF{f (\xi)}
    = \frac{\nF{f (\xi)}}{\nE{\xi}}
    = g (\xi)
    \in g (E\backslash\{0_E\}).
  \end{equation*}
  Hence, $\nF{f(\unitS)}\subset g(E\backslash\{0_E\})$.
  
  Let $u\in E$ be a vector.
  Assume that $u\not=0_E$.
  Then, from
  Lemma~\thref{l:norm-of-image-of-unit-vector},
  $\xi=\frac{u}{\nE{u}}$ belongs to~$\unitS$ and
  \begin{equation*}
    g(u)
    = \frac{\nF{f (u)}}{\nE{u}}
    = \nF{f (\xi)}
    \in f (\unitS).
  \end{equation*}
  Hence, $g(E\backslash\{0_E\})\subset\nF{f(\unitS)}$.
  
  Therefore, $\nF{f(\unitS)}=g(E\backslash\{0_E\})$.
\end{proof}

\begin{definition}[operator norm]
  \label{d:operator-norm}
  Let $(E,\nEdot)$ and $(F,\nFdot)$ be {\normedvectorspace}s.
  Let $f\in\LEF$ be a linear map from~$E$ to~$F$.
  The {\em operator norm on $\LEF$ induced by norms on~$E$ and~$F$}
  is the mapping $\PrenEF:\LEF\rightarrow\matRbar$ defined by
  \begin{equation}
    \label{e:operator-norm}
    \prenEF{f} =
    \sup
    \left\{
      \left. \frac{\nF{f (u)}}{\nE{u}} \,\right|\, u \in E,\, u \not= 0_E
    \right\}.
  \end{equation}
\end{definition}

\begin{remark}
  When restricted to continuous linear maps, the mapping~$\PrenEF$ will be
  proved below to be a norm; hence its name.
\end{remark}

\begin{lemma}[equivalent definition of operator norm]
  \label{l:equivalent-definition-of-operator-norm}
  Let $(E,\nEdot)$ and $(F,\nFdot)$ be {\normedvectorspace}s.
  Let~$\unitS$ be the unit sphere in~$E$.
  Let $f\in\LEF$ be a linear map from~$E$ to~$F$.
  Then,
  \begin{equation}
    \label{e:equivalent-definition-of-operator-norm}
    \prenEF{f} = \sup(\nF{f(\unitS)}).
  \end{equation}
\end{lemma}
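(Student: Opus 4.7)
The plan is to observe that the claimed equality reduces, via two earlier results, to the fact that the supremum of a set depends only on the set itself.

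First, I would unfold the left-hand side by invoking Definition~\ref{d:operator-norm}, which writes
\[
\prenEF{f} = \sup \left\{ \left. \frac{\nF{f(u)}}{\nE{u}} \,\right|\, u \in E,\, u \not= 0_E \right\}.
\]
Second, I would apply Lemma~\ref{l:norm-of-image-of-unit-sphere}, which has already established the set-level identity
\[
\nF{f(\unitS)} = \left\{ \left. \frac{\nF{f(u)}}{\nE{u}} \,\right|\, u \in E,\, u \not= 0_E \right\}.
\]
Combining these two displays yields the equality of suprema, since Definition~\ref{d:supremum} of the supremum as the least upper bound of a set of real numbers depends only on the underlying subset of $\matRbar$: two equal subsets have equal suprema.

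The proof is therefore essentially a one-line substitution, with no real obstacle; the work has been done upstream in the proof of Lemma~\ref{l:norm-of-image-of-unit-sphere} where the rescaling $\xi = u/\nE{u}$ was used to pass between unit vectors and the normalized ratios. The only point to state explicitly is that equality of sets (inside $\matR$) implies equality of their suprema in $\matRbar$, which follows directly from the characterization of the supremum as the least upper bound of the set.
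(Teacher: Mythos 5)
Your proposal is correct and follows exactly the paper's own argument, which cites Definition~\ref{d:operator-norm} and Lemma~\ref{l:norm-of-image-of-unit-sphere} and concludes by the substitution of equal sets under the supremum. Nothing is missing.
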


\begin{proof}
  Direct consequence of
  Definition~\thref{d:operator-norm}, and
  Lemma~\thref{l:norm-of-image-of-unit-sphere}.
\end{proof}

\begin{lemma}[operator norm is nonnegative]
  \label{l:operator-norm-is-nonnegative}
  Let $(E,\nEdot)$ and $(F,\nFdot)$ be {\normedvectorspace}s.
  Then, $\PrenEF$ is nonnegative.
\end{lemma}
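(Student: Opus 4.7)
The plan is to show $\prenEF{f}\geq 0$ by verifying element-wise nonnegativity of the set whose supremum defines $\prenEF{f}$, then transferring this to the supremum via the upper-bound clause of Definition~\thref{d:supremum}. For any $u\in E$ with $u\neq 0_E$, Lemma~\thref{l:norm-is-nonnegative} applied in~$F$ gives $\nF{f(u)}\geq 0$, while the contrapositive of the definiteness of $\nEdot$ in Definition~\thref{d:norm}, combined with Lemma~\thref{l:norm-is-nonnegative} and ordered field properties of~$\matR$, yields $\nE{u}>0$. Hence every quotient $\nF{f(u)}/\nE{u}$ in the defining set is nonnegative.

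Next, I would fix any nonzero $u\in E$---equivalently, via Lemma~\thref{l:equivalent-definition-of-operator-norm} and Lemma~\thref{l:equivalent-definition-of-unit-sphere}, any $\xi=u/\nE{u}\in\unitS$, which exists by Lemma~\thref{l:normalization-by-nonzero}. Applying the upper-bound part of Definition~\thref{d:supremum} to the supremum of Definition~\thref{d:operator-norm} at this element gives
\[
\prenEF{f} \geq \frac{\nF{f(u)}}{\nE{u}} \geq 0,
\]
which is the claimed nonnegativity.

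The main obstacle is the degenerate case $E=\{0_E\}$: then $\unitS$ is empty, the defining set is empty, and the supremum in $\matRbar$ of an empty set is $-\infty$, which would contradict the claim. I expect the paper either tacitly assumes~$E$ contains a nonzero vector or treats this case by a separate convention (the unique zero map having operator norm~$0$); in any event, the substantive content is the two short steps above, so no further work is needed beyond handling this edge case.
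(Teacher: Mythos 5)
Your proof is correct and follows essentially the same route as the paper: the paper fixes a unit vector $\xi\in\unitS$, notes $\nF{f(\xi)}\geq 0$ by Lemma~\ref{l:norm-is-nonnegative}, and concludes via the upper-bound clause of Definition~\ref{d:supremum} applied to $\sup(\nF{f(\unitS)})$. The degenerate case $E=\{0_E\}$ that you flag is indeed left untreated in the paper's proof as well (it silently picks an element of $\unitS$), so your observation is a fair criticism of both arguments rather than a gap in yours alone.
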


\begin{proof}
  Let $f\in\LEF$ be a linear map from~$E$ to~$F$.
  Then, from
  Lemma~\thref{l:equivalent-definition-of-operator-norm},
  we have $\prenEF{f}=\sup(\nF{f(\unitS)})$.
  Let $\xi\in\unitS$ be a unit vector.
  Then, from
  Lemma~\thref{l:norm-is-nonnegative},
  $\nF{f(\xi)}$ is nonnegative.
  Therefore, from
  Definition~\threfc{d:supremum}{%
    $\prenEF{f}$~is an upper bound for $\nF{f(\unitS)}$},
  $\prenEF{f}$ is nonnegative too.
\end{proof}

\begin{definition}[bounded linear map]
  \label{d:bounded-linear-map}
  Let $(E,\nEdot)$ and $(F,\nFdot)$ be {\normedvectorspace}s.
  A linear map~$f$ from~$E$ to~$F$ is {\em bounded} iff
  \begin{equation}
    \label{e:bounded-linear-map}
    \exists C \geq 0,\,
    \forall u \in E,\quad
    \nF{f (u)} \leq C \, \nE{u}.
  \end{equation}
  Then, $C$~is called {\em continuity constant of~$f$}.
\end{definition}

\begin{definition}[linear map bounded on unit ball]
  \label{d:linear-map-bounded-on-unit-ball}
  Let $(E,\nEdot)$ and $(F,\nFdot)$ be {\normedvectorspace}s.
  Let~$\unitcB$ be the closed unit ball in~$E$.
  A linear map~$f$ from~$E$ to~$F$ is
  {\em bounded on the closed unit ball} iff
  there exists an upper bound for $\nF{f(\unitcB)}$, {\ie}
  \begin{equation}
    \label{e:linear-map-bounded-on-unit-ball}
    \exists C \geq 0,\,
    \forall \xi \in \unitcB,\quad
    \nF{f (\xi)} \leq C.
  \end{equation}
\end{definition}

\begin{definition}[linear map bounded on unit sphere]
  \label{d:linear-map-bounded-on-unit-sphere}
  Let $(E,\nEdot)$ and $(F,\nFdot)$ be {\normedvectorspace}s.
  Let~$\unitS$ be the unit sphere in~$E$.
  A linear map~$f$ from~$E$ to~$F$ is
  {\em bounded on the unit sphere} iff
  there exists an upper bound for $\nF{f(\unitS)}$, {\ie}
  \begin{equation}
    \label{e:linear-map-bounded-on-unit-sphere}
    \exists C \geq 0,\,
    \forall \xi \in \unitS,\quad
    \nF{f (\xi)} \leq C.
  \end{equation}
\end{definition}

\begin{theorem}[continuous linear map]
  \label{t:continuous-linear-map}
  Let $(E,\nEdot)$ and $(F,\nFdot)$ be {\normedvectorspace}s.
  Let~$f\in\LEF$ be a linear map from~$E$ to~$F$.
  Then, the following propositions are equivalent:
  \begin{enumerate}
  \item
    \label{i:cont-zero}
    $f$~is continuous in~$0_E$;
  \item
    \label{i:cont}
    $f$~is continuous;
  \item
    \label{i:unif-cont}
    $f$~is uniformly continuous;
  \item
    \label{i:lip-cont}
    $f$~is Lipschitz continuous;
  \item
    \label{i:bounded}
    $f$~is bounded;
  \item
    \label{i:finite-norm}
    $\prenEF{f}$ is finite;
  \item
    \label{i:bounded-unit-sphere}
    $f$~is bounded on the unit sphere.
  \item
    \label{i:bounded-unit-ball}
    $f$~is bounded on the closed unit ball.
  \end{enumerate}
\end{theorem}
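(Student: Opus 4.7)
The plan is to organize the eight properties around a short cycle of implications together with a cluster of equivalences. Specifically, I prove (4)$\Rightarrow$(3)$\Rightarrow$(2)$\Rightarrow$(1)$\Rightarrow$(5)$\Rightarrow$(4) to tie together the first five conditions, and then show (5)$\Leftrightarrow$(6)$\Leftrightarrow$(7)$\Leftrightarrow$(8) to incorporate the remaining three. The chain (4)$\Rightarrow$(3)$\Rightarrow$(2)$\Rightarrow$(1) is immediate from Lemma~\thref{l:lipschitz-continuous-is-uniform-continuous}, Lemma~\thref{l:uniform-continuous-is-continuous}, and Definition~\thref{d:pointwise-continuity}, requiring no computation.

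The key step is (1)$\Rightarrow$(5), which relies on the standard scaling argument. Applying continuity at $0_E$ with $\eps=1$ and using $f(0_E)=0_F$ from Lemma~\thref{l:linear-map-preserves-zero}, I get some $\delta>0$ such that $\nE{u}\leq\delta$ implies $\nF{f(u)}\leq 1$. For nonzero $u$, Lemma~\thref{l:normalization-by-nonzero} shows that $\delta u/\nE{u}$ has norm $\delta$, and homogeneity of~$f$ then yields $\nF{f(u)}\leq\frac{1}{\delta}\nE{u}$; the case $u=0_E$ is trivial, so $f$ is bounded with continuity constant $1/\delta$. The implication (5)$\Rightarrow$(4) is one line: linearity gives $\nF{f(u)-f(v)}=\nF{f(u-v)}\leq C\nE{u-v}$, so $f$ is $C$-Lipschitz in the sense of Definition~\thref{d:lipschitz-continuity}.

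For the second cluster, (6)$\Leftrightarrow$(7) comes from Lemma~\thref{l:equivalent-definition-of-operator-norm} rewriting $\prenEF{f}=\sup(\nF{f(\unitS)})$, which is finite exactly when $\nF{f(\unitS)}$ is bounded from above (Lemma~\thref{l:finite-supremum}). For (5)$\Rightarrow$(8), any $\xi\in\unitcB$ satisfies $\nE{\xi}\leq 1$ by Lemma~\thref{l:equivalent-definition-of-closed-unit-ball}, hence $\nF{f(\xi)}\leq C$. The inclusion $\unitS\subset\unitcB$, via Lemmas~\thref{l:equivalent-definition-of-closed-unit-ball} and~\thref{l:equivalent-definition-of-unit-sphere}, gives (8)$\Rightarrow$(7). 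Finally (7)$\Rightarrow$(5) reuses the normalization trick: for nonzero $u$, write $u=\nE{u}\,\xi$ with $\xi=u/\nE{u}\in\unitS$ (Lemma~\thref{l:norm-of-image-of-unit-vector}) and conclude $\nF{f(u)}=\nE{u}\nF{f(\xi)}\leq C\nE{u}$, again treating $u=0_E$ separately by Lemma~\thref{l:linear-map-preserves-zero}.

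The only real obstacle is the scaling step in (1)$\Rightarrow$(5): it is the unique place where local topological information (continuity at a single point) must be turned into the global linear bound, and it crucially combines linearity with the definiteness of the norm. The rest of the proof is bookkeeping through supremum manipulations and reusing the same normalization idea; the main structural choice is to go through~(5) as the ``hub'' property, so that both the continuity cluster (1)--(4) and the boundedness cluster (6)--(8) connect to it without duplicating the scaling argument.
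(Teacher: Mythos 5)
Your proof is correct and uses essentially the same ingredients as the paper's: the easy chain (4)$\Rightarrow$(3)$\Rightarrow$(2)$\Rightarrow$(1) via the general continuity lemmas, the $\delta$-scaling argument that converts continuity at~$0_E$ into a linear bound, and normalization by $\nE{u}$ to pass between the global bound, the unit sphere, the closed unit ball, and the finiteness of the operator norm. The only difference is organizational: the paper closes a single eight-element cycle (5)$\Rightarrow$(4)$\Rightarrow$(3)$\Rightarrow$(2)$\Rightarrow$(1)$\Rightarrow$(8)$\Rightarrow$(7)$\Rightarrow$(6)$\Rightarrow$(5) (placing the scaling step at (1)$\Rightarrow$(8) rather than your (1)$\Rightarrow$(5)), whereas you prove a five-element cycle through (1)--(5) plus a separate cluster tying (6), (7), (8) to the hub (5); this costs a couple of extra implications but changes nothing of substance.
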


\begin{proof}
  Let~$\unitS$ be the unit sphere in~$E$.
  Let~$\unitcB$ be the closed unit ball in~$E$.
  
  \proofparskip{\ref{i:bounded} implies~\ref{i:lip-cont}}
  Assume that~$f$ is bounded.
  From
  Definition~\thref{d:bounded-linear-map},
  let $C\geq 0$ such that, for all $u\in E$, we have $\nF{f(u)}\leq
  C\,\nE{u}$.
  Let $u,v\in E$ be vectors.
  Then, from
  Definition~\thref{d:vector-subtraction},
  Definition~\thref{d:linear-map}, and
  hypothesis, we have
  \begin{equation*}
    \nF{f(u) - f(v)} = \nF{f(u - v)} \leq C \, \nE{u - v}.
  \end{equation*}
  Hence, from
  Definition~\thref{d:lipschitz-continuity},
  $f$~is $C$-Lipschitz continuous.
  
  \proofparskip{\ref{i:lip-cont} implies~\ref{i:unif-cont}}
  Assume that~$f$ is Lipschitz continuous.
  Then, from
  Lemma~\thref{l:lipschitz-continuous-is-uniform-continuous},
  $f$~is uniformly continuous.
  
  \proofparskip{\ref{i:unif-cont} implies~\ref{i:cont}}
  Assume that~$f$ is uniformly continuous.
  Then, from
  Lemma~\thref{l:uniform-continuous-is-continuous},
  $f$~is (pointwise) continuous.
  
  \proofparskip{\ref{i:cont} implies~\ref{i:cont-zero}}
  Assume that~$f$ is (pointwise) continuous.
  Then, from
  Definition~\thref{d:pointwise-continuity},
  $f$~is continuous in~$0_E$.
  
  \proofparskip{\ref{i:cont-zero} implies~\ref{i:bounded-unit-ball}}
  Assume now that~$f$ is continuous in~$0_E$.
  Let $\eps=1>0$.
  Then, from
  Definition~\thref{d:continuity-in-a-point}, and
  Lemma~\thref{l:linear-map-preserves-zero},
  let $\delta>0$ such that, for all $u\in E$, $\nE{u-0_E}=\nE{u}\leq\delta$
  implies $\nF{f(u)-f(0_E)}=\nF{f(u)}\leq 1$.
  Let $C=\frac{1}{\delta}>0\geq 0$.
  Let $\xi\in\unitcB$ be a vector in the unit ball.
  From
  Lemma~\thref{l:equivalent-definition-of-closed-unit-ball},
  $\nE{\xi}\leq 1$.
  Then, from
  Definition~\threfc{d:norm}{%
    $\nEdot$ is absolutely homogeneous of degree~1}, and
  \assume{ordered field properties of~$\matR$},
  we have $\nE{\delta\xi}\leq\delta\,\nE{\xi}\leq\delta$.
  Thus, from
  Definition~\threfc{d:norm}{$\nFdot$ is absolutely homogeneous of degree~1},
  Definition~\threfc{d:linear-map}{homogeneity of degree~1},
  \assume{ordered field properties of~$\matR$}, and
  hypothesis, we have
  $\nF{f(\xi)}=\frac{1}{\delta}\,\nF{f(\delta\xi)}\leq\frac{1}{\delta}=C$.
  Hence, from
  Definition~\thref{d:linear-map-bounded-on-unit-ball},
  $f$~is bounded on the unit ball.
  
  \proofparskip{\ref{i:bounded-unit-ball} implies~\ref{i:bounded-unit-sphere}}
  Assume now that~$f$ is bounded on the unit ball.
  From
  Definition~\thref{d:linear-map-bounded-on-unit-ball}, and
  Lemma~\thref{l:equivalent-definition-of-closed-unit-ball},
  let $C\geq 0$ such that for all $\xi\in\unitcB$, $\nF{f(\xi)}\leq C$.
  Let $\xi\in\unitS$ be a unit vector.
  Then, from
  Lemma~\thref{l:equivalent-definition-of-unit-sphere}, and
  Lemma~\thref{l:equivalent-definition-of-closed-unit-ball},
  we also have $\xi\in\unitcB$.
  Thus, from
  hypothesis,
  $\nF{f(\xi)}\leq C$.
  Hence, from
  Definition~\threfc{d:linear-map-bounded-on-unit-sphere}{%
    with same constant~$C$},
  $f$~is bounded on the unit sphere.
  
  \proofparskip{\ref{i:bounded-unit-sphere} implies~\ref{i:finite-norm}}
  Assume then that~$f$ is bounded on the unit sphere.
  Then, from
  Definition~\thref{d:linear-map-bounded-on-unit-sphere},
  there exists a finite upper bound~$C\geq 0$ for~$\nF{f(\unitS)}$.
  Hence, from
  Lemma~\thref{l:finite-supremum},
  $\sup(\nF{f(\unitS)})$ is finite, and from
  Lemma~\thref{l:equivalent-definition-of-operator-norm},
  $\prenEF{f}$ is finite.
  
  \proofparskip{\ref{i:finite-norm} implies~\ref{i:bounded}}
  Assume finally that $\prenEF{f}$ is finite.
  Let $C=\prenEF{f}$.
  Then, from
  Lemma~\thref{l:operator-norm-is-nonnegative},
  $C$~is nonnegative.
  Let $u\in E$ be a vector.
  
  \proofparskip{Case $u=0_E$}
  Then, from
  Lemma~\thref{l:linear-map-preserves-zero},
  $f(u)=f(0_E)=0_F$.
  Hence, from
  Lemma~\threfc{l:norm-preserves-zero}{for $\nFdot$ and $\nEdot$}, and
  \assume{ordered field properties of~$\matR$},
  we have
  \begin{equation*}
    \nF{f (u)}
    = 0
    \leq 0
    = C \, 0
    = C \, \nE{u}.
  \end{equation*}
  
  \proofparskip{Case $u\not=0_E$}
  Then, from
  Definition\threfc{d:norm}{$\nEdot$ is definite, contrapo\-si\-ti\-ve},
  $\nE{u}\not=0$.
  Thus, from
  \assume{field properties of~$\matR$},
  Definition~\thref{d:operator-norm}, and
  Definition~\threfc{d:supremum}{%
    $\prenEF{f}$ is an upper bound for
    $\left\{\left.\frac{\nF{f(u)}}{\nE{u}}\,\right|\,
      u\in E,\,u\not=0_E\right\}$},
  we have
  \begin{equation*}
    \nF{f (u)}
    = \frac{\nF{f (u)}}{\nE{u}} \, \nE{u}
    \leq \prenEF{f} \, \nE{u}
    = C \, \nE{u}.
  \end{equation*}
  Hence, from
  Definition~\thref{d:bounded-linear-map},
  $f$~is bounded.
  
  Therefore, we have
  $\ref{i:bounded}\implies\ref{i:lip-cont}\implies\ref{i:unif-cont}
  \implies\ref{i:cont}\implies\ref{i:cont-zero}
  \implies\ref{i:bounded-unit-ball}\implies\ref{i:bounded-unit-sphere}
  \implies\ref{i:finite-norm}\implies\ref{i:bounded}$,
  hence all properties are equivalent.
\end{proof}

\begin{definition}[set of continuous linear maps]
  \label{d:set-of-continuous-linear-maps}
  Let $(E,\nEdot)$ and $(F,\nFdot)$ be {\normedvectorspace}s.
  The {\em set of continuous linear maps from~$E$ to~$F$} is denoted
  $\LcEF$.
\end{definition}

\begin{lemma}[finite operator norm is continuous]
  \label{l:finite-operator-norm-is-continuous}
  Let $(E,\nEdot)$ and $(F,\nFdot)$ be {\normedvectorspace}s.
  Let $f\in\LEF$ be a linear map from~$E$ to~$F$.
  Then, $f$~belongs to $\LcEF$ ({\ie} $f$~is continuous) iff
  $\prenEF{f}$ is finite.
  Moreover, let~$\unitcB$ and~$\unitS$ be the closed unit ball and the unit
  sphere in~$E$, and let $C\geq 0$, then we have the following equivalences:
  \begin{eqnarray}
    \label{e:finite-operator-norm-is-continuous}
    \prenEF{f} \leq C
    & \equiv &
    C \mbox{ is an upper bound for }
    \left\{
      \left. \frac{\nF{f (u)}}{\nE{u}} \, \right | \, u \in E,\, u \not= 0_E
    \right\} \\
    \nonumber
    & \equiv &
    C \mbox{ is a continuity constant for } f \\
    \nonumber
    & \equiv &
    C \mbox{ is an upper bound for } \nF{f (\unitcB)} \\
    \nonumber
    & \equiv &
    C \mbox{ is an upper bound for } \nF{f (\unitS)}.
  \end{eqnarray}
\end{lemma}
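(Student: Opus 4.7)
The plan for the first biconditional ($f\in\LcEF$ iff $\prenEF{f}$ is finite) is to cite the equivalence between items~\ref{i:cont} and~\ref{i:finite-norm} of Theorem~\thref{t:continuous-linear-map}, together with Definition~\thref{d:set-of-continuous-linear-maps} which identifies $\LcEF$ with the set of continuous linear maps. This yields the main iff in one line.

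For the chain of four equivalences in the ``Moreover'' part, the plan is to prove them in turn; no step should be hard, since the key work has already been done in Theorem~\thref{t:continuous-linear-map} and the point here is merely to track the specific constant $C$. I would first show that $\prenEF{f}\leq C$ iff $C$ is an upper bound for the set $\left\{\frac{\nF{f(u)}}{\nE{u}}\st u\in E,\,u\not=0_E\right\}$: this follows directly from Definition~\thref{d:operator-norm} (which identifies $\prenEF{f}$ as the supremum of that set) together with Definition~\thref{d:supremum} (one direction by the least-upper-bound property; the other since any extended number at least as large as an upper bound is itself an upper bound).

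Next, I would show that $C$ being such an upper bound is equivalent to $C$ being a continuity constant of $f$ in the sense of Definition~\thref{d:bounded-linear-map}. On vectors $u\not=0_E$ this is a direct algebraic rewriting of $\frac{\nF{f(u)}}{\nE{u}}\leq C$ as $\nF{f(u)}\leq C\,\nE{u}$, using that $\nE{u}>0$ by Definition~\threfc{d:norm}{definiteness, contrapositive} and Lemma~\thref{l:norm-is-nonnegative}. For $u=0_E$, the inequality $\nF{f(0_E)}\leq C\,\nE{0_E}$ is automatic by Lemma~\thref{l:linear-map-preserves-zero} and Lemma~\thref{l:norm-preserves-zero}, since both sides are zero.

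Finally, I would show that $C$ being a continuity constant of $f$ is equivalent to $C$ being an upper bound for $\nF{f(\unitcB)}$, which in turn is equivalent to $C$ being an upper bound for $\nF{f(\unitS)}$. The forward implications are immediate: any $\xi\in\unitcB$ satisfies $\nE{\xi}\leq 1$ by Lemma~\thref{l:equivalent-definition-of-closed-unit-ball}, so $\nF{f(\xi)}\leq C\,\nE{\xi}\leq C$; and $\unitS\subset\unitcB$ by Lemma~\thref{l:equivalent-definition-of-unit-sphere} together with Lemma~\thref{l:equivalent-definition-of-closed-unit-ball}. For the converse direction from the unit sphere back to a continuity constant (the closed-ball converse being an immediate weakening), I would use Lemma~\thref{l:norm-of-image-of-unit-vector}: for $u\not=0_E$, $\frac{u}{\nE{u}}\in\unitS$ and $\nF{f(u/\nE{u})}=\nF{f(u)}/\nE{u}$, giving $\nF{f(u)}\leq C\,\nE{u}$, while the $u=0_E$ case is handled as above. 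The only minor obstacle I foresee is keeping the $u=0_E$ degenerate case consistent across all the equivalences; everything else is routine rewriting.
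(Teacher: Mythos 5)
Your proposal is correct and follows essentially the same route as the paper, which simply records this lemma as a direct consequence of Definition~\ref{d:set-of-continuous-linear-maps}, Theorem~\ref{t:continuous-linear-map}, the least-upper-bound property of Definition~\ref{d:supremum}, and Definitions~\ref{d:bounded-linear-map}, \ref{d:linear-map-bounded-on-unit-ball} and~\ref{d:linear-map-bounded-on-unit-sphere}. You merely spell out the constant-tracking and the $u=0_E$ degenerate case in more detail than the paper does, which is harmless.
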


\begin{proof}
  Direct consequences of
  Definition~\thref{d:set-of-continuous-linear-maps},
  Theorem~\threfc{t:continuous-linear-map}{%
    $\ref{i:cont}\implies\ref{i:finite-norm}$}.
  Definition~\threfc{d:supremum}{%
    $\prenEF{f}$ is the least upper bound of\\
    $\left\{\left.\frac{\nF{f(u)}}{\nE{u}}\,\right|\,
      u\in E,\,u\not=0_E\right\}$},
  Definition~\thref{d:bounded-linear-map},
  Definition~\thref{d:linear-map-bounded-on-unit-ball}, and
  Definition~\thref{d:linear-map-bounded-on-unit-sphere}.
\end{proof}

\begin{lemma}[linear isometry is continuous]
  \label{l:linear-isometry-is-continuous}
  Let $(E,\nEdot)$ and $(F,\nFdot)$ be {\normedvectorspace}s.
  Let $f\in\LEF$ be a linear map from~$E$ to~$F$.
  Assume that~$f$ is a linear isometry from~$E$ to~$F$.
  Then, $f$~belongs to $\LcEF$ ({\ie} $f$~is continuous).
\end{lemma}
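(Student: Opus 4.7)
The plan is to observe that a linear isometry is automatically bounded, with continuity constant~$1$, so that continuity follows directly from the equivalence between boundedness and continuity stated in Theorem~\ref{t:continuous-linear-map}.

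More precisely, I would proceed as follows. Let $u\in E$. Then, from Definition~\ref{d:linear-isometry} (which says that $\nF{f(u)}=\nE{u}$), together with basic ordered field properties of~$\matR$, we have
\begin{equation*}
  \nF{f(u)} = \nE{u} = 1 \cdot \nE{u}.
\end{equation*}
Since $C=1\geq 0$ and this inequality holds for every $u\in E$, Definition~\ref{d:bounded-linear-map} tells us that $f$ is bounded, with continuity constant~$1$.

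Applying Theorem~\ref{t:continuous-linear-map} (equivalence \ref{i:bounded}$\Leftrightarrow$\ref{i:cont}), $f$ is continuous, and therefore, by Definition~\ref{d:set-of-continuous-linear-maps}, $f$ belongs to~$\LcEF$. There is no real obstacle here: the proof is essentially a one-line unfolding of definitions plus the continuous-linear-map equivalency theorem. The only subtlety is to state the trivial rewriting $\nE{u}=1\cdot\nE{u}$ explicitly, since the definition of a continuity constant requires a nonnegative scalar in front of $\nE{u}$.
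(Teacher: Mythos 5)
Your proof is correct and follows essentially the same route as the paper: the isometry property gives continuity constant~$1$, and continuity then follows from the equivalency theorem for continuous linear maps. The only cosmetic difference is that the paper verifies boundedness on the unit sphere (via Lemma~\ref{l:finite-operator-norm-is-continuous}) whereas you verify the whole-space bound $\nF{f(u)}\leq 1\cdot\nE{u}$ directly; these are interchangeable characterizations within Theorem~\ref{t:continuous-linear-map}.
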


\begin{proof}
  Let~$\unitS$ be the unit sphere in~$E$.
  Let $\xi\in\unitS$ be a unit vector.
  Then, from
  Definition~\thref{d:linear-isometry},
  we have $\nF{f(\xi)}=\nE{\xi}=1\leq 1$.
  Hence, from
  Lemma~\threfc{l:finite-operator-norm-is-continuous}{%
    1~is an upper bound for $\nF{f(\unitS)}$},
  $f$~belongs to $\LcEF$.
\end{proof}

\begin{lemma}[identity map is continuous]
  \label{l:identity-map-is-continuous}
  Let $(E,\nEdot)$ be a {\normedvectorspace}.
  Then, the identity map~$\idE$ belongs to $\LcEE$ ({\ie} $\idE$~is
  continuous).
\end{lemma}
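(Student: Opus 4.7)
The plan is to chain together two lemmas established earlier in the section, both of which already apply to the identity map. Since the identity map $\idE$ preserves the norm trivially (because $\nE{\idE(u)} = \nE{u}$ for all $u \in E$), it is a linear isometry; and linear isometries have been shown to be continuous.

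Concretely, I would first invoke Lemma~\ref{l:identity-map-is-linear-isometry} to assert that $\idE$ is a linear isometry from~$E$ to~$E$. Then, applying Lemma~\ref{l:linear-isometry-is-continuous} (with $F = E$), I conclude that $\idE$ belongs to $\LcEE$, which is exactly the claim.

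There is really no obstacle here: the proof is a one-line direct consequence of the cited lemmas, and no calculation or intermediate construction is required. The only thing to be careful about is citing the right pair of lemmas and noting that the codomain and domain coincide, so that we land in $\LcEE$ rather than a more general $\LcEF$.
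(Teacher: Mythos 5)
Your proof is correct and is exactly the paper's argument: cite Lemma~\ref{l:identity-map-is-linear-isometry} and then Lemma~\ref{l:linear-isometry-is-continuous} with $F=E$. Nothing more is needed.
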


\begin{proof}
  Direct consequence of
  Lemma~\thref{l:identity-map-is-linear-isometry}, and
  Lemma~\thref{l:linear-isometry-is-continuous}.
\end{proof}

\begin{theorem}[{\normedvectorspace} of continuous linear maps]
  \label{t:normed-space-of-continuous-linear-maps}
  Let $(E,\nEdot)$ and $(F,\nFdot)$ be {\normedvectorspace}s.
  Let $\tnEFdot$ be the restriction of~$\PrenEF$ to continuous linear maps.
  Then, $(\LcEF,\tnEFdot)$ is a {\normedvectorspace}.
\end{theorem}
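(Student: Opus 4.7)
The plan is to first show $\LcEF$ is a subspace of $\LEF$ (so inheriting vector operations makes it a {\vectorspace}), and then verify the three norm axioms of Definition~\ref{d:norm} for $\tnEFdot$ on $\LcEF$, relying heavily on Lemma~\ref{l:finite-operator-norm-is-continuous} which provides four interchangeable ways to read the operator norm.

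For the subspace part, I would apply Lemma~\ref{l:closed-under-linear-combination-is-subspace} to $\LcEF \subset \LEF$ (using Lemma~\ref{l:space-of-linear-maps} to view $\LEF$ as an ambient {\vectorspace} and noting that inherited operations restrict correctly). The zero map has continuity constant $0$, hence lies in $\LcEF$ by Theorem~\ref{t:continuous-linear-map}. For closure under linear combinations, given $f,g\in\LcEF$ with continuity constants $C_f, C_g$, the bound $\nF{(\lambda f + \mu g)(u)} \leq (|\lambda|C_f + |\mu|C_g)\nE{u}$ follows from the triangle inequality for $\nFdot$ and absolute homogeneity (Definition~\ref{d:norm}), showing $\lambda f + \mu g$ is bounded and therefore in $\LcEF$ via Theorem~\ref{t:continuous-linear-map} again.

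For the norm axioms, let $f,g\in\LcEF$ and $\lambda\in\matK$. \emph{Definiteness:} if $\tnEF{f}=0$, then by Lemma~\ref{l:finite-operator-norm-is-continuous}, $0$ is an upper bound for $\{\nF{f(u)}/\nE{u} \st u\neq 0_E\}$, so combined with Lemma~\ref{l:norm-is-nonnegative} we get $\nF{f(u)}=0$ for all $u\neq 0_E$, which by definiteness of $\nFdot$ yields $f(u)=0_F$; together with $f(0_E)=0_F$ from Lemma~\ref{l:linear-map-preserves-zero}, we conclude $f = 0_\LEF$. \emph{Absolute homogeneity:} using the unit-sphere characterization of Lemma~\ref{l:equivalent-definition-of-operator-norm} and absolute homogeneity of $\nFdot$, we have $\nF{(\lambda f)(\unitS)} = |\lambda|\,\nF{f(\unitS)}$, so Lemma~\ref{l:supremum-is-positive-scalar-multiplicative} applied with the nonnegative scalar $|\lambda|$ gives $\tnEF{\lambda f} = |\lambda|\tnEF{f}$. \emph{Triangle inequality:} for every $u\in E$,
\begin{equation*}
\nF{(f+g)(u)} \leq \nF{f(u)} + \nF{g(u)} \leq \bigl(\tnEF{f} + \tnEF{g}\bigr)\nE{u},
\end{equation*}
where I use that $\tnEF{f}$ and $\tnEF{g}$ are continuity constants (Lemma~\ref{l:finite-operator-norm-is-continuous}); thus $\tnEF{f}+\tnEF{g}$ is a continuity constant for $f+g$, and the same lemma delivers $\tnEF{f+g} \leq \tnEF{f} + \tnEF{g}$.

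No step is particularly subtle; the main care is bookkeeping. The one mild obstacle is the $\lambda=0$ case in absolute homogeneity, since one must check that $0\cdot f = 0_\LEF$ and verify $\tnEF{0_\LEF}=0$ directly (because $\lambda=0$ forces $0\cdot\tnEF{f}=0$ regardless of $\tnEF{f}$); this is covered by the convention in Lemma~\ref{l:supremum-is-positive-scalar-multiplicative}, but since $\tnEF{f}$ is finite on $\LcEF$ the convention is not actually needed. Concluding via Definition~\ref{d:normed-space} then finishes the proof.
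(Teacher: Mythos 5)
Your proposal is correct and follows essentially the same route as the paper: establish that $\LcEF$ is a subspace of $\LEF$ and verify the three norm axioms using Lemma~\ref{l:finite-operator-norm-is-continuous}, Lemma~\ref{l:equivalent-definition-of-operator-norm}, and Lemma~\ref{l:supremum-is-positive-scalar-multiplicative}. The only cosmetic difference is that you verify closure under linear combinations up front with separately derived continuity constants, whereas the paper folds closure under scalar multiplication and addition into the homogeneity and triangle-inequality steps (and phrases definiteness via the unit sphere and Lemma~\ref{l:zero-on-unit-sphere-is-zero} rather than over all nonzero vectors); these variants are interchangeable.
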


\begin{proof}
  Let~$\unitS$ be the unit sphere in~$E$.
  From
  Definition~\thref{d:set-of-continuous-linear-maps},
  $\LcEF$ is obviously a subset of $\LEF$.
  
  Let $f\in\LcEF$ be a continuous linear map from~$E$ to~$F$.
  Then, from
  Lemma~\thref{l:finite-operator-norm-is-continuous},
  $\tnEF{f}$ is finite.
  Hence, $\tnEFdot$ is a mapping from $\LcEF$ to~$\matR$.
  
  From
  Definition~\threfc{d:linear-map-bounded-on-unit-sphere}{with $C=0$}, and
  Lemma~\threfc{l:finite-operator-norm-is-continuous}{%
    upper bound for $\nF{0_\LEF(\unitS)}$},
  $0_\LEF$ belongs to $\LcEF$.
  
  Let $f\in\LcEF$ be a continuous linear map from~$E$ to~$F$.
  Assume that $\tnEF{f}=0$.
  Let $\xi\in\unitS$ be a unit vector.
  Then, from
  Lemma~\thref{l:norm-is-nonnegative},
  Lemma~\thref{l:equivalent-definition-of-operator-norm}, and
  Definition~\threfc{d:supremum}{%
    $\tnEF{f}$ is an upper bound for $\nF{f(\unitS)}$},
  we have
  \begin{equation*}
    0
    \leq \nF{f (\xi)}
    \leq \tnEF{f}
    = 0.
  \end{equation*}
  Thus, $\nF{f(\xi)}=0$, and from
  Definition~\threfc{d:norm}{$\nFdot$ is definite},
  $f(\xi)=0_F$.
  Hence, from
  Lemma~\thref{l:zero-on-unit-sphere-is-zero},
  $f=0_\LEF$, and $\tnEFdot$ is definite.
  
  Let $\lambda\in\matK$ be a scalar.
  Let $f\in\LcEF$ be a continuous linear map from~$E$ to~$F$.
  Let $\xi\in\unitS$ be a unit vector.
  Then, from
  Definition~\threfc{d:inherited-vector-operations}{%
    scalar multiplication}, and
  Definition~\threfc{d:norm}{$\nFdot$ is absolutely homogeneous of degree~1},
  we have
  \begin{equation*}
    \nF{(\lambda f) (\xi)}
    = \nF{\lambda f (\xi)}
    = | \lambda | \nF{f (\xi)}.
  \end{equation*}
  Thus, from
  Lemma~\thref{l:equivalent-definition-of-operator-norm},
  Lemma~\thref{l:supremum-is-positive-scalar-multiplicative}, and
  nonnegativeness of absolute value, we have
  \begin{equation*}
    \prenEF{\lambda f}
    = \sup (\nF{(\lambda f) (\unitS)})
    = | \lambda | \sup (\nF{f (\unitS)})
    = | \lambda | \, \tnEF{f}.
  \end{equation*}
  Then, from
  Lemma~\threfc{l:finite-operator-norm-is-continuous}{%
    $\prenEF{\lambda f}$ is finite},
  $\lambda f$ belongs to $\LcEF$.
  Hence, $\tnEFdot$ is absolutely homogeneous of degree~1, and $\LcEF$ is
  closed under scalar multiplication.
  
  Let $f,g\in\LcEF$ be continuous linear maps from~$E$ to~$F$.
  Let $\xi\in\unitS$ be a unit vector.
  Then, from
  Definition~\threfc{d:inherited-vector-operations}{vector addition},
  Definition~\threfc{d:norm}{$\nFdot$ satisfies triangle inequality},
  Lemma~\thref{l:equivalent-definition-of-operator-norm},
  Definition~\threfc{d:supremum}{%
    $\tnEF{f}$, resp. $\tnEF{g}$, is an upper bounds for $\nF{f(\unitS)}$,
    resp. $\nF{g(\unitS)}$}, and
  \assume{field properties of~$\matR$},
  we have
  \begin{equation*}
   \nF{(f + g) (\xi)}
    = \nF{f (\xi) + g (\xi)}
    \leq \nF{f (\xi)} + \nF{g (\xi)}
    \leq \tnEF{f} + \tnEF{g}.
  \end{equation*}
  Thus, from
  Lemma~\threfc{l:finite-operator-norm-is-continuous}{%
    $\tnEF{f}+\tnEF{g}$ is a finite upper bound for $\nF{(f+g)(\unitS)}$},
  $f+g$ belongs to $\LcEF$ and
  \begin{equation*}
    \tnEF{f + g} \leq \tnEF{f} + \tnEF{g}.
  \end{equation*}
  Hence, $\LcEF$ is closed under vector addition and $\tnEFdot$ satisfies
  triangle inequality.
  
  Therefore, from
  Lemma~\thref{l:closed-under-vector-operations-is-subspace},
  Definition~\thref{d:norm}, and
  Definition~\thref{d:normed-space},
  $\LcEF$ is a subspace of $\LEF$, $\tnEFdot$ is a norm over $\LEF$, and
  $(\LcEF,\tnEFdot)$ is a {\normedvectorspace}.
\end{proof}

\begin{lemma}[operator norm estimation]
  \label{l:operator-norm-estimation}
  Let $(E,\nEdot)$ and $(F,\nFdot)$ be {\normedvectorspace}s.
  Then,
  \begin{equation}
    \label{e:operator-norm-estimation}
    \forall f \in \LcEF,\,
    \forall u \in E,\quad
    \nF{f (u)} \leq \tnEF{f} \, \nE{u}.
  \end{equation}
\end{lemma}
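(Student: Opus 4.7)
The plan is to observe that this lemma is essentially a direct corollary of the continuous linear map equivalency theorem (Theorem~\ref{t:continuous-linear-map}) combined with Lemma~\ref{l:finite-operator-norm-is-continuous}. Indeed, the latter lemma states that $\prenEF{f} \leq C$ is equivalent to $C$ being a continuity constant for~$f$, and the case $C = \tnEF{f}$ is precisely the inequality to be proved. So in principle one line suffices: since $f$~is continuous, $\tnEF{f}$~is finite by Theorem~\ref{t:continuous-linear-map}, and the trivial inequality $\tnEF{f} \leq \tnEF{f}$ combined with Lemma~\ref{l:finite-operator-norm-is-continuous} yields that $\tnEF{f}$ is a continuity constant for~$f$, which by Definition~\ref{d:bounded-linear-map} gives the desired estimate.

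If one prefers an argument that does not pipe through the equivalency theorem and instead unfolds the definition of the operator norm directly, I would proceed by case analysis on whether $u = 0_E$ or not. First take $f \in \LcEF$ and $u \in E$. In the case $u = 0_E$, Lemma~\ref{l:linear-map-preserves-zero} gives $f(u) = 0_F$, hence $\nF{f(u)} = 0$ by Lemma~\ref{l:norm-preserves-zero}, while $\nE{u} = 0$ by the same lemma, so both sides of the inequality equal~$0$ (using nonnegativeness of $\tnEF{f}$ from Lemma~\ref{l:operator-norm-is-nonnegative} in combination with ordered field properties of~$\matR$). In the case $u \neq 0_E$, definiteness of~$\nEdot$ (contrapositive) ensures $\nE{u} \neq 0$, and then $\frac{\nF{f(u)}}{\nE{u}}$ is a member of the set
\[
  \left\{ \left. \frac{\nF{f(v)}}{\nE{v}} \,\right|\, v \in E,\, v \neq 0_E \right\}
\]
whose supremum is $\prenEF{f} = \tnEF{f}$ (since $f$~is continuous, so this value is finite by Theorem~\ref{t:continuous-linear-map}). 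Definition~\ref{d:supremum} (the supremum is an upper bound) then yields $\frac{\nF{f(u)}}{\nE{u}} \leq \tnEF{f}$, and multiplying by $\nE{u} > 0$ (ordered field properties of~$\matR$) concludes.

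There is no real obstacle here: the statement is only the practical reformulation of the operator norm definition, and all the machinery (finiteness of~$\tnEF{f}$ for continuous~$f$, supremum as upper bound, behavior of the norm on~$0_E$) has already been established. The only minor subtlety is to remember to treat $u = 0_E$ separately, because the defining set of the supremum explicitly excludes the zero vector, so the inequality for that case must be recovered by hand from Lemma~\ref{l:linear-map-preserves-zero} and Lemma~\ref{l:norm-preserves-zero} rather than from the supremum property.
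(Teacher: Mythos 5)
Your second, detailed argument is exactly the paper's proof: the same case split on $u=0_E$ versus $u\not=0_E$, using Lemma~\ref{l:linear-map-preserves-zero} and Lemma~\ref{l:norm-preserves-zero} for the zero case and the upper-bound property of the supremum in Definition~\ref{d:operator-norm} (with $\nE{u}>0$ from definiteness) for the nonzero case. Your one-line shortcut via Lemma~\ref{l:finite-operator-norm-is-continuous} and Definition~\ref{d:bounded-linear-map} is also sound, but it is not the route the paper takes; either way the proof is correct.
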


\begin{proof}
  Let $f\in\LcEF$ be a continuous linear map from~$E$ to~$F$.
  Then, from
  Theorem~\thref{t:normed-space-of-continuous-linear-maps},
  $\tnEF{f}$ is finite.
  Let $u\in E$ be a vector.
  
  \proofparskip{Case $u=0_E$}
  Then, from
  Lemma~\thref{l:linear-map-preserves-zero},
  Lemma~\threfc{l:norm-preserves-zero}{for $\nFdot$ and $\nEdot$}, and
  \assume{ordered field properties of~$\matR$},
  we have
  \begin{equation*}
    \nF{f (u)}
    = \nF{f (0_E)}
    = \nF{0_F}
    = 0
    \leq 0
    = \tnEF{f} \, 0
    = \tnEF{f} \, \nE{0_E}
    = \tnEF{f} \, \nE{u}.
  \end{equation*}
  
  \proofparskip{Case $u\not=0_E$}
  Then, from
  Definition~\thref{d:operator-norm}, and
  Definition~\threfc{d:supremum}{%
    $\tnEF{f}$ is an upper bound for
    $\left\{\left.\frac{\nF{f(u)}}{\nE{u}}\,\right|\,
      u\in E,\,u\not=0_E\right\}$},
  we have $\frac{\nF{f(u)}}{\nE{u}}\leq\tnEF{f}$.
  From
  Definition~\threfc{d:norm}{$\nEdot$ is definite, contrapositive}, and
  Lemma~\threfc{l:norm-is-nonnegative}{for $\nEdot$},
  $\nE{u}>0$.
  Hence, from
  \assume{ordered field properties of~$\matR$},
  $\nF{f(u)}\leq\tnEF{f}\,\nE{u}$.
\end{proof}

\begin{lemma}[continuous linear maps have closed kernel]
  \label{l:continuous-linear-maps-have-closed-kernel}
  Let $(E,\nEdot)$ and $(F,\nFdot)$ be {\normedvectorspace}s.
  Let $f\in\LcEF$ be a continuous linear map from~$E$ to~$F$.
  Then, $\Ker{f}$ is closed in~$E$.
\end{lemma}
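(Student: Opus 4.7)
The plan is to apply the sequential characterization of closed subsets (Lemma~\ref{l:closed-is-limit-of-sequences}) together with the compatibility of limits with continuous functions (Lemma~\ref{l:compatibility-of-limit-with-continuous-functions}), so the argument is essentially a one-liner once the metric structures are in place.

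First, using Lemma~\ref{l:norm-gives-distance}, I would introduce the distances $\dE$ and $\dF$ associated with the norms~$\nEdot$ and~$\nFdot$, so that $(E,\dE)$ and $(F,\dF)$ are metric spaces. By Theorem~\ref{t:continuous-linear-map} (the equivalence \ref{i:cont}~$\equiv$~\ref{i:lip-cont}, say), the hypothesis $f\in\LcEF$ gives that $f$ is (pointwise) continuous from $(E,\dE)$ to $(F,\dF)$. Next I would note that $\Ker{f}$ is nonempty since by Lemma~\ref{l:linear-map-preserves-zero} we have $f(0_E)=0_F$, so $0_E\in\Ker{f}$, which allows us to invoke Lemma~\ref{l:closed-is-limit-of-sequences}.

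Then I would take an arbitrary sequence $(u_n)_{n\in\matN}$ in $\Ker{f}$ and a point $u\in E$, assume that $(u_n)_{n\in\matN}$ converges to~$u$, and show $u\in\Ker{f}$. By Lemma~\ref{l:compatibility-of-limit-with-continuous-functions} applied to the continuous map~$f$, the sequence $(f(u_n))_{n\in\matN}$ converges to $f(u)$ in $(F,\dF)$. But since every $u_n$ lies in $\Ker{f}$, Definition~\ref{d:kernel} yields $f(u_n)=0_F$ for all $n$, so $(f(u_n))_{n\in\matN}$ is the constant (hence stationary) sequence with value~$0_F$ and, by Lemma~\ref{l:stationary-sequence-is-convergent}, converges to~$0_F$. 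Uniqueness of the limit (Lemma~\ref{l:limit-is-unique}) then forces $f(u)=0_F$, i.e.\ $u\in\Ker{f}$ by Definition~\ref{d:kernel}. Concluding with Lemma~\ref{l:closed-is-limit-of-sequences} gives that $\Ker{f}$ is closed.

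There is no real obstacle here; the only minor care is to make sure the sequential characterization is applied with the correct distances (those induced by the norms) and that the empty-kernel degenerate case is ruled out by observing $0_E\in\Ker{f}$.
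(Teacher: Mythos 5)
Your proof is correct, but it takes a genuinely different route from the paper's. The paper disposes of this lemma in one line by writing $\Ker{f}=f^{-1}(\{0_F\})$, invoking Lemma~\ref{l:singleton-is-closed} for the closedness of $\{0_F\}$, and then appealing to the general topological fact that preimages of closed subsets under continuous mappings are closed --- a fact the paper does \emph{not} prove but marks as assumed background. You instead run the sequential argument: $\Ker{f}$ is nonempty since $0_E\in\Ker{f}$ by Lemma~\ref{l:linear-map-preserves-zero}, so Lemma~\ref{l:closed-is-limit-of-sequences} applies; a convergent sequence $(u_n)$ in the kernel is pushed forward by Lemma~\ref{l:compatibility-of-limit-with-continuous-functions} to the constant sequence $(f(u_n))=(0_F)$, whose limit is $0_F$ by Lemma~\ref{l:stationary-sequence-is-convergent}, and uniqueness of limits (Lemma~\ref{l:limit-is-unique}) forces $f(u)=0_F$. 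What your version buys is self-containedness: every step rests on a lemma actually proved in the paper, which is arguably more faithful to the paper's stated goal of elementary, formalization-ready proofs than its own reliance on an admitted preimage fact. What the paper's version buys is brevity and reuse of a standard topological characterization. Your attention to the nonemptiness hypothesis of Lemma~\ref{l:closed-is-limit-of-sequences} is exactly the kind of detail that matters here; no gaps.
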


\begin{proof}
  Direct consequence of
  Definition~\threfc{d:kernel}{$\Ker{f}=f^{-1}\{0_F\}$},
  Lemma~\threfc{l:singleton-is-closed}{$\{0_F\}$ is closed}, and
  \assume{preimages of closed subsets by continuous mappings are closed}.
\end{proof}

\begin{lemma}[compatibility of composition with continuity]
  \label{l:compatibility-of-composition-with-continuity}
  Let $(E,\nEdot)$, $(F,\nFdot)$ and $(G,\nGdot)$ be {\normedvectorspace}s.
  Then,
  \begin{equation}
    \label{e:compatibility-of-composition-with-continuity}
    \forall f \in \LcEF,\,
    \forall g \in \LcFG,\quad
    g \circ f \in \LcEG
    \Conj
    \tnEG{g \circ f} \leq \tnFG{g} \, \tnEF{f}.
  \end{equation}
\end{lemma}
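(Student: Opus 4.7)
The plan is to exploit the operator norm estimation lemma twice, which yields both conclusions simultaneously without any delicate machinery.

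First, I would establish linearity of $g\circ f$. Since Lemma~\ref{l:composition-of-linear-maps-is-bilinear} states that composition is a bilinear map from $\LEFxLFG$ to $\LEG$, partial evaluation at the fixed pair $(f,g)$ immediately gives $g\circ f\in\LEG$. So the only substantive task is to promote this to membership in $\LcEG$ together with the norm inequality.

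Next, I would produce a continuity constant for $g\circ f$ directly. Let $u\in E$ be arbitrary. By definition of composition, $(g\circ f)(u)=g(f(u))$, so applying Lemma~\ref{l:operator-norm-estimation} to the continuous linear map $g\in\LcFG$ at the vector $f(u)\in F$ gives
\begin{equation*}
\nG{g(f(u))} \leq \tnFG{g}\,\nF{f(u)}.
\end{equation*}
Applying the same lemma to $f\in\LcEF$ at $u$ gives $\nF{f(u)}\leq\tnEF{f}\,\nE{u}$. Since both $\tnFG{g}$ and $\tnEF{f}$ are finite (from Theorem~\ref{t:normed-space-of-continuous-linear-maps}) and nonnegative (from Lemma~\ref{l:operator-norm-is-nonnegative}), I may multiply through by $\tnFG{g}\geq 0$ using ordered field properties of $\matR$ and chain the two inequalities to obtain
\begin{equation*}
\nG{(g\circ f)(u)} \leq \bigl(\tnFG{g}\,\tnEF{f}\bigr)\,\nE{u}.
\end{equation*}

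Finally, setting $C=\tnFG{g}\,\tnEF{f}\geq 0$, the previous display shows that $C$ is a continuity constant for the linear map $g\circ f$ in the sense of Definition~\ref{d:bounded-linear-map}. Invoking Lemma~\ref{l:finite-operator-norm-is-continuous}, the equivalence between $\PrenEF$-finite-boundedness and membership of $\LcEG$ yields $g\circ f\in\LcEG$, and the equivalence between ``continuity constant'' and the bound on the operator norm yields $\tnEG{g\circ f}\leq C=\tnFG{g}\,\tnEF{f}$, which is exactly the desired estimate. The proof is entirely routine; the only care required is to check that all the earlier lemmas are applied to objects in the correct spaces (in particular that $f(u)\in F$ so that $\nFdot$ is available on it before $g$ is applied), and that the product of the two finite operator norms remains a legitimate real scalar to feed back into Lemma~\ref{l:finite-operator-norm-is-continuous}.
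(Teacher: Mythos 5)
Your proof is correct and follows essentially the same route as the paper's: linearity via the bilinearity of composition, two applications of the operator norm estimation lemma, and then the equivalences of Lemma~\ref{l:finite-operator-norm-is-continuous} to conclude both continuity and the norm bound. The only (immaterial) difference is that the paper evaluates on the unit sphere and invokes the ``upper bound for $\nG{(g\circ f)(\unitS)}$'' clause of that lemma, whereas you work with an arbitrary $u\in E$ and invoke the ``continuity constant'' clause --- two faces of the same equivalence.
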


\begin{proof}
  Let~$\unitS$ be the unit sphere in~$E$.
  Let $f\in\LcEF$ and $g\in\LcFG$ be continuous linear maps.
  Then, from
  Lemma~\thref{l:composition-of-linear-maps-is-bilinear},
  $g\circ f$ belongs to $\LEF$.
  Let $\xi\in\unitS$ be a unit vector.
  Then, from
  \assume{the definition of composition of functions},
  Lemma~\threfc{l:operator-norm-estimation}{for~$g$ and~$f$},
  Lemma~\threfc{l:equivalent-definition-of-unit-sphere}{$\nE{\xi}=1$}, and
  \assume{field properties of~$\matR$},
  we have
  \begin{equation*}
    \nF{(g \circ f) (\xi)}
    = \nF{g (f (\xi))}
    \leq \tnFG{g} \, \nF{f (\xi)}
    \leq \tnFG{g} \, \tnEF{f} \, \nE{\xi}
    = \tnFG{g} \, \tnEF{f}.
  \end{equation*}
  Hence, from
  Lemma~\threfc{l:finite-operator-norm-is-continuous}{%
    $\tnFG{g}\,\tnEF{f}$ is a finite upper bound for
    $\nF{(g\circ f)(\unitS)}$},
  $g\circ f$ belongs to $\LcEF$ and
  \begin{equation*}
    \tnEG{g \circ f} \leq \tnFG{g} \, \tnEF{f}.
  \end{equation*}
\end{proof}



\begin{lemma}[complete {\normedvectorspace} of continuous linear maps]
  \label{l:complete-normed-space-of-continuous-linear-maps}
  Let $(E,\nEdot)$ and $(F,\nFdot)$ be {\normedvectorspace}s.
  If $(F,\nFdot)$ is complete, then the {\normedvectorspace} $\LcEF$ is also
  complete ({\ie} they are both Banach {\vectorspace}s).
\end{lemma}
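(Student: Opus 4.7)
The plan is to take a Cauchy sequence $(f_n)_{n\in\matN}$ in $\LcEF$ and construct a candidate limit $f$ pointwise, then verify that $f$ is a continuous linear map and that $f_n$ converges to $f$ in operator norm. First, fix $u\in E$. By Lemma~\ref{l:operator-norm-estimation}, $\nF{f_p(u)-f_q(u)}=\nF{(f_p-f_q)(u)}\leq\tnEF{f_p-f_q}\,\nE{u}$, so $(f_n(u))_{n\in\matN}$ is a Cauchy sequence in $F$. Since $(F,\nFdot)$ is complete, it converges; define $f(u)$ as its limit.

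Next, linearity of $f$ follows by passing to the limit in the identity $f_n(\lambda u+\mu v)=\lambda f_n(u)+\mu f_n(v)$: Lemma~\ref{l:vector-addition-is-continuous} and Lemma~\ref{l:scalar-multiplication-is-continuous} ensure that the right-hand side converges to $\lambda f(u)+\mu f(v)$ (via Lemma~\ref{l:compatibility-of-limit-with-continuous-functions}), and the left-hand side converges to $f(\lambda u+\mu v)$ by definition. Using Lemma~\ref{l:linear-map-preserves-linear-combinations}, this gives $f\in\LEF$.

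For continuity of $f$, I would first observe that any Cauchy sequence in a {\normedvectorspace} is bounded: applying Definition~\ref{d:cauchy-sequence} with $\eps=1$ yields some $N$ such that $\tnEF{f_n-f_N}\leq 1$ for $n\geq N$, and the triangle inequality produces a uniform bound $M\geq 0$ with $\tnEF{f_n}\leq M$ for all $n$. Then Lemma~\ref{l:operator-norm-estimation} gives $\nF{f_n(u)}\leq M\,\nE{u}$ for every $u\in E$; passing to the limit using continuity of $\nFdot$ (Lemma~\ref{l:norm-is-continuous}) combined with Lemma~\ref{l:compatibility-of-limit-with-continuous-functions} yields $\nF{f(u)}\leq M\,\nE{u}$. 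Thus $f$ is bounded, hence continuous by Theorem~\ref{t:continuous-linear-map}.

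Finally, to establish $f_n\to f$ in the operator norm, given $\eps>0$, pick $N$ with $\tnEF{f_p-f_q}\leq\eps$ for all $p,q\geq N$. For any unit vector $\xi\in\unitS$ and $p\geq N$, Lemma~\ref{l:operator-norm-estimation} gives $\nF{f_p(\xi)-f_q(\xi)}\leq\eps$ for all $q\geq N$; letting $q\to\infty$ and invoking continuity of $\nFdot$ yields $\nF{(f_p-f)(\xi)}=\nF{f_p(\xi)-f(\xi)}\leq\eps$. Since this holds uniformly over $\xi\in\unitS$, Lemma~\ref{l:finite-operator-norm-is-continuous} gives $\tnEF{f_p-f}\leq\eps$. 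Therefore $(f_n)$ converges to $f$ in $\LcEF$, and by Definition~\ref{d:complete-subset} the space is complete. The main obstacle will be the care required in the two passages to the limit (boundedness of $f$ and operator-norm convergence), where one must combine the estimate holding for each $f_n$ with continuity of the norm to transfer the bound to the pointwise limit without a uniformity mishap.
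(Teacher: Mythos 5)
Your proof is correct and follows essentially the same path as the paper's: pointwise limit of the Cauchy sequence, linearity by passing to the limit via continuity of the vector operations, and operator-norm convergence by letting $q\to\infty$ in the Cauchy estimate over the unit sphere. The only (harmless) variation is in establishing continuity of~$f$: you derive a uniform bound $\tnEF{f_n}\leq M$ from the Cauchy property and transfer it to~$f$, whereas the paper deduces boundedness of $f_p-f$ directly from the limiting estimate and writes $f=f_p-(f_p-f)$; both arguments are sound.
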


\begin{proof}
  \proofparskip{Case $E=\{0_E\}$}
  Then, from
  Lemma~\thref{l:linear-map-preserves-zero},
  $\LcEF$ is also the singleton $\{0_{\LEF}\}$.
  From
  Definition~\thref{d:complete-metric-space}, and
  Lemma~\thref{l:stationary-sequence-is-convergent},
  singletons are trivially complete metric spaces since they possess only one
  sequence which is constant, hence stationary, hence convergent.
  Therefore, $\LcEF$ is complete.
  
  \proofparskip{Case $E\not=\{0_E\}$}
 
  \proofparskip{Pointwise limit}
  Let~$\dEF$ be the distance associated with norm~$\tnEFdot$.
  From Lemma~\thref{l:norm-gives-distance},
  $(\LcEF,\dEF)$ is a metric space.
  Let $(f_n)_{n\in\matN}$ be a Cauchy sequence in $(\LcEF,\dEF)$.
  Then, from
  Definition~\thref{d:cauchy-sequence},
  Definition~\thref{d:distance-associated-with-norm},
  Theorem~\threfc{t:normed-space-of-continuous-linear-maps}{%
    definition of~$\tnEFdot$},
  Definition~\thref{d:operator-norm}, and
  Lemma~\threfc{l:finite-operator-norm-is-continuous}{%
    $\tnEF{f_p-f_q}$ is lower than or equal to continuity constants},
  we have
  \begin{equation}
    \label{e:cauchy-sequence-of-continuous-linear-maps}
    \forall \eps > 0,\,
    \exists N \in \matN,\,
    \forall p, q \in \matN,\quad
    p, q \geq N \Implies
    \forall u \in E,\quad
    \nF{f_p (u) - f_q (u)} \leq \eps \nE{u}.
  \end{equation}
  Let $u\in E$.
  \proofparskip{Case $u\not=0_E$}
  Let $\epsp>0$.
  From
  Definition~\threfc{d:norm}{$\nEdot$ is definite, contrapositive},
  $\nE{u}\not=0$ and from
  Equation~\eqref{e:cauchy-sequence-of-continuous-linear-maps} with
  $\eps=\frac{\epsp}{\nE{u}}$, we have
  \begin{equation*}
    \exists N \in \matN,\,
    \forall p, q \in \matN,\quad
    p, q \geq N \Implies
    \nF{f_p (u) - f_q (u)} \leq \epsp.
  \end{equation*}
  Thus, from
  Definition~\threfc{d:cauchy-sequence}{%
    $(f_n(u))_{n\in\matN}$ is a Cauchy sequence}, and
  Definition~\threfc{d:complete-subset}{$F$ is complete},
  let $f(u)=\lim_{n\rightarrow+\infty}f_n(u)$ be the limit in~$F$.
  
  \noindent
  \proofparskip{Case $u=0_E$}
  Since from
  Lemma~\thref{l:linear-map-preserves-zero},
  we have for all $n\in\matN$, $f_n(0_E)=0_F$, let
  $f(0_E)=0_F=\lim_{n\rightarrow+\infty}f_n(0_E)$.
  
  \proofparskip{Linearity}
  Let $u,v\in E$ and $\lambda,\mu\in\matK$.
  From
  Definition~\threfc{d:linear-map}{%
    for all $n\in\matN$, $f_n$ is a linear map},
  Lemma~\thref{l:vector-addition-is-continuous},
  Lemma~\thref{l:scalar-multiplication-is-continuous}, and
  Lemma~\thref{l:compatibility-of-limit-with-continuous-functions},
  we have
  \begin{multline*}
    f (\lambda u + \mu v)
    = \lim_{n \rightarrow +\infty} f_n (\lambda u + \mu v)
    = \lim_{n \rightarrow +\infty} (\lambda f_n (u) + \mu f_n (v)) \\
    = \lim_{n \rightarrow +\infty} (\lambda f_n (u))
    + \lim_{n \rightarrow +\infty} (\mu f_n (v))
    = \lambda \lim_{n \rightarrow +\infty} f_n (u)
    + \mu \lim_{n \rightarrow +\infty} f_n (v)
    = \lambda f (u) + \mu f (v).
  \end{multline*}
  Hence, from
  Lemma~\thref{l:linear-map-preserves-linear-combinations},
  $f$ belongs to~$\LEF$.
  
  \proofparskip{Continuity}
  In Equation~\eqref{e:cauchy-sequence-of-continuous-linear-maps}, we
  consider a fixed $u\in E$ and we take the limit when~$q$ goes
  to~$+\infty$.
  Thus, from
  Lemma~\thref{l:norm-is-continuous},
  Lemma~\thref{l:compatibility-of-limit-with-continuous-functions}, and
  Definition~\threfc{d:inherited-vector-operations}{on~$\LEF$},
  we have
  \begin{equation}
    \label{e:fp-minus-f-is-bounded}
    \forall \eps > 0,\,
    \exists N \in \matN,\,
    \forall p \in \matN,\quad
    p \geq N \Implies
    \forall u \in E,\quad
    \nF{(f_p - f) (u)} \leq \eps \nE{u}.
  \end{equation}
  Hence, from
  Definition~\threfc{d:bounded-linear-map}{$f_p-f$ is bounded},
  Theorem~\threfc{t:continuous-linear-map}{$f_p-f$ is continuous},
  Theorem~\threfc{t:normed-space-of-continuous-linear-maps}{%
    $\LcEF$ is a {\vectorspace}}, and
  Definition~\threfc{d:space}{$(\LcEF,+)$ is an abelian group},
  we have $f=f_p-(f_p-f)$ belongs to~$\LcEF$.
  
  \proofparskip{Limit for $\tnEFdot$}
  From
  Lemma~\threfc{l:finite-operator-norm-is-continuous}{%
    $\eps$ is a continuity constant for $f_p-f$},
  Equation~\eqref{e:fp-minus-f-is-bounded} becomes
  \begin{equation*}
    \forall \eps > 0,\,
    \exists N \in \matN,\,
    \forall p \in \matN,\quad
    p \geq N \Implies
    \tnEF{f_p - f} \leq \eps.
  \end{equation*}
  Hence, from
  Definition~\threfc{d:distance-associated-with-norm}{%
    for norm~$\tnEFdot$}, and
  Definition~\thref{d:convergent-sequence},
  the sequence $(f_n)_{n\in\matN}$ is convergent in~$\LcEF$ for the
  distance~$\dEF$.
  
  Therefore, from
  Definition~\thref{d:complete-subset},
  the {\normedvectorspace} $\LcEF$ is complete.
\end{proof}

\begin{definition}[topological dual]
  \label{d:topological-dual}
  Let $(E,\nEdot)$ be a {\normedvectorspace}.
  The set of continuous linear forms on~$E$, denoted $\Ep=\LcEmatK$,
  is called the {\em topological dual of $E$}.
\end{definition}

\begin{definition}[dual norm]
  \label{d:dual-norm}
  Let $(E,\nEdot)$ be a {\normedvectorspace}.
  The {\em dual norm associated with~$\nEdot$}, denoted~$\nEpdot$, is the
  operator norm $\tnEmatKdot$ on $\Ep=\LcEmatK$ induced by norms $\nEdot$ and
  $|\cdot|$ (absolute value over~$\matK$).
\end{definition}

\begin{lemma}[topological dual is complete {\normedvectorspace}]
  \label{l:topological-dual-is-complete-normed-space}
  Let $(E,\nEdot)$ be a {\normedvectorspace}.
  Let $\Ep$ be the topological dual of~$E$.
  Let $\nEpdot$ be the associated dual norm.
  Then, $(\Ep,\nEpdot)$ is a complete {\normedvectorspace}.
\end{lemma}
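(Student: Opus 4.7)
The plan is to recognize this as an immediate specialization of the machinery already developed for continuous linear maps between arbitrary normed vector spaces, applied to the target space $\matK$ itself.

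First, I would unfold the definitions. By Definition~\ref{d:topological-dual}, $\Ep = \LcEmatK$, and by Definition~\ref{d:dual-norm}, $\nEpdot$ is exactly the operator norm $\tnEmatKdot$ on $\LcEmatK$ induced by $\nEdot$ and the absolute value $|\cdot|$ on $\matK$. So the statement is nothing but ``$(\LcEmatK, \tnEmatKdot)$ is a complete {\normedvectorspace}''.

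Next, I would verify the two hypotheses required to invoke Lemma~\ref{l:complete-normed-space-of-continuous-linear-maps} with $F = \matK$: namely, that $(\matK, |\cdot|)$ is a {\normedvectorspace} and that it is complete. The former is Lemma~\ref{l:k-is-normed-space}. The latter is the \assume{completeness of the valued fields $\matR$ and $\matC$}, which the paper lists among the basic results supposed known. Then Theorem~\ref{t:normed-space-of-continuous-linear-maps} (applied with $F = \matK$) gives the {\normedvectorspace} structure on $\LcEmatK$ with norm $\tnEmatKdot$, and Lemma~\ref{l:complete-normed-space-of-continuous-linear-maps} delivers completeness.

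There is essentially no obstacle: this lemma is a pure instantiation, and the proof amounts to a chain of ``direct consequence of'' citations. The only thing to be careful about is that the notational identifications are made explicit, so that a reader (or a proof assistant) sees that $\Ep = \LcEmatK$ and $\nEpdot = \tnEmatKdot$ match exactly the signature of Lemma~\ref{l:complete-normed-space-of-continuous-linear-maps}. No case analysis, no construction of a Cauchy limit, and no estimate is needed here, since all of that work was done once and for all in the proof of Lemma~\ref{l:complete-normed-space-of-continuous-linear-maps}.
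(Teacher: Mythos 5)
Your proof is correct and matches the paper's own argument exactly: the paper also proves this lemma as a direct consequence of Definition~\ref{d:dual-norm}, Theorem~\ref{t:normed-space-of-continuous-linear-maps}, Lemma~\ref{l:complete-normed-space-of-continuous-linear-maps}, Lemma~\ref{l:k-is-normed-space}, and the assumed completeness of~$\matK$. Nothing is missing.
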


\begin{proof}
  Direct consequence of
  Definition~\thref{d:dual-norm},
  Theorem~\thref{t:normed-space-of-continuous-linear-maps},
  Lemma~\thref{l:complete-normed-space-of-continuous-linear-maps},
  Lemma~\thref{l:k-is-normed-space}, and
  the \assume{completeness of~$\matK$}.
\end{proof}

\begin{definition}[bra-ket notation]
  \label{d:bra-ket-notation}
  Let $(E,\nEdot)$ be a {\normedvectorspace}.
  A continuous linear form $\fhi\in\Ep$ is a {\em bra},
  denoted~$\bra{\fhi}$.
  A vector $u\in E$ is a {\em ket}, denoted~$\ket{u}$.
  In {\em bra-ket notation} (or {\em Dirac notation}, or
  {\em duality pairing}), the application $\fhi(u)$ is denoted
  $\pdE{\fhi}{u}$.
\end{definition}

\begin{lemma}[bra-ket is bilinear map]
  \label{l:bra-ket-is-bilinear-map}
  Let $(E,\nEdot)$ be a {\normedvectorspace}.
  Then, $\pdEdotdot$ is a bilinear map from $\EpxE$ to~$\matK$.
\end{lemma}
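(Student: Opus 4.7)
The plan is a direct unpacking of Definition~\ref{d:bilinear-map} after noting that the domain and codomain are indeed vector spaces. First I observe that $(\Ep, \nEpdot)$ is a {\normedvectorspace} by Lemma~\ref{l:topological-dual-is-complete-normed-space}, so in particular a {\vectorspace} by Definition~\ref{d:normed-space}; that $E$ is a {\vectorspace} by Definition~\ref{d:normed-space}; that $\matK$ is a {\vectorspace} by Lemma~\ref{l:k-is-space}; and that hence $\EpxE$ is a {\vectorspace} by Lemma~\ref{l:product-is-space}. So the bra-ket $\pdEdotdot$, via Definition~\ref{d:bra-ket-notation}, is a well-defined mapping from $\EpxE$ to $\matK$.

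I then verify the four conditions of Definition~\ref{d:bilinear-map} in pairs according to their source. The \emph{left} additivity and \emph{left} homogeneity of degree~1 come from the vector space structure of the topological dual: by Theorem~\ref{t:normed-space-of-continuous-linear-maps}, $\Ep = \LcEmatK$ is a subspace of $\LEmatK$, and its vector operations are those inherited from $\FSp{E}{\matK}$ via Definition~\ref{d:inherited-vector-operations} and Lemma~\ref{l:space-of-linear-maps}. Thus for all $\fhi, \psi \in \Ep$, all $\lambda \in \matK$, and all $u \in E$,
\begin{equation*}
  \pdE{\fhi + \psi}{u} = (\fhi + \psi)(u) = \fhi(u) + \psi(u) = \pdE{\fhi}{u} + \pdE{\psi}{u},
\end{equation*}
\begin{equation*}
  \pdE{\lambda \fhi}{u} = (\lambda \fhi)(u) = \lambda \fhi(u) = \lambda \pdE{\fhi}{u}.
\end{equation*}

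The \emph{right} additivity and \emph{right} homogeneity of degree~1 come instead from the linearity of each individual $\fhi \in \Ep$: by Definitions~\ref{d:topological-dual}, \ref{d:set-of-continuous-linear-maps}, \ref{d:set-of-linear-maps}, and~\ref{d:linear-form}, every $\fhi \in \Ep$ is a linear map from~$E$ to~$\matK$, hence by Lemma~\ref{l:linear-map-preserves-linear-combinations} (or directly Definition~\ref{d:linear-map}) satisfies $\fhi(u + v) = \fhi(u) + \fhi(v)$ and $\fhi(\lambda u) = \lambda \fhi(u)$ for all $u, v \in E$ and $\lambda \in \matK$. Rewriting in duality pairing notation via Definition~\ref{d:bra-ket-notation} yields
\begin{equation*}
  \pdE{\fhi}{u + v} = \pdE{\fhi}{u} + \pdE{\fhi}{v},
  \qquad
  \pdE{\fhi}{\lambda u} = \lambda \pdE{\fhi}{u}.
\end{equation*}
Combining these four identities with Definition~\ref{d:bilinear-map} concludes the proof.

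There is no real obstacle here: the statement is purely a bookkeeping unpacking of notation. The only subtle point, worth stating carefully for the eventual \coq\ formalization, is that the two pairs of properties have genuinely different origins --- left-side properties from the inherited vector space structure on $\Ep$, right-side properties from the defining linearity of each form $\fhi$ --- and in particular that one should not conflate the scalar multiplication on $\Ep$ (Definition~\ref{d:inherited-vector-operations}) with the field multiplication in $\matK$ appearing on the right-hand sides.
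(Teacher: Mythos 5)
Your proof is correct and follows essentially the same route as the paper: well-definedness of the mapping via Lemmas~\ref{l:topological-dual-is-complete-normed-space}, \ref{l:k-is-space} and~\ref{l:product-is-space}, left linearity from the inherited vector operations on~$\Ep$ (Definition~\ref{d:inherited-vector-operations}), and right linearity from the linearity of each $\fhi$ (Definition~\ref{d:linear-map}). The only cosmetic difference is that the paper checks preservation of linear combinations $\lambda\fhi+\mu\psi$ and $\lambda u+\mu v$ in one pass, whereas you verify additivity and homogeneity separately; the content is identical.
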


\begin{proof}
  From
  Lemma~\thref{l:product-is-space}, and
  Lemma~\thref{l:topological-dual-is-complete-normed-space},
  $\EpxE$ is a {\vectorspace}.
  From
  Lemma~\thref{l:k-is-space},
  $\matK$~is a {\vectorspace}.
  From
  Definition~\thref{d:bra-ket-notation},
  $\pdEdotdot$ is a mapping from $\EpxE$ to~$\matK$.
  
  Let $\lambda,\mu\in\matR$ be scalars.
  Let $\fhi,\psi\in\Ep$ be continuous linear forms on~$E$ ({\ie} bras).
  Let $u,v\in E$ be vectors ({\ie} kets).
  Then, from
  Definition~\thref{d:bra-ket-notation}, and
  Definition~\threfc{d:inherited-vector-operations}{on~$\Ep$},
  we have
  \begin{equation*}
    \pdE{\lambda \fhi + \mu \psi}{u}
    = (\lambda \fhi + \mu \psi) (u)
    = \lambda \fhi (u) + \mu \psi (u)
    = \lambda \pdE{\fhi}{u} + \mu \pdE{\psi}{u}.
  \end{equation*}
  Moreover, from
  Definition~\thref{d:bra-ket-notation}, and
  Definition~\threfc{d:linear-map}{$\fhi$ is linear},
  we have
  \begin{equation*}
    \pdE{\fhi}{\lambda u + \mu v}
    = \fhi (\lambda u + \mu v)
    = \lambda \fhi (u) + \mu \fhi (v)
    = \lambda \pdE{\fhi}{u} + \mu \pdE{\fhi}{v}.
  \end{equation*}
  
  Therefore, from
  Definition~\thref{d:bilinear-map},
  $\pdEdotdot$ is left and right linear, hence bilinear.
\end{proof}

\paragraph{Bounded bilinear form}

\begin{definition}[bounded bilinear form]
  \label{d:bounded-bilinear-form}
  Let $(E,\nEdot)$ be a {\normedvectorspace}.
  A bilinear form $\fhi\in\LdE$ is {\em bounded} iff
  \begin{equation}
    \label{e:bounded-bilinear-form}
    \exists C \geq 0,\,
    \forall u, v \in E,\quad
    | \fhi (u, v) | \leq C \, \nE{u} \, \nE{v}.
  \end{equation}
  Then, $C$~is called {\em continuity constant of~$\fhi$}.
\end{definition}

\begin{lemma}[representation for bounded bilinear form]
  \label{l:representation-for-bounded-bilinear-form}
  Let $(E,\nEdot)$ be a {\normedvectorspace}.
  Let $\Blf\in\LdE$ be a bilinear form on~$E$.
  Assume that~$\Blf$ is bounded.
  Then, there exists a unique continuous linear map $A\in\LcEEp$ such that
  \begin{equation}
    \label{e:link-cont-bilinapp}
    \forall u, v \in E,\quad
    \blfE{u}{v} = \pdE{A (u)}{v} = A (u) (v).
  \end{equation}
  Moreover, for all~$C$ continuity constant of~$\Blf$, we have
  \begin{equation}
    \label{e:link-cont-bilinapp-norm}
    \tnEEp{A} \leq C.
  \end{equation}
\end{lemma}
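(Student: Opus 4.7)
The plan is to follow the sketch of Subsection~\ref{ss:sketch-of-proof-of-repr-bilin-forms}: build the representative~$A$ by currying the bilinear form, verifying at each stage that the required linearity and continuity bounds hold, and then get uniqueness essentially for free from the fact that $\LcEEp$ is a {\normedvectorspace}.

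First, for each fixed $u\in E$, I would introduce the partial application $A_u=(v\mapsto\blfE{u}{v})$. Right additivity and right homogeneity of~$\Blf$ (Definition~\thref{d:bilinear-map}) show at once that $A_u\in\LEmatK$, and the boundedness hypothesis gives $|A_u(v)|=|\blfE{u}{v}|\leq(C\,\nE{u})\,\nE{v}$ for every $v\in E$; hence $C\,\nE{u}$ is a continuity constant for~$A_u$, so by Theorem~\thref{t:continuous-linear-map} (via Lemma~\thref{l:finite-operator-norm-is-continuous}) $A_u$ belongs to~$\Ep$ with $\nEp{A_u}\leq C\,\nE{u}$.

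Next I would define $A=(u\mapsto A_u)$ as a map $E\to \Ep$. Linearity of~$A$ follows from left additivity and left homogeneity of~$\Blf$ together with the definition of the inherited vector operations on~$\Ep$ (Definition~\thref{d:inherited-vector-operations}): for all $\lambda,\mu\in\matK$ and $u_1,u_2,v\in E$,
\begin{equation*}
  A(\lambda u_1+\mu u_2)(v)
  = \blfE{\lambda u_1+\mu u_2}{v}
  = \lambda\,A(u_1)(v)+\mu\,A(u_2)(v)
  = (\lambda A(u_1)+\mu A(u_2))(v),
\end{equation*}
so $A\in\LSp{E}{\Ep}$ by Lemma~\thref{l:linear-map-preserves-linear-combinations}. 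Since $\nEp{A(u)}=\nEp{A_u}\leq C\,\nE{u}$, the scalar~$C$ is a continuity constant for~$A$, and another application of Lemma~\thref{l:finite-operator-norm-is-continuous} yields $A\in\LcEEp$ together with the bound $\tnEEp{A}\leq C$.

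Finally, for uniqueness, suppose $A,B\in\LcEEp$ both satisfy~\eqref{e:link-cont-bilinapp}. Then for all $u,v\in E$, $(A(u)-B(u))(v)=A(u)(v)-B(u)(v)=\blfE{u}{v}-\blfE{u}{v}=0$, so $A(u)-B(u)$ vanishes on all of~$E$ and equals $0_\Ep$ by Definition~\thref{d:inherited-vector-operations}; hence $A-B=0_{\LSp{E}{\Ep}}$ using the inherited operations on~$\LSp{E}{\Ep}$, and Theorem~\thref{t:normed-space-of-continuous-linear-maps} (which guarantees $\LcEEp$ is a subspace) gives $A=B$. No step looks truly delicate: the only mild obstacle is being careful that the ``obvious'' equalities like $A(\lambda u_1+\mu u_2)=\lambda A(u_1)+\mu A(u_2)$ are really statements in $\Ep$ (so they must be unfolded pointwise via the inherited vector operations before invoking bilinearity of~$\Blf$), a bookkeeping point which will matter for the planned {\coq} formalization.
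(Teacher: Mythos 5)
Your proof is correct and follows essentially the same construction as the paper: curry $\Blf$ into $A_u\in\Ep$ with $\nEp{A_u}\leq C\,\nE{u}$ via right linearity and boundedness, show $A=(u\mapsto A_u)$ is linear via left linearity and the inherited operations on~$\Ep$, conclude $A\in\LcEEp$ with $\tnEEp{A}\leq C$ from Lemma~\ref{l:finite-operator-norm-is-continuous}, and get uniqueness by subtracting two representatives in the space $\LcEEp$. The only cosmetic difference is that the paper passes through the unit sphere to certify $\tnEEp{A}\leq C$, whereas you invoke the continuity-constant characterization directly; both are covered by the same lemma.
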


\begin{proof}
  From
  Definition~\thref{d:set-of-bilinear-forms}, and
  Definition~\thref{d:bilinear-form},
  $\Blf$~is a bilinear map.
  
  \proofparskip{Existence}
  Let $u\in E$ be a vector.
  Let~$A_u:E\rightarrow\matR$ be the mapping defined by
  \begin{equation*}
    \forall v \in E,\quad
    A_u(v) = \blfE{u}{v}.
  \end{equation*}
  Let $\lambda,\lambdap\in\matR$ be scalars.
  Let $v,\vp\in E$ be vectors.
  Then, from
  Definition~\threfc{d:bilinear-map}{$\Blf$~is right linear},
  we have
  \begin{equation*}
    A_u (\lambda v + \lambdap \vp)
    = \blfE{u}{\lambda v + \lambdap \vp}
    = \lambda \blfE{u}{v} + \lambdap \blfE{u}{\vp}
    = \lambda A_u(v) + \lambdap A_u(\vp).
  \end{equation*}
  Hence, from
  Lemma~\thref{l:linear-map-preserves-linear-combinations}, and
  Definition~\thref{d:linear-form},
  $A_u$~is a linear form on~$E$.
  
  Let $v\in E$ be a vector.
  From
  Definition~\threfc{d:bounded-bilinear-form}{for~$\Blf$},
  let $C\geq 0$ such that, for all $\up,\vp\in E$, we have
  $|\Blf(\up,\vp)|\leq C\,\nE{\up}\,\nE{\vp}$.
  Let $C_u=C\,\nE{u}$.
  Then, from
  Lemma~\thref{l:norm-is-nonnegative}, and
  \assume{ordered field properties of~$\matR$},
  we have $C_u\geq 0$ and
  \begin{equation*}
    | A_u (v) |
    = | \Blf (u, v) |
    \leq C \, \nE{u} \, \nE{v}
    = C_u \, \nE{v}.
  \end{equation*}
  Hence, from
  Definition~\threfc{d:bounded-linear-map}{$A_u$~is bounded},
  Definition~\thref{d:topological-dual},
  Definition~\thref{d:dual-norm}, and
  Lemma~\threfc{l:finite-operator-norm-is-continuous}{$A_u\in\Ep$},
  we have
  \begin{equation*}
    \nEp{A_u}
    \leq C_u
    = C \, \nE{u}.
  \end{equation*}
  
  Let $A:E\rightarrow\Ep$ be the mapping defined by, for all $u\in E$,
  $A(u)=A_u$, {\ie}
  \begin{equation*}
    \forall u, v \in E,\quad
    \pdE{A (u)}{v} = A (u) (v) = A_u (v) = \Blf (u,v).
  \end{equation*}
  Let $\lambda,\lambdap\in\matR$ be scalars.
  Let $u,\up,v\in E$ be vectors.
  Then, from
  Definition~\threfc{d:bilinear-map}{$\Blf$~is left linear},
  Definition~\threfc{d:inherited-vector-operations}{on~$\Ep$}, and
  Lemma~\threfc{l:topological-dual-is-complete-normed-space}{$\Ep$~is space},
  we have
  \begin{eqnarray*}
    A (\lambda u + \lambdap \up) (v)
    & = & A_{(\lambda u + \lambdap \up)} (v) \\
    & = & \Blf (\lambda u + \lambdap \up, v) \\
    & = & \lambda \Blf (u, v) + \lambdap \Blf (\up, v) \\
    & = & \lambda A_u (v) + \lambdap A_\up (v) \\
    & = & \lambda A (u) (v) + \lambdap A (\up) (v) \\
    & = & (\lambda A (u) + \lambdap A (\up)) (v).
  \end{eqnarray*}
  Hence, from
  Lemma~\thref{l:linear-map-preserves-linear-combinations},
  $A$~is a linear map from~$E$ to~$\Ep$.
  
  Let~$\unitS$ be the unit sphere of~$E$.
  Let $\xi\in\unitS$ be a unit vector.
  Then, from
  Lemma~\threfc{l:equivalent-definition-of-unit-sphere}{$\nE{\xi}=1$},
  we have
  \begin{equation*}
    \nEp{A (\xi)}
    = \nEp{A_\xi}
    \leq C \, \nE{\xi}
    = C.
  \end{equation*}
  Hence, from
  Lemma~\threfc{l:finite-operator-norm-is-continuous}{%
    $C$~is a finite upper bound for $\nEp{A(\unitS)}$},
  $A$~belongs to $\LcEEp$ and $\tnEEp{A}\leq C$.
  
  \proofparskip{Uniqueness}
  Let $A,A^\prime\in\LcEEp$ be continuous linear maps such that
  \begin{equation*}
    \forall u, v \in E,\quad
    \blfE{u}{v} = \pdE{A (u)}{v} = \pdE{A^\prime (u)}{v}.
  \end{equation*}
  From
  Theorem~\thref{t:normed-space-of-continuous-linear-maps},
  Definition~\threfc{d:normed-space}{$\LcEEp$ is a {\vectorspace}}, and
  Definition~\thref{d:vector-subtraction},
  let $B=A-A^\prime\in\LcEEp$.
  Let $u,v\in E$ be vectors.
  Then, from
  Lemma~\thref{l:bra-ket-is-bilinear-map},
  Lemma~\thref{l:k-is-space}, and
  Definition~\threfc{d:space}{$(\matK,+)$ is an abelian group},
  we have
  \begin{equation*}
    \pdE{B (u)}{v}
    = \pdE{A (u)}{v} - \pdE{A^\prime (u)}{v}
    = \Blf(u, v) - \Blf(u, v)
    = 0.
  \end{equation*}
  Thus, from
  Definition~\thref{d:bra-ket-notation},
  $B(u)=0_\Ep$, and then $B=0_{\LcEEp}$.
  Hence, from
  Definition~\threfc{d:space}{$(\LcEEp,+)$ is an abelian group},
  $A=A^\prime$.
\end{proof}

\begin{definition}[coercive bilinear form]
  \label{d:coercive-bilinear-form}
  Let $(E,\nEdot)$ be a real {\normedvectorspace}.
  A bilinear form $\Blf\in\LdE$ is {\em coercive} (or {\em elliptic}) iff
  \begin{equation}
    \label{e:bilinear-form-coercive}
    \exists \alpha > 0,\,
    \forall u \in E,\quad
    \blfE{u}{u} \geq \alpha \nE{u}^2.
  \end{equation}
  Then, $\alpha$~is called {\em coercivity constant of~$\fhi$}.
\end{definition}

\begin{lemma}[coercivity constant is less than continuity constant]
  \label{l:coercivity-constant-is-less-than-continuity-constant}
  Let $(E,\nEdot)$ be a real {\normedvectorspace}.
  Let $\Blf\in\LdE$ be a bilinear form on~$E$.
  Assume that~$\Blf$ is continuous with constant $C\geq 0$, and coercive with
  constant~$\alpha>0$.
  Then, $\alpha\leq C$.
\end{lemma}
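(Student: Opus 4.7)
The plan is to evaluate both the coercivity lower bound and the boundedness upper bound at the diagonal pair $(u,u)$ for some nonzero test vector $u\in E$, and then cancel $\nE{u}^2>0$.

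Pick any $u\in E$ with $u\neq 0_E$ (such a vector exists in the intended setting $E\neq\{0_E\}$). From Definition~\thref{d:coercive-bilinear-form} applied to $u$, coercivity yields $\alpha\,\nE{u}^2\leq\blfE{u}{u}$. On the other hand, since $\blfE{u}{u}$ is a real number, the elementary bound $x\leq|x|$ combined with Definition~\thref{d:bounded-bilinear-form} applied to the pair $(u,u)$ gives
\[ \blfE{u}{u} \leq |\blfE{u}{u}| \leq C\,\nE{u}\,\nE{u} = C\,\nE{u}^2. \]
Chaining these inequalities produces $\alpha\,\nE{u}^2\leq C\,\nE{u}^2$.

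To conclude, since $u\neq 0_E$, Definition~\thref{d:norm} (point separation, contrapositive) together with Lemma~\thref{l:norm-is-nonnegative} gives $\nE{u}>0$, hence $\nE{u}^2>0$. Then ordered field properties of $\matR$ allow cancellation to yield $\alpha\leq C$.

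There is essentially no obstacle in this argument; the only subtle point is that it requires the existence of some nonzero vector in $E$. In the degenerate case $E=\{0_E\}$ the statement fails (coercivity holds vacuously for every $\alpha>0$ while $C=0$ is a continuity constant), so we proceed under the implicit nondegeneracy assumption $E\neq\{0_E\}$, which is standard in this context.
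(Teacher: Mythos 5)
Your proof is correct and follows essentially the same route as the paper: evaluate coercivity and boundedness at a nonzero diagonal pair $(u,u)$, chain $\alpha\,\nE{u}^2\leq\blfE{u}{u}\leq|\blfE{u}{u}|\leq C\,\nE{u}^2$, and cancel $\nE{u}^2>0$. Your remark that the statement degenerates when $E=\{0_E\}$ is a fair observation that the paper's own proof also glosses over by silently picking a nonzero vector.
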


\begin{proof}
  Let $u\in E$ be a vector.
  Assume that $u\not=0_E$.
  Then, from
  Definition~\threfc{d:norm}{$\nEdot$ is definite, contrapositive}, and
  Lemma~\threfc{l:norm-is-nonnegative}{for $\nEdot$},
  we have $\nE{u}>0$.
  From
  Definition~\thref{d:coercive-bilinear-form},
  \assume{properties of the absolute value on~$\matR$}, and
  Definition~\threfc{d:bounded-bilinear-form}{with $v=u$},
  we have
  \begin{equation*}
    \alpha \nE{u}^2 \leq \fhi(u, u) \leq | \fhi(u, u) | \leq C \nE{u}^2.
  \end{equation*}
  Hence, from
  \assume{ordered field properties of~$\matR$},
  $\alpha\leq C$.
\end{proof}

\subsection{Inner product space}
\label{ss:inner-product-space}

\begin{definition}[inner product]
  \label{d:inner-product}
  Let~$G$ be a real {\vectorspace}.
  A mapping $\psGdotdot:\GxG\rightarrow\matR$ is an
  {\em inner product on~$G$} iff
  it is a bilinear form on~$G$ that is symmetric, nonnegative, and definite:
  \begin{eqnarray}
    \label{e:ip-symmetry}
    \forall u, v \in G, & & \psG{u}{v} = \psG{v}{u}; \\
    \label{e:ip-nonnegative}
    \forall u \in G, & & \psG{u}{u} \geq 0; \\
    \label{e:ip-definite}
    \forall u \in G, & & \psG{u}{u} = 0 \Implies u = 0_G.
  \end{eqnarray}
\end{definition}

\begin{remark}
  Note that the symmetry property~\eqref{e:ip-symmetry} implies the
  equivalence between left additivity~\eqref{e:bl-left-additive} and right
  additivity~\eqref{e:bl-right-additive} in the definition of a bilinear
  map.
\end{remark}

\begin{remark}
  Most results below are valid on a semi-inner space in which the definite
  property~\eqref{e:ip-definite} is dropped.
  The associated norm is then a semi-norm (the separation property is
  dropped).
\end{remark}

\begin{remark}
  In the case of a complex {\vectorspace}, the symmetry property becomes a
  conjugate symmetry property.
  In the sequel, we specify that the {\vectorspace} is real only in the case
  where the very same statement does not hold in a complex {\vectorspace}.
  When proofs differ, they are only given in the real case.
\end{remark}

\begin{definition}[inner product space]
  \label{d:inner-product-space}
  $(G,\psGdotdot)$ is an {\em inner product space} (or
  {\em pre-Hilbert space}) iff
  $G$~is a {\vectorspace} and $\psGdotdot$ is an inner product on~$G$.
\end{definition}

\begin{lemma}[inner product subspace]
  \label{l:inner-product-subspace}
  Let $(G,\psGdotdot)$ be an inner product space.
  Let~$F$ be a subspace of~$G$.
  Then, $F$~equipped with the restriction to~$F$ of the inner product
  $\psGdotdot$ is an inner product space.
\end{lemma}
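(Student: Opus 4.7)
The plan is to unfold the definition of inner product space and check each of the four defining conditions in turn, using the fact that a subspace inherits vector operations and that the relevant properties hold over all of $G\times G$ and therefore over the subset $F\times F \subset G\times G$.

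First I would invoke Definition~\ref{d:subspace} to get that $F$, equipped with the restrictions of $+$ and $\cdot$, is a {\vectorspace}. This takes care of the ``$F$ is a {\vectorspace}'' clause of Definition~\ref{d:inner-product-space}. Denote by $\psGdotdot_{|F}$ the restriction of $\psGdotdot$ to $\FxF$; this is manifestly a mapping $\FxF \rightarrow \matR$, so it is a candidate to be an inner product on~$F$.

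Next I would verify the four properties of Definition~\ref{d:inner-product}. For bilinearity, given $u,v,w \in F$ and $\lambda \in \matR$, all of $u,v,w$ also lie in~$G$, so the left/right additivity and homogeneity identities~\eqref{e:bl-left-additive}--\eqref{e:bl-homogeneous} applied to $\psGdotdot$ yield the same identities for $\psGdotdot_{|F}$ (noting that $+_{|F}$ and $\cdot_{|F}$ agree with $+$ and $\cdot$ on arguments drawn from~$F$ by Definition~\ref{d:subspace}). Hence $\psGdotdot_{|F}$ is a bilinear form on~$F$, using Definition~\ref{d:bilinear-form} and Definition~\ref{d:bilinear-map}. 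Symmetry~\eqref{e:ip-symmetry}, nonnegativity~\eqref{e:ip-nonnegative} and definiteness~\eqref{e:ip-definite} all follow identically: for any $u,v \in F \subset G$, the respective properties of $\psGdotdot$ yield $\psG{u}{v}_{|F} = \psG{v}{u}_{|F}$, $\psG{u}{u}_{|F} \geq 0$, and $\psG{u}{u}_{|F} = 0 \Implies u = 0_G = 0_F$, where the last equality $0_G = 0_F$ comes from Lemma~\ref{l:closed-under-vector-operations-is-subspace}.

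Combining these, $\psGdotdot_{|F}$ is an inner product on~$F$ by Definition~\ref{d:inner-product}, and then $(F,\psGdotdot_{|F})$ is an inner product space by Definition~\ref{d:inner-product-space}. There is no real obstacle here; the only point requiring a small amount of care is matching up the zero vectors $0_G$ and $0_F$ when transferring the definiteness condition, which is handled by the standard identification $0_F = 0_G$ for subspaces.
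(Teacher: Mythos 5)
Your proof is correct and follows essentially the same route as the paper, which simply records this as a direct consequence of Definition~\ref{d:subspace}, Definition~\ref{d:inner-product} (the restriction being ``trivially'' an inner product on~$F$), and Definition~\ref{d:inner-product-space}. You merely spell out the trivial verification, including the harmless point that $0_F=0_G$.
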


\begin{proof}
  Direct consequence of
  Definition~\threfc{d:subspace}{$F$~is a subset of~$G$ and~$F$ is a space},
  Definition~\threfc{d:inner-product}{%
    the restriction of $\psGdotdot$ to~$F$ is trivially an inner product
    on~$F$}, and
  Definition~\thref{d:inner-product-space}.
\end{proof}

\begin{lemma}[inner product with zero is zero]
  \label{l:inner-product-with-zero-is-zero}
  Let $(G,\psGdotdot)$ be an inner product space.
  Then,
  \begin{equation}
    \label{e:inner-product-with-zero-is-zero}
    \forall u \in G,\quad
    \psG{0}{u} = \psG{u}{0} = 0.
  \end{equation}
\end{lemma}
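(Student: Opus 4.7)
The plan is to reduce the statement to a direct application of bilinearity of the inner product. By Definition~\ref{d:inner-product}, the inner product $\psGdotdot$ is a bilinear form on~$G$, hence in particular a bilinear map from $\GxG$ to~$\matR$. I therefore aim to combine the left homogeneity of degree~1 provided by Definition~\ref{d:bilinear-map} with Lemma~\thref{l:zero-times-yields-zero}.

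Concretely, I would fix $u \in G$ and first rewrite $0_G = 0_\matK \cdot 0_G$ using Lemma~\thref{l:zero-times-yields-zero} in~$G$. Applying left homogeneity of the bilinear form then yields $\psG{0_G}{u} = \psG{0_\matK \cdot 0_G}{u} = 0_\matK \cdot \psG{0_G}{u}$, and field properties of~$\matR$ give $0_\matK \cdot \psG{0_G}{u} = 0$. This establishes $\psG{0_G}{u} = 0$.

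For the second equality $\psG{u}{0_G} = 0$, the most economical route is to invoke the symmetry property~\eqref{e:ip-symmetry} from Definition~\ref{d:inner-product} and conclude $\psG{u}{0_G} = \psG{0_G}{u} = 0$. One could alternatively repeat the previous argument using right homogeneity, but this would duplicate work for no benefit.

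There is no real obstacle in this proof: it is a routine bookkeeping lemma. The only point of care is that the bilinear map definition offers homogeneity parameterized by a scalar multiplication, so one must explicitly rewrite the vector~$0_G$ as the scalar multiple $0_\matK \cdot 0_G$ via Lemma~\thref{l:zero-times-yields-zero} before homogeneity becomes applicable.
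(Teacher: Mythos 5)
Your proof is correct. It differs from the paper's in the way the zero vector is exploited: the paper first applies symmetry to reduce to $\psG{u}{0_G}$, then writes $0_G = 0_G - 0_G$ and uses right additivity/linearity of the bilinear map to get $\psG{u}{0_G} - \psG{u}{0_G} = 0$; you instead write $0_G = 0_\matK \cdot 0_G$ via Lemma~\ref{l:zero-times-yields-zero} and use left homogeneity of degree~1 to pull out the scalar $0_\matK$. Both routes are equally elementary and rest on the same definitional material (bilinearity plus one basic vector-space fact), so neither buys anything substantial over the other; your homogeneity-based argument is arguably marginally cleaner in that it avoids the implicit use of $-v = (-1)\cdot v$ hidden in the paper's subtraction step, while the paper's version avoids invoking the scalar-zero lemma. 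Your use of symmetry to dispatch the second equality matches the paper's strategy exactly.
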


\begin{proof}
  Let $u\in G$.
  From
  Definition~\threfc{d:inner-product}{%
    $\psGdotdot$ is symmetric and a bilinear map},
  Definition~\threfc{d:space}{$(G,+)$ is an  abelian group},
  Definition~\thref{d:vector-subtraction},
  Definition~\threfc{d:bilinear-map}{$\psGdotdot$ is right linear}, and
  \assume{field properties of~$\matR$},
  we have
  \begin{equation*}
    \psG{0_G}{u}
    = \psG{u}{0_G}
    = \psG{u}{0_G-0_G}
    = \psG{u}{0_G} - \psG{u}{0_G}
    = 0.
  \end{equation*}
\end{proof}

\begin{lemma}[square expansion plus]
  \label{l:square-expansion-plus}
  Let $(G,\psGdotdot)$ be a real inner product space.
  Then,
  \begin{equation}
    \label{e:square-expansion-plus}
    \forall u, v \in G,\quad
    \psG{u + v}{u + v} = \psG{u}{u} + 2 \psG{u}{v} + \psG{v}{v}.
  \end{equation}
\end{lemma}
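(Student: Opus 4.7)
The plan is to prove this identity by a direct expansion using the bilinearity and symmetry of the inner product, both of which are built into Definition~\ref{d:inner-product}. Since the statement is about a real inner product space, there is no complex conjugation to worry about, and the symmetry property $\psG{u}{v} = \psG{v}{u}$ applies without alteration.

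First I would fix arbitrary $u, v \in G$, and invoke Definition~\ref{d:inner-product} to note that $\psGdotdot$ is a bilinear form on~$G$, {\ie} a bilinear map from $\GxG$ to~$\matR$ in the sense of Definition~\ref{d:bilinear-map}. Then I would expand $\psG{u+v}{u+v}$ using left additivity~\eqref{e:bl-left-additive} to get $\psG{u}{u+v} + \psG{v}{u+v}$, and then apply right additivity~\eqref{e:bl-right-additive} to each of the two resulting terms to obtain
\begin{equation*}
  \psG{u+v}{u+v} = \psG{u}{u} + \psG{u}{v} + \psG{v}{u} + \psG{v}{v}.
\end{equation*}

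Next, I would use the symmetry property~\eqref{e:ip-symmetry} of the inner product to replace $\psG{v}{u}$ by $\psG{u}{v}$, and finish by combining the two equal middle terms using field properties of~$\matR$ to get $2\psG{u}{v}$, yielding the announced identity. There is no real obstacle here, since every step is a direct application of a clause of Definition~\ref{d:inner-product} or of Definition~\ref{d:bilinear-map} together with elementary arithmetic in~$\matR$; the only thing to be careful about is to cite the symmetry property explicitly so that the proof transfers transparently to a formal development where bilinearity alone would leave $\psG{u}{v}$ and $\psG{v}{u}$ as syntactically distinct terms.
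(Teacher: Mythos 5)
Your proof is correct and follows essentially the same route as the paper: expand $\psG{u+v}{u+v}$ into four terms by bilinearity (Definitions~\ref{d:inner-product} and~\ref{d:bilinear-map}), then use symmetry~\eqref{e:ip-symmetry} and field properties of~$\matR$ to combine the cross terms into $2\psG{u}{v}$. Your explicit separation into left- and right-additivity steps is just a finer-grained version of the paper's one-line computation.
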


\begin{proof}
  Let $u,v\in G$ be vectors.
  From
  Definition~\threfc{d:inner-product}{%
    $\psGdotdot$ is a bilinear map and symmetric},
  Definition~\thref{d:bilinear-map}, and
  \assume{field properties of~$\matR$},
  we have
  \begin{equation*}
    \psG{u + v}{u + v}
    = \psG{u}{u} + \psG{u}{v} + \psG{v}{u} + \psG{v}{v}
    = \psG{u}{u} + 2 \psG{u}{v} + \psG{v}{v}.
  \end{equation*}
\end{proof}

\begin{lemma}[square expansion minus]
  \label{l:square-expansion-minus}
  Let $(G,\psGdotdot)$ be a real inner product space.
  Then,
  \begin{equation}
    \label{e:square-expansion-minus}
    \forall u, v \in G,\quad
    \psG{u - v}{u - v} = \psG{u}{u} - 2 \psG{u}{v} + \psG{v}{v}.
  \end{equation}
\end{lemma}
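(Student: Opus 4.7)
The plan is to reduce this statement to \emph{square expansion plus} (Lemma~\ref{l:square-expansion-plus}) by rewriting $u-v$ as $u+(-v)$ and then pushing the minus signs through the inner product using bilinearity.

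First, I would invoke Definition~\ref{d:vector-subtraction} to rewrite $\psG{u-v}{u-v}$ as $\psG{u+(-v)}{u+(-v)}$, where $-v$ denotes the opposite of~$v$ in the abelian group $(G,+)$. Applying Lemma~\ref{l:square-expansion-plus} with second argument $-v$ then yields
\begin{equation*}
  \psG{u+(-v)}{u+(-v)} = \psG{u}{u} + 2\psG{u}{-v} + \psG{-v}{-v}.
\end{equation*}

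Next, I would simplify the two terms involving $-v$. Using Lemma~\ref{l:minus-times-yields-opposite-vector} (with $\lambda=1$), we have $-v = (-1)\cdot v$. Then Definition~\ref{d:inner-product} (bilinear, hence homogeneous in each argument of degree~1, via Definition~\ref{d:bilinear-map}) gives $\psG{u}{-v} = \psG{u}{(-1)\cdot v} = -\psG{u}{v}$, and applying homogeneity in both slots yields $\psG{-v}{-v} = (-1)(-1)\psG{v}{v} = \psG{v}{v}$, using field properties of~$\matR$.

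Substituting these two equalities into the expansion produces the claimed identity. No step here should present any real obstacle; the only minor subtlety is being explicit about which results authorize the rewriting $-v=(-1)\cdot v$ and the extraction of $-1$ through each argument of~$\psGdotdot$, since the statements used so far treat scalar multiplication and the additive opposite as distinct operations linked by Lemma~\ref{l:minus-times-yields-opposite-vector}.
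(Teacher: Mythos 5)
Your proof is correct and follows exactly the route of the paper: rewrite $u-v$ as $u+(-v)$ via Definition~\ref{d:vector-subtraction}, apply Lemma~\ref{l:square-expansion-plus}, and pull the signs out by bilinearity. Your explicit appeal to Lemma~\ref{l:minus-times-yields-opposite-vector} to justify $-v=(-1)\cdot v$ is just slightly more careful bookkeeping than the paper, which leaves that step under ``field properties of~$\matR$'' and the bilinear-map definition.
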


\begin{proof}
  Let $u,v\in G$ be vectors.
  From
  Definition~\thref{d:vector-subtraction},
  Lemma~\thref{l:square-expansion-plus},
  Definition~\threfc{d:inner-product}{$\psGdotdot$ is a bilinear map},
  Definition~\thref{d:bilinear-map}, and
  \assume{field properties of~$\matR$},
  we have
  \begin{eqnarray*}
    \psG{u - v}{u - v}
    & = & \psG{u + (-v)}{u + (-v)} \\
    & = & \psG{u}{u} + 2 \psG{u}{-v} + \psG{-v}{-v} \\
    & = & \psG{u}{u} - 2 \psG{u}{v} + \psG{v}{v}.
  \end{eqnarray*}
\end{proof}

\begin{lemma}[parallelogram identity]
  \label{l:parallelogram-identity}
  Let $(G,\psGdotdot)$ be an inner product space.
  Then,
  \begin{equation}
    \label{e:parallelogram-identity}
    \forall u, v \in G,\quad
    \psG{u + v}{u + v} + \psG{u - v}{u - v}
    = 2 \left( \psG{u}{u} + \psG{v}{v} \right).
  \end{equation}
\end{lemma}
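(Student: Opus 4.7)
The plan is to derive the parallelogram identity as an immediate consequence of the two square expansion lemmas just established. Concretely, for any $u,v\in G$, I would apply Lemma~\ref{l:square-expansion-plus} to rewrite $\psG{u+v}{u+v}$ as $\psG{u}{u}+2\psG{u}{v}+\psG{v}{v}$, and Lemma~\ref{l:square-expansion-minus} to rewrite $\psG{u-v}{u-v}$ as $\psG{u}{u}-2\psG{u}{v}+\psG{v}{v}$.

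Adding the two identities term by term, the cross-terms $2\psG{u}{v}$ and $-2\psG{u}{v}$ cancel, and collecting the remaining four terms yields $2\psG{u}{u}+2\psG{v}{v}=2(\psG{u}{u}+\psG{v}{v})$, which is exactly the claimed identity. The only ingredients beyond the two expansion lemmas are the field properties of $\matR$ used to cancel the cross-terms and factor out the~$2$.

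There is no real obstacle: both expansion lemmas are already proved in the real case, and the cancellation is purely algebraic. The only point worth noting is that the statement is phrased for a (possibly complex) inner product space, whereas Lemmas~\ref{l:square-expansion-plus} and~\ref{l:square-expansion-minus} are stated for a \emph{real} inner product space; following the convention announced in the remark after Definition~\ref{d:inner-product-space} (``When proofs differ, they are only given in the real case.''), I would write the proof solely in the real case, relying on those two lemmas directly. The structure will therefore be a short two-line proof: invoke the two expansion lemmas, then conclude by \assume{field properties of~$\matR$}.
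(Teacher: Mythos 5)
Your proof is correct and matches the paper's argument exactly: both expand $\psG{u+v}{u+v}$ and $\psG{u-v}{u-v}$ via Lemmas~\ref{l:square-expansion-plus} and~\ref{l:square-expansion-minus}, cancel the cross-terms, and conclude by field properties of~$\matR$. Your side remark about the real/complex mismatch in the hypotheses is a fair observation, and the paper handles it the same way you propose, via the stated convention of giving proofs only in the real case.
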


\begin{proof}
  Let $u,v\in G$ be vectors.
  From
  Lemma~\thref{l:square-expansion-plus},
  Lemma~\thref{l:square-expansion-minus}, and
  \assume{field properties of~$\matR$},
  we have
  \begin{eqnarray*}
    \psG{u + v}{u + v} + \psG{u - v}{u - v}
    & = & \psG{u}{u} + 2 \psG{u}{v} + \psG{v}{v}
    + \psG{u}{u} - 2 \psG{u}{v} + \psG{v}{v} \\
    & = & 2 \left( \psG{u}{u} + \psG{v}{v} \right).
  \end{eqnarray*}
\end{proof}

\begin{lemma}[{\CS} inequality]
  \label{l:cauchy-schwarz-inequality}
  Let $(G,\psGdotdot)$ be a real inner product space.
  Then,
  \begin{equation}
    \label{e:cauchy-schwarz-inequality}
    \forall u, v \in G,\quad
    \left( \psG{u}{v} \right)^2 \leq \psG{u}{u} \, \psG{v}{v}.
  \end{equation}
\end{lemma}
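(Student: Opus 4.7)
The plan is to reduce the inequality to the non-negativity of a quadratic polynomial in a single real variable and then apply the discriminant criterion, which is one of the ``basic results'' the paper declares as admitted.

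Fix $u, v \in G$. For $t \in \matR$ consider
$$P(t) = \psG{u + t v}{u + t v}.$$
By the non-negativity property~\eqref{e:ip-nonnegative}, $P(t) \geq 0$ for every $t \in \matR$. Expanding $P(t)$ using Lemma~\threfc{l:square-expansion-plus}{applied to $u$ and $tv$} together with the right homogeneity of the bilinear map $\psGdotdot$ (to pull out the scalar~$t$ from $\psG{u}{tv}$ and from $\psG{tv}{tv}$) yields
$$P(t) = \psG{v}{v} \, t^2 + 2 \psG{u}{v} \, t + \psG{u}{u}.$$

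I would then split on whether $\psG{v}{v}$ vanishes. In the degenerate case $\psG{v}{v} = 0$, the definiteness property~\eqref{e:ip-definite} forces $v = 0_G$, and Lemma~\thref{l:inner-product-with-zero-is-zero} gives $\psG{u}{v} = 0$, so both sides of the desired inequality equal~$0$ and the estimate holds trivially. Otherwise $\psG{v}{v} > 0$ by~\eqref{e:ip-nonnegative} combined with definiteness contrapositive, so $P$ is a genuine quadratic polynomial with positive leading coefficient that is everywhere non-negative on~$\matR$; the admitted properties of the discriminant of a quadratic polynomial then force
$$4 \left( \psG{u}{v} \right)^2 - 4 \psG{v}{v} \, \psG{u}{u} \leq 0,$$
which rearranges via ordered field properties of~$\matR$ into the claimed inequality.

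Mathematically there is no real obstacle here; the whole argument is a one-line discriminant computation. The only care point, in keeping with the paper's elementary and constructive spirit, is the case split on $\psG{v}{v}$: the discriminant reasoning is only legitimate when the leading coefficient is nonzero, and the degenerate direction must be dispatched by a direct appeal to definiteness together with Lemma~\thref{l:inner-product-with-zero-is-zero}, rather than by a limit argument or by dividing by $\psG{v}{v}$ without justification.
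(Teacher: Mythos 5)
Your proof is correct and follows essentially the same route as the paper's: expand $\psG{u+\lambda v}{u+\lambda v}$ via Lemma~\ref{l:square-expansion-plus} and bilinearity, observe the resulting quadratic in the scalar parameter is nonnegative by~\eqref{e:ip-nonnegative}, and conclude by the admitted discriminant criterion. The only difference is your explicit case split on $\psG{v}{v}=0$, which the paper silently absorbs into the admitted statement about quadratic polynomials of constant sign; your extra care there is harmless and arguably tidier.
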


\begin{proof}
  Let $u,v\in G$ be vectors.
  Let~$\lambda\in\matR$ be a scalar.
  From
  Lemma~\thref{l:square-expansion-plus},
  Definition~\threfc{d:inner-product}{$\psGdotdot$ is a bilinear map},
  Definition~\thref{d:bilinear-map}, and
  \assume{field properties of~$\matR$},
  we have
  \begin{equation*}
    \psG{u + \lambda v}{u + \lambda v}
    = \lambda^2 \psG{v}{v} + 2 \lambda \psG{u}{v} + \psG{u}{u}.
  \end{equation*}
  Let $P(X)=\psG{v}{v}X^2+2\psG{u}{v}X+\psG{u}{u}$.
  It is a quadratic polynomial with real coefficients.
  From
  Definition~\threfc{d:inner-product}{$\psGdotdot$ is nonnegative},
  the associated polynomial function~$P$ is nonnegative.
  Hence, since
  \assume{a quadratic polynomial function has a constant sign iff its
    discriminant is nonpositive},
  we have
  \begin{equation*}
    4 (\psG{u}{v})^2 - 4 \psG{v}{v} \psG{u}{u} \leq 0.
  \end{equation*}
  Therefore, from
  \assume{ordered field properties of~$\matR$},
  we have
  \begin{equation*}
    (\psG{u}{v})^2 \leq \psG{u}{u} \psG{v}{v}.
  \end{equation*}
\end{proof}

\begin{definition}[square root of inner square]
  \label{d:square-root-of-inner-square}
  Let $(G,\psGdotdot)$ be an inner product space.
  The associated {\em square root of inner square} is the mapping
  $\PrenG:G\rightarrow\matR$ defined by
  \begin{equation}
    \label{e:square-root-of-inner-square}
    \forall u \in G,\quad
    \prenG{u} = \sqrt{\psG{u}{u}}.
  \end{equation}
\end{definition}

\begin{remark}
  Mapping~$\PrenG$ is well defined thanks to the nonnegativeness of the
  inner product.
  It will be proved below to be a norm.
\end{remark}

\begin{lemma}[squared norm]
  \label{l:squared-norm}
  Let $(G,\psGdotdot)$ be an inner product space.
  Then,
  \begin{equation}
    \label{e:squared-norm}
    \forall u \in G,\quad
    \prenG{u}^2 = \psG{u}{u}.
  \end{equation}
\end{lemma}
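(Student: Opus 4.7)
The plan is to unfold the definition of $\PrenG$ and invoke the identity $(\sqrt{x})^2 = x$ for nonnegative reals, whose only prerequisite is the nonnegativity of $\psG{u}{u}$. This is essentially a one-line computation, so the proof is just a matter of citing the right justifications.

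First I would fix an arbitrary $u \in G$. By Definition~\thref{d:square-root-of-inner-square}, we have $\prenG{u} = \sqrt{\psG{u}{u}}$. By Definition~\threfc{d:inner-product}{$\psGdotdot$ is nonnegative}, the scalar $\psG{u}{u}$ is a nonnegative real number, so the square root is well-defined and the basic property of the square root function on $\matR^+$ applies:
\begin{equation*}
  \prenG{u}^2 = \left( \sqrt{\psG{u}{u}} \right)^2 = \psG{u}{u}.
\end{equation*}

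There is no real obstacle here; this lemma is purely a restatement of the definition after squaring. Its role in the sequel is to serve as a convenient rewriting rule used in the proof that $\PrenG$ is indeed a norm (together with the parallelogram identity, the {\CS} inequality, and the square expansion lemmas already established), so the interest is not in the proof itself but in having the identity available as a named lemma to shorten later arguments.
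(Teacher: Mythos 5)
Your proof is correct and follows exactly the paper's argument: unfold Definition~\thref{d:square-root-of-inner-square}, use the nonnegativity of $\psG{u}{u}$ from Definition~\thref{d:inner-product}, and apply the identity $(\sqrt{x})^2 = x$ on~$\matRplus$. Nothing is missing.
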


\begin{proof}
  Direct consequence of
  Definition~\thref{d:square-root-of-inner-square},
  Definition~\threfc{d:inner-product}{$\psGdotdot$ is nonnegative}, and
  \assume{properties of square and square root functions in~$\matRplus$}.
\end{proof}

\begin{lemma}[{\CS} inequality with norms]
  \label{l:cauchy-schwarz-inequality-with-norms}
  Let $(G,\psGdotdot)$ be an inner product space.
  Then,
  \begin{equation}
    \label{e:cauchy-schwarz-inequality-with-norms}
    \forall u, v \in G,\quad
    | \psG{u}{v} | \leq \prenG{u} \, \prenG{v}.
  \end{equation}
\end{lemma}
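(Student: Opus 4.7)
The plan is to obtain this inequality as an immediate corollary of the already-proved squared \CS{} inequality (Lemma~\ref{l:cauchy-schwarz-inequality}) combined with the identification $\prenG{u}^2=\psG{u}{u}$ from Lemma~\ref{l:squared-norm}. No new geometric idea is required; the work is purely about taking square roots correctly, using the nonnegativeness of $\psGdotdot$ (Definition~\ref{d:inner-product}) to ensure everything under the root is well defined.

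First I would fix $u,v\in G$ and invoke Lemma~\ref{l:cauchy-schwarz-inequality} to get $(\psG{u}{v})^2\leq\psG{u}{u}\,\psG{v}{v}$. Then, rewriting the right-hand side via Lemma~\ref{l:squared-norm} yields $(\psG{u}{v})^2\leq\prenG{u}^2\,\prenG{v}^2=(\prenG{u}\,\prenG{v})^2$, where the last equality is just a field computation in~$\matR$. Since $\prenG{u},\prenG{v}\geq 0$ by Definition~\ref{d:square-root-of-inner-square} (square roots are nonnegative), the right-hand side is a nonnegative real whose square root is $\prenG{u}\,\prenG{v}$. Applying the monotonicity of the square root function on $\matRplus$ and the identity $\sqrt{x^2}=|x|$ to both sides gives $|\psG{u}{v}|\leq\prenG{u}\,\prenG{v}$, which is the desired inequality.

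The main (very minor) obstacle is bookkeeping about which elementary facts about $\matR$ to cite: we need monotonicity of the square root on $\matRplus$, the identity $\sqrt{a}\sqrt{b}=\sqrt{ab}$ for $a,b\geq 0$, and $\sqrt{x^2}=|x|$ for $x\in\matR$. All of these are covered by the \assume{basic properties of elementary functions over~$\matR$} reserved earlier in the paper, so they can simply be invoked without further justification. There is no need to branch on cases or to re-examine the geometry of the inner product: the squared version already encodes all the content.
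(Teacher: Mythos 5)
Your proof is correct and follows essentially the same route as the paper, which also derives the result directly from the squared {\CS} inequality (Lemma~\ref{l:cauchy-schwarz-inequality}) together with Definition~\ref{d:square-root-of-inner-square}, the nonnegativeness of the inner product, and the compatibility of the square root with comparison on~$\matRplus$. Your use of Lemma~\ref{l:squared-norm} in place of unfolding the definition is an inessential variation.
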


\begin{proof}
  Direct consequence of
  Lemma~\thref{l:cauchy-schwarz-inequality},
  Definition~\thref{d:square-root-of-inner-square},
  Definition~\threfc{d:inner-product}{$\psGdotdot$ is nonnegative}, and
  \assume{compatibility of the square root function with comparison
    in~$\matRplus$}.
\end{proof}

\begin{lemma}[triangle inequality]
  \label{l:triangle-inequality}
  Let $(G,\psGdotdot)$ be an inner product space.
  Then,
  \begin{equation}
    \label{e:triangle-inequality}
    \forall u, v \in G,\quad
    \prenG{u + v} \leq \prenG{u} + \prenG{v}.
  \end{equation}
\end{lemma}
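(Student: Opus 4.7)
The plan is to prove the triangle inequality by squaring both sides and reducing to the Cauchy--Schwarz inequality. Since $\prenG{u+v}$ and $\prenG{u}+\prenG{v}$ are both nonnegative (they are sums of square roots), it suffices by monotonicity of the square function on $\matRplus$ to establish the squared inequality $\prenG{u+v}^2 \leq (\prenG{u} + \prenG{v})^2$.

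First, I would expand the left-hand side. By Lemma~\ref{l:squared-norm} and Lemma~\ref{l:square-expansion-plus}, we have
\begin{equation*}
  \prenG{u + v}^2
  = \psG{u + v}{u + v}
  = \psG{u}{u} + 2 \psG{u}{v} + \psG{v}{v}.
\end{equation*}
Then I would bound the cross term: from the basic inequality $x \leq |x|$ in $\matR$ and Lemma~\ref{l:cauchy-schwarz-inequality-with-norms} applied to $u$ and $v$, we get $\psG{u}{v} \leq |\psG{u}{v}| \leq \prenG{u}\,\prenG{v}$. Applying Lemma~\ref{l:squared-norm} again to rewrite $\psG{u}{u} = \prenG{u}^2$ and $\psG{v}{v} = \prenG{v}^2$, and using the standard algebraic identity $(a+b)^2 = a^2 + 2ab + b^2$ in $\matR$, we obtain
\begin{equation*}
  \prenG{u + v}^2
  \leq \prenG{u}^2 + 2 \prenG{u} \, \prenG{v} + \prenG{v}^2
  = (\prenG{u} + \prenG{v})^2.
\end{equation*}

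Finally, I would take the square root of both sides. Since $\prenG{u+v} \geq 0$ by definition of the square root, and $\prenG{u}+\prenG{v} \geq 0$ as a sum of nonnegative numbers, the compatibility of the square root with comparison on $\matRplus$ yields the desired inequality. No step should pose any real difficulty; the main subtlety is simply to invoke Cauchy--Schwarz in its ``norm form'' (Lemma~\ref{l:cauchy-schwarz-inequality-with-norms}) at the right moment, and to justify passing to the square root by the nonnegativity of both sides.
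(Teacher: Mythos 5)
Your proof is correct and follows essentially the same route as the paper: expand $\prenG{u+v}^2$ via Lemma~\ref{l:squared-norm} and Lemma~\ref{l:square-expansion-plus}, bound the cross term with Lemma~\ref{l:cauchy-schwarz-inequality-with-norms}, and conclude by nonnegativity of both sides together with the compatibility of squaring/square roots with comparison on $\matRplus$. Nothing is missing.
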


\begin{proof}
  Let $u,v\in G$ be vectors.
  From
  Lemma~\thref{l:squared-norm},
  Lemma~\thref{l:square-expansion-plus},
  Lemma~\thref{l:cauchy-schwarz-inequality-with-norms}, and
  \assume{field properties of~$\matR$},
  we have
  \begin{eqnarray*}
    \left( \prenG{u + v} \right)^2 & = & \psG{u + v}{u + v} \\
    & = & \psG{u}{u} + 2 \psG{u}{v} + \psG{v}{v}  \\
    & \leq &
    \left( \prenG{u} \right)^2
    + 2 \prenG{u} \, \prenG{v}
    + \left( \prenG{v} \right)^2 \\
    & = & \left( \prenG{u} + \prenG{v} \right)^2.
  \end{eqnarray*}
  Therefore, from
  Definition~\threfc{d:inner-product}{$\psGdotdot$ is nonnegative}, and
  \assume{compatibility of the square function with comparison
    in~$\matRplus$},
  we have
  \begin{equation*}
    \prenG{u + v} \leq \prenG{u} + \prenG{v}.
  \end{equation*}
\end{proof}

\begin{lemma}[inner product gives norm]
  \label{l:inner-product-gives-norm}
  Let $(G,\psGdotdot)$ be an inner product space.
  Let $\nGdot$ be the associated square root of inner square.
  Then, $(G,\nGdot)$ is a {\normedvectorspace}.
\end{lemma}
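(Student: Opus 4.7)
The plan is to invoke Definition~\thref{d:normed-space}, which requires exhibiting a vector space structure and verifying the three axioms of Definition~\thref{d:norm} for the candidate $\nGdot = \PrenG$. The vector space structure comes for free: by Definition~\thref{d:inner-product-space}, $G$ is a {\vectorspace}, and by Definition~\thref{d:square-root-of-inner-square} together with Definition~\threfc{d:inner-product}{$\psGdotdot$ is nonnegative}, $\PrenG$ is a well-defined mapping from $G$ to $\matR$. So the work reduces to checking~\eqref{e:norm-point-separation}, \eqref{e:norm-absolute-homogeneity}, and~\eqref{e:norm-triangle-inequality}.

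For point separation, suppose $\nG{u} = 0$. Then by Lemma~\thref{l:squared-norm} (or by squaring and using \assume{properties of the square root function}), we obtain $\psG{u}{u} = 0$, so Definition~\threfc{d:inner-product}{$\psGdotdot$ is definite} yields $u = 0_G$. For absolute homogeneity, let $\lambda \in \matR$ and $u \in G$. Using Definition~\threfc{d:inner-product}{$\psGdotdot$ is a bilinear map} together with~\eqref{e:bl-homogeneous} twice (once on the left, once on the right), one computes
\begin{equation*}
  \psG{\lambda u}{\lambda u} = \lambda^2 \, \psG{u}{u},
\end{equation*}
and then Definition~\thref{d:square-root-of-inner-square} combined with \assume{properties of the square root function over~$\matRplus$} (in particular $\sqrt{\lambda^2 x} = |\lambda| \sqrt{x}$ for $x \geq 0$) gives $\nG{\lambda u} = |\lambda|\,\nG{u}$.

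The triangle inequality is nothing but Lemma~\thref{l:triangle-inequality}, which has already been established (and which itself rests on Lemma~\thref{l:cauchy-schwarz-inequality-with-norms} and Lemma~\thref{l:square-expansion-plus}). Assembling these three verifications with Definition~\thref{d:norm} shows that $\nGdot$ is a norm over~$G$, and then Definition~\thref{d:normed-space} concludes that $(G,\nGdot)$ is a {\normedvectorspace}.

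No step is genuinely hard here: the only subtlety is keeping track of the distinction between $\psG{u}{u}$ and $\nG{u}^2$ via Lemma~\thref{l:squared-norm}, and being careful in the homogeneity calculation to factor $\lambda^2$ out of $\psG{\lambda u}{\lambda u}$ before taking the square root so that the absolute value appears. The triangle inequality looks like the substantive axiom, but it has been isolated and proved beforehand, so at this stage the proof is essentially a bookkeeping assembly of Lemma~\thref{l:triangle-inequality}, the bilinearity and definiteness axioms of the inner product, and elementary properties of the square root.
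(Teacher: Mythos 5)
Your proof is correct and follows essentially the same route as the paper: definiteness from the definiteness of the inner product and of the square root, absolute homogeneity by factoring $\lambda^2$ out of $\psG{\lambda u}{\lambda u}$ before taking the square root, the triangle inequality delegated to Lemma~\ref{l:triangle-inequality}, and assembly via Definitions~\ref{d:norm} and~\ref{d:normed-space}. No substantive difference.
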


\begin{proof}
  From
  Definition~\thref{d:square-root-of-inner-square}, and
  \assume{nonnegativeness of the square root function in~$\matRplus$},
  $\nGdot$ is nonnegative.
  
  From
  Definition~\thref{d:square-root-of-inner-square},
  \assume{definiteness of the square root function in~$\matRplus$}, and
  Definition~\threfc{d:inner-product}{$\psGdotdot$ is definite},
  $\nGdot$ is definite.
  
  Let~$\lambda\in\matK$ be a scalar.
  Let $u\in G$ be a vector.
  From
  Definition~\thref{d:square-root-of-inner-square},
  Definition~\threfc{d:inner-product}{$\psGdotdot$ is a bilinear map},
  \assume{multiplicativity of the square root function in~$\matRplus$}, and
  since \assume{for all $x\in\matR$, $\sqrt{x^2}=|x|$}, we have
  \begin{equation*}
    \nG{\lambda u}
    = \sqrt{\psG{\lambda u}{\lambda u}}
    = \sqrt{\lambda^2 \psG{u}{u}}
    = \sqrt{\lambda^2} \sqrt{\psG{u}{u}}
    = | \lambda | \, \nG{u}.
  \end{equation*}
  Thus, $\nGdot$ is absolutely homogeneous of degree~1.
  
  From
  Lemma~\thref{l:triangle-inequality},
  $\nGdot$ satisfies triangle inequality.
  
  Therefore, from
  Definition~\thref{d:norm},
  $\nGdot$ is a norm over~$G$, and from
  Definition~\thref{d:normed-space},
  $(G,\nGdot)$ is a {\normedvectorspace}.
\end{proof}

\begin{remark}
  Norm~$\nGdot$ is called
  {\em norm associated with inner product~$\psGdotdot$}.
\end{remark}

\subsubsection{Orthogonal projection}

\begin{definition}[convex subset]
  \label{d:convex-subset}
  Let~$E$ be a real {\vectorspace}.
  Let~$K\subset E$.
  $K$ is {\em convex} iff
  \begin{equation}
    \label{e:convex-subset}
    \forall u, v \in K,\,
    \forall \theta \in [0, 1],\quad
    \theta u + (1 - \theta) v \in K.
  \end{equation}
\end{definition}

\begin{theorem}[orthogonal projection onto nonempty complete convex]
  \label{t:orth-proj-onto-complete-convex}
  Let~$(G,\psGdotdot)$ be a real inner product space.
  Let~$\nGdot$ be the norm associated with inner product~$\psGdotdot$.
  Let~$d_G$ be the distance associated with norm~$\nGdot$.
  Let~$K\subset G$ be a nonempty convex subset which is complete for
  distance~$d_G$.
  Then, for all $u\in G$, there exists a unique $v\in K$ such that
  \begin{equation}
    \label{e:orth-proj-convex-min-distance}
    \nG{u - v} = \min_{w \in K} \nG{u - w}.
  \end{equation}
\end{theorem}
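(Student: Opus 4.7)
The plan follows the sketch in Section~\ref{ss:sketch-of-proof-of-orth-proj}. Fix $u \in G$ and set $\delta = \inf_{w \in K} \nG{u-w}$. The set $\{\nG{u-w} \st w \in K\}$ is nonempty (since $K$ is nonempty) and bounded below by~$0$ (Lemma~\thref{l:norm-is-nonnegative}), so $\delta$ is a finite nonnegative real. Then Lemma~\thref{l:finite-infimum-discrete} produces a sequence $(w_n)_{n \in \matN}$ in $K$ satisfying $\nG{u - w_n} < \delta + \frac{1}{n+1}$ for all $n \in \matN$.

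The core step is to show that $(w_n)_{n \in \matN}$ is a Cauchy sequence. Applying the parallelogram identity (Lemma~\thref{l:parallelogram-identity}) to $u - w_n$ and $u - w_m$, and rewriting it in terms of $\nGdot$ via Lemma~\thref{l:squared-norm}, gives
\begin{equation*}
\nG{w_m - w_n}^2 = 2 \nG{u - w_n}^2 + 2 \nG{u - w_m}^2 - 4 \nG{u - \tfrac{1}{2}(w_n + w_m)}^2.
\end{equation*}
Convexity of $K$ (Definition~\thref{d:convex-subset} with $\theta = \tfrac{1}{2}$) ensures $\tfrac{1}{2}(w_n + w_m) \in K$, whence $\nG{u - \tfrac{1}{2}(w_n + w_m)} \geq \delta$ by definition of~$\delta$. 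Substituting the upper bound $\nG{u-w_k}^2 < \delta^2 + \frac{2\delta}{k+1} + \frac{1}{(k+1)^2}$ for $k = n,m$ yields an estimate for $\nG{w_m - w_n}^2$ that tends to~$0$ uniformly as $\min(n,m) \to \infty$, establishing the Cauchy property in the sense of Definition~\thref{d:cauchy-sequence} (or its variant Lemma~\thref{l:equivalent-definition-of-cauchy-sequence}).

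By completeness of~$K$ (Definition~\thref{d:complete-subset}), the sequence converges to some $v \in K$. The continuity of the norm (Lemma~\thref{l:norm-is-continuous}), combined with Lemma~\thref{l:compatibility-of-limit-with-continuous-functions}, allows passing to the limit in $\delta \leq \nG{u - w_n} < \delta + \frac{1}{n+1}$, yielding $\nG{u - v} = \delta$; then Lemma~\thref{l:finite-minimum} concludes that $v$ realizes the minimum of $w \mapsto \nG{u-w}$ over~$K$.

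For uniqueness, given two minimizers $v_1, v_2 \in K$ with $\nG{u - v_1} = \nG{u - v_2} = \delta$, the same parallelogram-identity computation applied to $u - v_1$ and $u - v_2$, combined with $\tfrac{1}{2}(v_1 + v_2) \in K$ giving $\nG{u - \tfrac{1}{2}(v_1+v_2)} \geq \delta$, leads to $\nG{v_1 - v_2}^2 \leq 4\delta^2 - 4\delta^2 = 0$; definiteness of $\nGdot$ (Definition~\thref{d:norm}) then forces $v_1 = v_2$. The main obstacle is the arithmetic of the Cauchy estimate: choosing the rank~$N$ from an arbitrary $\eps > 0$ so that expanding $(\delta + \frac{1}{n+1})^2$ for $n, m \geq N$ gives a right-hand side bounded by $\eps^2$, which is cleaner to handle by treating the cases $\delta = 0$ and $\delta > 0$ separately.
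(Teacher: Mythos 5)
Your proposal is correct and follows essentially the same path as the paper's proof: the infimum $\delta$ obtained via Lemma~\ref{l:finite-infimum-discrete}, the parallelogram identity combined with convexity of~$K$ to show the minimizing sequence is Cauchy, completeness to extract the limit, continuity of the norm to identify it as a minimizer, and the same parallelogram argument for uniqueness. The only divergence is cosmetic: the paper handles the choice of rank~$N$ with a single formula $\eta=\max\left(\frac{16\delta}{\eps^2},\frac{2\sqrt{2}}{\eps}\right)$ valid for all $\delta\geq 0$, where you propose splitting the cases $\delta=0$ and $\delta>0$.
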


\begin{proof}
  Let $u\in G$.
  From
  Lemma~\threfc{l:norm-is-nonnegative}{for $\nGdot$},
  function $w\mapsto\nG{u-w}$ from~$K$ to~$\matR$ admits~0 as finite lower
  bound.
  Thus, from
  Lemma~\thref{l:finite-infimum-discrete},
  $\delta=\inf_{w\in K}\{\nG{u-w}\}$ is finite and there exists a sequence
  $(w_n)_{n\in\matN}$ in~$K$ such that for all $n\in\matN$,
  $\nG{u-w_n}<\delta+\frac{1}{n+1}$.
  
  From
  Definition~\thref{d:inner-product-space},
  $G$~is a {\vectorspace}.
  
  \proofparskip{Existence}
  Let $p,q\in\matN$.
  Let $a=u-w_q$ and $b=u-w_p$.
  From
  Definition~\threfc{d:norm}{%
    $\nGdot$ is absolutely homogeneous of degree~1},
  Definition~\thref{d:scalar-division},
  Definition~\threfc{d:space}{$(G,+)$ is an abelian group},
  Lemma~\thref{l:squared-norm}, and
  Lemma~\threfc{l:parallelogram-identity}{for $\nGdot$},
  we have
  \begin{eqnarray*}
    4 \nG{u - \frac{w_q + w_p}{2}}^2 + \nG{w_p - w_q}^2
    & = & \nG{a + b}^2 + \nG{a - b}^2 \\
    & = & 2 \nG{a}^2 + 2 \nG{b}^2 \\
    & = & 2 \nG{u - w_q}^2 + 2 \nG{u - w_p}^2.
  \end{eqnarray*}
  From
  Definition~\thref{d:convex-subset},
  $\frac{w_q+w_p}{2}$ belongs to~$K$.
  Thus, from
  Definition~\threfc{d:infimum}{%
    $\delta$~is a lower bound for $\{\nG{u-w}\st w\in K\}$}, and
  \assume{field properties of~$\matR$},
  we have
  \begin{eqnarray*}
    \nG{w_p - w_q}^2
    & = &
    -4 \nG{u - \frac{w_q + w_p}{2}}^2
    + 2 \nG{u - w_q}^2
    + 2 \nG{u - w_p}^2 \\
    & < &
    - 4 \delta^2
    + 2 \left(\delta + \frac{1}{q+1}\right)^2
    + 2 \left(\delta + \frac{1}{p+1}\right)^2 \\
    & = &
    \frac{4\delta}{q+1} + \frac{2}{(q+1)^2}
    + \frac{4\delta}{p+1} + \frac{2}{(p+1)^2}.
  \end{eqnarray*}
  Let $\eps>0$.
  Let
  $\eta=\max\left(\frac{16\delta}{\eps^2},\frac{2\sqrt{2}}{\eps}\right)$.
  From
  \assume{the definition of the max function}, and
  \assume{ordered field properties of~$\matR$},
  $\eta>0$.
  Let $N=\ceil{\eta}-1$.
  From
  \assume{the definition of the ceiling function},
  $N\geq 0$ and $N\geq\eta-1$.
  Assume that $p,q\geq N$.
  Then, from
  \assume{ordered field properties of~$\matR$},
  we have $p,q\geq\eta-1$ and
  $\frac{4\delta}{q+1},\frac{2}{(q+1)^2},\frac{4\delta}{p+1},\frac{2}{(p+1)^2}
  \leq\frac{\eps^2}{4}$.
  Thus, from
  \assume{field properties of~$\matR$},
  Lemma~\threfc{l:norm-is-nonnegative}{for $\nGdot$}, and
  \assume{compatibility of the square root function with comparison
    in~$\matRplus$},
  we have $\nG{w_p-w_q}\leq\eps$.
  Hence, from
  Definition~\thref{d:cauchy-sequence},
  $(w_n)_{n\in\matN}$ is a Cauchy sequence in~$K$.
  
  From
  hypothesis, and
  Definition~\threfc{d:complete-subset}{$K$ is complete},
  the sequence $(w_n)_{n\in\matN}$ is convergent in~$K$.
  Let $v\in K$ be its limit.
  From
  Lemma~\threfc{l:norm-is-continuous}{for $\nGdot$},
  Lemma~\thref{l:compatibility-of-limit-with-continuous-functions}, and
  Definition~\thref{d:minimum},
  we have
  \begin{equation*}
    \nG{u - v}
    = \lim_{n \rightarrow +\infty} \nG{u - w_n}
    = \delta
    = \min_{w \in K} \nG{u - w}.
  \end{equation*}
  
  \proofparskip{Uniqueness}
  Let $v,\vp\in K$ such that $\nG{u-v}=\nG{u-\vp}=\delta$.
  Let $a=u-\vp$, $b=u-v$, and $\vpp=\frac{\vp + v}{2}$.
  Then, from
  Definition~\threfc{d:norm}{%
    $\nGdot$ is absolutely homogeneous of degree~1},
  Definition~\thref{d:scalar-division},
  Definition~\threfc{d:space}{$(G,+)$ is an abelian group}, and
  Lemma~\threfc{l:parallelogram-identity}{for $\nGdot$},
  we have
  \begin{eqnarray*}
    4 \nG{u - \vpp}^2 + \nG{v - \vp}^2
    & = & \nG{a + b}^2 + \nG{a - b}^2 \\
    & = & 2 \nG{a}^2 + 2 \nG{b}^2 \\
    & = & 2 \nG{u - \vp}^2 + 2 \nG{u - v}^2 \\
    & = & 4 \delta^2.
  \end{eqnarray*}
  From
  Definition~\thref{d:convex-subset},
  $\vpp$ belongs to~$K$.
  Thus, from
  Definition~\threfc{d:infimum}{%
    $\delta$~is a lower bound for $\{\nG{u-w}\st w\in K\}$}, and
  \assume{field properties of~$\matR$},
  we have
  \begin{equation*}
    0 \leq \nG{v - \vp}^2
    = -4 \nG{u - \vpp}^2 + 4 \delta^2
    \leq -4 \delta^2 + 4 \delta^2 = 0.
  \end{equation*}
  Hence, from
  Lemma~\threfc{l:norm-is-nonnegative}{for $\nGdot$}, and
  \assume{compatibility of square root function with comparison
    in~$\matRplus$},
  $\nG{v-\vp}=0$.
  Therefore, from
  Definition~\threfc{d:norm}{$\nGdot$ is definite},
  Definition~\thref{d:vector-subtraction}, and
  Definition~\threfc{d:space}{$(G,+)$ is an abelian group},
  we have $v-\vp=0_G$ and $v=\vp$.
\end{proof}

\begin{lemma}[characterization of orthogonal projection onto convex]
  \label{l:characterization-of-orth-proj-onto-convex}
  Let~$(G,\psGdotdot)$ be a real inner product space.
  Let~$\nGdot$ be the norm associated with inner product~$\psGdotdot$.
  Let~$K\subset G$ be a nonempty convex subset.
  Then, for all $u\in G$, for all $v\in K$,
  \begin{equation}
    \label{e:characterization-of-orth-proj-onto-convex}
    \nG{u - v} = \inf_{w \in K} \nG{u - w}
    \Equiv
    \forall w \in K,\quad
    \psG{u - v}{w - v} \leq 0.
  \end{equation}
\end{lemma}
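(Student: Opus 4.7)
The plan is to prove the two implications separately, using the convexity of~$K$ for the forward direction and a direct expansion of the squared norm for the reverse. The expansion tool is Lemma~\thref{l:square-expansion-minus}, which, combined with Lemma~\thref{l:squared-norm}, gives for any $a,b \in G$ the identity $\nG{a-b}^2 = \nG{a}^2 - 2\,\psG{a}{b} + \nG{b}^2$.

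For the forward direction, assume $v \in K$ realizes $\nG{u-v} = \inf_{w \in K} \nG{u-w}$. Fix $w \in K$ and $\theta \in (0,1]$. By Definition~\thref{d:convex-subset}, the point $v_\theta = (1-\theta)v + \theta w$ lies in~$K$, and by Definition~\threfc{d:space}{distributivity} it equals $v + \theta(w-v)$. Minimality gives $\nG{u-v}^2 \leq \nG{u-v_\theta}^2 = \nG{(u-v) - \theta(w-v)}^2$. Expanding with the identity above and Definition~\threfc{d:inner-product}{bilinearity} yields $\nG{u-v}^2 \leq \nG{u-v}^2 - 2\theta\,\psG{u-v}{w-v} + \theta^2\,\nG{w-v}^2$, hence $2\theta\,\psG{u-v}{w-v} \leq \theta^2\,\nG{w-v}^2$. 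Dividing by $\theta > 0$ gives $2\,\psG{u-v}{w-v} \leq \theta\,\nG{w-v}^2$ for every $\theta \in (0,1]$. Suppose for contradiction that $\alpha = \psG{u-v}{w-v} > 0$. Then in particular $w \neq v$, so by Definition~\threfc{d:norm}{definiteness} and Lemma~\thref{l:norm-is-nonnegative} we have $\nG{w-v}^2 > 0$; choosing $\theta = \min\!\left(1, \alpha/\nG{w-v}^2\right) > 0$ leads to $2\alpha \leq \alpha$, which is impossible. Therefore $\psG{u-v}{w-v} \leq 0$.

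For the reverse direction, assume $\psG{u-v}{w-v} \leq 0$ for every $w \in K$. For any such $w$, write $u - w = (u-v) - (w-v)$ and apply the square expansion identity together with Lemma~\thref{l:squared-norm} to obtain
\begin{equation*}
  \nG{u-w}^2 = \nG{u-v}^2 - 2\,\psG{u-v}{w-v} + \nG{w-v}^2 \geq \nG{u-v}^2,
\end{equation*}
where the inequality uses the hypothesis and Lemma~\threfc{l:norm-is-nonnegative}{for $\nGdot$}. By compatibility of the square root function with comparison on $\matRplus$, this gives $\nG{u-v} \leq \nG{u-w}$ for all $w \in K$. Since $v \in K$, Lemma~\thref{l:finite-minimum} yields $\nG{u-v} = \min_{w \in K} \nG{u-w}$, which by Definition~\thref{d:minimum} coincides with $\inf_{w \in K} \nG{u-w}$.

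The main obstacle is the forward direction: the customary argument passes to the limit $\theta \to 0^+$, but the paper's elementary style avoids limit manipulations that are not already captured by named lemmas. The substitute is the explicit contradiction step above, where a specific admissible $\theta$ (built from the hypothetical positive value of $\psG{u-v}{w-v}$ and $\nG{w-v}^2$) is chosen to violate the inequality; this requires separately dispatching the degenerate case $w = v$, which is immediate from Lemma~\thref{l:inner-product-with-zero-is-zero}.
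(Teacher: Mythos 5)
Your proof is correct and follows essentially the same route as the paper's: the same convex combination $v+\theta(w-v)$, the same inequality $2\,\psG{u-v}{w-v}\leq\theta\,\nG{w-v}^2$ for all $\theta\in(0,1]$, the same contradiction via $\theta=\min(1,a/b)$, and the same expansion argument plus Lemma~\ref{l:finite-minimum} for the converse. The only (harmless) cosmetic difference is that you infer $\nG{w-v}^2>0$ from the assumed positivity of the inner product rather than splitting upfront into the cases $b=0$ and $b\neq 0$ as the paper does.
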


\begin{proof}
  Let $u\in G$ and $v\in K$ be vectors.
  
  \proofparskip{``Left'' implies ``right''}
  Assume that $\nG{u-v}=\inf_{w\in K}\nG{u-w}$.
  Let $w\in K$.
  Let $\theta\in(0,1]$.
  From
  Definition~\thref{d:convex-subset},
  $\theta w+(1-\theta)v$ belongs to~$K$.
  Thus, from
  Definition~\threfc{d:infimum}{%
    $\nG{u-v}$ is a lower bound for $\{\nG{u-w}\st w\in K\}$},
  \assume{compatibility of the square function with comparison
    in~$\matRplus$},
  Definition~\threfc{d:inner-product-space}{$G$~is a {\vectorspace}},
  Definition~\threfc{d:space}{%
    $(G,+)$ is an abelian group and scalar multiplication is compatible with
    scalar addition}, \thref{d:vector-subtraction},
  Lemma~\threfc{l:squared-norm}{for $\psGdotdot$},
  Lemma~\threfc{l:square-expansion-plus}{for $\psGdotdot$},
  Definition~\threfc{d:inner-product}{$\psGdotdot$ is a bilinear map},
  Definition~\thref{d:bilinear-map}, and
  Lemma~\thref{l:square-expansion-plus},
  we have
  \begin{eqnarray*}
    \nG{u - v}^2
    & \leq & \nG{u - (\theta w + (1 - \theta) v)}^2 \\
    & = & \nG{(u - v) + \theta (v - w)}^2 \\
    & = &
    \nG{u - v}^2 - 2 \theta \psG{u - v}{w - v} + \theta^2 \nG{v - w}^2.
  \end{eqnarray*}
  Let $a=\psG{u-v}{w-v}$ and $b=\nG{v-w}^2$.
  Then, from
  \assume{ordered field properties of~$\matR$ (with $\theta>0$)},
  we have
  \begin{equation*}
    \forall \theta \in (0, 1],\quad
    2 a \leq \theta b.
  \end{equation*}
  Assume that $b=0$.
  Then, from
  \assume{ordered field properties of~$\matR$},
  we have $\psG{u-v}{w-v}=a\leq 0$.
  Conversely, assume now that $b\not=0$.
  Then, from
  \assume{nonnegativeness of the square function},
  $b>0$.
  Assume that $a>0$.
  Let $\theta=\min(1,\frac{a}{b})$.
  From
  \assume{the definition of the min function}, and
  \assume{ordered field properties of~$\matR$},
  we have $\theta\leq\frac{a}{b}$ and $0<\theta\leq 1$.
  Thus, $2a\leq\theta b\leq a$.
  Hence, from
  \assume{ordered field properties of~$\matR$},
  $a\leq 0$.
  Which is impossible.
  Therefore, we have $\psG{u-v}{w-v}=a\leq 0$.
  
  \proofparskip{``Right'' implies ``left''}
  Conversely, assume now that, for all $w\in K$, $\psG{u-v}{w-v}\leq 0$.
  Let~$w\in K$.
  Then, from
  Definition~\threfc{d:inner-product-space}{$G$~is a {\vectorspace}},
  Definition~\threfc{d:space}{$(G,+)$ is an abelian group},
  Definition~\thref{d:vector-subtraction},
  Lemma~\threfc{l:squared-norm}{for $\psGdotdot$},
  Lemma~\threfc{l:square-expansion-plus}{for $\psGdotdot$},
  Definition~\threfc{d:inner-product}{$\psGdotdot$ is a bilinear map},
  Definition~\threfc{d:bilinear-map}{$\psGdotdot$ is right linear}, and
  \assume{nonnegativeness of the square function in~$\matRplus$},
  we have
  \begin{eqnarray*}
    \nG{u - w}^2
    & = & \nG{(u - v) + (v - w)}^2 \\
    & = & \nG{u - v}^2 + 2 \psG{u - v}{v - w} + \nG{v - w}^2 \\
    & \geq & \nG{u - v}^2 - 2 \psG{u - v}{w - v} \\
    & \geq & \nG{u - v}^2.
  \end{eqnarray*}
  Hence, from
  Lemma~\threfc{l:norm-is-nonnegative}{for $\nGdot$}, and
  \assume{compatibility of the square root function with comparison
    in~$\matRplus$},
  we have $\nG{u-v}\leq\nG{u-w}$.
  Therefore, from
  Lemma~\thref{l:finite-minimum}, and
  Definition~\thref{d:minimum},
  we have
  \begin{equation*}
    \nG{u - v}
    = \min_{w \in K} \nG{u - w}
    = \inf_{w \in K} \nG{u - w}.
    \end{equation*}
\end{proof}

\begin{lemma}[subspace is convex]
  \label{l:subspace-is-convex}
  Let~$E$ be a real {\vectorspace}.
  Let~$F$ be a subspace of~$E$.
  Then, $F$~is a convex subset of~$E$.
\end{lemma}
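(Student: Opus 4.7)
The plan is to unfold the two definitions involved and observe that convexity is a special case of closure under linear combinations. Fix arbitrary $u, v \in F$ and $\theta \in [0,1]$. By Definition~\ref{d:convex-subset}, it suffices to verify that $\theta u + (1-\theta) v$ belongs to~$F$. Since $E$ is a real {\vectorspace}, the scalars $\theta$ and $1 - \theta$ lie in $\matK = \matR$, so this vector is a linear combination of two elements of~$F$ with two scalar coefficients.

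Now I would invoke Lemma~\ref{l:closed-under-linear-combination-is-subspace}: because $F$ is a subspace of~$E$, it is closed under linear combinations, that is, for all $\lambda, \mu \in \matK$ and all $u, v \in F$, the vector $\lambda u + \mu v$ belongs to~$F$. Specializing to $\lambda = \theta$ and $\mu = 1 - \theta$ yields $\theta u + (1-\theta) v \in F$, which is exactly what the definition of convexity requires. Conclude by Definition~\ref{d:convex-subset}.

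There is essentially no obstacle here; the only conceptual point worth flagging is the direction of the inclusion between the two closure notions. Closure under linear combinations is strictly stronger than convexity, since the former allows arbitrary scalars in~$\matR$ whereas the latter restricts coefficients to a convex combination in~$[0,1]$ with the constraint that they sum to~$1$. So the implication goes from subspace to convex, and not the reverse, which is consistent with the statement of the lemma. The proof is a one-liner and does not require any auxiliary construction.
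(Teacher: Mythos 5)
Your proof is correct and matches the paper's argument exactly: both fix $u,v\in F$ and $\theta\in[0,1]$, apply Lemma~\ref{l:closed-under-linear-combination-is-subspace} with scalars $\theta$ and $1-\theta$, and conclude by Definition~\ref{d:convex-subset}. No gaps.
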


\begin{proof}
  Let $u,v\in F$ be vectors in the subspace.
  Let $\theta\in[0,1]$.
  Then, from
  Lemma~\thref{l:closed-under-linear-combination-is-subspace},
  the linear combination $w=\theta u+(1-\theta)v$ belongs to~$F$.
  Therefore, from
  Definition~\thref{d:convex-subset},
  $F$~is a convex subset of~$E$.
\end{proof}

\begin{theorem}[orthogonal projection onto complete subspace]
  \label{t:orth-proj-onto-complete-subspace}
  Let $(G,\psGdotdot)$ be a real inner product space.
  Let~$\nGdot$ be the norm associated with inner product~$\psGdotdot$.
  Let~$d_G$ be the distance associated with norm~$\nGdot$.
  Let~$F$ be a subspace of~$G$ which is complete for distance~$d_G$.
  Then, for all $u\in G$, there exists a unique $v\in F$ such that
  \begin{equation}
    \label{e:orth-proj-subspace-min-distance}
    \nG{u - v} = \min_{w \in F} \nG{u - w}.
  \end{equation}
\end{theorem}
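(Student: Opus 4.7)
The plan is to reduce this theorem to Theorem~\ref{t:orth-proj-onto-complete-convex} (orthogonal projection onto a nonempty complete convex subset) by observing that every subspace of $G$ is automatically a nonempty convex subset, so all the hypotheses of the convex version are already in place.

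First I would check nonemptiness: since $F$ is a subspace of $G$, Lemma~\ref{l:closed-under-vector-operations-is-subspace} gives $0_G \in F$, so $F \neq \emptyset$. Next, convexity of $F$ as a subset of $G$ is exactly the content of Lemma~\ref{l:subspace-is-convex}. Finally, completeness of $F$ for the distance~$d_G$ is given by hypothesis. Thus $K = F$ satisfies the three structural assumptions — nonemptiness, convexity, completeness — required by Theorem~\ref{t:orth-proj-onto-complete-convex}.

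Having verified these, I would simply invoke Theorem~\ref{t:orth-proj-onto-complete-convex} applied to the real inner product space $(G,\psGdotdot)$, the associated norm $\nGdot$ and distance $d_G$, and the nonempty complete convex subset $K = F$. The conclusion of that theorem asserts that for every $u \in G$ there exists a unique $v \in F$ with $\nG{u-v} = \min_{w \in F} \nG{u-w}$, which is precisely the statement to be proved.

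There is no real obstacle in this proof; all the work has been concentrated in the convex version (existence via a Cauchy sequence argument using the parallelogram identity and completeness, uniqueness via the parallelogram identity and the midpoint lying in $K$). The only thing worth noting is that subspaces satisfy the convexity and nonemptiness hypotheses \emph{for free}, which is why the subspace formulation is genuinely a corollary rather than an independent result requiring its own proof path.
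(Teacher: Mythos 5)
Your proof is correct and follows exactly the paper's own route: reduce to Theorem~\ref{t:orth-proj-onto-complete-convex} by noting that a subspace is nonempty (it contains $0_G$), convex (Lemma~\ref{l:subspace-is-convex}), and complete by hypothesis. The paper's proof is the same one-line reduction, so nothing further is needed.
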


\begin{proof}
  Direct consequence of
  Definition~\threfc{d:subspace}{$F$~is vector space},
  Definition~\threfc{d:space}{$F\ni 0_G$~is nonempty},
  Lemma~\thref{l:subspace-is-convex}, and
  Theorem~\threfc{t:orth-proj-onto-complete-convex}{%
    $F$~is a nonempty convex subset of~$G$ which is complete for
    distance~$d_G$}.
\end{proof}

\begin{definition}[orthogonal projection onto complete subspace]
  \label{d:orth-proj-onto-complete-subspace}
  Assume hypotheses of
  Theorem~\thref{t:orth-proj-onto-complete-subspace}.
  The mapping $P_F:G\rightarrow F$ associating to any vector of~$G$ the
  unique vector of~$F$ satisfying~\eqref{e:orth-proj-subspace-min-distance}
  is called {\em orthogonal projection onto~$F$}.
\end{definition}




\begin{lemma}[characterization of orthogonal projection onto subspace]
  \label{l:characterization-of-orth-proj-onto-subspace}
  Let~$(G,\psGdotdot)$ be a real inner product space.
  Let~$\nGdot$ be the norm associated with inner product~$\psGdotdot$.
  Let~$F$ be a subspace of~$G$.
  Then, for all $u\in G$, for all $v\in F$,
  \begin{equation}
    \label{e:characterization-of-orth-proj-onto-subspace}
    \nG{u - v} = \inf_{w \in F} \nG{u - w}
    \Equiv
    \forall w \in F,\quad
    \psG{v}{w} = \psG{u}{w}.
  \end{equation}
\end{lemma}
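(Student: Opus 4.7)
The plan is to derive this characterization as a corollary of the already-established Lemma~\ref{l:characterization-of-orth-proj-onto-convex} (characterization for convex sets). Since $F$ is a subspace, Lemma~\ref{l:subspace-is-convex} tells us $F$ is convex, so the convex-set characterization applies: $\nG{u-v} = \inf_{w\in F} \nG{u-w}$ iff for all $w \in F$, $\psG{u-v}{w-v} \leq 0$. The task then reduces to showing that on a subspace, this one-sided inequality (indexed by $w$) is equivalent to the two-sided equality $\psG{v}{w} = \psG{u}{w}$ (indexed by $w$), which by bilinearity is the same as $\psG{u-v}{w} = 0$ for all $w \in F$.

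For the forward direction, I would start from $\psG{u-v}{w-v} \leq 0$ for all $w \in F$. Given an arbitrary $w' \in F$, the vector $w = w' + v$ lies in~$F$ (closure under vector addition, Lemma~\ref{l:closed-under-vector-operations-is-subspace}), and $w - v = w'$, so $\psG{u-v}{w'} \leq 0$. Applying the same argument to $-w' \in F$ (closure under scalar multiplication, via Lemma~\ref{l:minus-times-yields-opposite-vector}) yields $\psG{u-v}{-w'} = -\psG{u-v}{w'} \leq 0$, hence $\psG{u-v}{w'} = 0$. Finally, bilinearity and Definition~\ref{d:bilinear-map} give $\psG{u}{w'} - \psG{v}{w'} = 0$, which is the desired equality.

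For the reverse direction, assume $\psG{v}{w} = \psG{u}{w}$ for all $w \in F$, which by bilinearity rewrites as $\psG{u-v}{w} = 0$ for all $w \in F$. Let $w \in F$; then $w' = w - v$ belongs to $F$ (closure under linear combinations, Lemma~\ref{l:closed-under-linear-combination-is-subspace}), so $\psG{u-v}{w-v} = 0 \leq 0$. Applying Lemma~\ref{l:characterization-of-orth-proj-onto-convex} to the convex subset~$F$ then yields $\nG{u-v} = \inf_{w \in F} \nG{u-w}$.

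No step is genuinely hard here: the whole argument is a routine exploitation of the subspace closure properties to upgrade the convex-set inequality to a two-sided equality. The only minor technical care needed is to ensure that every intermediate vector used (namely $w'+v$, $-w'$, and $w-v$) is explicitly justified as belonging to~$F$ via the appropriate closure lemma, since the paper's proof style demands every such membership be cited.
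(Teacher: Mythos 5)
Your proof is correct and follows essentially the same route as the paper's: both reduce to the convex-set characterization via Lemma~\ref{l:subspace-is-convex}, then use the translates $w'+v$ and $-w'+v$ (both in~$F$ by the subspace closure properties) to turn the one-sided inequality $\psG{u-v}{w-v}\leq 0$ into the equality $\psG{u-v}{w}=0$, and conversely substitute $w-v\in F$ for the reverse implication. No gaps.
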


\begin{proof}
  Let $u\in G$ and $v\in F$ be vectors.
  
  \proofparskip{``Left'' implies ``right''}
  Assume that $\nG{u-v}=\inf_{w\in F}\nG{u-w}$.
  Then, from
  Definition~\threfc{d:subspace}{$F$~is vector space},
  Definition~\threfc{d:space}{$F\ni 0_G$~is nonempty},
  Lemma~\thref{l:subspace-is-convex}, and
  Lemma~\threfc{l:characterization-of-orth-proj-onto-convex}{%
    $F$~is a nonempty convex subset},
  we have for all $w\in F$, $\psG{u-v}{w-v}\leq 0$.
  Let $w\in F$.
  Let $\wpr=w+v$.
  Then, from
  Definition~\threfc{d:subspace}{$F$~is a {\vectorspace}},
  Definition~\threfc{d:space}{$(F,+)$ is an abelian group}, and
  Definition~\thref{d:vector-subtraction},
  $\wpr$ belongs to~$F$ and $w=\wpr-v$.
  Thus, we have $\psG{u-v}{w}=\psG{u-v}{\wpr-v}\leq 0$.
  Similarly, $\wpp=-w+v$ belongs to~$F$ and
  $\psG{u-v}{-w}=\psG{u-v}{\wpp-v}\leq 0$.
  Hence, from
  Definition~\threfc{d:inner-product}{$\psGdotdot$ is a bilinear map}
  Definition~\threfc{d:bilinear-map}{$\psGdotdot$ is right linear}, and
  \assume{ordered field properties of~$\matR$},
  we have $\psG{u-v}{w}=0$.
  Therefore, from
  Definition~\thref{d:vector-subtraction}, and
  Definition~\threfc{d:bilinear-map}{$\psGdotdot$ is left linear},
  we have $\psG{v}{w}=\psG{u}{w}$.
  
  \proofparskip{``Right'' implies ``left''}
  Conversely, assume now that for all $w\in F$, $\psG{v}{w}=\psG{u}{w}$.
  Let $w\in F$.
  Let $\wpr=w-v$.
  Then, from
  Definition~\threfc{d:subspace}{$F$~is a {\vectorspace}}, and
  Definition~\threfc{d:space}{$(F,+)$ is an abelian group},
  $\wpr$ belongs to~$F$.
  Hence, from
  Definition~\threfc{d:inner-product}{$\psGdotdot$ is a bilinear map},
  Definition~\threfc{d:bilinear-map}{$\psGdotdot$ is left linear},
  hypothesis, and
  \assume{ordered field properties of~$\matR$},
  we have
  \begin{equation*}
    \psG{u - v}{w - v}
    = \psG{u - v}{\wpr}
    = \psG{u}{\wpr} - \psG{v}{\wpr}
    = 0
    \leq 0.
  \end{equation*}
  Therefore, from
  Definition~\threfc{d:subspace}{$F$~is vector space},
  Definition~\threfc{d:space}{$F\ni 0_G$~is nonempty},
  Lemma~\thref{l:subspace-is-convex}, and
  Lemma~\threfc{l:characterization-of-orth-proj-onto-convex}{%
    $F$~is a nonempty convex subset},
  we have $\nG{u-v}=\inf_{w\in F}\nG{u-w}$.
\end{proof}

\begin{lemma}[orthogonal projection is continuous linear map]
  \label{l:orth-proj-is-continuous-linear-map}
  Assume hypotheses of
  Theorem~\thref{t:orth-proj-onto-complete-subspace}.
  Then, the orthogonal projection~$P_F$ is a 1-Lipschitz continuous linear
  map from~$G$ to~$F$.
\end{lemma}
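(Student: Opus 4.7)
The plan is to use the characterization of $P_F$ from Lemma~\thref{l:characterization-of-orth-proj-onto-subspace} as the single engine driving both linearity and the 1-Lipschitz estimate. That lemma says $v = P_F(u)$ iff $v \in F$ and $\psG{v}{w} = \psG{u}{w}$ for all $w \in F$. Recasting the projection as the solution of a variational equation makes the algebraic manipulations very short.

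For linearity, given $\lambda, \mu \in \matR$ and $u_1, u_2 \in G$, set $v_i = P_F(u_i) \in F$. Since $F$ is a subspace, $\lambda v_1 + \mu v_2$ lies in $F$, and using bilinearity of $\psGdotdot$ together with the characterization applied to each $v_i$, the chain
\[
\psG{\lambda v_1 + \mu v_2}{w}
= \lambda \psG{u_1}{w} + \mu \psG{u_2}{w}
= \psG{\lambda u_1 + \mu u_2}{w}
\]
holds for every $w \in F$. Lemma~\thref{l:characterization-of-orth-proj-onto-subspace} then identifies $\lambda v_1 + \mu v_2$ as $P_F(\lambda u_1 + \mu u_2)$, so $P_F$ preserves linear combinations and Lemma~\thref{l:linear-map-preserves-linear-combinations} gives linearity.

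For the 1-Lipschitz estimate, take $u_1, u_2 \in G$ with $v_i = P_F(u_i)$ and, critically, choose the test vector $w = v_1 - v_2$, which belongs to $F$ by the subspace property. Subtracting the two characterizations and invoking Lemma~\thref{l:squared-norm} yields $\nG{v_1 - v_2}^2 = \psG{u_1 - u_2}{v_1 - v_2}$. Cauchy--Schwarz (Lemma~\thref{l:cauchy-schwarz-inequality-with-norms}) bounds the right-hand side by $\nG{u_1 - u_2}\,\nG{v_1 - v_2}$. A case split on $\nG{v_1 - v_2}$ (the zero case is trivial, otherwise divide through) yields $\nG{P_F(u_1) - P_F(u_2)} \leq \nG{u_1 - u_2}$, which is 1-Lipschitz continuity in the sense of Definition~\thref{d:lipschitz-continuity}.

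There is no real obstacle; the only subtlety is picking the right test vector $w = v_1 - v_2 \in F$ so that the two variational equations combine into a $\nG{v_1 - v_2}^2$ term that pairs cleanly with Cauchy--Schwarz, plus the minor bookkeeping of the degenerate case $v_1 = v_2$ when dividing.
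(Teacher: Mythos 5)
Your proof is correct and follows essentially the same route as the paper: linearity via the variational characterization of Lemma~\ref{l:characterization-of-orth-proj-onto-subspace} combined with uniqueness, and the norm estimate via testing against the projection itself and applying \CS{}. The only (harmless) difference is that you prove the two-point inequality $\nG{P_F(u_1)-P_F(u_2)}\leq\nG{u_1-u_2}$ directly with the test vector $w=v_1-v_2$, whereas the paper establishes $\nG{P_F(u)}\leq\nG{u}$ at a single point and relies on the already-proved linearity to convert this into the Lipschitz bound.
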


\begin{proof}
  From
  Definition~\thref{d:orth-proj-onto-complete-subspace}, and
  Theorem~\thref{t:orth-proj-onto-complete-subspace},
  $P_F$ effectively defines a mapping from~$G$ to~$F$.
  
  \proofparskip{Linearity}
  Let $\up,\upp\in G$.
  Let $\lambdap,\lambdapp\in\matR$.
  From
  Definition~\threfc{d:subspace}{$F$~is a vector space}, and
  Definition~\threfc{d:space}{%
    $G$ and~$F$ are closed under vector operations},
  $\lambdap\up+\lambdapp\upp$ belongs to~$G$ and
  $\lambdap P_F(\up)+\lambdapp P_F(\upp)$ belongs to~$F$.
  Let $w\in F$.
  From
  Definition~\threfc{d:inner-product}{$\psGdotdot$ is a bilinear map},
  Definition~\threfc{d:bilinear-map}{$\psGdotdot$ is left linear}, and
  Lemma~\thref{l:characterization-of-orth-proj-onto-subspace},
  we have
  \begin{eqnarray*}
    \psG{\lambdap P_F (\up) + \lambdapp P_F (\upp)}{w}
    & = & \lambdap \psG{P_F (\up)}{w} + \lambdapp \psG{P_F (\upp)}{w} \\
    & = & \lambdap \psG{\up}{w} + \lambdapp \psG{\upp}{w} \\
    & = & \psG{\lambdap \up + \lambdapp \upp}{w}.
  \end{eqnarray*}
  Hence, from
  Lemma~\thref{l:characterization-of-orth-proj-onto-subspace}, and
  Theorem~\threfc{t:orth-proj-onto-complete-subspace}{%
    orthogonal projection is unique},
  we have
  \begin{equation*}
    P_F (\lambdap \up + \lambdapp \upp)
    = \lambdap P_F (\up) + \lambdapp P_F (\upp).
  \end{equation*}
  Therefore, from
  Lemma~\thref{l:linear-map-preserves-linear-combinations},
  $P_F$~is a linear map.
  
  \proofparskip{Continuity}
  Let $u\in G$.
  
  \proofparskip{Case $P_F(u)=0_G$}
  Then, from
  Lemma~\threfc{l:norm-is-nonnegative}{for $\nGdot$},
  we have $\nG{P_F(u)}=0\leq\nG{u}$.
  
  \proofparskip{Case $P_F(u)\not=0_G$}
  Then, from
  Lemma~\thref{l:squared-norm},
  Lemma~\threfc{l:characterization-of-orth-proj-onto-subspace}{%
    with $w=P_F(u)\in F$}, and
  Lemma~\thref{l:cauchy-schwarz-inequality-with-norms},
  we have
  \begin{equation*}
    \nG{P_F (u)}^2
    = \psG{P_F (u)}{P_F (u)}
    = \psG{u}{P_F (u)}
    \leq \nG{u} \, \nG{P_F (u)}.
  \end{equation*}
  Hence, from
  Definition~\threfc{d:norm}{$\nGdot$ is definite}, and
  \assume{ordered field properties of~$\matR$},
  we have $\nG{P_F(u)}\leq\nG{u}$.
  
  Therefore, from
  Definition~\thref{d:lipschitz-continuity},
  $P_F$~is 1-Lipschitz continuous.
\end{proof}

\begin{definition}[orthogonal complement]
  \label{d:orth-compl}
  Let~$(G,\psGdotdot)$ be an inner product space.
  Let~$F$ be a subspace of~$G$.
  The {\em orthogonal complement of~$F$ in~$G$}, denoted~$F^\perp$, is
  defined by
  \begin{equation}
    \label{e:orth-compl}
    F^\perp = \{ u \in G \st \forall v \in F,\, \psG{u}{v} = 0 \}.
  \end{equation}
\end{definition}

\begin{lemma}[trivial orthogonal complements]
  \label{l:trivial-orth-compls}
  Let~$(G,\psGdotdot)$ be an inner product space.
  Then, $G^\perp=\{0_G\}$ and $\{0_G\}^\perp=G$.
\end{lemma}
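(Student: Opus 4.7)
The plan is to prove the two equalities by double inclusion, using mostly Definition~\thref{d:orth-compl} together with Lemma~\thref{l:inner-product-with-zero-is-zero} and the definiteness of the inner product.

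First I would handle $G^\perp = \{0_G\}$. For the inclusion $G^\perp \subset \{0_G\}$, take $u \in G^\perp$; then by Definition~\thref{d:orth-compl}, $\psG{u}{v} = 0$ for every $v \in G$. Specializing to $v = u \in G$ yields $\psG{u}{u} = 0$, and Definition~\threfc{d:inner-product}{$\psGdotdot$ is definite} gives $u = 0_G$. For the reverse inclusion, I need $0_G \in G^\perp$: for every $v \in G$, Lemma~\thref{l:inner-product-with-zero-is-zero} gives $\psG{0_G}{v} = 0$, so by Definition~\thref{d:orth-compl}, $0_G$ lies in $G^\perp$. Hence $G^\perp = \{0_G\}$.

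Next I would handle $\{0_G\}^\perp = G$. The inclusion $\{0_G\}^\perp \subset G$ is immediate from Definition~\thref{d:orth-compl}, since the orthogonal complement of any subset of $G$ is by construction a subset of $G$. For the reverse inclusion, take $u \in G$; the only element of $\{0_G\}$ is $0_G$ itself, and Lemma~\thref{l:inner-product-with-zero-is-zero} gives $\psG{u}{0_G} = 0$. Thus $u$ satisfies the orthogonality condition of Definition~\thref{d:orth-compl} relative to the subspace $\{0_G\}$, so $u \in \{0_G\}^\perp$.

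Two small prerequisites should be mentioned explicitly to keep the proof self-contained: that $\{0_G\}$ is indeed a subspace of $G$ (so the notation $\{0_G\}^\perp$ is legitimate under Definition~\thref{d:orth-compl}), which follows from Lemma~\thref{l:trivial-subspaces}, and that $G$ is a subspace of itself, also by Lemma~\thref{l:trivial-subspaces}. There is no real obstacle here; the argument is essentially unpacking Definition~\thref{d:orth-compl} and applying Lemma~\thref{l:inner-product-with-zero-is-zero} together with the definiteness of~$\psGdotdot$.
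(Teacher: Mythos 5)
Your proof is correct and follows essentially the same route as the paper's: both establish the four inclusions via Definition~\ref{d:orth-compl}, Lemma~\ref{l:inner-product-with-zero-is-zero}, and the definiteness of the inner product (specializing $v=u$ for $G^\perp\subset\{0_G\}$), and both invoke Lemma~\ref{l:trivial-subspaces} to justify that $G$ and $\{0_G\}$ are subspaces. No gaps.
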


\begin{proof}
  From
  Lemma~\thref{l:trivial-subspaces},
  $G$~and $\{0_G\}$ are subspaces of~$G$.
  From
  Definition~\thref{d:orth-compl},
  $G^\perp$~and $\{0_G\}^\perp$ are subsets of~$G$.
  
  Let $u\in G$ be a vector.
  Then, from
  Lemma~\threfc{l:inner-product-with-zero-is-zero}{for $\psGdotdot$},
  we have $\psG{0_G}{u}=\psG{u}{0_G}=0$.
  Hence, from
  Definition~\thref{d:orth-compl},
  $\{0_G\}$ is a subset of~$G^\perp$ and~$G$ is a subset of~$\{0_G\}^\perp$.
  
  Let $u\in G^\perp$ be a vector in the orthogonal.
  Let $v=u\in G$.
  Then, from
  Definition~\thref{d:orth-compl},
  we have $\psG{u}{v}=\psG{u}{u}=0$.
  Thus, from
  Definition~\threfc{d:inner-product}{$\psGdotdot$ is definite},
  we have $u=0_G$.
  Hence, $G^\perp$~is a subset of $\{0_G\}$.
  
  Therefore, $G^\perp=\{0_G\}$ and $\{0_G\}^\perp=G$.
\end{proof}

\begin{lemma}[orthogonal complement is subspace]
  \label{l:orth-compl-is-subspace}
  Let~$(G,\psGdotdot)$ be an inner product space.
  Let~$F$ be a subspace of~$G$.
  Then, $F^\perp$~is a subspace of~$G$.
\end{lemma}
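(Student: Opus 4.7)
The plan is to apply the characterization of subspaces by closure under linear combination (Lemma~\ref{l:closed-under-linear-combination-is-subspace}), which requires only two verifications: that $0_G$ belongs to $F^\perp$, and that $F^\perp$ is closed under linear combination. The key algebraic ingredient is the bilinearity of the inner product; everything else is unpacking of definitions.

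First I would note that, by Definition~\ref{d:orth-compl}, $F^\perp$ is a subset of~$G$. Then I would show $0_G \in F^\perp$: for every $v \in F$, Lemma~\ref{l:inner-product-with-zero-is-zero} gives $\psG{0_G}{v} = 0$, so from Definition~\ref{d:orth-compl}, $0_G \in F^\perp$.

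Next, for closure under linear combination, I would take $u_1, u_2 \in F^\perp$ and scalars $\lambda_1, \lambda_2 \in \matR$, and let $v \in F$ be arbitrary. Using that $\psGdotdot$ is a bilinear map (Definition~\ref{d:inner-product}) and specifically left linear (Definition~\ref{d:bilinear-map}), together with the defining property $\psG{u_1}{v} = \psG{u_2}{v} = 0$ coming from Definition~\ref{d:orth-compl}, I would compute
\begin{equation*}
  \psG{\lambda_1 u_1 + \lambda_2 u_2}{v}
  = \lambda_1 \psG{u_1}{v} + \lambda_2 \psG{u_2}{v}
  = \lambda_1 \cdot 0 + \lambda_2 \cdot 0
  = 0.
\end{equation*}
Since $v \in F$ was arbitrary, Definition~\ref{d:orth-compl} yields $\lambda_1 u_1 + \lambda_2 u_2 \in F^\perp$.

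Finally, I would invoke Lemma~\ref{l:closed-under-linear-combination-is-subspace} to conclude that $F^\perp$ is a subspace of~$G$. There is no real obstacle here; the proof is purely a bookkeeping argument combining the definition of the orthogonal complement with the left linearity of the inner product. The only thing to be careful about is making sure to use the left argument slot (as done above) to match the shape of the condition defining $F^\perp$, though by symmetry of $\psGdotdot$ one could equally well use the right slot.
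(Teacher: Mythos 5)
Your proof is correct and follows essentially the same route as the paper's: show $0_G\in F^\perp$ via Lemma~\ref{l:inner-product-with-zero-is-zero}, establish closure under linear combination using the left linearity of the inner product, and conclude with Lemma~\ref{l:closed-under-linear-combination-is-subspace}. No differences worth noting.
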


\begin{proof}
  Let $v\in F$.
  Then, from
  Lemma~\threfc{l:inner-product-with-zero-is-zero}{for $\psGdotdot$},
  we have $\psG{0_G}{v}=0$.
  Hence, from
  Definition~\thref{d:orth-compl},
  $0_G$ belongs to~$F^\perp$.
  
  Let $\lambda,\lambdap\in\matR$.
  Let $u,\up\in F^\perp$.
  Let $v\in F$.
  From
  Definition~\threfc{d:inner-product}{$\psGdotdot$ is a bilinear map},
  Definition~\threfc{d:bilinear-map}{$\psGdotdot$ is left linear}, and
  \assume{field properties of~$\matR$},
  we have
  \begin{equation*}
    \psG{\lambda u + \lambdap \up}{v}
    = \lambda \psG{u}{v} + \lambdap \psG{\up}{v}
    = \lambda 0 + \lambdap 0
    = 0.
  \end{equation*}
  Thus, from
  Definition~\thref{d:orth-compl},
  $\lambda u+\lambdap \up$ belongs to~$F^\perp$.
  Hence, $F^\perp$ is closed under linear combination.
  Therefore, from
  Lemma~\thref{l:closed-under-linear-combination-is-subspace},
  $F^\perp$~is a subspace of~$G$.
\end{proof}

\begin{lemma}[zero intersection with orthogonal complement]
  \label{l:zero-intersection-with-orth-compl}
  Let~$(G,\psGdotdot)$ be an inner product space.
  Let~$F$ be a subspace of~$G$.
  Then,
  \begin{equation}
    \label{e:zero-intersection-with-orth-compl}
    F \cap F^\perp = \{ 0_G \}.
  \end{equation}
\end{lemma}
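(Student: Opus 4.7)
The plan is to prove the set equality by double inclusion, exploiting the definiteness of the inner product for the nontrivial direction.

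First I would establish $\{0_G\} \subset F \cap F^\perp$. Since $F$ is a subspace, Lemma~\ref{l:closed-under-vector-operations-is-subspace} gives $0_G \in F$. Similarly, since $F^\perp$ is a subspace by Lemma~\ref{l:orth-compl-is-subspace}, we have $0_G \in F^\perp$. Therefore $0_G \in F \cap F^\perp$, and since the singleton $\{0_G\}$ contains only this element, $\{0_G\} \subset F \cap F^\perp$.

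Next I would establish the reverse inclusion $F \cap F^\perp \subset \{0_G\}$. Let $u \in F \cap F^\perp$. Then $u \in F$ and $u \in F^\perp$, so by Definition~\ref{d:orth-compl} (applied with the vector of $F$ taken to be $u$ itself), we get $\psG{u}{u} = 0$. By Definition~\ref{d:inner-product} (definiteness of $\psGdotdot$), this forces $u = 0_G$, hence $u \in \{0_G\}$.

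Combining both inclusions yields $F \cap F^\perp = \{0_G\}$. There is no real obstacle here: the whole argument is essentially a direct application of the definiteness axiom of the inner product, together with the fact that both $F$ and $F^\perp$ are subspaces and therefore contain $0_G$. The only minor care is to correctly cite Definition~\ref{d:orth-compl} with the test vector chosen as $u$ itself, which is legitimate precisely because $u$ belongs to $F$.
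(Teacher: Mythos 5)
Your proof is correct and follows essentially the same route as the paper: take $u \in F \cap F^\perp$, instantiate Definition~\ref{d:orth-compl} with the test vector $v = u$ to get $\psG{u}{u} = 0$, and conclude $u = 0_G$ by definiteness of the inner product. You are in fact slightly more careful than the paper, which only proves the inclusion $F \cap F^\perp \subset \{0_G\}$ and leaves the (trivial) reverse inclusion implicit; your explicit appeal to Lemma~\ref{l:orth-compl-is-subspace} and the fact that subspaces contain $0_G$ closes that gap.
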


\begin{proof}
  Let~$\nGdot$ be the norm associated with inner product~$\psGdotdot$.
  Let $u\in F\cap F^\perp$.
  Then, $u\in F$ and $v=u\in F^\perp$.
  Thus, from
  Definition~\thref{d:orth-compl},
  we have
  \begin{equation*}
    \psG{u}{u} = \psG{u}{v} = 0.
  \end{equation*}
  Therefore, from
  Definition~\threfc{d:inner-product}{$\psGdotdot$ is definite},
  $u=0_G$.
\end{proof}

\begin{theorem}[direct sum with orthogonal complement when complete]
  \label{t:direct-sum-with-orth-compl-when-complete}
  Assume hypotheses of
  Theorem~\thref{t:orth-proj-onto-complete-subspace}.
  Then,
  \begin{equation}
    \label{e:direct-sum-with-orth-compl-when-complete}
    G = F \oplus F^\perp.
  \end{equation}
  Moreover, for all $u\in G$, the (unique) decomposition onto
  $F\oplus F^\perp$ is
  \begin{equation}
    \label{e:decomposition-with-orth-proj-when-complete}
    u = P_F (u) + (u - P_F (u))
  \end{equation}
  and we have the following characterizations of the orthogonal complements:
  \begin{eqnarray}
    \label{e:characterization-orth-compl-f}
    u \in F & \Equiv & P_F (u) = u; \\
    \label{e:characterization-orth-compl-f-perp}
    u \in F^\perp & \Equiv & P_F (u) = 0_G.
  \end{eqnarray}
\end{theorem}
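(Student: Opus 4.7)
The plan is to make the decomposition~\eqref{e:decomposition-with-orth-proj-when-complete} the cornerstone of the proof and derive everything else from it. Given $u \in G$, I would set $v = P_F(u)$ and $\wpr = u - P_F(u)$, so that $u = v + \wpr$ follows from Definition~\thref{d:vector-subtraction} and Definition~\threfc{d:space}{$(G,+)$ is an abelian group}. By Definition~\thref{d:orth-proj-onto-complete-subspace}, $v \in F$. To show $\wpr \in F^\perp$, I would invoke Lemma~\thref{l:characterization-of-orth-proj-onto-subspace}, which yields $\psG{P_F(u)}{w} = \psG{u}{w}$ for all $w \in F$; left linearity of $\psGdotdot$ then gives $\psG{\wpr}{w} = 0$ for all $w \in F$, so by Definition~\thref{d:orth-compl}, $\wpr \in F^\perp$. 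Hence $G \subset F + F^\perp$, and the reverse inclusion is immediate from $F, F^\perp \subset G$ and closedness of $G$ under addition.

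Combining the above with Lemma~\thref{l:zero-intersection-with-orth-compl} (which states $F \cap F^\perp = \{0_G\}$), Lemma~\thref{l:equivalent-definition-of-direct-sum} upgrades the sum to a direct sum, establishing $G = F \oplus F^\perp$. Formula~\eqref{e:decomposition-with-orth-proj-when-complete} then holds by construction, and its uniqueness is guaranteed by Definition~\thref{d:direct-sum-of-subspaces}.

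For the characterization~\eqref{e:characterization-orth-compl-f}, the direction $P_F(u) = u \implies u \in F$ is immediate since $P_F(u) \in F$. Conversely, if $u \in F$, then from Lemma~\threfc{l:norm-is-nonnegative}{for $\nGdot$} and Definition~\threfc{d:norm}{$\nGdot$ is definite}, $\nG{u - u} = 0$ is a finite lower bound realized at $u \in F$, so by Lemma~\thref{l:finite-minimum} and the uniqueness part of Theorem~\thref{t:orth-proj-onto-complete-subspace}, $P_F(u) = u$. For~\eqref{e:characterization-orth-compl-f-perp}, if $u \in F^\perp$ then using Lemma~\thref{l:inner-product-with-zero-is-zero} and $0_G \in F$, we have $\psG{0_G}{w} = 0 = \psG{u}{w}$ for all $w \in F$; Lemma~\thref{l:characterization-of-orth-proj-onto-subspace} together with uniqueness of the projection then gives $P_F(u) = 0_G$. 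The converse follows from the same characterization applied in the other direction: $\psG{u}{w} = \psG{P_F(u)}{w} = \psG{0_G}{w} = 0$ for all $w \in F$, hence $u \in F^\perp$ by Definition~\thref{d:orth-compl}.

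No real obstacle is expected, since the essential analytic work is already packaged in Lemma~\thref{l:characterization-of-orth-proj-onto-subspace}; the only subtlety is recognizing that its condition $\psG{P_F(u)}{w} = \psG{u}{w}$ for all $w \in F$ is exactly the statement $u - P_F(u) \in F^\perp$, which is precisely what transforms the existence of an optimal approximant into an orthogonal decomposition.
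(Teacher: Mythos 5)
Your proof is correct and follows essentially the same path as the paper: the decomposition $u = P_F(u) + (u - P_F(u))$, Lemma~\ref{l:characterization-of-orth-proj-onto-subspace} to place the second term in $F^\perp$, and Lemma~\ref{l:zero-intersection-with-orth-compl} with Lemma~\ref{l:equivalent-definition-of-direct-sum} to conclude the sum is direct. The only (harmless) variation is that you derive the two characterizations from the variational/inner-product characterization of $P_F$ and uniqueness of the minimizer, whereas the paper reads them off from uniqueness of the decomposition onto $F \oplus F^\perp$; both arguments are valid.
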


\begin{proof}
  Let $u\in G$.
  
  Then, from
  Definition~\thref{d:orth-proj-onto-complete-subspace}, and
  Lemma~\thref{l:characterization-of-orth-proj-onto-subspace},
  there exists a unique $P_F(u)\in F$ characterized by, for all $w\in F$,
  $\psG{P_F(u)}{w}=\psG{u}{w}$.
  Thus, from
  Definition~\threfc{d:inner-product}{$\psGdotdot$ is a bilinear map},
  Definition~\threfc{d:bilinear-map}{$\psGdotdot$ is left linear}, and
  Definition~\thref{d:orth-compl},
  $u-P_F(u)$ belongs to~$F^\perp$.
  From
  Definition~\threfc{d:space}{$(G,+)$ is an abelian group},
  we have
  \begin{equation*}
    u = P_F (u) + (u - P_F (u)).
  \end{equation*}
  Hence, from
  Definition~\thref{d:sum-of-subspaces},
  $G=F+F^\perp$.
  Therefore, from
  Lemma~\threfc{l:zero-intersection-with-orth-compl}{for~$F$}, and
  Lemma~\thref{l:equivalent-definition-of-direct-sum},
  we have $G=F\oplus F^\perp$.
  From
  Definition~\thref{d:direct-sum-of-subspaces},
  the decomposition~\eqref{e:decomposition-with-orth-proj-when-complete} with
  $P_F(u)\in F$ and $u-P_F(u)\in F^\perp$ is unique.
  
  From
  Lemma~\thref{l:zero-intersection-with-orth-compl},
  $0_G$~belongs to both~$F$ and~$F^\perp$.
  
  \proofparskip{(\ref{e:characterization-orth-compl-f}):
    ``left'' implies ``right''}
  Assume that $u\in F$.
  Then, from
  Definition~\threfc{d:space}{$(G,+)$ is an abelian group},
  $u=u+0_G$ is a decomposition over $F\oplus F^\perp$.
  From
  uniqueness of the decomposition,
  we have $P_F(u)=u$.
  
  \proofparskip{(\ref{e:characterization-orth-compl-f}):
    ``right'' implies ``left''}
  Conversely, assume now that $P_F(u)=u$.
  Then, from
  Definition~\threfc{d:orth-proj-onto-complete-subspace}{%
    $P_F$~is a mapping to~$F$},
  $u=P_F(u)$ belongs to~$F$.
  
  \proofparskip{(\ref{e:characterization-orth-compl-f-perp}):
    ``left'' implies ``right''}
  Assume that $u\in F^\perp$.
  Then, from
  Definition~\threfc{d:space}{$(G,+)$ is an abelian group},
  $u=0_G+u$ is a decomposition over $F\oplus F^\perp$.
  From
  uniqueness of the decomposition,
  we have $P_F(u)=0_G$.
  
  \proofparskip{(\ref{e:characterization-orth-compl-f-perp}):
    ``right'' implies ``left''}
  Conversely, assume now that $P_F(u)=0_G$.
  Let $v\in F$.
  Then, from
  Lemma~\thref{l:characterization-of-orth-proj-onto-subspace}, and
  Lemma~\thref{l:inner-product-with-zero-is-zero},
  we have
  \begin{equation*}
    \psG{u}{v}
    = \psG{P_F (u)}{v}
    = \psG{0_G}{v}
    = 0.
  \end{equation*}
  Hence, from
  Definition~\thref{d:orth-compl},
  $u$~belongs to~$F^\perp$.
\end{proof}

\begin{lemma}[sum is orthogonal sum]
  \label{l:sum-is-orth-sum}
  Assume hypotheses of
  Theorem~\thref{t:orth-proj-onto-complete-subspace}.
  Let~$u\in G$ be a vector.
  Then, there exists $\up\in F^\perp$ such that $F+\Line{u}=F+\Line{\up}$.
\end{lemma}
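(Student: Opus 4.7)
The plan is to use the orthogonal decomposition already provided by Theorem~\ref{t:direct-sum-with-orth-compl-when-complete}, setting $\up := u - P_F(u)$, and then to verify the set equality $F + \Line{u} = F + \Line{\up}$ by double inclusion, absorbing the component $P_F(u) \in F$ into the $F$-summand in each direction.

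First I would invoke Theorem~\ref{t:direct-sum-with-orth-compl-when-complete} to obtain the decomposition $u = P_F(u) + (u - P_F(u))$ with $P_F(u) \in F$ and $\up := u - P_F(u) \in F^\perp$. This immediately provides a candidate $\up \in F^\perp$ and the key algebraic relation $u = P_F(u) + \up$, equivalently $\up = u - P_F(u)$, relating generators of the two lines $\Line{u}$ and $\Line{\up}$.

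Next I would prove $F + \Line{u} \subset F + \Line{\up}$. Take an arbitrary element $v + \lambda u \in F + \Line{u}$ with $v \in F$ and $\lambda \in \matR$ (using Definition~\ref{d:sum-of-subspaces} and Definition~\ref{d:linear-span}). Substituting $u = P_F(u) + \up$ and using vector-space axioms, rewrite it as $(v + \lambda P_F(u)) + \lambda \up$. Since $F$ is closed under linear combination (Lemma~\ref{l:closed-under-linear-combination-is-subspace}) and $P_F(u) \in F$, the first summand lies in $F$; the second lies in $\Line{\up}$ by Definition~\ref{d:linear-span}, so the element lies in $F + \Line{\up}$. The reverse inclusion is symmetric: any $v + \lambda \up \in F + \Line{\up}$ rewrites as $(v - \lambda P_F(u)) + \lambda u$, where the first summand again lies in $F$ (using Lemma~\ref{l:minus-times-yields-opposite-vector} and closure under linear combination), and the second lies in $\Line{u}$.

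There is no real obstacle here: once one has $\up = u - P_F(u) \in F^\perp$ from Theorem~\ref{t:direct-sum-with-orth-compl-when-complete}, the equality of the two sums is pure linear algebra, exploiting only that $F$ is closed under linear combinations and that $u$ and $\up$ differ by an element of $F$. The only minor care required is to justify each algebraic rearrangement by the space axioms of Definition~\ref{d:space} and to cite Lemma~\ref{l:closed-under-linear-combination-is-subspace} for the stability of $F$.
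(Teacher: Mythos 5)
Your proposal is correct and follows exactly the paper's own argument: set $\up = u - P_F(u)$, invoke Theorem~\ref{t:direct-sum-with-orth-compl-when-complete} to get $P_F(u)\in F$ and $\up\in F^\perp$, then establish the equality by double inclusion, absorbing $\pm\lambda P_F(u)$ into the $F$-summand via Lemma~\ref{l:closed-under-linear-combination-is-subspace}. Nothing is missing.
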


\begin{proof}
  Let $\up=u-P_F(u)$.
  Then, from
  Theorem~\thref{t:direct-sum-with-orth-compl-when-complete},
  $P_F(u)$ belongs to~$F$ and $\up$ belongs to~$F^\perp$.
  
  Let $w\in F+\Line{u}$.
  Then, from
  Definition~\thref{d:sum-of-subspaces}, and
  Definition~\thref{d:linear-span},
  there exists $v\in F$ and $\lambda\in\matR$ such that $w=v+\lambda u$.
  From
  Lemma~\threfc{l:closed-under-linear-combination-is-subspace}{%
    with~$1$ and~$\lambda$},
  we have $\vp=v+\lambda P_F(u)\in F$, and thus, from
  Definition~\threfc{d:space}{$(G,+)$ is an abelian group},
  we have
  \begin{equation*}
    w = v + \lambda u
    = v + \lambda P_F (u) + \lambda \up
    = \vp + \lambda \up
  \end{equation*}
  with $\vp\in F$.
  Hence, $w$ belongs to $F+\Line{\up}$, and thus
  $F+\Line{u}\subset F+\Line{\up}$.
  
  Let $w\in F+\Line{\up}$.
  Similarly, from
  Definition~\thref{d:sum-of-subspaces}, and
  Definition~\thref{d:linear-span},
  there exists $v\in F$ and $\lambda\in\matR$ such that $w=v+\lambda\up$;
  from
  Lemma~\threfc{l:closed-under-linear-combination-is-subspace}{%
    with~$1$ and~$-\lambda$},
  we have $\vp=v-\lambda P_F(u)\in F$, and thus, from
  Definition~\threfc{d:space}{$(G,+)$ is an abelian group},
  we have
  \begin{equation*}
    w = v + \lambda \up
    = v - \lambda P_F (u) + \lambda u
    = \vp + \lambda u
  \end{equation*}
  with $\vp\in F$.
  Hence, $w$ belongs to $F+\Line{u}$, and thus
  $F+\Line{\up}\subset F+\Line{u}$.
  
  Therefore, $F+\Line{u}=F+\Line{\up}$.
\end{proof}

\begin{lemma}[sum of complete subspace and linear span is closed]
  \label{l:sum-of-complete-subspace-and-linear-span-is-closed}
  Assume hypotheses of
  Theorem~\thref{t:orth-proj-onto-complete-subspace}.
  Let~$u$ be a nonzero vector in the orthogonal of~$F$.
  Then, $F\oplus\Line{u}$ is closed for distance~$d_G$.
\end{lemma}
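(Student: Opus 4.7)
The plan is to apply Lemma~\thref{l:closed-is-limit-of-sequences}: it suffices to show that every sequence of $F \oplus \Line{u}$ which converges in~$G$ has its limit still in $F \oplus \Line{u}$. First, since $u \in F^\perp$ and $u \neq 0_G$, Lemma~\thref{l:zero-intersection-with-orth-compl} yields $u \notin F$, so Lemma~\thref{l:direct-sum-with-linear-span} confirms that the sum $F + \Line{u}$ is indeed direct; consequently every element of the sum admits a unique decomposition $v + \lambda u$ with $v \in F$ and $\lambda \in \matR$.

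Consider then a convergent sequence $(w_n)_{n \in \matN} = (v_n + \lambda_n u)_{n \in \matN}$ in $F \oplus \Line{u}$ with limit $w \in G$. The key observation is a Pythagorean-type identity: since $v_n - v_m \in F$ and $u \in F^\perp$, Definition~\thref{d:orth-compl} gives $\psG{v_n - v_m}{u} = 0$, hence by Lemma~\thref{l:square-expansion-plus} and Lemma~\thref{l:squared-norm},
\begin{equation*}
  \nG{w_n - w_m}^2
  = \nG{v_n - v_m}^2 + (\lambda_n - \lambda_m)^2 \, \nG{u}^2.
\end{equation*}
By Lemma~\thref{l:convergent-sequence-is-cauchy}, $(w_n)_{n \in \matN}$ is Cauchy; since $\nG{u} > 0$ (because $u \neq 0_G$ and $\nGdot$ is definite), the identity above forces both $(v_n)_{n \in \matN}$ to be Cauchy in~$F$ and $(\lambda_n)_{n \in \matN}$ to be Cauchy in~$\matR$.

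By completeness of~$F$ (hypothesis) and of~$\matR$, let $v = \lim_n v_n \in F$ and $\lambda = \lim_n \lambda_n \in \matR$. Using that $\nG{(v_n + \lambda_n u) - (v + \lambda u)} \leq \nG{v_n - v} + |\lambda_n - \lambda|\,\nG{u}$ (triangle inequality together with absolute homogeneity of $\nGdot$), one concludes that $v_n + \lambda_n u$ converges to $v + \lambda u$ in $G$. Uniqueness of the limit (Lemma~\thref{l:limit-is-unique}) then gives $w = v + \lambda u \in F \oplus \Line{u}$, and Lemma~\thref{l:closed-is-limit-of-sequences} finishes the argument.

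The main subtlety is the Pythagorean identity: the hypothesis $u \in F^\perp$ is precisely what kills the cross term $2(\lambda_n - \lambda_m)\,\psG{v_n - v_m}{u}$ and thereby lets us recover Cauchy behaviour for the two coordinates separately. Without this orthogonality assumption, the scalar coefficients of a convergent sum could oscillate and the decomposition could fail to pass to the limit; the orthogonality is thus the only non-trivial ingredient, the rest being essentially bookkeeping with completeness and the triangle inequality.
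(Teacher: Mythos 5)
Your proof is correct, but it takes a genuinely different route from the paper's. The paper applies the orthogonal projection $P_F$ and $\Identity-P_F$ --- both continuous linear maps by Lemma~\thref{l:orth-proj-is-continuous-linear-map} --- to the convergent sequence $(w_n)_{n\in\matN}$, obtains convergence of the component sequences $P_F(w_n)\in F$ and $(\Identity-P_F)(w_n)=\lambda_n u\in\Line{u}$ from Lemma~\thref{l:compatibility-of-limit-with-continuous-functions}, and then invokes the closedness of $\Line{u}$ (Lemma~\thref{l:linear-span-is-closed}) to identify the second limit as some $\lambda u$. You instead work at the level of Cauchy sequences: the Pythagorean identity $\nG{w_n-w_m}^2=\nG{v_n-v_m}^2+(\lambda_n-\lambda_m)^2\nG{u}^2$, valid precisely because $u\in F^\perp$ kills the cross term, decouples the two coordinates, and the completeness of~$F$ (a standing hypothesis) together with that of~$\matR$ produces the limits $v$ and $\lambda$; the triangle inequality then gives $w=v+\lambda u$. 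Your argument is more self-contained --- it bypasses the continuity of $P_F$, Theorem~\thref{t:direct-sum-with-orth-compl-when-complete}, and the closedness of $\Line{u}$, in effect inlining the Cauchy-in-$\matK$ argument that proves the latter --- at the cost of redoing some $\eps$-bookkeeping that the paper delegates to earlier lemmas. The paper's version reuses more of the established infrastructure; yours makes visible exactly where the hypothesis $u\in F^\perp$ enters, namely in forcing the scalar coefficients of a convergent (hence Cauchy) sum to be Cauchy themselves. Both correctly reduce to Lemma~\thref{l:closed-is-limit-of-sequences} at the end.
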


\begin{proof}
  From
  Lemma~\thref{l:zero-intersection-with-orth-compl},
  $u$ does not belong to~$F$, thus from
  Lemma~\thref{l:direct-sum-with-linear-span},
  the sum $F+\Line{u}$ is direct.
  
  From
  Lemma~\threfc{l:orth-proj-is-continuous-linear-map}{%
    $F$ is complete for distance~$d_F$},
  $P_F$~is a continuous linear map.
  Then, from
  Lemma~\thref{l:identity-map-is-continuous},
  Theorem~\thref{t:normed-space-of-continuous-linear-maps}, and
  Lemma~\thref{l:closed-under-linear-combination-is-subspace},
  $Id-P_F$ is also a continuous linear map.
  
  Let $(w_n)_{n\in\matN}$ be a sequence in $F\oplus\Line{u}$.
  Assume that this sequence is convergent with limit $w\in G$.
  From
  Definition~\thref{d:sum-of-subspaces}, and
  Definition~\thref{d:linear-span},
  for all $n\in\matN$, there exists $v_n\in F$ and $\lambda_n\in\matR$ such
  that $w_n=v_n+\lambda_n u$.
  Then, from
  Lemma~\threfc{l:compatibility-of-limit-with-continuous-functions}{%
    $P_F$ and $Id-P_F$ are continuous},
  the sequences $(\wpr_n)_{n\in\matN}=P_F((w_n)_{n\in\matN})$ and
  $(\wpp_n)_{n\in\matN}=(Id-P_F)((w_n)_{n\in\matN})$ are also
  convergent, respectively with limits $\wpr=P_F(w)$ and
  $\wpp=(Id-P_F)(w)=w-\wpr$.
  From
  Theorem~\thref{t:direct-sum-with-orth-compl-when-complete},
  we have, $\wpr\in F$ and $\wpp\in F^\perp$, and for all $n\in\matN$,
  \begin{equation*}
    \wpp_n
    = (Id - P_F) (w_n)
    = (Id - P_F) (v_n + \lambda_n u)
    = v_n + \lambda_n u - v_n
    = \lambda_n u.
  \end{equation*}
  Thus, $(\wpp_n)_{n\in\matN}$ is also a sequence of $\Line{u}$.
  Then, from,
  Lemma~\threfc{l:linear-span-is-closed}{$\Line{u}$ is closed}, and
  Lemma~\thref{l:closed-is-limit-of-sequences},
  the limit~$\wpp$ actually belongs to~$\Line{u}$.
  Hence, from
  Definition~\thref{d:linear-span},
  there exists $\lambda\in\matR$ such that $\wpp=\lambda u$.
  And we have
  \begin{equation*}
    w = \wpr + \wpp = \wpr + \lambda u \in F \oplus \Line{u}.
  \end{equation*}
  
  Therefore, from
  Lemma~\thref{l:closed-is-limit-of-sequences},
  $F\oplus\Line{u}$ is closed for distance~$d_G$.
\end{proof}

\subsection{Hilbert space}
\label{ss:hilbert-space}

\begin{definition}[Hilbert space]
  \label{d:hilbert-space}
  Let $(H,\psHdotdot)$ be an inner product space.
  Let~$\nHdot$ be the norm associated with inner product~$\psHdotdot$ through
  Definition~\thref{d:square-root-of-inner-square}, and
  Lemma~\thref{l:inner-product-gives-norm}.
  Let~$d_H$ be the distance associated with norm~$\nHdot$ through
  Lemma~\thref{l:norm-gives-distance}.
  $(H,\psHdotdot)$ is an {\em Hilbert space} iff
  $(H,d_H)$ is a complete metric space.
\end{definition}

\begin{lemma}[closed Hilbert subspace]
  \label{l:closed-hilbert-subspace}
  Let~$(H,\psHdotdot)$ be a Hilbert space.
  Let~$\Hh$ be a closed subspace of~$H$.
  Then, $\Hh$~equipped with the restriction to~$\Hh$ of the inner product
  $\psHdotdot$ is a Hilbert space.
\end{lemma}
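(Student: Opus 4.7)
The plan is to build the Hilbert structure on $\Hh$ in three layers, matching the hierarchy sketched in Figure~\ref{fig:hier:hilbert}: first the inner product structure, then the induced norm and distance, then completeness.

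First I would apply Lemma~\ref{l:inner-product-subspace} to obtain that $\Hh$ equipped with the restriction of $\psHdotdot$ is an inner product space; this uses only that $\Hh$ is a subspace of~$H$, which is part of the hypothesis. From Definition~\ref{d:square-root-of-inner-square} and Lemma~\ref{l:inner-product-gives-norm} applied on~$\Hh$, this yields an associated norm $\nrm{\cdot}_\Hh$, and I would observe that for every $u\in\Hh$ we have $\nrm{u}_\Hh=\sqrt{\psH{u}{u}}=\nH{u}$, so $\nrm{\cdot}_\Hh$ is simply the restriction of~$\nHdot$ to~$\Hh$. Then, by Definition~\ref{d:distance-associated-with-norm} and Lemma~\ref{l:norm-gives-distance}, the induced distance $d_\Hh$ on~$\Hh$ satisfies $d_\Hh(u,v)=\nrm{u-v}_\Hh=\nH{u-v}=d_H(u,v)$ for all $u,v\in\Hh$, so $d_\Hh$ is the restriction of $d_H$ to $\Hh\times\Hh$.

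Next I would handle completeness. Since $(H,\psHdotdot)$ is a Hilbert space, Definition~\ref{d:hilbert-space} gives that $(H,d_H)$ is a complete metric space. Because $\Hh$ is closed in $H$ for $d_H$, Lemma~\ref{l:closed-subset-of-complete-is-complete} yields that $\Hh$ is complete for the distance inherited from~$d_H$. Combined with the identification of $d_\Hh$ with the restriction of~$d_H$ above, this gives that $(\Hh,d_\Hh)$ is a complete metric space. Applying Definition~\ref{d:hilbert-space} in the reverse direction to the inner product space $(\Hh,\psHdotdot_{|\Hh})$ concludes that $\Hh$ is a Hilbert space.

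The proof is essentially a bookkeeping exercise chaining previously established results, so there is no genuinely hard step. The only point requiring care is the explicit verification that the norm and distance constructed \emph{intrinsically} from the restricted inner product on~$\Hh$ coincide pointwise with the restrictions of $\nHdot$ and $d_H$; without this identification the appeal to Lemma~\ref{l:closed-subset-of-complete-is-complete} would not directly transfer completeness from $(H,d_H)$ to the metric space associated with the restricted inner product structure. This identification is immediate from the definitions of square root of inner square and of distance associated with a norm, but it deserves to be stated to keep the formalization rigorous.
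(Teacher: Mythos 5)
Your proof is correct and follows essentially the same route as the paper, which simply cites Lemma~\ref{l:inner-product-subspace}, the completeness of~$H$ from Definition~\ref{d:hilbert-space}, Lemma~\ref{l:closed-subset-of-complete-is-complete}, and Definition~\ref{d:hilbert-space} again. Your explicit check that the norm and distance built intrinsically from the restricted inner product coincide with the restrictions of $\nHdot$ and~$d_H$ is left implicit in the paper but is exactly the bookkeeping a formalization would require, so it is a welcome addition rather than a deviation.
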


\begin{proof}
  Direct consequence of
  Lemma~\threfc{l:inner-product-subspace}{$\Hh$~is a subspace of~$H$},
  Definition~\threfc{d:subspace}{$\Hh$~is a subset of~$H$},
  Lemma~\threfc{d:hilbert-space}{$H$~is complete},
  Lemma~\threfc{l:closed-subset-of-complete-is-complete}{$F$~is closed}, and
  Definition~\thref{d:hilbert-space}.
\end{proof}

\begin{theorem}[{\RF}]
  \label{t:riesz-frechet}
  Let~$(H,\psHdotdot)$ be a Hilbert space.
  Let~$\nHdot$ be the norm associated with inner product $\psHdotdot$.
  Let $\fhi\in\Hp$ be a continuous linear form on~$H$.
  Then, there exists a unique vector $u\in H$ such that
  \begin{equation}
    \label{e:riesz-frechet-representation}
    \forall v \in H,\quad
    \pdH{\fhi}{v} = \psH{u}{v}.
  \end{equation}
  Moreover, the mapping $\tau:\Hp\rightarrow H$ defined by
  \begin{equation}
    \label{e:riesz-frechet-isometry}
    \forall \fhi \in \Hp,\quad
    \tau (\fhi) = u,
  \end{equation}
  where~$u$ is characterized by~\eqref{e:riesz-frechet-representation}, is a
  continuous isometric isomorphism from~$\Hp$ onto~$H$.
\end{theorem}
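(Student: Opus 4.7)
The plan is to separate uniqueness and existence of the representative~$u$, and then deduce the properties of the map~$\tau$ from the construction. Uniqueness is immediate: if $u_1$ and $u_2$ both satisfy~\eqref{e:riesz-frechet-representation}, then by bilinearity $\psH{u_1 - u_2}{v} = 0$ for all $v \in H$; specializing to $v = u_1 - u_2$ and invoking the definiteness of $\psHdotdot$ yields $u_1 = u_2$.

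For existence, I would distinguish the trivial case $\fhi = 0_\Hp$, handled by $u = 0_H$ thanks to Lemma~\ref{l:inner-product-with-zero-is-zero}, from the nonzero case. When $\fhi \neq 0_\Hp$, set $F = \Ker{\fhi}$; Lemma~\ref{l:continuous-linear-maps-have-closed-kernel} makes $F$ closed in~$H$, completeness of~$H$ then makes $F$ complete (Lemma~\ref{l:closed-subset-of-complete-is-complete}), so Theorem~\ref{t:orth-proj-onto-complete-subspace} and the decomposition $H = F \oplus F^\perp$ from Theorem~\ref{t:direct-sum-with-orth-compl-when-complete} are available. Picking any $u_0 \in H \setminus F$ (nonempty since $\fhi \neq 0_\Hp$) and setting $\up_0 = u_0 - P_F(u_0)$ produces a nonzero vector of $F^\perp$; normalizing to $\xi_0 = \up_0 / \nH{\up_0}$ gives a unit vector in $F^\perp$ on which $\fhi$ cannot vanish, since $\xi_0 \in F \cap F^\perp = \{0_H\}$ would contradict Lemma~\ref{l:zero-intersection-with-orth-compl}. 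The candidate representative is $u = \fhi(\xi_0)\, \xi_0$. To verify it, I would decompose any $v \in H$ as $v = (v - \lambda \xi_0) + \lambda \xi_0$ with $\lambda = \fhi(v)/\fhi(\xi_0)$, noting that the first summand lies in~$F$ by construction; since $u \in F^\perp$ and $\psH{\xi_0}{\xi_0} = 1$, the inner product collapses to $\psH{u}{v} = \lambda\, \fhi(\xi_0)\, \psH{\xi_0}{\xi_0} = \fhi(v)$.

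For the map $\tau$, linearity follows from uniqueness of the representative together with the bilinearity of $\psHdotdot$ and the linearity of the bra-ket pairing (Lemma~\ref{l:bra-ket-is-bilinear-map}). Injectivity is immediate from Lemma~\ref{l:inner-product-with-zero-is-zero}. Surjectivity amounts to checking, for any $u \in H$, that $v \mapsto \psH{u}{v}$ is a continuous linear form whose image under $\tau$ is~$u$: linearity comes from bilinearity, and continuity from the {\CS} inequality (Lemma~\ref{l:cauchy-schwarz-inequality-with-norms}), which gives $|\psH{u}{v}| \leq \nH{u}\,\nH{v}$. The same inequality yields $\nHp{\fhi} \leq \nH{\tau(\fhi)}$, while evaluating $\fhi$ at $v = \tau(\fhi)/\nH{\tau(\fhi)} \in \unitS$ (when nonzero) provides the reverse bound and establishes the isometric property; the zero case is trivial. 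Continuity of $\tau$ then follows from Lemma~\ref{l:linear-isometry-is-continuous}.

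The main obstacle is the nonzero case of existence: one must extract a single privileged direction $\xi_0$ in $F^\perp$ on which $\fhi$ does not vanish, and see that this one-dimensional information is enough to recover $\fhi$ on all of~$H$ via the decomposition argument. Completeness of~$H$ enters only at this point, through the orthogonal projection theorem applied to the closed hyperplane $F = \Ker{\fhi}$; all other parts of the statement reduce to bilinearity of the inner product and {\CS}.
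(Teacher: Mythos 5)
Your proposal is correct and follows essentially the same path as the paper's proof: trivial case for $\fhi=0_\Hp$, then orthogonal projection onto the closed (hence complete) kernel, a normalized vector $\xi_0\in F^\perp$ with $\fhi(\xi_0)\not=0$, the candidate $u=\fhi(\xi_0)\xi_0$ verified via the decomposition $v=(v-\lambda\xi_0)+\lambda\xi_0$, and the properties of~$\tau$ derived from bilinearity and the {\CS} inequality. The only cosmetic differences are that the paper shows $\fhi(v_0)=\fhi(u_0)\not=0$ by direct computation rather than via $F\cap F^\perp=\{0_H\}$, and phrases uniqueness through $H^\perp=\{0_H\}$ rather than by specializing to $v=u_1-u_2$.
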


\begin{proof}
  From
  Definition~\threfc{d:hilbert-space}{%
    $(H,\psHdotdot)$~is an inner product space}, and
  Definition~\thref{d:inner-product-space},
  $H$~is a space.
  
  \proofparskip{Uniqueness}
  Let $u,\up\in H$ be two vectors such that
  \begin{equation*}
    \forall v \in H,\quad
    \pdH{\fhi}{v} = \fhi (v) = \psH{u}{v} = \psH{\up}{v}.
  \end{equation*}
  Let $v\in H$ be a vector.
  Then, from
  Definition~\threfc{d:inner-product}{$\psHdotdot$ is a bilinear map},
  Definition~\threfc{d:bilinear-map}{$\psHdotdot$ is left linear}, and
  Definition~\thref{d:vector-subtraction},
  we have $\psH{u-\up}{v}=0$.
  Thus, from
  Definition~\thref{d:orth-compl}, and
  Lemma~\thref{l:trivial-orth-compls},
  $u-\up$ belongs to $H^\perp=\{0_H\}$.
  Hence, from
  Definition~\threfc{d:space}{$(H,+)$ is an abelian group},
  $u=\up$.
  
  \proofparskip{Existence}
  
  \proofparskip{Case $\fhi=0_\Hp$}
  Then, from
  Definition~\threfc{d:space}{$0_H$~belongs to~$H$},
  let $u=0_H$ be the zero vector.
  Let $v\in H$ be a vector.
  Then, from
  Lemma~\threfc{l:trivial-orth-compls}{$H^\perp=\{0_H\}$},
  we have
  \begin{equation*}
    \pdH{\fhi}{v}
    = \fhi (v)
    = 0_\Hp (v)
    = 0
    = \psH{0_H}{v}
    = \psH{u}{v}.
  \end{equation*}
  
  \proofparskip{Case $\fhi\not=0_\Hp$}
  Then, let $u_0\in H$ such that $\fhi(u_0)\not=0$.
  Let~$F$ be the kernel of~$\fhi$.
  Then, from
  Definition~\thref{d:kernel},
  $u_0\not\in F$.
  Moreover, from
  Lemma~\threfc{l:continuous-linear-maps-have-closed-kernel}{for~$\fhi$}, and
  Lemma~\thref{l:kernel-is-subspace},
  $F$~is a closed subspace of~$H$.
  Thus, from
  Lemma~\thref{l:closed-hilbert-subspace},
  $F$~is a complete subspace of~$H$.
  Hence, from
  Theorem~\thref{t:orth-proj-onto-complete-subspace}, and
  Definition~\thref{d:orth-proj-onto-complete-subspace},
  let~$P_F$ be the orthogonal projection onto~$F$.
  Then, from
  Theorem~\threfc{t:direct-sum-with-orth-compl-when-complete}{%
    decomposition and contrapositive
    of~\eqref{e:characterization-orth-compl-f}},
  we have
  \begin{equation*}
    P_F (u_0) \in F,\quad
    u_0 - P_F (u_0) \in F^\perp
    \quad \mbox{and} \quad
    P_F (u_0) \not= u_0.
  \end{equation*}
  Thus, from
  Definition~\threfc{d:kernel}{$F=\Ker{\fhi}$},
  Definition~\threfc{d:linear-form}{$\fhi$~is a linear map},
  Definition~\threfc{d:linear-map}{$\fhi$~is additive}, and
  Definition~\thref{d:vector-subtraction},
  we have
  \begin{equation*}
    \fhi (P_F (u_0)) = 0
    \quad \mbox{and} \quad
    \fhi (u_0 - P_F (u_0)) = \fhi (u_0).
  \end{equation*}
  
  Let $v_0=u_0-P_F(u_0)$.
  Then,
  \begin{equation*}
    v_0 \in F^\perp
    \quad \mbox{and} \quad
    \fhi (v_0) = \fhi (u_0) \not= 0.
  \end{equation*}
  Moreover, from
  Definition~\threfc{d:space}{$(H,+)$ is an abelian group}, and
  Definition~\thref{d:vector-subtraction},
  we have $v_0\not=0_H$.
  Thus, from
  Definition~\threfc{d:norm}{$\nHdot$ is definite, contrapositive},
  we have $\nH{v_0}\not=0$.
  
  Let $\xi_0=\frac{v_0}{\nH{v_0}}$.
  Then, from
  Lemma~\threfc{l:orth-compl-is-subspace}{$F^\perp$~is subspace},
  Lemma~\threfc{l:closed-under-vector-operations-is-subspace}{%
    $F^\perp$~is closed under scalar multiplication},
  Definition~\thref{d:scalar-division},
  Definition~\threfc{d:linear-form}{$\fhi$~is a linear map},
  Definition~\threfc{d:linear-map}{$\fhi$~is homogeneous of degree~1},
  Definition~\thref{d:scalar-division},
  Lemma~\threfc{l:zero-product-property}{contrapositive}, and
  \assume{field properties of~$\matR$},
  we have
  \begin{equation*}
    \xi_0 \in F^\perp,\quad
    \fhi (\xi_0)
    = \frac{\fhi (v_0)}{\nH{v_0}}
    \not= 0
    \quad \mbox{and} \quad
    \xi_0 \not= 0_H.
  \end{equation*}
  Moreover, from
  Lemma~\threfc{l:normalization-by-nonzero}{with $\lambda=1$}, and
  \assume{field properties of~$\matR$},
  we have $\nH{\xi_0}^2=1$.
  
  Let $u=\fhi(\xi_0)\xi_0$.
  Then, from
  Lemma~\threfc{l:orth-compl-is-subspace}{%
    $F^\perp$~is subspace}, and
  Lemma~\threfc{l:closed-under-vector-operations-is-subspace}{%
    $F^\perp$~is closed under scalar multiplication},
  $u\in F^\perp$.
  
  Let $v\in H$ be a vector.
  Since $\fhi(\xi_0)\not=0$, let $\lambda=\frac{\fhi(v)}{\fhi(\xi_0)}$ and
  $w=v-\lambda\xi_0$.
  Then, from
  Definition~\threfc{d:linear-form}{$\fhi$~is a linear map},
  Lemma~\thref{l:linear-map-preserves-linear-combinations},
  Definition~\thref{d:vector-subtraction}, and
  Definition~\thref{d:scalar-division},
  we have
  \begin{equation*}
    \fhi (w)
    = \fhi (v) - \lambda \fhi (\xi_0)
    = \fhi (v) - \frac{\fhi (v)}{\fhi (\xi_0)} \fhi (\xi_0)
    = 0.
  \end{equation*}
  Thus, from
  Definition~\threfc{d:kernel}{$F=\Ker{\fhi}$},
  $w$~belongs to~$F$.
  Hence, from
  \assume{field properties of~$\matR$ (with $\fhi(\xi_0)\not=0$)},
  Lemma~\threfc{l:squared-norm}{$\nH{\xi_0}^2=1$},
  Definition~\threfc{d:inner-product}{$\psHdotdot$ is a bilinear map},
  Definition~\threfc{d:bilinear-map}{$\psHdotdot$ is left linear}, and
  Definition~\threfc{d:orth-compl}{%
    $u\in F^\perp$ and $w\in F$},
  we have
  \begin{eqnarray*}
    \psH{u}{v} - \fhi (v)
    & = &
    \psH{u}{v}
    - \fhi (v) \frac{\fhi (\xi_0)}{\fhi (\xi_0)} \psH{\xi_0}{\xi_0} \\
    & = & \psH{u}{v} - \lambda \psH{u}{\xi_0} \\
    & = & \psH{u}{v - \lambda \xi_0} \\
    & = & \psH{u}{w} \\
    & = & 0.
  \end{eqnarray*}
  Hence, from
  Definition~\thref{d:bra-ket-notation}, and
  \assume{field properties of~$\matR$},
  we have
  \begin{equation*}
    \pdH{\fhi}{v} = \fhi (v) = \psH{u}{v}.
  \end{equation*}
  
  \proofparskip{Linearity}
  From
  Lemma~\threfc{l:topological-dual-is-complete-normed-space}{for~$H$}, and
  Definition~\thref{d:normed-space},
  $\Hp$~is a space.
  
  Let $\tau:\Hp\rightarrow H$ be the mapping defined by, for all
  $\fhi\in\Hp$, $\tau(\fhi)=u$ where~$u$ is uniquely characterized by
  \begin{equation}
    \label{e:characterization-tau-fhi}
    \forall v \in H,\quad
    \pdH{\fhi}{v} = \fhi (v) = \psH{u}{v}.
  \end{equation}
  
  Let $\lambdap,\lambdapp\in\matK$ be scalars.
  Let $\fhip,\fhipp\in\Hp$ be continuous linear forms on~$H$.
  Then, $\tau(\fhip)$ and $\tau(\fhipp)$ belong to~$H$.
  Thus, from
  Definition~\threfc{d:space}{%
    $\Hp$ and~$H$ are closed under vector operations},
  $\fhi=\lambdap\fhip+\lambdapp\fhipp$ belongs to~$\Hp$ and
  $u=\lambdap\tau(\fhip)+\lambdapp\tau(\fhipp)$ belongs to~$H$.
  Let $v\in H$ be a vector.
  Then, from
  Lemma~\thref{l:bra-ket-is-bilinear-map},
  Definition~\threfc{d:inner-product}{$\psHdotdot$ is a bilinear map},
  Definition~\threfc{d:bilinear-map}{%
    bra-ket and $\psHdotdot$ are left linear}, and
  characterization~\eqref{e:characterization-tau-fhi}, we have
  \begin{eqnarray*}
    \pdH{\fhi}{v}
    & = & \pdH{\lambdap \fhip + \lambdapp \fhipp}{v} \\
    & = & \lambdap \pdH{\fhip}{v} + \lambdapp \pdH{\fhipp}{v} \\
    & = & \lambdap \psH{\tau (\fhip)}{v} + \lambdapp \psH{\tau (\fhipp)}{v} \\
    & = & \psH{\lambdap \tau (\fhip) + \lambdapp \tau (\fhipp)}{v} \\
    & = & \psH{u}{v}.
  \end{eqnarray*}
  Hence, from
  unique characterization~\eqref{e:characterization-tau-fhi},
  we have
  \begin{equation*}
    \tau (\lambdap \fhip + \lambdapp \fhipp)
    = \tau (\fhi)
    = u
    = \lambdap \tau (\fhip) + \lambdapp \tau (\fhipp).
  \end{equation*}
  
  Therefore, from
  Definition~\thref{d:linear-map},
  $\tau$~is a linear map from~$\Hp$ to~$H$.
  
  \proofparskip{Isomorphism}
  Let $\fhi\in\Hp$ be a continuous linear form on~$H$.
  Assume that $\tau(\fhi)=0_H$.
  Let $v\in H$ be a vector.
  Then, from
  characterization~\eqref{e:characterization-tau-fhi}, and
  Lemma~\thref{l:inner-product-with-zero-is-zero},
  we have
  \begin{equation*}
    \fhi (v)
    = \psH{\tau (\fhi)}{v}
    = \psH{0_H}{v}
    = 0.
  \end{equation*}
  Thus, $\fhi=0_\Hp$ is the zero linear form.
  Hence, from
  Definition~\threfc{d:kernel}{$\Ker{\tau}=\{0_\Hp\}$}, and
  Lemma~\thref{l:injective-linear-map-has-zero-kernel},
  $\tau$~is injective.
  
  Let $u\in H$ be a vector.
  Let $\fhi:H\rightarrow\matK$ be the mapping defined by, for all $v\in H$,
  $\fhi(v)=\psH{u}{v}$.
  Then, from
  Definition~\threfc{d:inner-product}{$\psHdotdot$ is a bilinear map},
  Definition~\threfc{d:bilinear-map}{$\psHdotdot$ is right linear}, and
  Definition~\thref{d:linear-form},
  $\fhi$~is a linear form on~$H$.
  Let $v\in H$ be a vector.
  Then, from
  Lemma~\thref{l:cauchy-schwarz-inequality-with-norms},
  we have
  \begin{equation*}
    | \fhi (v) |
    = | \psH{u}{v} |
    \leq \nH{u} \, \nH{v}.
  \end{equation*}
  Thus, from
  Definition~\threfc{d:bounded-linear-map}{with $C=\nH{u}\geq 0$}, and
  Theorem~\threfc{t:continuous-linear-map}{%
    $\ref{i:bounded}\implies \ref{i:cont}$},
  $\fhi$ is continuous.
  Hence, from
  Definition~\thref{d:topological-dual},
  $\fhi$~belongs to~$\Hp$.
  Moreover, from
  characterization~\eqref{e:characterization-tau-fhi},
  we have $\tau(\fhi)=u$.
  Hence, from
  \assume{the definition of a surjective function},
  $\tau$~is surjective.
  
  Therefore, from
  \assume{the definition of a bijective function},
  $\tau$~is bijective, and from
  Definition~\thref{d:isomorphism},
  $\tau$~is an isomorphism from~$\Hp$ onto~$H$.
  
  \proofparskip{Isometry}
  Let $\fhi\in\Hp$ be a continuous linear form on~$H$.
  Let $u=\tau(\fhi)\in H$.
  
  \proofparskip{Case $\fhi=0_\Hp$}
  Then, from
  Lemma~\threfc{l:linear-map-preserves-zero}{$\tau$~is a linear map},
  we have $u=0_H$.
  Hence, from
  Lemma~\threfc{l:norm-preserves-zero}{for $\nHpdot$ and $\nHdot$},
  we have
  \begin{equation*}
    \nH{\tau (\fhi)} = \nH{u} = 0 = \nHp{\fhi}.
  \end{equation*}
  
  \proofparskip{Case $\fhi\not=0_\Hp$}
  Then, from
  Definition~\threfc{d:kernel}{$\Ker{\fhi}=\{0_\Hp\}$},
  we have $u\not=0_H$.
  Thus, from
  Definition~\threfc{d:norm}{$\nHdot$~is definite, contrapositive},
  $\nH{u}\not=0$.
  Hence, from
  characterization~\eqref{e:characterization-tau-fhi} (with $v=u$),
  Lemma~\threfc{l:squared-norm}{for $\nHdot$},
  \assume{nonnegativeness of the square function in~$\matR$}, and
  \assume{field properties of~$\matR$ (with $\nH{u}\not=0$)},
  we have
  \begin{equation*}
    \frac{| \fhi (u) |}{\nH{u}}
    = \frac{| \psH{u}{u} |}{\nH{u}}
    = \frac{\nH{u}^2}{\nH{u}}
    = \nH{u}.
  \end{equation*}
  Hence, from
  Definition~\thref{d:dual-norm},
  Definition~\thref{d:operator-norm}, and
  Definition~\threfc{d:supremum}{%
    $\nHp{\fhi}$ is an upper bound for
    $\left\{\left.\frac{|\fhi(v)|}{\nH{v}}\,\right|\,
      v\in H,\,v\not=0_H\right\}$},
  we have
  \begin{equation*}
    \nH{u} \leq \nHp{\fhi}.
  \end{equation*}
  Finally, let $v\in H$ be a vector.
  Assume that $v\not=0_H$.
  Then, from
  Definition~\threfc{d:norm}{$\nHdot$ is definite, contrapositive},
  Lemma~\threfc{l:norm-is-nonnegative}{for $\nHdot$},
  Lemma~\thref{l:cauchy-schwarz-inequality-with-norms}, and
  \assume{ordered field properties of~$\matR$ (with $\nH{v}>0$)},
  we have
  \begin{equation*}
    \frac{| \fhi (v) |}{\nH{v}}
    = \frac{| \psH{u}{v} |}{\nH{v}}
    \leq \nH{u}.
  \end{equation*}
  Thus, from
  Lemma~\threfc{l:finite-operator-norm-is-continuous}{%
    $\nH{u}$ is an upper bound for the subset
    $\left\{\left.\frac{|\fhi(v)|}{\nH{v}}\,\right|\,
      v\in H,\,v\not=0_H\right\}$}, and
  Definition~\thref{d:dual-norm},
  we have
  \begin{equation*}
    \nHp {\fhi} \leq \nH{u}.
  \end{equation*}
  Hence, $\nH{\tau(\fhi)}=\nH{u}=\nHp{\fhi}$.
  
  Therefore, from
  Definition~\thref{d:linear-isometry},
  $\tau$~is a linear isometry from~$\Hp$ to~$H$.
  
  \proofparskip{Continuity}
  From
  Lemma~\threfc{l:linear-isometry-is-continuous}{%
    $\tau$~is a linear isometry from~$\Hp$ to~$H$},
  $\tau$~belongs to $\LcHpH$.
\end{proof}

\begin{lemma}[compatible $\rho$ for {\LM}]
  \label{l:compatible-rho-for-lax-milgram}
  Let $\alpha,C\in\matR$.
  Assume that $0<\alpha\leq C$.
  Then,
  \begin{equation}
    \label{e:compatible-rho-for-lax-milgram}
    \forall \rho \in \matR,\quad
    0 < \rho < \frac{2\alpha}{C^2}
    \Implies
    0 \leq \sqrt{1 - 2 \rho \alpha + \rho^2 C^2} < 1.
  \end{equation}
\end{lemma}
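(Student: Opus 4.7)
The plan is to set $f(\rho) = 1 - 2\rho\alpha + \rho^2 C^2$ and separately establish the two bounds $0 \leq f(\rho) < 1$ on this quadratic, then transfer them to $\sqrt{f(\rho)}$ using the nonnegativeness and the monotonicity of the square root function on $\matRplus$ (which are among the admitted basic properties).

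For the upper bound, I would write $f(\rho) - 1 = \rho(\rho C^2 - 2\alpha)$. The hypothesis $\rho < \frac{2\alpha}{C^2}$ combined with $C > 0$ (which follows from $0 < \alpha \leq C$) yields $\rho C^2 < 2\alpha$, i.e., $\rho C^2 - 2\alpha < 0$. Since $\rho > 0$, the product is negative, so $f(\rho) < 1$. Applying the square root (strictly increasing on $\matRplus$) and using $\sqrt{1} = 1$ gives $\sqrt{f(\rho)} < 1$.

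For the lower bound, I would rely on the algebraic identity
\begin{equation*}
  1 - 2\rho\alpha + \rho^2 C^2 = (1 - \rho\alpha)^2 + \rho^2 (C^2 - \alpha^2),
\end{equation*}
which is a straightforward expansion. Both summands are nonnegative: the first as a square, and the second because $\rho^2 \geq 0$ together with $C^2 - \alpha^2 \geq 0$, the latter being a direct consequence of $0 < \alpha \leq C$ and the monotonicity of the square function on $\matRplus$. Hence $f(\rho) \geq 0$, so $\sqrt{f(\rho)}$ is well defined and nonnegative by the admitted properties of the square root.

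The main (modest) obstacle is simply to justify nonnegativity of $f(\rho)$; the decomposition above bypasses any discriminant argument and keeps the proof entirely within the admitted ordered field properties of $\matR$ and the elementary properties of the square and square root functions, which matches the style adopted throughout the document.
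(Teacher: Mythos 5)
Your proof is correct and follows essentially the same strategy as the paper's: establish $0 \leq 1 - 2\rho\alpha + \rho^2 C^2 < 1$ and transfer both bounds through the monotone square root. The upper bound argument is identical in substance (your factorization $f(\rho)-1=\rho(\rho C^2-2\alpha)$ is just a repackaging of the paper's chain $\rho C^2<2\alpha\implies\rho^2C^2<2\rho\alpha\implies f(\rho)<1$). The only genuine difference is the algebraic identity used for nonnegativity: the paper writes $1-2\rho\alpha+\rho^2C^2=\bigl(\rho C-\frac{\alpha}{C}\bigr)^2+1-\frac{\alpha^2}{C^2}$ and invokes $0<\frac{\alpha^2}{C^2}\leq 1$, whereas you use $(1-\rho\alpha)^2+\rho^2(C^2-\alpha^2)$. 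Both are completing-the-square decompositions valid for all real $\rho$; yours has the small advantage of avoiding division by $C$ (and hence the explicit appeal to $C\neq 0$ at that point), staying entirely within polynomial identities plus $C^2-\alpha^2\geq 0$. Either way the discriminant argument is bypassed, consistent with the paper's stated intent.
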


\begin{proof}
  From
  hypothesis ($0<\alpha\leq C$),
  \assume{ordered field properties of~$\matR$}, and
  \assume{increase of the square function over~$\matRplus$},
  we have
  $0<\frac{\alpha^2}{C^2}\leq 1$.
  Let $\rho\in\matR$.
  Then, from
  \assume{field properties of~$\matR$},
  we have
  \begin{equation*}
    1 - 2 \rho \alpha + \rho^2 C^2
    = \left( \rho C - \frac{\alpha}{C} \right)^2 + 1 - \frac{\alpha^2}{C^2}
    \geq 0.
  \end{equation*}
  
  Assume that $0<\rho<\frac{2\alpha}{C^2}$.
  Then, from
  \assume{ordered field properties of~$\matR$ (with $C>0$ and $\rho>0$)},
  we successively have $ \rho C^2<2\alpha$, $\rho^2C^2<2\rho\alpha$
  and $1-2\rho\alpha+\rho^2C^2<1$.
  Hence, from
  \assume{compatibility of the square root with comparison in~$\matRplus$},
  we have
  \begin{equation*}
    0 = \sqrt{0}
    \leq \sqrt{1 - 2 \rho \alpha + \rho^2 C^2}
    < \sqrt{1} = 1.
  \end{equation*}
\end{proof}

\begin{theorem}[{\LM}]
  \label{t:lax-milgram}
  Let $(H,\psHdotdot)$ be a real Hilbert space.
  Let $\nHdot$ be the norm associated with inner product $\psHdotdot$.
  Let~$\Hp$ be the topological dual of~$H$.
  Let $\nHpdot$ be the dual norm associated with~$\nHdot$.
  Let~$\BlfH$ be a bounded bilinear form on~$H$.
  Let $\LfH\in\Hp$ be a continuous linear form on~$H$.
  Assume that~$\BlfH$ is coercive with constant $\alpha>0$.
  Then, there exists a unique $u\in H$ solution to
  Problem~\eqref{e:general-problem}.
  Moreover,
  \begin{equation}
    \label{e:lax-milgram-estimation}
    \nH{u} \leq \frac{1}{\alpha} \nHp{\LfH}.
  \end{equation}
\end{theorem}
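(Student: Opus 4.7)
The plan is to follow the sketch in Section~\ref{ss:sketch-of-proof-of-lm-th}: reduce Problem~\eqref{e:general-problem} to a fixed point equation for a suitable contraction on the complete metric space~$H$, then conclude via Theorem~\ref{t:fixed-point}. Since $\BlfH$ is bounded, Lemma~\ref{l:representation-for-bounded-bilinear-form} yields a unique continuous linear map $A \in \LcHHp$ with $\blfH{u}{v} = \pdH{A(u)}{v}$ for all $u,v \in H$, and $\tnHHp{A} \leq C$ for any continuity constant $C \geq 0$ of $\BlfH$. Theorem~\ref{t:riesz-frechet} provides an isometric isomorphism $\tau : \Hp \to H$ satisfying $\pdH{\fhi}{v} = \psH{\tau(\fhi)}{v}$ for every $\fhi \in \Hp$ and $v \in H$. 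Applying $\tau$ to $A(u)$ and to $\LfH$, and using that $\psHdotdot$ is definite (equivalently $H^\perp = \{0_H\}$ by Lemma~\ref{l:trivial-orth-compls}), Problem~\eqref{e:general-problem} is equivalent to the single linear equation $\tau(A(u)) = \tau(\LfH)$ in $H$.

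For a parameter $\rho > 0$ to be chosen, define $g : H \to H$ by $g(v) = v - \rho\,\tau(A(v)) + \rho\,\tau(\LfH)$. Since $\tau \circ A$ is continuous linear by Lemma~\ref{l:compatibility-of-composition-with-continuity}, $g$ is a well-defined affine map, and $u$ solves the linear equation above if and only if $g(u) = u$. Given $u,v \in H$, set $w = u - v$; linearity of $\tau \circ A$ yields $g(u) - g(v) = w - \rho\,\tau(A(w))$, and Lemma~\ref{l:square-expansion-minus} gives
\[
\nH{g(u) - g(v)}^2 = \nH{w}^2 - 2\rho\,\psH{w}{\tau(A(w))} + \rho^2\,\nH{\tau(A(w))}^2.
\]
By the Riesz--Fréchet identity together with symmetry of $\psHdotdot$, $\psH{w}{\tau(A(w))} = \pdH{A(w)}{w} = \blfH{w}{w} \geq \alpha\,\nH{w}^2$ by coercivity, while the isometry of $\tau$ combined with Lemma~\ref{l:operator-norm-estimation} gives $\nH{\tau(A(w))} = \nHp{A(w)} \leq C\,\nH{w}$. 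Hence $\nH{g(u) - g(v)}^2 \leq (1 - 2\rho\alpha + \rho^2 C^2)\,\nH{w}^2$. Lemma~\ref{l:coercivity-constant-is-less-than-continuity-constant} furnishes $\alpha \leq C$, so in particular $C > 0$ (the degenerate case $H = \{0_H\}$ is trivial), and Lemma~\ref{l:compatible-rho-for-lax-milgram} then guarantees a choice of $\rho \in (0, 2\alpha/C^2)$ for which the factor is strictly less than~$1$. With such a $\rho$, $g$ is a contraction on~$H$ in the sense of Definition~\ref{d:contraction}.

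Since $(H, \dst)$ is a complete metric space by Definition~\ref{d:hilbert-space} and Lemma~\ref{l:norm-gives-distance}, Theorem~\ref{t:fixed-point} yields a unique fixed point $u \in H$, which is the unique solution of Problem~\eqref{e:general-problem}. For the estimate~\eqref{e:lax-milgram-estimation}, if $u = 0_H$ the conclusion is immediate from Lemma~\thref{l:norm-is-nonnegative}; otherwise, taking $v = u$ in~\eqref{e:general-problem}, coercivity and Lemma~\ref{l:operator-norm-estimation} applied to $\LfH \in \Hp$ combine to give $\alpha\,\nH{u}^2 \leq \blfH{u}{u} = \lfH{u} \leq \nHp{\LfH}\,\nH{u}$, and dividing by $\nH{u} > 0$ yields the announced bound. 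The main obstacle is the contraction estimate: one must carefully track how linearity and isometry of $\tau$, together with the representation of $\BlfH$ by $A$, convert the coercivity and boundedness hypotheses on $\BlfH$ into a quantitative Lipschitz bound on $g$, and then exploit the algebraic identity of Lemma~\ref{l:compatible-rho-for-lax-milgram} to force the contraction constant strictly below~$1$.
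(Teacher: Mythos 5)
Your proof is correct and follows essentially the same path as the paper: representation of~$\BlfH$ by $A\in\LcHHp$, reduction via the {\RF} isomorphism~$\tau$ to the equation $\tau(A(u))=\tau(\LfH)$, the affine map $g(v)=v-\rho\,\tau(A(v))+\rho\,\tau(\LfH)$ shown to be a contraction using coercivity, the isometry of~$\tau$, and Lemma~\ref{l:compatible-rho-for-lax-milgram}, then the fixed point theorem and the standard coercivity estimate. The only (harmless) cosmetic differences are your use of the ``minus'' square expansion in place of the paper's ``plus'' expansion and your explicit remark on the degenerate case $H=\{0_H\}$.
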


\begin{proof}
  Let~$d_H$ be the distance associated with norm $\nHdot$.
  Then, from
  Definition~\thref{d:hilbert-space},
  $(H,\psHdotdot)$~is an inner product space and $(H,d_H)$ is a complete
  metric space.
  Thus, from
  Lemma~\thref{l:inner-product-gives-norm},
  $(H,\nHdot)$ is a {\normedvectorspace}.
  Moreover, from
  Lemma~\thref{l:topological-dual-is-complete-normed-space},
  $(\Hp,\nHpdot)$ is a also {\normedvectorspace}.
  Hence, from
  Definition~\thref{d:normed-space},
  $H$ and~$\Hp$ are both spaces.
  
  \proofparskip{Existence and uniqueness}
  From
  Lemma~\threfc{l:representation-for-bounded-bilinear-form}{for~$a$},
  let $A\in\LcHHp$ be the (unique) continuous linear map from~$H$ to~$\Hp$
  such that
  \begin{equation*}
    \forall u, v \in H,\quad
    \blfH{u}{v} = \pdH{A (u)}{v}.
  \end{equation*}
  Then, from
  Definition~\threfc{d:coercive-bilinear-form}{for~$\BlfH$},
  we have
  \begin{equation}
    \label{e:coercivity-A}
    \forall u \in H,\quad
    \pdH{A (u)}{u} = \blfH{u}{u} \geq \alpha \nH{u}^2.
  \end{equation}
  From
  Definition~\thref{d:bounded-bilinear-form},
  let $C\geq 0$ be a continuity constant of~$\BlfH$.
  Then, from
  Lemma~\threfc{l:operator-norm-estimation}{in $\LcHHp$}, and
  Lemma~\threfc{l:representation-for-bounded-bilinear-form}{for~$\BlfH$},
  we have
  \begin{equation}
    \label{e:continuity-A}
    \forall u \in H,\quad
    \nHp{A (u)} \leq \tnHHp{A} \, \nH{u} \leq C \nH{u}.
  \end{equation}
  
  Let $u\in H$ be a vector.
  Then, from
  Theorem~\threfc{t:riesz-frechet}{for $\fhi=A(u)$ and $\fhi=\LfH$},
  $\tau(A(u)),\tau(\LfH)\in H$ are the (unique) vectors such that
  \begin{eqnarray*}
    \forall v \in H, & &
    \blfH{u}{v} = \pdH{A (u)}{v} = \psH{\tau (A (u))}{v}; \\
    \forall v \in H, & &
    \lfH{v} = \pdH{\LfH}{v} = \psH{\tau (\LfH)}{v}.
  \end{eqnarray*}
  Moreover, from
  \eqref{e:coercivity-A},
  \assume{ordered field properties of~$\matR$},
  \eqref{e:continuity-A}, and
  Definition~\threfc{d:linear-isometry}{$\tau$ is a linear isometry},
  we have
  \begin{eqnarray}
    \label{e:coercivity-tau-A}
    \forall u \in H, & &
    - \psH{\tau (A (u))}{u}
    = - \pdH{A (u)}{u}
    \leq - \alpha \nH{u}^2; \\
    \label{e:continuity-tau-A}
    \forall u \in H, & &
    \nH{\tau (A (u))} = \nHp{A (u)} \leq C \nH{u}.
  \end{eqnarray}
  
  Let $u,v\in H$ be vectors.
  Then, from
  Definition~\threfc{d:inner-product}{$\psHdotdot$ is a bilinear map}, and
  Definition~\threfc{d:bilinear-map}{$\psHdotdot$ is left linear},
  we have the equivalences
  \begin{eqnarray*}
    \blfH{u}{v} = \lfH{v}
    & \equiv & \psH{\tau (A (u))}{v} = \psH{\tau (\LfH)}{v} \\
    & \equiv & \psH{\tau (A (u)) - \tau (\LfH)}{v} = 0.
  \end{eqnarray*}
  Hence, from
  Definition~\threfc{d:orth-compl}{%
    $\tau(A(u))-\tau(f)$ belongs to~$H^\perp$},
  Lemma~\threfc{l:trivial-orth-compls}{$H^\perp=\{0_H\}$}, and
  Definition~\threfc{d:space}{$(H,+)$ is an abelian group},
  we have the equivalence
  \begin{equation}
    \label{e:equivalent-problem-tau-A}
    Problem~\eqref{e:general-problem}
    \Equiv
    \mbox{find } u \in H \mbox{ such that:}\quad
    \tau (A (u)) = \tau (f).
  \end{equation}
  
  From
  Lemma~\threfc{l:compatibility-of-composition-with-continuity}{%
    $\tau$~belongs to $\LcHpH$},
  $\tau\circ A$ belongs to $\LcHH$.
  From
  Lemma~\thref{l:coercivity-constant-is-less-than-continuity-constant},
  we have $0<\alpha\leq C$, hence $\frac{2\alpha}{C^2}>0$.
  Let $\rho\in\matR$ be a number.
  Assume that $0<\rho<\frac{2\alpha}{C^2}$.
  Then, from
  Theorem~\threfc{t:normed-space-of-continuous-linear-maps}{%
    $(\LcHH,\tnHHdot)$ is a {\normedvectorspace}},
  Definition~\threfc{d:normed-space}{$\LcHH$~is a space},
  Definition~\threfc{d:space}{$\LcHH$ is closed under vector operations},
  Definition~\thref{d:vector-subtraction}, and
  Lemma~\thref{l:identity-map-is-continuous},
  $g_0=\idH-\rho\tau\circ A$ belongs to~$\LcHH$.
  
  From
  Definition~\threfc{d:space}{%
    $H$~is closed under vector operations and $\tau(\LfH)\in H$},
  let $g:H\rightarrow H$ be the mapping defined by
  \begin{equation*}
    \forall v \in H,\quad
    g (v) = g_0 (v) + \rho \tau (\LfH).
  \end{equation*}
  Let $u\in H$ be a vector.
  Then, from
  the definition of mappings~$g$ and~$g_0$,
  Definition~\threfc{d:space}{%
    $(H,+)$ is an abelian group and scalar multiplication is distributive wrt
    vector addition}, and
  Lemma~\threfc{l:zero-product-property}{with $\lambda=\rho\not=0$},
  we have
  \begin{eqnarray*}
    g (u) = u
    & \equiv & g_0 (u) + \rho \tau (\LfH) = u \\
    & \equiv & u - \rho \tau (A (u)) + \rho \tau (\LfH) = u \\
    & \equiv & \rho (\tau (A (u)) - \tau (\LfH)) = 0_H \\
    & \equiv & \tau (A (u)) = \tau (\LfH).
  \end{eqnarray*}
  Hence, from
  \eqref{e:equivalent-problem-tau-A},
  we have the equivalence
  \begin{equation}
    \label{e:equivalent-problem-g}
    Problem~\eqref{e:general-problem}
    \Equiv
    \mbox{find } u \in H \mbox{ such that:}\quad
    g (u) = u.
  \end{equation}
  
  Let $v,\vp\in H$ be vectors.
  Then, from
  Definition~\threfc{d:space}{$(H,+)$ is an abelian group}, and
  Definition~\thref{d:vector-subtraction},
  let $z=v-\vp\in H$.
  Then, from
  Definition~\thref{d:vector-subtraction},
  Lemma~\threfc{l:minus-times-yields-opposite-vector}{with $\lambda=1$}, and
  Definition~\threfc{d:space}{%
    $(H,+)$ is an abelian group and scalar multiplication is distributive wrt
    vector addition},
  we have
  \begin{equation*}
    g (v) - g (\vp)
    = g_0 (v) + \rho \tau (\LfH) - (g_0 (\vp) + \rho \tau (\LfH))
    = g_0 (v - \vp)
    = g_0 (z).
  \end{equation*}
  Thus, from
  Lemma~\threfc{l:square-expansion-plus}{for $\nHdot$},
  Definition~\threfc{d:inner-product}{%
    $\psHdotdot$ is a symmetric bilinear map},
  Definition~\threfc{d:bilinear-map}{$\psHdotdot$ is right linear},
  Definition~\threfc{d:norm}{$\nHdot$ is absolutely homogeneous of degree~1},
  \assume{ordered field properties of~$\matR$},
  \eqref{e:coercivity-tau-A}, and
  \eqref{e:continuity-tau-A},
  we have
  \begin{eqnarray*}
    \nH{g(v) - g(\vp)}^2
    & = & \nH{g_0 (z)}^2 \\
    & = & \nH{z - \rho \tau (A (z))}^2 \\
    & = &
    \nH{z}^2 - 2 \rho \psH{\tau (A (z))}{z} + \rho^2 \nH{\tau (A (z))}^2 \\
    & \leq & \nH{z}^2 - 2 \rho \alpha \nH{z}^2 + \rho^2 C^2 \nH{z}^2 \\
    & = & (1 - 2 \rho \alpha + \rho^2 C^2) \nH{v - \vp}^2.
  \end{eqnarray*}
  Hence, from
  \assume{compatibility of the square root function with comparison
    in~$\matRplus$},
  Definition~\threfc{d:lipschitz-continuity}{%
    with $k=\sqrt{1-2\rho\alpha+\rho^2C^2}$},
  Lemma~\threfc{l:compatible-rho-for-lax-milgram}{%
    since $0<\alpha\leq C$ and $0<\rho<\frac{2\alpha}{C^2}$}, and
  Definition~\threfc{d:contraction}{%
    since $0\leq\sqrt{1-2\rho\alpha+\rho^2C^2}<1$},
  $g$~is a contraction.
  Then, from
  Theorem~\threfc{t:fixed-point}{for~$g$ contraction in $(H,d_H)$ complete},
  there exists a unique fixed point $u\in H$ such that $g(u)=u$.
  Hence, from
  \eqref{e:equivalent-problem-g},
  there exists a unique solution to Problem~\eqref{e:general-problem}.
  
  \proofparskip{Estimation}
  Let $u\in H$ be the solution to Problem~\eqref{e:general-problem}.
  
  \proofparskip{Case $u=0_H$}
  Then, from
  Lemma~\threfc{l:norm-preserves-zero}{for $\nHdot$},
  Lemma~\threfc{l:norm-is-nonnegative}{for $\nHpdot$}, and
  \assume{ordered field properties of~$\matR$ (with $\alpha>0$)},
  we have
  \begin{equation*}
    \nH{u} = 0 \leq \frac{1}{\alpha} \nHp{\LfH}.
  \end{equation*}
  
  \proofparskip{Case $u\not=0_H$}
  Then, from
  Definition~\threfc{d:norm}{$\nHdot$ is definite, contrapositive}, and
  Lemma~\threfc{l:norm-is-nonnegative}{for $\nHdot$},
  we have $\nH{u}>0$.
  Moreover, from
  Definition~\threfc{d:coercive-bilinear-form}{for~$\BlfH$},
  \assume{properties of the absolute value on~$\matR$},
  \eqref{e:general-problem} with $v=u$, and
  Lemma~\threfc{l:operator-norm-estimation}{for~$\LfH\in\Hp$},
  we have
  \begin{equation*}
    \alpha \nH{u}^2
    \leq \blfH{u}{u}
    \leq | \blfH{u}{u} |
    = | \lfH{u} |
    \leq \nHp{f} \, \nH{u}.
  \end{equation*}
  Hence, from
  \assume{ordered field properties of~$\matR$ (with $\nH{u},\alpha>0$)},
  we have the estimation
  \begin{equation*}
    \nH{u} \leq \frac{1}{\alpha} \nHp{\LfH}.
  \end{equation*}
\end{proof}

\begin{lemma}[Galerkin orthogonality]
  \label{l:galerkin-orthogonality}
  Let~$(H,\psHdotdot)$ be a real Hilbert space.
  Let $\BlfH$ be a bounded bilinear form on~$H$.
  Let $\LfH\in\Hp$ be a continuous linear form on~$H$.
  Let $\Hh$ be a subspace of~$H$.
  Let $u\in H$ be a solution to Problem~\eqref{e:general-problem}.
  Let $\uh\in\Hh$ be a solution to
  Problem~\eqref{e:general-problem-discrete}.
  Then,
  \begin{equation}
    \label{e:galerkin-orthogonality}
    \forall \vh \in \Hh,\quad
    \blfH{u - \uh}{\vh} = 0.
  \end{equation}
\end{lemma}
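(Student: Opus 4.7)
The plan is to exploit the fact that $\Hh$ is a subspace of~$H$, so any test vector in~$\Hh$ is also a legitimate test vector in~$H$. Concretely, for an arbitrary $\vh\in\Hh$, Definition~\ref{d:subspace} (\emph{$\Hh$ is a subset of~$H$}) gives $\vh\in H$, so I can instantiate the continuous Problem~\eqref{e:general-problem} at $v=\vh$ to obtain $\blfH{u}{\vh}=\lfH{\vh}$, while Problem~\eqref{e:general-problem-discrete} directly yields $\blfH{\uh}{\vh}=\lfH{\vh}$.

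Subtracting these two scalar equalities (using \textbf{field properties of~$\matR$}) eliminates the right-hand side and produces
\[
\blfH{u}{\vh}-\blfH{\uh}{\vh}=0.
\]
To conclude, I would recognize~$\BlfH$ as a bilinear form (Definition~\ref{d:bilinear-form}), hence a bilinear map from $\HxH$ to~$\matR$, and invoke left additivity together with left homogeneity of degree~1 from Definition~\ref{d:bilinear-map} to collapse the left-hand side. More precisely, writing $u-\uh=u+(-1)\uh$ via Definition~\ref{d:vector-subtraction} and Lemma~\ref{l:minus-times-yields-opposite-vector}, left additivity and homogeneity of~$\BlfH$ give $\blfH{u-\uh}{\vh}=\blfH{u}{\vh}-\blfH{\uh}{\vh}=0$.

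Since this reasoning holds for every $\vh\in\Hh$, the stated identity~\eqref{e:galerkin-orthogonality} follows by universal generalization. There is no real obstacle here: the lemma is essentially a one-line bookkeeping argument, and the only point requiring any care in a fully formal setting is making sure the subtraction of bilinear evaluations is carried out via the bilinearity axioms rather than by appealing to an unstated linearity of $\BlfH(\cdot,\vh)$. No hypotheses beyond $\Hh\subset H$, bilinearity of~$\BlfH$, and the two defining equalities are needed; in particular, boundedness of~$\BlfH$, continuity of~$\LfH$, completeness of~$H$, and coercivity play no role in this lemma and should not be used.
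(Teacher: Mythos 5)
Your proof is correct and follows essentially the same route as the paper: use that $\Hh$ is a subset of~$H$ to test Problem~\eqref{e:general-problem} with $v=\vh$, combine with Problem~\eqref{e:general-problem-discrete}, and collapse the difference via the left linearity of~$\BlfH$. The paper simply writes the chain in the forward direction, $\blfH{u-\uh}{\vh}=\blfH{u}{\vh}-\blfH{\uh}{\vh}=\lfH{\vh}-\lfH{\vh}=0$, which is the same computation you describe.
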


\begin{proof}
  Let $\vh\in\Hh$ be a vector.
  Then, from
  Definition~\threfc{d:subspace}{$\Hh$~is a subset of~$H$},
  $\vh$ also belongs to~$H$.
  Hence, from
  \eqref{e:general-problem} with $v=\vh$,
  \eqref{e:general-problem-discrete},
  Definition~\threfc{d:bilinear-map}{$\BlfH$~is left linear}, and
  \assume{field properties of~$\matR$},
  we have
  \begin{equation*}
    \blfH{u - \uh}{\vh}
    = \blfH{u}{\vh} - \blfH{\uh}{\vh}
    = \lfH{\vh} - \lfH{\vh}
    = 0.
  \end{equation*}
\end{proof}

\begin{theorem}[{\LM}, closed subspace]
  \label{t:lax-milgram-closed-subspace}
  Assume hypotheses of
  Theorem~\thref{t:lax-milgram}.
  Let~$\Hh$ be a closed subspace of~$H$.
  Then, there exists a unique $\uh\in\Hh$ solution to
  Problem~\eqref{e:general-problem-discrete}.
  Moreover,
  \begin{equation}
    \label{e:lax-milgram-closed-subspace-estimation}
    \nH{\uh} \leq \frac{1}{\alpha} \nHp{\LfH}.
  \end{equation}
\end{theorem}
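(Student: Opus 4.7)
The plan is to reduce this statement to the already-proved Lax--Milgram theorem (Theorem~\ref{t:lax-milgram}) applied on the Hilbert space~$\Hh$ itself, so essentially no new analytic work is required; the argument is almost entirely a bookkeeping check that all hypotheses transfer to the restricted data.

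First I would invoke Lemma~\thref{l:closed-hilbert-subspace} to conclude that $\Hh$, equipped with the restriction of $\psHdotdot$ to $\Hh\times\Hh$, is itself a real Hilbert space; its associated norm is just the restriction of~$\nHdot$, and the corresponding distance is the restriction of~$d_H$. Then, from Lemma~\thref{l:topological-dual-is-complete-normed-space}, $\Hh$ has a topological dual~$\Hh^\prime$ equipped with its dual norm~$\|\cdot\|_{\Hh^\prime}$.

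Next I would check that the restrictions $\BlfH_{|\Hh\times\Hh}$ and $\LfH_{|\Hh}$ satisfy the Lax--Milgram hypotheses on~$\Hh$. Restricting a bilinear map to a product of subspaces is still a bilinear map by Definition~\thref{d:bilinear-map} (every identity quantified over the whole space holds in particular over the subspace), and the same continuity constant of~$\BlfH$ works on $\Hh\times\Hh$ (Definition~\thref{d:bounded-bilinear-form}); the coercivity inequality of Definition~\thref{d:coercive-bilinear-form} with constant~$\alpha>0$ is also inherited. Continuity of $\LfH_{|\Hh}$ follows from Theorem~\thref{t:continuous-linear-map}: any continuity constant $C$ for $\LfH$ on~$H$ is also a continuity constant for its restriction. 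So $\LfH_{|\Hh}\in\Hh^\prime$.

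Applying Theorem~\thref{t:lax-milgram} to the Hilbert space~$\Hh$, the bounded coercive bilinear form~$\BlfH_{|\Hh\times\Hh}$, and the continuous linear form~$\LfH_{|\Hh}$ yields existence and uniqueness of $\uh\in\Hh$ satisfying Problem~\eqref{e:general-problem-discrete}, together with the estimate $\nH{\uh}\leq\frac{1}{\alpha}\,\|\LfH_{|\Hh}\|_{\Hh^\prime}$. It remains to bound $\|\LfH_{|\Hh}\|_{\Hh^\prime}$ by $\nHp{\LfH}$. By Lemma~\thref{l:operator-norm-estimation} applied to $\LfH\in\Hp$ one has $|\lfH{\vh}|\leq\nHp{\LfH}\,\nH{\vh}$ for every $\vh\in\Hh\subset H$, so $\nHp{\LfH}$ is a continuity constant of $\LfH_{|\Hh}$ on~$\Hh$, and by Lemma~\thref{l:finite-operator-norm-is-continuous} this gives $\|\LfH_{|\Hh}\|_{\Hh^\prime}\leq\nHp{\LfH}$, yielding the desired estimate~\eqref{e:lax-milgram-closed-subspace-estimation}. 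The only subtlety, and probably the main bookkeeping point to state carefully, is that $\Hh$ inherits from~$H$ not only its Hilbert structure but also the very same norm used to measure~$\uh$ in the final estimate, so no change-of-norm issue arises.
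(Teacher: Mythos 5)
Your proof is correct and follows the same route as the paper: Lemma~\ref{l:closed-hilbert-subspace} makes $\Hh$ a Hilbert space, and Theorem~\ref{t:lax-milgram} is then applied on~$\Hh$ to the restricted bilinear and linear forms. You additionally make explicit a point the paper's one-line proof glosses over, namely the comparison $\norm{\Hh^\prime}{\LfH_{|\Hh}}\leq\nHp{\LfH}$ needed to convert the estimate produced on~$\Hh$ into the stated estimate~\eqref{e:lax-milgram-closed-subspace-estimation}.
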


\begin{proof}
  Direct consequence of
  Lemma~\threfc{l:closed-hilbert-subspace}{%
    $\Hh$~is a closed subspace of~$H$}, and
  Theorem~\threfc{t:lax-milgram}{$(\Hh,\psHdotdot)$ is a Hilbert space}
  where the restriction to~$\Hh$ of the norm associated to $\psHdotdot$ is
  still denoted $\nHdot$.
\end{proof}

\begin{lemma}[Céa]
  \label{l:cea}
  Assume hypotheses of
  Theorem~\thref{t:lax-milgram-closed-subspace}.
  Let $C\geq 0$ be a continuity constant of the bounded bilinear
  form~$\BlfH$.
  Let $u\in H$ be the unique solution to Problem~\eqref{e:general-problem}.
  Let $\uh\in\Hh$ be the unique solution to
  Problem~\eqref{e:general-problem-discrete}.
  Then,
  \begin{equation}
    \label{e:cea-closed-subspace-error}
    \forall \vh \in \Hh,\quad
    \nH{u - \uh} \leq \frac{C}{\alpha} \nH{u - \vh}.
  \end{equation}
\end{lemma}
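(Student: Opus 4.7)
The plan is to derive the estimate by combining three ingredients already available in the excerpt: the coercivity of~$\BlfH$ with constant $\alpha>0$, the boundedness of~$\BlfH$ with constant $C\geq 0$, and the Galerkin orthogonality of Lemma~\thref{l:galerkin-orthogonality}. The idea is to bound $\nH{u-\uh}^2$ from below by coercivity and from above by continuity, after using Galerkin orthogonality to replace one copy of the error by $u-\vh$ for an arbitrary $\vh\in\Hh$.

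First I would fix $\vh\in\Hh$ and note that $\vh-\uh$ lies in~$\Hh$, since $\Hh$ is a subspace of~$H$ containing both $\vh$ and $\uh$ (Lemma~\thref{l:closed-under-linear-combination-is-subspace} with $\lambda=1$ and $\mu=-1$). Then Lemma~\thref{l:galerkin-orthogonality} applied with test vector $\vh-\uh$ gives $\blfH{u-\uh}{\vh-\uh}=0$. Decomposing $u-\uh=(u-\vh)+(\vh-\uh)$ on the right argument and using right-linearity (Definition~\thref{d:bilinear-map}) yields
\[
\blfH{u-\uh}{u-\uh}
=\blfH{u-\uh}{u-\vh}+\blfH{u-\uh}{\vh-\uh}
=\blfH{u-\uh}{u-\vh}.
\]
Coercivity (Definition~\thref{d:coercive-bilinear-form}) then gives the lower bound $\alpha\,\nH{u-\uh}^2\leq\blfH{u-\uh}{u-\uh}$, while boundedness (Definition~\thref{d:bounded-bilinear-form}) combined with elementary properties of the absolute value on~$\matR$ gives $\blfH{u-\uh}{u-\vh}\leq|\blfH{u-\uh}{u-\vh}|\leq C\,\nH{u-\uh}\,\nH{u-\vh}$. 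Chaining these produces $\alpha\,\nH{u-\uh}^2\leq C\,\nH{u-\uh}\,\nH{u-\vh}$.

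To conclude, I would split on whether $u=\uh$. If so, then by Lemma~\thref{l:norm-preserves-zero} we have $\nH{u-\uh}=0$, while the right-hand side $\frac{C}{\alpha}\nH{u-\vh}$ is nonnegative by Lemma~\thref{l:norm-is-nonnegative} together with $C\geq 0$ and $\alpha>0$; the inequality is then trivial. Otherwise $\nH{u-\uh}>0$ by Definition~\thref{d:norm} (definiteness, contrapositive) and Lemma~\thref{l:norm-is-nonnegative}, so dividing the chained inequality by $\alpha\,\nH{u-\uh}>0$ (ordered field properties of~$\matR$) yields the desired bound.

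The main obstacle is not the mathematics, which is short, but the bookkeeping typical of this document: justifying each algebraic step from definitions and earlier lemmas, making sure to invoke Galerkin orthogonality on a genuine element of~$\Hh$ (hence the passage through $\vh-\uh$), and isolating the degenerate case $u=\uh$ before any division. No deep new ingredient is required beyond the lemmas already stated.
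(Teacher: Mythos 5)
Your proposal is correct and follows essentially the same route as the paper: Galerkin orthogonality to reduce $\blfH{u-\uh}{u-\uh}$ to $\blfH{u-\uh}{u-\vh}$, coercivity below, boundedness above, and a separate treatment of the degenerate case $u=\uh$ before dividing. The only cosmetic difference is that you invoke Galerkin orthogonality once with the test vector $\vh-\uh$, whereas the paper applies it twice (with $\uh$ and with $\vh$) to pass through $\blfH{u-\uh}{u}$; the two are interchangeable.
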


\begin{proof}
  Let $\vh\in\Hh$ be a vector in the subspace.
  
  \proofparskip{Case $u=\uh$}
  Then, from
  Definition~\threfc{d:space}{$(H,+)$ is an abelian group},
  Lemma~\threfc{l:norm-preserves-zero}{$u-\uh=0_H$},
  Lemma~\threfc{l:norm-is-nonnegative}{for $\nHdot$}, and
  \assume{ordered field properties of~$\matR$ with $\alpha>0$ and $C\geq 0$},
  we have
  \begin{equation*}
    \nH{u - \uh} = 0 \leq \frac{C}{\alpha} \nH{u - \vh}.
  \end{equation*}
  
  \proofparskip{Case $u\not=\uh$}
  Then, from
  Definition~\threfc{d:space}{$(H,+)$ is an abelian group}, and
  Definition~\threfc{d:norm}{$\nHdot$~is definite, contrapositive},
  we have $\nH{u-\uh}\not=0$.
  Moreover, from
  Definition~\thref{d:vector-subtraction},
  Definition~\threfc{d:inner-product}{$\psHdotdot$~is a bilinear map},
  Definition~\threfc{d:bilinear-map}{$\psHdotdot$~is right linear}, and
  Lemma~\thref{l:galerkin-orthogonality},
  we have
  \begin{equation}
    \label{e:galerkin-orthogonality-bis}
    \blfH{u - \uh}{u - \vh}
    = \blfH{u - \uh}{u} - \blfH{u - \uh}{\vh}
    = \blfH{u - \uh}{u}.
  \end{equation}
  Thus, from
  Definition~\threfc{d:coercive-bilinear-form}{for~$\BlfH$ with $u=u-\uh$},
  \assume{properties of the absolute value on~$\matR$},
  \eqref{e:galerkin-orthogonality-bis} with~$\uh$ and~$\vh$ in~$\Hh$,
  compatibility of the absolute value with comparison in~$\matR$, and
  Definition~\thref{d:bounded-bilinear-form},
  we have
  \begin{eqnarray*}
    \alpha \nH{u - \uh}^2
    & \leq & \blfH{u - \uh}{u - \uh} \\
    & \leq & | \blfH{u - \uh}{u - \uh} | \\
    & = & | \blfH{u - \uh}{u} | \\
    & = & | \blfH{u - \uh}{u - \vh} | \\
    & \leq & C \nH{u - \uh} \, \nH{u - \vh}.
  \end{eqnarray*}
  Hence, from
  \assume{ordered field properties of~$\matR$ with $\alpha,\nH{u-\uh}>0$},
  we have
  \begin{equation*}
    \nH{u - \uh} \leq \frac{C}{\alpha} \nH{u - \vh}.
  \end{equation*}
\end{proof}

\begin{lemma}[finite dimensional subspace in Hilbert space is closed]
  \label{l:finite-dimensional-subspace-in-hilbert-space-is-closed}
  Let $(H,\psGdotdot)$ be a real Hilbert space.
  Let~$\nHdot$ be the norm associated with inner product~$\psHdotdot$.
  Let~$d_H$ be the distance associated with norm~$\nHdot$.
  Let~$F$ be a subspace of~$H$.
  Assume that~$F$ is a finite dimensional subspace.
  Then, $F$~is closed for distance~$d_H$.
\end{lemma}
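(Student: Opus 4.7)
The plan is to proceed by induction on the number $n$ of generators in the representation of~$F$ given by Definition~\ref{d:finite-dimensional-subspace}. Let $P(n)$ be the property: for all subspaces $F$ of~$H$ such that there exist $u_1,\ldots,u_n\in H$ with $F = \Line{u_1}+\ldots+\Line{u_n}$, $F$~is closed for distance~$d_H$.

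For the base case $P(0)$, the empty sum yields $F=\{0_H\}$, which is closed by Lemma~\ref{l:singleton-is-closed}. For the inductive step, assume $P(n)$ and let $F=\Line{u_1}+\ldots+\Line{u_{n+1}}$. Set $F'=\Line{u_1}+\ldots+\Line{u_n}$. Then $F'$ is a subspace (from closure under linear combination of a sum of linear spans, via Lemma~\ref{l:closed-under-linear-combination-is-subspace}) admitting an $n$-generator representation, so by the induction hypothesis $F'$ is closed in~$H$. Since $H$ is a Hilbert space, hence complete for~$d_H$, Lemma~\ref{l:closed-subset-of-complete-is-complete} shows that $F'$ is complete for the restriction of~$d_H$.

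Now I would apply Lemma~\ref{l:sum-is-orth-sum} to~$F'$ and $u_{n+1}$, which provides some $\up\in(F')^\perp$ such that $F = F'+\Line{u_{n+1}} = F'+\Line{\up}$. Then I distinguish two cases. If $\up=0_H$, then $\Line{\up}=\{0_H\}$ and $F=F'$, which is closed by the induction hypothesis. If $\up\neq 0_H$, then Lemma~\ref{l:sum-of-complete-subspace-and-linear-span-is-closed} applies to the complete subspace~$F'$ and the nonzero vector $\up$ of its orthogonal: $F'\oplus\Line{\up}$ is closed for~$d_H$, and this set coincides with~$F$. Hence $P(n+1)$ holds, and by induction $F$ is closed for distance~$d_H$.

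The key ingredient, and the reason the proof works specifically in a Hilbert space rather than a general normed space, is Lemma~\ref{l:sum-is-orth-sum}: it allows us to replace the arbitrary generator $u_{n+1}$ by an orthogonal representative~$\up$, precisely what is needed to invoke Lemma~\ref{l:sum-of-complete-subspace-and-linear-span-is-closed}, whose proof relies on the continuity of the orthogonal projection onto the complete subspace~$F'$. I expect no real obstacle beyond carefully checking that $F'$ is a subspace at each induction step and correctly handling the degenerate case $\up=0_H$; the rest is a direct chaining of the lemmas already established in Sections~\ref{ss:normed-vector-space} and~\ref{ss:inner-product-space}.
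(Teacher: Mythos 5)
Your proof is correct and follows essentially the same route as the paper: induction on the number of generators, replacing the new generator by an orthogonal representative via Lemma~\ref{l:sum-is-orth-sum}, and invoking Lemma~\ref{l:sum-of-complete-subspace-and-linear-span-is-closed} on the complete partial span, with the degenerate case $\up=0_H$ handled separately. The only (harmless) differences are that you start the induction at $n=0$ with $\{0_H\}$ and Lemma~\ref{l:singleton-is-closed}, whereas the paper starts at $n=1$ with Lemma~\ref{l:linear-span-is-closed}, and you establish completeness of the partial span before invoking Lemma~\ref{l:sum-is-orth-sum}, which is actually the cleaner ordering.
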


\begin{proof}
  From
  Definition~\thref{d:finite-dimensional-subspace},
  let $n\in\matN$, and let $u_1,\ldots,u_n\in H$ such that
  $F=\Span{\{u_1,\ldots,u_n\}}=\Line{u_1}+\ldots+\Line{u_n}$.
  For $i\in\matN$ with $1\leq i\leq n$, let $F_i=\Span{\{u_1,\ldots,u_i\}}$.
  Then, for $2\leq i\leq n$, we have $F_i=F_{i-1}+\Line{u_i}$.
  Let $P(i)$ be the property ``$F_i$~is closed for distance~$d_H$''.
  
  \proofparskip{Induction: $P(1)$}
  From
  Lemma~\thref{l:linear-span-is-closed},
  $F_1=\Line{u_1}$ is closed for distance~$d_H$.
  
  \proofparskip{Induction: $P(i-1)$ implies $P(i)$}
  Assume that $2\leq i\leq n$.
  Assume that $P(i-1)$ holds.
  Then, from
  Lemma~\thref{l:sum-is-orth-sum},
  there exists $\up_i\in F_{i-1}^\perp$ such that
  \begin{equation*}
    F_i = F_{i - 1} + \Line{u_i} = F_{i - 1} + \Line{\up_i}.
  \end{equation*}
  \proofpar{Case $\up_i=0_G$}
  Then, $F_i=F_{i-1}$ is closed for distance~$d_G$.
  \proofpar{Case $\up_i\not=0_G$}
  Then, from
  Definition~\threfc{d:hilbert-space}{$H$ is complete for distance~$d_H$}, and
  Lemma~\thref{l:closed-subset-of-complete-is-complete},
  $F_{i-1}$ is complete for distance~$d_H$.
  Thus, from
  Lemma~\thref{l:sum-of-complete-subspace-and-linear-span-is-closed},
  $F_i=F_{i-1}+\Line{\up_i}$ is closed for distance~$d_G$.
  
  Hence, by (finite) induction on $i\in\matN$ with $1\leq i\leq n$, we have
  $P(n)$.
  Therefore, $F=F_n$ is closed for distance~$d_G$.
\end{proof}

\begin{theorem}[{\LMC}, finite dimensional subspace]
  \label{t:lax-milgram-cea-finite-dimensional-subspace}
  Assume hypotheses of
  Theorem~\thref{t:lax-milgram}.
  Let $C\geq 0$ be a continuity constant of the bounded bilinear
  form~$\BlfH$.
  Let $u\in H$ be the unique solution to Problem~\eqref{e:general-problem}.
  Let~$\Hh$ be a finite dimensional subspace of~$H$.
  Then, there exists a unique $\uh\in\Hh$ solution to
  Problem~\eqref{e:general-problem-discrete}.
  Moreover,
  \begin{eqnarray}
    \label{e:lax-milgram-finite-dim-subspace-estimation}
    & & \nH{\uh} \leq \frac{1}{\alpha} \nHp{\LfH}; \\
    \label{e:lax-milgram-finite-dim-subspace-error}
    \forall \vh \in \Hh, & &
    \nH{u - \uh} \leq \frac{C}{\alpha} \nH{u - \vh}.
  \end{eqnarray}
\end{theorem}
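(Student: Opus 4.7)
The plan is essentially a matter of orchestrating three previously established results; no new technical ingredient is needed. First, since the hypotheses of Theorem~\thref{t:lax-milgram} are in force, $(H,\psHdotdot)$ is a real Hilbert space with associated norm~$\nHdot$ and distance~$d_H$. The finite dimensional subspace~$\Hh$ then satisfies the hypothesis of Lemma~\thref{l:finite-dimensional-subspace-in-hilbert-space-is-closed}, from which I would conclude that $\Hh$ is closed for distance~$d_H$. This step is the crucial bridge between the stated assumption (finite dimensionality) and what the subsequent theorems actually require (closedness of the subspace).

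Second, with $\Hh$ now identified as a closed subspace of~$H$, I would apply Theorem~\thref{t:lax-milgram-closed-subspace} directly. This yields simultaneously the existence and uniqueness of a solution $\uh\in\Hh$ to Problem~\eqref{e:general-problem-discrete} and the norm estimate $\nH{\uh}\leq\frac{1}{\alpha}\nHp{\LfH}$, which is exactly~\eqref{e:lax-milgram-finite-dim-subspace-estimation}.

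Third, for the error bound~\eqref{e:lax-milgram-finite-dim-subspace-error}, I would invoke Lemma~\thref{l:cea}. Its hypotheses coincide with those of Theorem~\thref{t:lax-milgram-closed-subspace} together with a continuity constant~$C\geq 0$ of~$\BlfH$ (given), the unique solution~$u$ to Problem~\eqref{e:general-problem} (given), and the unique solution~$\uh$ to Problem~\eqref{e:general-problem-discrete} (just obtained in the previous step). Its conclusion is precisely the desired inequality, valid for every $\vh\in\Hh$.

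There is no genuine obstacle in this final theorem: the substantive work — the fixed-point argument underlying Lax--Milgram, Galerkin orthogonality, and the closedness of finite dimensional subspaces inside a Hilbert space — has been done in earlier lemmas. The only point deserving care is that we must explicitly cite Lemma~\thref{l:finite-dimensional-subspace-in-hilbert-space-is-closed} rather than appeal to the general folklore fact that finite dimensional subspaces of normed spaces are closed, since the latter would rely on the equivalence of norms in finite dimension and on compactness arguments that the document has deliberately avoided in favour of the simpler inner product proof available in a Hilbert setting.
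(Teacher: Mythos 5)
Your proposal is correct and follows exactly the paper's own route: Lemma~\ref{l:finite-dimensional-subspace-in-hilbert-space-is-closed} to get closedness of~$\Hh$, then Theorem~\ref{t:lax-milgram-closed-subspace} for existence, uniqueness and the estimate~\eqref{e:lax-milgram-finite-dim-subspace-estimation}, then Lemma~\ref{l:cea} for the error bound~\eqref{e:lax-milgram-finite-dim-subspace-error}. Nothing further is needed.
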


\begin{proof}
  Direct consequence of
  Lemma~\thref{l:finite-dimensional-subspace-in-hilbert-space-is-closed},
  Theorem~\thref{t:lax-milgram-closed-subspace}, and
  Lemma~\thref{l:cea}.
\end{proof}

\section{Conclusions, perspectives}
\label{s:conclusions-perspectives}

We have presented a very detailed proof of the {\LM} theorem for the
resolution on a Hilbert space of linear (partial differential) equations set
under their weak form.
Among the various proofs available in the literature, we have chosen a path
using basic notions.
In particular, we have avoided to obtain the result from a more general one,
{\eg} set on a Banach space.
The proof uses the following main arguments:
the representation lemma for bounded bilinear forms,
the {\RF} representation theorem,
the orthogonal projection theorem for a complete subspace,
and the fixed point theorem for a contraction on a complete metric space.

The short-term purpose of this work was to help the formalization of such a
result in the {\coq} formal proof assistant.
This was recently achieved~\cite{bol:cfp:16}.
One of the key issues for the computer scientists that formalize the
pen-and-paper proof was to deal with the embedded algebraic structures:
group,
vector space (an external operation is added),
normed vector space (a norm is added),
inner vector space (an inner product is added),
Hilbert space (completeness is added).
New {\coq} structures should be extensions of the previous ones: the addition
operation in the Hilbert space should be the very same addition operation from
the initial group structure.

The long-term purpose of these studies is the formal proof of programs
implementing the {\FEM}.
For instance, considering the standard Laplace
equation~\eqref{e:laplace-problem-strong}, one proves that it can be written
in weak formulation as
\begin{equation}
  \label{e:laplace-problem-weak}
  \mbox{find $u\in\HsobOmIO$ such that:} \quad
  \forall v \in \HsobOmIO,\;
  \int_{\Omega} \nabla u \cdot \nabla v = \int_{\Omega} f v,
\end{equation}
where $\HsobOmIO$ is a Sobolev space.
Problem~\eqref{e:laplace-problem-weak} takes the form of
Problem~\eqref{e:general-problem}, with the following notations:
$H=\HsobOmIO$, the bilinear form is defined by
$\blfH{v}{w}=\int_{\Omega}\nabla v\cdot\nabla w$, and the linear form by
$\lfH{v}=\int_{\Omega}qv$.
To apply the {\LM} theorem, one needs to prove in particular that $\HsobOmIO$
is a Hilbert space.

As a consequence, we will have to write very detailed pen-and-paper
proofs for the following notions and results:
large parts of the integration and distribution theories,
define Sobolev spaces (at least $\LsobOm$, $\HsobOmI$ and $\HsobOmIO$ for some
bounded domain~$\Omega$ of $\matR^d$ with $d=1$, 2, or 3),
and prove that they are Hilbert spaces.
And finally, many results of the interpolation and approximation theory to
define the {\FEM} itself.

\clearpage
\nocite{*}
\bibliography{biblio}

\begin{thebibliography}{10}

\bibitem{ada:ss:75}
Robert~A. Adams.
\newblock {\em Sobolev spaces}.
\newblock Academic Press (A subsidiary of Harcourt Brace Jovanovich,
  Publishers), New York-London, 1975.
\newblock Pure and Applied Mathematics, Vol. 65.

\bibitem{bs:fem:01}
Ivo Babu{\v{s}}ka and Theofanis Strouboulis.
\newblock {\em The finite element method and its reliability}.
\newblock Numerical Mathematics and Scientific Computation. The Clarendon
  Press, Oxford University Press, New York, 2001.

\bibitem{bol:cfp:16}
Sylvie Boldo, Fran\c{c}ois Cl{\'e}ment, Florian Faissole, Vincent Martin, and
  Micaela Mayero.
\newblock A {C}oq formal proof of the {L}ax--{M}ilgram theorem.
\newblock {\em Submitted}, 2016.

\bibitem{bol:tcm:14}
Sylvie Boldo, Fran{\c{c}}ois Cl{\'e}ment, Jean-Christophe Filli{\^a}tre,
  Micaela Mayero, Guillaume Melquiond, and Pierre Weis.
\newblock Trusting computations: a mechanized proof from partial differential
  equations to actual program.
\newblock {\em Comput. Math. Appl.}, 68(3):325--352, 2014.

\bibitem{bre:af:83}
Ha{\"{\i}}m Brezis.
\newblock {\em Analyse fonctionnelle [Functional analysis]}.
\newblock Collection Math\'ematiques Appliqu\'ees pour la Ma\^\i trise
  [Collection of Applied Mathematics for the Master's Degree]. Masson, Paris,
  1983.
\newblock Th{\'e}orie et applications [Theory and applications].

\bibitem{bre:eua:74}
Franco Brezzi.
\newblock On the existence, uniqueness and approximation of saddle-point
  problems arising from {L}agrangian multipliers.
\newblock {\em Rev. Fran\c{c}aise Automat. Informat. Recherche Op\'erationnelle
  S\'er. Rouge}, 8(R-2):129--151, 1974.

\bibitem{cia:fem:02}
Philippe~G. Ciarlet.
\newblock {\em The finite element method for elliptic problems}, volume~40 of
  {\em Classics in Applied Mathematics}.
\newblock Society for Industrial and Applied Mathematics (SIAM), Philadelphia,
  PA, 2002.
\newblock Reprint of the 1978 original [North-Holland, Amsterdam; MR0520174 (58
  \#25001)].

\bibitem{eg:tpf:04}
Alexandre Ern and Jean-Luc Guermond.
\newblock {\em Theory and practice of finite elements}, volume 159 of {\em
  Applied Mathematical Sciences}.
\newblock Springer-Verlag, New York, 2004.

\bibitem{gr:fem:86}
Vivette Girault and Pierre-Arnaud Raviart.
\newblock {\em Finite element methods for {N}avier-{S}tokes equations},
  volume~5 of {\em Springer Series in Computational Mathematics}.
\newblock Springer-Verlag, Berlin, 1986.
\newblock Theory and algorithms.

\bibitem{gos:cms1:93}
Bernard Gostiaux.
\newblock {\em Cours de math\'ematiques sp\'eciales - {T}ome 1 [Lecture notes
  in Special Mathematics - Tome 1]}.
\newblock Math\'ematiques [Mathematics]. Presses Universitaires de France,
  Paris, 1993.
\newblock Alg{\`e}bre [Algebra], With a preface by Paul Deheuvels. In French.

\bibitem{gos:cms2:93}
Bernard Gostiaux.
\newblock {\em Cours de math\'ematiques sp\'eciales - {T}ome 2 [Lecture notes
  in Special Mathematics - Tome 2]}.
\newblock Math\'ematiques [Mathematics]. Presses Universitaires de France,
  Paris, 1993.
\newblock Topologie, analyse r{\'e}elle [Topology, real analysis]. In French.

\bibitem{gos:cms3:93}
Bernard Gostiaux.
\newblock {\em Cours de math\'ematiques sp\'eciales - {T}ome 3 [Lecture notes
  in Special Mathematics - Tome 3]}.
\newblock Math\'ematiques [Mathematics]. Presses Universitaires de France,
  Paris, 1993.
\newblock Analyse fonctionnelle et calcul diff{\'e}rentiel [Functional analysis
  and differential calculus]. In French.

\bibitem{lm:pe:54}
P.~D. Lax and A.~N. Milgram.
\newblock Parabolic equations.
\newblock In {\em Contributions to the theory of partial differential
  equations}, Annals of Mathematics Studies, no. 33, pages 167--190. Princeton
  University Press, Princeton, N. J., 1954.

\bibitem{qv:nap:97}
Alfio Quarteroni and Alberto Valli.
\newblock {\em Numerical approximation of partial differential equations},
  volume~23 of {\em Springer Series in Computational Mathematics}.
\newblock Springer-Verlag, Berlin, 1994.

\bibitem{rud:rca:87}
Walter Rudin.
\newblock {\em Real and complex analysis}.
\newblock McGraw-Hill Book Co., New York, third edition, 1987.

\bibitem{yos:fa:80}
K{\=o}saku Yosida.
\newblock {\em Functional analysis}.
\newblock Classics in Mathematics. Springer-Verlag, Berlin, 1995.
\newblock Reprint of the sixth (1980) edition.

\bibitem{ztz:fem:13}
O.~C. Zienkiewicz, R.~L. Taylor, and J.~Z. Zhu.
\newblock {\em The finite element method: its basis and fundamentals}.
\newblock Elsevier/Butterworth Heinemann, Amsterdam, seventh edition, 2013.

\end{thebibliography}
\bibliographystyle{plain}
\phantomsection \label{references}
\addcontentsline{toc}{section}{References}

\clearpage
\appendix

\section{Lists of statements}
\label{s:lists-of-statements}

\renewcommand{\listtheoremname}{List of Definitions}
\phantomsection \label{listofdef}
\addcontentsline{toc}{subsection}{\listtheoremname}
\listoftheorems[ignoreall,show=definition]

\renewcommand{\listtheoremname}{List of Lemmas}
\phantomsection \label{listoflem}
\addcontentsline{toc}{subsection}{\listtheoremname}
\listoftheorems[ignoreall,show=lemma]

\renewcommand{\listtheoremname}{List of Theorems}
\phantomsection \label{listofthm}
\addcontentsline{toc}{subsection}{\listtheoremname}
\listoftheorems[ignoreall,show=theorem]

\clearpage
\begin{sidewaysfigure}
  \centering
  \includegraphics[width=\textheight]{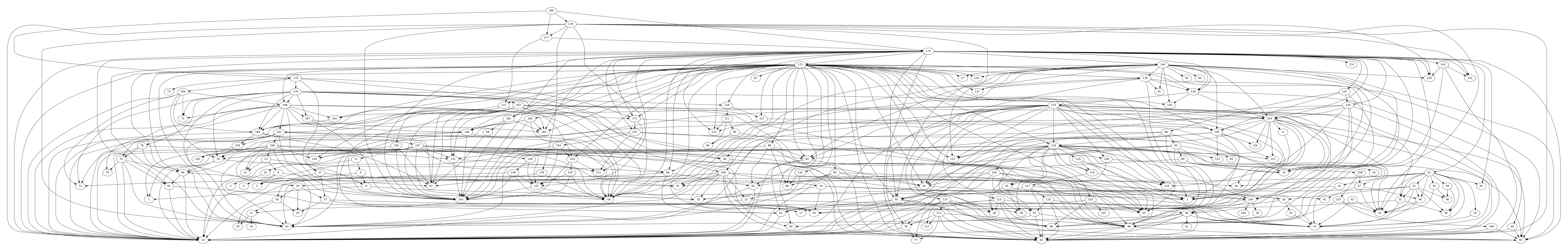}
  \\[2cm]
  \includegraphics[width=\textheight]{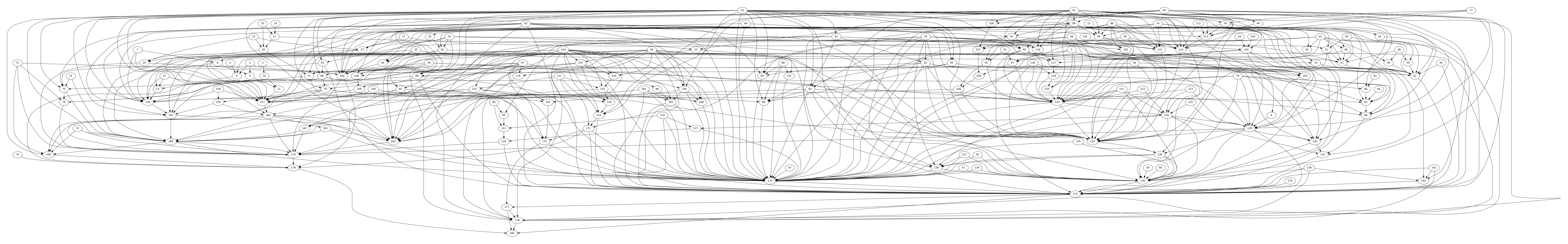}
  \caption{%
    Dependency graph (both ways).
    All dependencies are detailed in Appendices~\ref{s:depends-directly-from}
    and~\ref{s:is-a-direct-dependency-of}.}
  \label{f:depend:graph}
\end{sidewaysfigure}

\clearpage
\section{Depends directly from\ldots}
\label{s:depends-directly-from}

\begin{description}

\item[Definition~\thref{d:supremum}] has no direct dependency.

\item[Lemma~\thref{l:finite-supremum}] has no direct dependency.

\item[Lemma~\thref{l:discrete-lower-accumulation}] has no direct dependency.

\item[Lemma~\thref{l:supremum-is-positive-scalar-multiplicative}] depends directly from:\\
  Definition~\thref{d:supremum}.

\item[Definition~\thref{d:maximum}] has no direct dependency.

\item[Lemma~\thref{l:finite-maximum}] depends directly from:\\
  Definition~\thref{d:supremum},\\
  Lemma~\thref{l:finite-supremum},\\
  Definition~\thref{d:maximum}.

\item[Definition~\thref{d:infimum}] has no direct dependency.

\item[Lemma~\thref{l:duality-infimum-supremum}] depends directly from:\\
  Definition~\thref{d:supremum},\\
  Definition~\thref{d:infimum}.

\item[Lemma~\thref{l:finite-infimum}] depends directly from:\\
  Lemma~\thref{l:finite-supremum},\\
  Lemma~\thref{l:duality-infimum-supremum}.

\item[Lemma~\thref{l:discrete-upper-accumulation}] depends directly from:\\
  Lemma~\thref{l:discrete-lower-accumulation}.

\item[Lemma~\thref{l:finite-infimum-discrete}] depends directly from:\\
  Lemma~\thref{l:finite-infimum},\\
  Lemma~\thref{l:discrete-upper-accumulation}.

\item[Definition~\thref{d:minimum}] has no direct dependency.

\item[Lemma~\thref{l:finite-minimum}] depends directly from:\\
  Lemma~\thref{l:finite-maximum},\\
  Lemma~\thref{l:duality-infimum-supremum},\\
  Definition~\thref{d:minimum}.

\item[Definition~\thref{d:distance}] has no direct dependency.

\item[Definition~\thref{d:metric-space}] has no direct dependency.

\item[Lemma~\thref{l:iterated-triangle-inequality}] depends directly from:\\
  Definition~\thref{d:distance}.

\item[Definition~\thref{d:closed-ball}] has no direct dependency.

\item[Definition~\thref{d:sphere}] has no direct dependency.

\item[Definition~\thref{d:open-subset}] has no direct dependency.

\item[Definition~\thref{d:closed-subset}] has no direct dependency.

\item[Lemma~\thref{l:equivalent-definition-of-closed-subset}] depends directly from:\\
  Definition~\thref{d:open-subset},\\
  Definition~\thref{d:closed-subset}.

\item[Lemma~\thref{l:singleton-is-closed}] depends directly from:\\
  Definition~\thref{d:distance},\\
  Lemma~\thref{l:equivalent-definition-of-closed-subset}.

\item[Definition~\thref{d:closure}] has no direct dependency.

\item[Definition~\thref{d:convergent-sequence}] has no direct dependency.

\item[Lemma~\thref{l:variant-of-point-separation}] depends directly from:\\
  Definition~\thref{d:distance}.

\item[Lemma~\thref{l:limit-is-unique}] depends directly from:\\
  Definition~\thref{d:distance},\\
  Definition~\thref{d:convergent-sequence},\\
  Lemma~\thref{l:variant-of-point-separation}.

\item[Lemma~\thref{l:closure-is-limit-of-sequences}] depends directly from:\\
  Definition~\thref{d:distance},\\
  Definition~\thref{d:closed-ball},\\
  Definition~\thref{d:closure},\\
  Definition~\thref{d:convergent-sequence}.

\item[Lemma~\thref{l:closed-equals-closure}] depends directly from:\\
  Definition~\thref{d:closed-subset},\\
  Lemma~\thref{l:equivalent-definition-of-closed-subset},\\
  Definition~\thref{d:closure}.

\item[Lemma~\thref{l:closed-is-limit-of-sequences}] depends directly from:\\
  Definition~\thref{d:closure},\\
  Lemma~\thref{l:closure-is-limit-of-sequences},\\
  Lemma~\thref{l:closed-equals-closure}.

\item[Definition~\thref{d:stationary-sequence}] has no direct dependency.

\item[Lemma~\thref{l:stationary-sequence-is-convergent}] depends directly from:\\
  Definition~\thref{d:distance},\\
  Definition~\thref{d:convergent-sequence},\\
  Definition~\thref{d:stationary-sequence}.

\item[Definition~\thref{d:cauchy-sequence}] has no direct dependency.

\item[Lemma~\thref{l:equivalent-definition-of-cauchy-sequence}] depends directly from:\\
  Definition~\thref{d:distance},\\
  Definition~\thref{d:cauchy-sequence}.

\item[Lemma~\thref{l:convergent-sequence-is-cauchy}] depends directly from:\\
  Definition~\thref{d:distance},\\
  Definition~\thref{d:convergent-sequence},\\
  Lemma~\thref{l:limit-is-unique},\\
  Definition~\thref{d:cauchy-sequence}.

\item[Definition~\thref{d:complete-subset}] has no direct dependency.

\item[Definition~\thref{d:complete-metric-space}] has no direct dependency.

\item[Lemma~\thref{l:closed-subset-of-complete-is-complete}] depends directly from:\\
  Lemma~\thref{l:closure-is-limit-of-sequences},\\
  Lemma~\thref{l:closed-equals-closure},\\
  Definition~\thref{d:complete-subset},\\
  Definition~\thref{d:complete-metric-space}.

\item[Definition~\thref{d:continuity-in-a-point}] has no direct dependency.

\item[Definition~\thref{d:pointwise-continuity}] has no direct dependency.

\item[Lemma~\thref{l:compatibility-of-limit-with-continuous-functions}] depends directly from:\\
  Definition~\thref{d:convergent-sequence},\\
  Definition~\thref{d:continuity-in-a-point}.

\item[Definition~\thref{d:uniform-continuity}] has no direct dependency.

\item[Definition~\thref{d:lipschitz-continuity}] has no direct dependency.

\item[Theorem~\thref{t:equivalent-definition-of-lipschitz-continuity}] depends directly from:\\
  Definition~\thref{d:distance},\\
  Definition~\thref{d:lipschitz-continuity}.

\item[Definition~\thref{d:contraction}] has no direct dependency.

\item[Lemma~\thref{l:uniform-continuous-is-continuous}] depends directly from:\\
  Definition~\thref{d:continuity-in-a-point},\\
  Definition~\thref{d:pointwise-continuity},\\
  Definition~\thref{d:uniform-continuity}.

\item[Lemma~\thref{l:zero-lipschitz-continuous-is-constant}] depends directly from:\\
  Definition~\thref{d:distance},\\
  Definition~\thref{d:lipschitz-continuity}.

\item[Lemma~\thref{l:lipschitz-continuous-is-uniform-continuous}] depends directly from:\\
  Definition~\thref{d:uniform-continuity},\\
  Definition~\thref{d:lipschitz-continuity},\\
  Lemma~\thref{l:zero-lipschitz-continuous-is-constant}.

\item[Definition~\thref{d:iterated-function-sequence}] has no direct dependency.

\item[Lemma~\thref{l:stationary-iterated-function-sequence}] depends directly from:\\
  Definition~\thref{d:stationary-sequence},\\
  Definition~\thref{d:iterated-function-sequence}.

\item[Lemma~\thref{l:iterate-lipschitz-continuous-mapping}] depends directly from:\\
  Definition~\thref{d:lipschitz-continuity},\\
  Definition~\thref{d:iterated-function-sequence}.

\item[Lemma~\thref{l:convergent-iterated-function-sequence}] depends directly from:\\
  Definition~\thref{d:convergent-sequence},\\
  Lemma~\thref{l:limit-is-unique},\\
  Definition~\thref{d:stationary-sequence},\\
  Lemma~\thref{l:stationary-sequence-is-convergent},\\
  Definition~\thref{d:lipschitz-continuity},\\
  Lemma~\thref{l:zero-lipschitz-continuous-is-constant},\\
  Definition~\thref{d:iterated-function-sequence}.

\item[Theorem~\thref{t:fixed-point}] depends directly from:\\
  Definition~\thref{d:distance},\\
  Lemma~\thref{l:iterated-triangle-inequality},\\
  Definition~\thref{d:stationary-sequence},\\
  Lemma~\thref{l:stationary-sequence-is-convergent},\\
  Lemma~\thref{l:equivalent-definition-of-cauchy-sequence},\\
  Definition~\thref{d:complete-subset},\\
  Definition~\thref{d:complete-metric-space},\\
  Definition~\thref{d:lipschitz-continuity},\\
  Definition~\thref{d:contraction},\\
  Lemma~\thref{l:zero-lipschitz-continuous-is-constant},\\
  Lemma~\thref{l:stationary-iterated-function-sequence},\\
  Lemma~\thref{l:iterate-lipschitz-continuous-mapping},\\
  Lemma~\thref{l:convergent-iterated-function-sequence}.

\item[Definition~\thref{d:space}] has no direct dependency.

\item[Definition~\thref{d:set-of-mappings-to-space}] has no direct dependency.

\item[Definition~\thref{d:linear-map}] has no direct dependency.

\item[Definition~\thref{d:set-of-linear-maps}] has no direct dependency.

\item[Definition~\thref{d:linear-form}] has no direct dependency.

\item[Definition~\thref{d:bilinear-map}] has no direct dependency.

\item[Definition~\thref{d:bilinear-form}] has no direct dependency.

\item[Definition~\thref{d:set-of-bilinear-forms}] has no direct dependency.

\item[Lemma~\thref{l:zero-times-yields-zero}] depends directly from:\\
  Definition~\thref{d:space}.

\item[Lemma~\thref{l:minus-times-yields-opposite-vector}] depends directly from:\\
  Definition~\thref{d:space},\\
  Lemma~\thref{l:zero-times-yields-zero}.

\item[Definition~\thref{d:vector-subtraction}] has no direct dependency.

\item[Definition~\thref{d:scalar-division}] has no direct dependency.

\item[Lemma~\thref{l:times-zero-yields-zero}] depends directly from:\\
  Definition~\thref{d:space},\\
  Definition~\thref{d:vector-subtraction}.

\item[Lemma~\thref{l:zero-product-property}] depends directly from:\\
  Definition~\thref{d:space},\\
  Lemma~\thref{l:zero-times-yields-zero},\\
  Lemma~\thref{l:times-zero-yields-zero}.

\item[Definition~\thref{d:subspace}] has no direct dependency.

\item[Lemma~\thref{l:trivial-subspaces}] depends directly from:\\
  Definition~\thref{d:subspace}.

\item[Lemma~\thref{l:closed-under-vector-operations-is-subspace}] depends directly from:\\
  Definition~\thref{d:space},\\
  Lemma~\thref{l:minus-times-yields-opposite-vector},\\
  Definition~\thref{d:subspace}.

\item[Lemma~\thref{l:closed-under-linear-combination-is-subspace}] depends directly from:\\
  Definition~\thref{d:space},\\
  Lemma~\thref{l:zero-product-property},\\
  Lemma~\thref{l:closed-under-vector-operations-is-subspace}.

\item[Definition~\thref{d:linear-span}] has no direct dependency.

\item[Definition~\thref{d:sum-of-subspaces}] has no direct dependency.

\item[Definition~\thref{d:finite-dimensional-subspace}] has no direct dependency.

\item[Definition~\thref{d:direct-sum-of-subspaces}] has no direct dependency.

\item[Lemma~\thref{l:equivalent-definition-of-direct-sum}] depends directly from:\\
  Definition~\thref{d:space},\\
  Lemma~\thref{l:minus-times-yields-opposite-vector},\\
  Definition~\thref{d:vector-subtraction},\\
  Lemma~\thref{l:closed-under-vector-operations-is-subspace},\\
  Definition~\thref{d:direct-sum-of-subspaces}.

\item[Lemma~\thref{l:direct-sum-with-linear-span}] depends directly from:\\
  Lemma~\thref{l:zero-product-property},\\
  Lemma~\thref{l:closed-under-vector-operations-is-subspace},\\
  Definition~\thref{d:linear-span},\\
  Lemma~\thref{l:equivalent-definition-of-direct-sum}.

\item[Definition~\thref{d:product-vector-operations}] has no direct dependency.

\item[Lemma~\thref{l:product-is-space}] depends directly from:\\
  Definition~\thref{d:space},\\
  Definition~\thref{d:product-vector-operations}.

\item[Definition~\thref{d:inherited-vector-operations}] has no direct dependency.

\item[Lemma~\thref{l:space-of-mappings-to-space}] depends directly from:\\
  Definition~\thref{d:space},\\
  Definition~\thref{d:set-of-mappings-to-space},\\
  Definition~\thref{d:inherited-vector-operations}.

\item[Lemma~\thref{l:linear-map-preserves-zero}] depends directly from:\\
  Definition~\thref{d:linear-map},\\
  Lemma~\thref{l:zero-times-yields-zero}.

\item[Lemma~\thref{l:linear-map-preserves-linear-combinations}] depends directly from:\\
  Definition~\thref{d:space},\\
  Definition~\thref{d:linear-map},\\
  Lemma~\thref{l:zero-times-yields-zero}.

\item[Lemma~\thref{l:space-of-linear-maps}] depends directly from:\\
  Definition~\thref{d:space},\\
  Definition~\thref{d:linear-map},\\
  Definition~\thref{d:set-of-linear-maps},\\
  Lemma~\thref{l:times-zero-yields-zero},\\
  Lemma~\thref{l:closed-under-linear-combination-is-subspace},\\
  Definition~\thref{d:inherited-vector-operations},\\
  Lemma~\thref{l:space-of-mappings-to-space}.

\item[Definition~\thref{d:identity-map}] has no direct dependency.

\item[Lemma~\thref{l:identity-map-is-linear-map}] depends directly from:\\
  Definition~\thref{d:linear-map},\\
  Definition~\thref{d:identity-map}.

\item[Lemma~\thref{l:composition-of-linear-maps-is-bilinear}] depends directly from:\\
  Definition~\thref{d:linear-map},\\
  Definition~\thref{d:bilinear-map},\\
  Lemma~\thref{l:product-is-space},\\
  Definition~\thref{d:inherited-vector-operations},\\
  Lemma~\thref{l:linear-map-preserves-linear-combinations},\\
  Lemma~\thref{l:space-of-linear-maps}.

\item[Definition~\thref{d:isomorphism}] has no direct dependency.

\item[Definition~\thref{d:kernel}] has no direct dependency.

\item[Lemma~\thref{l:kernel-is-subspace}] depends directly from:\\
  Definition~\thref{d:space},\\
  Lemma~\thref{l:times-zero-yields-zero},\\
  Lemma~\thref{l:closed-under-linear-combination-is-subspace},\\
  Lemma~\thref{l:linear-map-preserves-zero},\\
  Lemma~\thref{l:linear-map-preserves-linear-combinations},\\
  Definition~\thref{d:kernel}.

\item[Lemma~\thref{l:injective-linear-map-has-zero-kernel}] depends directly from:\\
  Definition~\thref{d:space},\\
  Definition~\thref{d:linear-map},\\
  Definition~\thref{d:vector-subtraction},\\
  Lemma~\thref{l:linear-map-preserves-zero},\\
  Definition~\thref{d:kernel}.

\item[Lemma~\thref{l:k-is-space}] has no direct dependency.

\item[Definition~\thref{d:norm}] has no direct dependency.

\item[Definition~\thref{d:normed-space}] has no direct dependency.

\item[Lemma~\thref{l:k-is-normed-space}] depends directly from:\\
  Definition~\thref{d:norm},\\
  Definition~\thref{d:normed-space}.

\item[Lemma~\thref{l:norm-preserves-zero}] depends directly from:\\
  Definition~\thref{d:space},\\
  Lemma~\thref{l:zero-times-yields-zero},\\
  Definition~\thref{d:norm},\\
  Definition~\thref{d:normed-space}.

\item[Lemma~\thref{l:norm-is-nonnegative}] depends directly from:\\
  Definition~\thref{d:space},\\
  Definition~\thref{d:norm},\\
  Definition~\thref{d:normed-space}.

\item[Lemma~\thref{l:normalization-by-nonzero}] depends directly from:\\
  Definition~\thref{d:scalar-division},\\
  Definition~\thref{d:norm},\\
  Lemma~\thref{l:norm-is-nonnegative}.

\item[Definition~\thref{d:distance-associated-with-norm}] has no direct dependency.

\item[Lemma~\thref{l:norm-gives-distance}] depends directly from:\\
  Definition~\thref{d:distance},\\
  Definition~\thref{d:metric-space},\\
  Definition~\thref{d:vector-subtraction},\\
  Definition~\thref{d:norm},\\
  Lemma~\thref{l:norm-is-nonnegative},\\
  Definition~\thref{d:distance-associated-with-norm}.

\item[Lemma~\thref{l:linear-span-is-closed}] depends directly from:\\
  Lemma~\thref{l:singleton-is-closed},\\
  Definition~\thref{d:convergent-sequence},\\
  Lemma~\thref{l:limit-is-unique},\\
  Lemma~\thref{l:closed-is-limit-of-sequences},\\
  Definition~\thref{d:cauchy-sequence},\\
  Lemma~\thref{l:convergent-sequence-is-cauchy},\\
  Definition~\thref{d:complete-subset},\\
  Definition~\thref{d:space},\\
  Definition~\thref{d:vector-subtraction},\\
  Definition~\thref{d:linear-span},\\
  Definition~\thref{d:norm},\\
  Lemma~\thref{l:norm-is-nonnegative},\\
  Definition~\thref{d:distance-associated-with-norm}.

\item[Definition~\thref{d:closed-unit-ball}] has no direct dependency.

\item[Lemma~\thref{l:equivalent-definition-of-closed-unit-ball}] depends directly from:\\
  Definition~\thref{d:closed-ball},\\
  Definition~\thref{d:distance-associated-with-norm},\\
  Lemma~\thref{l:norm-gives-distance},\\
  Definition~\thref{d:closed-unit-ball}.

\item[Definition~\thref{d:unit-sphere}] has no direct dependency.

\item[Lemma~\thref{l:equivalent-definition-of-unit-sphere}] depends directly from:\\
  Definition~\thref{d:sphere},\\
  Definition~\thref{d:distance-associated-with-norm},\\
  Lemma~\thref{l:norm-gives-distance},\\
  Definition~\thref{d:unit-sphere}.

\item[Lemma~\thref{l:zero-on-unit-sphere-is-zero}] depends directly from:\\
  Definition~\thref{d:space},\\
  Definition~\thref{d:linear-map},\\
  Lemma~\thref{l:times-zero-yields-zero},\\
  Lemma~\thref{l:linear-map-preserves-zero},\\
  Lemma~\thref{l:normalization-by-nonzero},\\
  Lemma~\thref{l:equivalent-definition-of-unit-sphere}.

\item[Lemma~\thref{l:reverse-triangle-inequality}] depends directly from:\\
  Definition~\thref{d:norm},\\
  Definition~\thref{d:normed-space}.

\item[Lemma~\thref{l:norm-is-one-lipschitz-continuous}] depends directly from:\\
  Definition~\thref{d:lipschitz-continuity},\\
  Definition~\thref{d:distance-associated-with-norm},\\
  Lemma~\thref{l:reverse-triangle-inequality}.

\item[Lemma~\thref{l:norm-is-uniformly-continuous}] depends directly from:\\
  Lemma~\thref{l:lipschitz-continuous-is-uniform-continuous},\\
  Lemma~\thref{l:norm-is-one-lipschitz-continuous}.

\item[Lemma~\thref{l:norm-is-continuous}] depends directly from:\\
  Lemma~\thref{l:uniform-continuous-is-continuous},\\
  Lemma~\thref{l:norm-is-uniformly-continuous}.

\item[Definition~\thref{d:linear-isometry}] has no direct dependency.

\item[Lemma~\thref{l:identity-map-is-linear-isometry}] depends directly from:\\
  Definition~\thref{d:identity-map},\\
  Lemma~\thref{l:identity-map-is-linear-map},\\
  Definition~\thref{d:linear-isometry}.

\item[Definition~\thref{d:product-norm}] has no direct dependency.

\item[Lemma~\thref{l:product-is-normed-space}] depends directly from:\\
  Definition~\thref{d:product-vector-operations},\\
  Lemma~\thref{l:product-is-space},\\
  Definition~\thref{d:norm},\\
  Definition~\thref{d:normed-space},\\
  Lemma~\thref{l:norm-is-nonnegative},\\
  Definition~\thref{d:product-norm}.

\item[Lemma~\thref{l:vector-addition-is-continuous}] depends directly from:\\
  Definition~\thref{d:continuity-in-a-point},\\
  Definition~\thref{d:pointwise-continuity},\\
  Definition~\thref{d:space},\\
  Definition~\thref{d:product-vector-operations},\\
  Definition~\thref{d:norm},\\
  Definition~\thref{d:distance-associated-with-norm},\\
  Definition~\thref{d:product-norm},\\
  Lemma~\thref{l:product-is-normed-space}.

\item[Lemma~\thref{l:scalar-multiplication-is-continuous}] depends directly from:\\
  Definition~\thref{d:distance},\\
  Definition~\thref{d:continuity-in-a-point},\\
  Definition~\thref{d:pointwise-continuity},\\
  Definition~\thref{d:space},\\
  Lemma~\thref{l:zero-times-yields-zero},\\
  Definition~\thref{d:norm},\\
  Definition~\thref{d:distance-associated-with-norm}.

\item[Lemma~\thref{l:norm-of-image-of-unit-vector}] depends directly from:\\
  Definition~\thref{d:linear-map},\\
  Definition~\thref{d:norm},\\
  Lemma~\thref{l:norm-is-nonnegative},\\
  Lemma~\thref{l:normalization-by-nonzero}.

\item[Lemma~\thref{l:norm-of-image-of-unit-sphere}] depends directly from:\\
  Definition~\thref{d:norm},\\
  Lemma~\thref{l:norm-preserves-zero},\\
  Lemma~\thref{l:equivalent-definition-of-unit-sphere},\\
  Lemma~\thref{l:norm-of-image-of-unit-vector}.

\item[Definition~\thref{d:operator-norm}] has no direct dependency.

\item[Lemma~\thref{l:equivalent-definition-of-operator-norm}] depends directly from:\\
  Lemma~\thref{l:norm-of-image-of-unit-sphere},\\
  Definition~\thref{d:operator-norm}.

\item[Lemma~\thref{l:operator-norm-is-nonnegative}] depends directly from:\\
  Definition~\thref{d:supremum},\\
  Lemma~\thref{l:norm-is-nonnegative},\\
  Lemma~\thref{l:equivalent-definition-of-operator-norm}.

\item[Definition~\thref{d:bounded-linear-map}] has no direct dependency.

\item[Definition~\thref{d:linear-map-bounded-on-unit-ball}] has no direct dependency.

\item[Definition~\thref{d:linear-map-bounded-on-unit-sphere}] has no direct dependency.

\item[Theorem~\thref{t:continuous-linear-map}] depends directly from:\\
  Definition~\thref{d:supremum},\\
  Lemma~\thref{l:finite-supremum},\\
  Definition~\thref{d:continuity-in-a-point},\\
  Definition~\thref{d:pointwise-continuity},\\
  Definition~\thref{d:lipschitz-continuity},\\
  Lemma~\thref{l:uniform-continuous-is-continuous},\\
  Lemma~\thref{l:lipschitz-continuous-is-uniform-continuous},\\
  Definition~\thref{d:linear-map},\\
  Definition~\thref{d:vector-subtraction},\\
  Lemma~\thref{l:linear-map-preserves-zero},\\
  Definition~\thref{d:norm},\\
  Lemma~\thref{l:norm-preserves-zero},\\
  Lemma~\thref{l:equivalent-definition-of-closed-unit-ball},\\
  Lemma~\thref{l:equivalent-definition-of-unit-sphere},\\
  Definition~\thref{d:operator-norm},\\
  Lemma~\thref{l:equivalent-definition-of-operator-norm},\\
  Lemma~\thref{l:operator-norm-is-nonnegative},\\
  Definition~\thref{d:bounded-linear-map},\\
  Definition~\thref{d:linear-map-bounded-on-unit-ball},\\
  Definition~\thref{d:linear-map-bounded-on-unit-sphere}.

\item[Definition~\thref{d:set-of-continuous-linear-maps}] has no direct dependency.

\item[Lemma~\thref{l:finite-operator-norm-is-continuous}] depends directly from:\\
  Definition~\thref{d:supremum},\\
  Definition~\thref{d:bounded-linear-map},\\
  Definition~\thref{d:linear-map-bounded-on-unit-ball},\\
  Definition~\thref{d:linear-map-bounded-on-unit-sphere},\\
  Theorem~\thref{t:continuous-linear-map},\\
  Definition~\thref{d:set-of-continuous-linear-maps}.

\item[Lemma~\thref{l:linear-isometry-is-continuous}] depends directly from:\\
  Definition~\thref{d:linear-isometry},\\
  Lemma~\thref{l:finite-operator-norm-is-continuous}.

\item[Lemma~\thref{l:identity-map-is-continuous}] depends directly from:\\
  Lemma~\thref{l:identity-map-is-linear-isometry},\\
  Lemma~\thref{l:linear-isometry-is-continuous}.

\item[Theorem~\thref{t:normed-space-of-continuous-linear-maps}] depends directly from:\\
  Definition~\thref{d:supremum},\\
  Lemma~\thref{l:supremum-is-positive-scalar-multiplicative},\\
  Lemma~\thref{l:closed-under-vector-operations-is-subspace},\\
  Definition~\thref{d:inherited-vector-operations},\\
  Definition~\thref{d:norm},\\
  Definition~\thref{d:normed-space},\\
  Lemma~\thref{l:norm-is-nonnegative},\\
  Lemma~\thref{l:zero-on-unit-sphere-is-zero},\\
  Lemma~\thref{l:equivalent-definition-of-operator-norm},\\
  Definition~\thref{d:linear-map-bounded-on-unit-sphere},\\
  Definition~\thref{d:set-of-continuous-linear-maps},\\
  Lemma~\thref{l:finite-operator-norm-is-continuous}.

\item[Lemma~\thref{l:operator-norm-estimation}] depends directly from:\\
  Definition~\thref{d:supremum},\\
  Lemma~\thref{l:linear-map-preserves-zero},\\
  Definition~\thref{d:norm},\\
  Lemma~\thref{l:norm-preserves-zero},\\
  Lemma~\thref{l:norm-is-nonnegative},\\
  Definition~\thref{d:operator-norm},\\
  Theorem~\thref{t:normed-space-of-continuous-linear-maps}.

\item[Lemma~\thref{l:continuous-linear-maps-have-closed-kernel}] depends directly from:\\
  Lemma~\thref{l:singleton-is-closed},\\
  Definition~\thref{d:kernel}.

\item[Lemma~\thref{l:compatibility-of-composition-with-continuity}] depends directly from:\\
  Lemma~\thref{l:composition-of-linear-maps-is-bilinear},\\
  Lemma~\thref{l:equivalent-definition-of-unit-sphere},\\
  Lemma~\thref{l:finite-operator-norm-is-continuous},\\
  Lemma~\thref{l:operator-norm-estimation}.

\item[Lemma~\thref{l:complete-normed-space-of-continuous-linear-maps}] depends directly from:\\
  Definition~\thref{d:convergent-sequence},\\
  Lemma~\thref{l:stationary-sequence-is-convergent},\\
  Definition~\thref{d:cauchy-sequence},\\
  Definition~\thref{d:complete-subset},\\
  Definition~\thref{d:complete-metric-space},\\
  Lemma~\thref{l:compatibility-of-limit-with-continuous-functions},\\
  Definition~\thref{d:space},\\
  Definition~\thref{d:linear-map},\\
  Definition~\thref{d:inherited-vector-operations},\\
  Lemma~\thref{l:linear-map-preserves-zero},\\
  Lemma~\thref{l:linear-map-preserves-linear-combinations},\\
  Definition~\thref{d:norm},\\
  Definition~\thref{d:distance-associated-with-norm},\\
  Lemma~\thref{l:norm-gives-distance},\\
  Lemma~\thref{l:norm-is-continuous},\\
  Lemma~\thref{l:vector-addition-is-continuous},\\
  Lemma~\thref{l:scalar-multiplication-is-continuous},\\
  Definition~\thref{d:operator-norm},\\
  Definition~\thref{d:bounded-linear-map},\\
  Theorem~\thref{t:continuous-linear-map},\\
  Lemma~\thref{l:finite-operator-norm-is-continuous},\\
  Theorem~\thref{t:normed-space-of-continuous-linear-maps}.

\item[Definition~\thref{d:topological-dual}] has no direct dependency.

\item[Definition~\thref{d:dual-norm}] has no direct dependency.

\item[Lemma~\thref{l:topological-dual-is-complete-normed-space}] depends directly from:\\
  Lemma~\thref{l:k-is-normed-space},\\
  Theorem~\thref{t:normed-space-of-continuous-linear-maps},\\
  Lemma~\thref{l:complete-normed-space-of-continuous-linear-maps},\\
  Definition~\thref{d:dual-norm}.

\item[Definition~\thref{d:bra-ket-notation}] has no direct dependency.

\item[Lemma~\thref{l:bra-ket-is-bilinear-map}] depends directly from:\\
  Definition~\thref{d:linear-map},\\
  Definition~\thref{d:bilinear-map},\\
  Lemma~\thref{l:product-is-space},\\
  Definition~\thref{d:inherited-vector-operations},\\
  Lemma~\thref{l:k-is-space},\\
  Lemma~\thref{l:topological-dual-is-complete-normed-space},\\
  Definition~\thref{d:bra-ket-notation}.

\item[Definition~\thref{d:bounded-bilinear-form}] has no direct dependency.

\item[Lemma~\thref{l:representation-for-bounded-bilinear-form}] depends directly from:\\
  Definition~\thref{d:space},\\
  Definition~\thref{d:linear-form},\\
  Definition~\thref{d:bilinear-map},\\
  Definition~\thref{d:bilinear-form},\\
  Definition~\thref{d:set-of-bilinear-forms},\\
  Definition~\thref{d:vector-subtraction},\\
  Definition~\thref{d:inherited-vector-operations},\\
  Lemma~\thref{l:linear-map-preserves-linear-combinations},\\
  Lemma~\thref{l:k-is-space},\\
  Definition~\thref{d:normed-space},\\
  Lemma~\thref{l:norm-is-nonnegative},\\
  Lemma~\thref{l:equivalent-definition-of-unit-sphere},\\
  Definition~\thref{d:bounded-linear-map},\\
  Lemma~\thref{l:finite-operator-norm-is-continuous},\\
  Theorem~\thref{t:normed-space-of-continuous-linear-maps},\\
  Definition~\thref{d:topological-dual},\\
  Definition~\thref{d:dual-norm},\\
  Lemma~\thref{l:topological-dual-is-complete-normed-space},\\
  Definition~\thref{d:bra-ket-notation},\\
  Lemma~\thref{l:bra-ket-is-bilinear-map},\\
  Definition~\thref{d:bounded-bilinear-form}.

\item[Definition~\thref{d:coercive-bilinear-form}] has no direct dependency.

\item[Lemma~\thref{l:coercivity-constant-is-less-than-continuity-constant}] depends directly from:\\
  Definition~\thref{d:norm},\\
  Lemma~\thref{l:norm-is-nonnegative},\\
  Definition~\thref{d:bounded-bilinear-form},\\
  Definition~\thref{d:coercive-bilinear-form}.

\item[Definition~\thref{d:inner-product}] has no direct dependency.

\item[Definition~\thref{d:inner-product-space}] has no direct dependency.

\item[Lemma~\thref{l:inner-product-subspace}] depends directly from:\\
  Definition~\thref{d:subspace},\\
  Definition~\thref{d:inner-product},\\
  Definition~\thref{d:inner-product-space}.

\item[Lemma~\thref{l:inner-product-with-zero-is-zero}] depends directly from:\\
  Definition~\thref{d:space},\\
  Definition~\thref{d:bilinear-map},\\
  Definition~\thref{d:vector-subtraction},\\
  Definition~\thref{d:inner-product}.

\item[Lemma~\thref{l:square-expansion-plus}] depends directly from:\\
  Definition~\thref{d:bilinear-map},\\
  Definition~\thref{d:inner-product}.

\item[Lemma~\thref{l:square-expansion-minus}] depends directly from:\\
  Definition~\thref{d:bilinear-map},\\
  Definition~\thref{d:vector-subtraction},\\
  Definition~\thref{d:inner-product},\\
  Lemma~\thref{l:square-expansion-plus}.

\item[Lemma~\thref{l:parallelogram-identity}] depends directly from:\\
  Lemma~\thref{l:square-expansion-plus},\\
  Lemma~\thref{l:square-expansion-minus}.

\item[Lemma~\thref{l:cauchy-schwarz-inequality}] depends directly from:\\
  Definition~\thref{d:bilinear-map},\\
  Definition~\thref{d:inner-product},\\
  Lemma~\thref{l:square-expansion-plus}.

\item[Definition~\thref{d:square-root-of-inner-square}] has no direct dependency.

\item[Lemma~\thref{l:squared-norm}] depends directly from:\\
  Definition~\thref{d:inner-product},\\
  Definition~\thref{d:square-root-of-inner-square}.

\item[Lemma~\thref{l:cauchy-schwarz-inequality-with-norms}] depends directly from:\\
  Definition~\thref{d:inner-product},\\
  Lemma~\thref{l:cauchy-schwarz-inequality},\\
  Definition~\thref{d:square-root-of-inner-square}.

\item[Lemma~\thref{l:triangle-inequality}] depends directly from:\\
  Definition~\thref{d:inner-product},\\
  Lemma~\thref{l:square-expansion-plus},\\
  Lemma~\thref{l:squared-norm},\\
  Lemma~\thref{l:cauchy-schwarz-inequality-with-norms}.

\item[Lemma~\thref{l:inner-product-gives-norm}] depends directly from:\\
  Definition~\thref{d:norm},\\
  Definition~\thref{d:normed-space},\\
  Definition~\thref{d:inner-product},\\
  Definition~\thref{d:square-root-of-inner-square},\\
  Lemma~\thref{l:triangle-inequality}.

\item[Definition~\thref{d:convex-subset}] has no direct dependency.

\item[Theorem~\thref{t:orth-proj-onto-complete-convex}] depends directly from:\\
  Definition~\thref{d:infimum},\\
  Lemma~\thref{l:finite-infimum-discrete},\\
  Definition~\thref{d:minimum},\\
  Definition~\thref{d:cauchy-sequence},\\
  Definition~\thref{d:complete-subset},\\
  Lemma~\thref{l:compatibility-of-limit-with-continuous-functions},\\
  Definition~\thref{d:space},\\
  Definition~\thref{d:vector-subtraction},\\
  Definition~\thref{d:scalar-division},\\
  Definition~\thref{d:norm},\\
  Lemma~\thref{l:norm-is-nonnegative},\\
  Lemma~\thref{l:norm-is-continuous},\\
  Definition~\thref{d:inner-product-space},\\
  Lemma~\thref{l:parallelogram-identity},\\
  Lemma~\thref{l:squared-norm},\\
  Definition~\thref{d:convex-subset}.

\item[Lemma~\thref{l:characterization-of-orth-proj-onto-convex}] depends directly from:\\
  Definition~\thref{d:infimum},\\
  Definition~\thref{d:minimum},\\
  Lemma~\thref{l:finite-minimum},\\
  Definition~\thref{d:space},\\
  Definition~\thref{d:bilinear-map},\\
  Definition~\thref{d:vector-subtraction},\\
  Lemma~\thref{l:norm-is-nonnegative},\\
  Definition~\thref{d:inner-product},\\
  Definition~\thref{d:inner-product-space},\\
  Lemma~\thref{l:square-expansion-plus},\\
  Lemma~\thref{l:squared-norm},\\
  Definition~\thref{d:convex-subset}.

\item[Lemma~\thref{l:subspace-is-convex}] depends directly from:\\
  Lemma~\thref{l:closed-under-linear-combination-is-subspace},\\
  Definition~\thref{d:convex-subset}.

\item[Theorem~\thref{t:orth-proj-onto-complete-subspace}] depends directly from:\\
  Definition~\thref{d:space},\\
  Definition~\thref{d:subspace},\\
  Theorem~\thref{t:orth-proj-onto-complete-convex},\\
  Lemma~\thref{l:subspace-is-convex}.

\item[Definition~\thref{d:orth-proj-onto-complete-subspace}] depends directly from:\\
  Theorem~\thref{t:orth-proj-onto-complete-subspace}.

\item[Lemma~\thref{l:characterization-of-orth-proj-onto-subspace}] depends directly from:\\
  Definition~\thref{d:space},\\
  Definition~\thref{d:bilinear-map},\\
  Definition~\thref{d:vector-subtraction},\\
  Definition~\thref{d:subspace},\\
  Definition~\thref{d:inner-product},\\
  Lemma~\thref{l:characterization-of-orth-proj-onto-convex},\\
  Lemma~\thref{l:subspace-is-convex}.

\item[Lemma~\thref{l:orth-proj-is-continuous-linear-map}] depends directly from:\\
  Definition~\thref{d:lipschitz-continuity},\\
  Definition~\thref{d:space},\\
  Definition~\thref{d:bilinear-map},\\
  Definition~\thref{d:subspace},\\
  Lemma~\thref{l:linear-map-preserves-linear-combinations},\\
  Definition~\thref{d:norm},\\
  Lemma~\thref{l:norm-is-nonnegative},\\
  Definition~\thref{d:inner-product},\\
  Lemma~\thref{l:squared-norm},\\
  Lemma~\thref{l:cauchy-schwarz-inequality-with-norms},\\
  Theorem~\thref{t:orth-proj-onto-complete-subspace},\\
  Definition~\thref{d:orth-proj-onto-complete-subspace},\\
  Lemma~\thref{l:characterization-of-orth-proj-onto-subspace}.

\item[Definition~\thref{d:orth-compl}] has no direct dependency.

\item[Lemma~\thref{l:trivial-orth-compls}] depends directly from:\\
  Lemma~\thref{l:trivial-subspaces},\\
  Definition~\thref{d:inner-product},\\
  Lemma~\thref{l:inner-product-with-zero-is-zero},\\
  Definition~\thref{d:orth-compl}.

\item[Lemma~\thref{l:orth-compl-is-subspace}] depends directly from:\\
  Definition~\thref{d:bilinear-map},\\
  Lemma~\thref{l:closed-under-linear-combination-is-subspace},\\
  Definition~\thref{d:inner-product},\\
  Lemma~\thref{l:inner-product-with-zero-is-zero},\\
  Definition~\thref{d:orth-compl}.

\item[Lemma~\thref{l:zero-intersection-with-orth-compl}] depends directly from:\\
  Definition~\thref{d:inner-product},\\
  Definition~\thref{d:orth-compl}.

\item[Theorem~\thref{t:direct-sum-with-orth-compl-when-complete}] depends directly from:\\
  Definition~\thref{d:space},\\
  Definition~\thref{d:bilinear-map},\\
  Definition~\thref{d:sum-of-subspaces},\\
  Definition~\thref{d:direct-sum-of-subspaces},\\
  Lemma~\thref{l:equivalent-definition-of-direct-sum},\\
  Definition~\thref{d:inner-product},\\
  Lemma~\thref{l:inner-product-with-zero-is-zero},\\
  Theorem~\thref{t:orth-proj-onto-complete-subspace},\\
  Definition~\thref{d:orth-proj-onto-complete-subspace},\\
  Lemma~\thref{l:characterization-of-orth-proj-onto-subspace},\\
  Definition~\thref{d:orth-compl},\\
  Lemma~\thref{l:zero-intersection-with-orth-compl}.

\item[Lemma~\thref{l:sum-is-orth-sum}] depends directly from:\\
  Definition~\thref{d:space},\\
  Lemma~\thref{l:closed-under-linear-combination-is-subspace},\\
  Definition~\thref{d:linear-span},\\
  Definition~\thref{d:sum-of-subspaces},\\
  Theorem~\thref{t:orth-proj-onto-complete-subspace},\\
  Theorem~\thref{t:direct-sum-with-orth-compl-when-complete}.

\item[Lemma~\thref{l:sum-of-complete-subspace-and-linear-span-is-closed}] depends directly from:\\
  Lemma~\thref{l:closed-is-limit-of-sequences},\\
  Lemma~\thref{l:compatibility-of-limit-with-continuous-functions},\\
  Lemma~\thref{l:closed-under-linear-combination-is-subspace},\\
  Definition~\thref{d:linear-span},\\
  Definition~\thref{d:sum-of-subspaces},\\
  Lemma~\thref{l:direct-sum-with-linear-span},\\
  Lemma~\thref{l:linear-span-is-closed},\\
  Lemma~\thref{l:identity-map-is-continuous},\\
  Theorem~\thref{t:normed-space-of-continuous-linear-maps},\\
  Theorem~\thref{t:orth-proj-onto-complete-subspace},\\
  Lemma~\thref{l:orth-proj-is-continuous-linear-map},\\
  Lemma~\thref{l:zero-intersection-with-orth-compl},\\
  Theorem~\thref{t:direct-sum-with-orth-compl-when-complete}.

\item[Definition~\thref{d:hilbert-space}] depends directly from:\\
  Lemma~\thref{l:norm-gives-distance},\\
  Definition~\thref{d:square-root-of-inner-square},\\
  Lemma~\thref{l:inner-product-gives-norm}.

\item[Lemma~\thref{l:closed-hilbert-subspace}] depends directly from:\\
  Lemma~\thref{l:closed-subset-of-complete-is-complete},\\
  Definition~\thref{d:subspace},\\
  Lemma~\thref{l:inner-product-subspace},\\
  Definition~\thref{d:hilbert-space}.

\item[Theorem~\thref{t:riesz-frechet}] depends directly from:\\
  Definition~\thref{d:supremum},\\
  Definition~\thref{d:space},\\
  Definition~\thref{d:linear-map},\\
  Definition~\thref{d:linear-form},\\
  Definition~\thref{d:bilinear-map},\\
  Definition~\thref{d:vector-subtraction},\\
  Definition~\thref{d:scalar-division},\\
  Lemma~\thref{l:zero-product-property},\\
  Lemma~\thref{l:closed-under-vector-operations-is-subspace},\\
  Lemma~\thref{l:linear-map-preserves-zero},\\
  Lemma~\thref{l:linear-map-preserves-linear-combinations},\\
  Definition~\thref{d:isomorphism},\\
  Definition~\thref{d:kernel},\\
  Lemma~\thref{l:kernel-is-subspace},\\
  Lemma~\thref{l:injective-linear-map-has-zero-kernel},\\
  Definition~\thref{d:norm},\\
  Definition~\thref{d:normed-space},\\
  Lemma~\thref{l:norm-preserves-zero},\\
  Lemma~\thref{l:norm-is-nonnegative},\\
  Lemma~\thref{l:normalization-by-nonzero},\\
  Definition~\thref{d:linear-isometry},\\
  Definition~\thref{d:operator-norm},\\
  Definition~\thref{d:bounded-linear-map},\\
  Theorem~\thref{t:continuous-linear-map},\\
  Lemma~\thref{l:finite-operator-norm-is-continuous},\\
  Lemma~\thref{l:linear-isometry-is-continuous},\\
  Lemma~\thref{l:continuous-linear-maps-have-closed-kernel},\\
  Definition~\thref{d:topological-dual},\\
  Definition~\thref{d:dual-norm},\\
  Lemma~\thref{l:topological-dual-is-complete-normed-space},\\
  Definition~\thref{d:bra-ket-notation},\\
  Lemma~\thref{l:bra-ket-is-bilinear-map},\\
  Definition~\thref{d:inner-product},\\
  Definition~\thref{d:inner-product-space},\\
  Lemma~\thref{l:inner-product-with-zero-is-zero},\\
  Lemma~\thref{l:squared-norm},\\
  Lemma~\thref{l:cauchy-schwarz-inequality-with-norms},\\
  Theorem~\thref{t:orth-proj-onto-complete-subspace},\\
  Definition~\thref{d:orth-proj-onto-complete-subspace},\\
  Definition~\thref{d:orth-compl},\\
  Lemma~\thref{l:trivial-orth-compls},\\
  Lemma~\thref{l:orth-compl-is-subspace},\\
  Theorem~\thref{t:direct-sum-with-orth-compl-when-complete},\\
  Definition~\thref{d:hilbert-space},\\
  Lemma~\thref{l:closed-hilbert-subspace}.

\item[Lemma~\thref{l:compatible-rho-for-lax-milgram}] has no direct dependency.

\item[Theorem~\thref{t:lax-milgram}] depends directly from:\\
  Definition~\thref{d:lipschitz-continuity},\\
  Definition~\thref{d:contraction},\\
  Theorem~\thref{t:fixed-point},\\
  Definition~\thref{d:space},\\
  Definition~\thref{d:bilinear-map},\\
  Lemma~\thref{l:minus-times-yields-opposite-vector},\\
  Definition~\thref{d:vector-subtraction},\\
  Lemma~\thref{l:zero-product-property},\\
  Definition~\thref{d:norm},\\
  Definition~\thref{d:normed-space},\\
  Lemma~\thref{l:norm-preserves-zero},\\
  Lemma~\thref{l:norm-is-nonnegative},\\
  Definition~\thref{d:linear-isometry},\\
  Lemma~\thref{l:identity-map-is-continuous},\\
  Theorem~\thref{t:normed-space-of-continuous-linear-maps},\\
  Lemma~\thref{l:operator-norm-estimation},\\
  Lemma~\thref{l:compatibility-of-composition-with-continuity},\\
  Lemma~\thref{l:topological-dual-is-complete-normed-space},\\
  Definition~\thref{d:bounded-bilinear-form},\\
  Lemma~\thref{l:representation-for-bounded-bilinear-form},\\
  Definition~\thref{d:coercive-bilinear-form},\\
  Lemma~\thref{l:coercivity-constant-is-less-than-continuity-constant},\\
  Definition~\thref{d:inner-product},\\
  Lemma~\thref{l:square-expansion-plus},\\
  Lemma~\thref{l:inner-product-gives-norm},\\
  Definition~\thref{d:orth-compl},\\
  Lemma~\thref{l:trivial-orth-compls},\\
  Definition~\thref{d:hilbert-space},\\
  Theorem~\thref{t:riesz-frechet},\\
  Lemma~\thref{l:compatible-rho-for-lax-milgram}.

\item[Lemma~\thref{l:galerkin-orthogonality}] depends directly from:\\
  Definition~\thref{d:bilinear-map},\\
  Definition~\thref{d:subspace}.

\item[Theorem~\thref{t:lax-milgram-closed-subspace}] depends directly from:\\
  Lemma~\thref{l:closed-hilbert-subspace},\\
  Theorem~\thref{t:lax-milgram}.

\item[Lemma~\thref{l:cea}] depends directly from:\\
  Definition~\thref{d:space},\\
  Definition~\thref{d:bilinear-map},\\
  Definition~\thref{d:vector-subtraction},\\
  Definition~\thref{d:norm},\\
  Lemma~\thref{l:norm-preserves-zero},\\
  Lemma~\thref{l:norm-is-nonnegative},\\
  Definition~\thref{d:bounded-bilinear-form},\\
  Definition~\thref{d:coercive-bilinear-form},\\
  Definition~\thref{d:inner-product},\\
  Lemma~\thref{l:galerkin-orthogonality},\\
  Theorem~\thref{t:lax-milgram-closed-subspace}.

\item[Lemma~\thref{l:finite-dimensional-subspace-in-hilbert-space-is-closed}] depends directly from:\\
  Lemma~\thref{l:closed-subset-of-complete-is-complete},\\
  Definition~\thref{d:finite-dimensional-subspace},\\
  Lemma~\thref{l:linear-span-is-closed},\\
  Lemma~\thref{l:sum-is-orth-sum},\\
  Lemma~\thref{l:sum-of-complete-subspace-and-linear-span-is-closed},\\
  Definition~\thref{d:hilbert-space}.

\item[Theorem~\thref{t:lax-milgram-cea-finite-dimensional-subspace}] depends directly from:\\
  Theorem~\thref{t:lax-milgram},\\
  Theorem~\thref{t:lax-milgram-closed-subspace},\\
  Lemma~\thref{l:cea},\\
  Lemma~\thref{l:finite-dimensional-subspace-in-hilbert-space-is-closed}.

\end{description}

\clearpage
\section{Is a direct dependency of\ldots}
\label{s:is-a-direct-dependency-of}

\begin{description}

\item[Definition~\thref{d:supremum}]is a direct dependency of:\\
  Lemma~\thref{l:supremum-is-positive-scalar-multiplicative},\\
  Lemma~\thref{l:finite-maximum},\\
  Lemma~\thref{l:duality-infimum-supremum},\\
  Lemma~\thref{l:operator-norm-is-nonnegative},\\
  Theorem~\thref{t:continuous-linear-map},\\
  Lemma~\thref{l:finite-operator-norm-is-continuous},\\
  Theorem~\thref{t:normed-space-of-continuous-linear-maps},\\
  Lemma~\thref{l:operator-norm-estimation},\\
  Theorem~\thref{t:riesz-frechet}.

\item[Lemma~\thref{l:finite-supremum}]is a direct dependency of:\\
  Lemma~\thref{l:finite-maximum},\\
  Lemma~\thref{l:finite-infimum},\\
  Theorem~\thref{t:continuous-linear-map}.

\item[Lemma~\thref{l:discrete-lower-accumulation}]is a direct dependency of:\\
  Lemma~\thref{l:discrete-upper-accumulation}.

\item[Lemma~\thref{l:supremum-is-positive-scalar-multiplicative}]is a direct dependency of:\\
  Theorem~\thref{t:normed-space-of-continuous-linear-maps}.

\item[Definition~\thref{d:maximum}]is a direct dependency of:\\
  Lemma~\thref{l:finite-maximum}.

\item[Lemma~\thref{l:finite-maximum}]is a direct dependency of:\\
  Lemma~\thref{l:finite-minimum}.

\item[Definition~\thref{d:infimum}]is a direct dependency of:\\
  Lemma~\thref{l:duality-infimum-supremum},\\
  Theorem~\thref{t:orth-proj-onto-complete-convex},\\
  Lemma~\thref{l:characterization-of-orth-proj-onto-convex}.

\item[Lemma~\thref{l:duality-infimum-supremum}]is a direct dependency of:\\
  Lemma~\thref{l:finite-infimum},\\
  Lemma~\thref{l:finite-minimum}.

\item[Lemma~\thref{l:finite-infimum}]is a direct dependency of:\\
  Lemma~\thref{l:finite-infimum-discrete}.

\item[Lemma~\thref{l:discrete-upper-accumulation}]is a direct dependency of:\\
  Lemma~\thref{l:finite-infimum-discrete}.

\item[Lemma~\thref{l:finite-infimum-discrete}]is a direct dependency of:\\
  Theorem~\thref{t:orth-proj-onto-complete-convex}.

\item[Definition~\thref{d:minimum}]is a direct dependency of:\\
  Lemma~\thref{l:finite-minimum},\\
  Theorem~\thref{t:orth-proj-onto-complete-convex},\\
  Lemma~\thref{l:characterization-of-orth-proj-onto-convex}.

\item[Lemma~\thref{l:finite-minimum}]is a direct dependency of:\\
  Lemma~\thref{l:characterization-of-orth-proj-onto-convex}.

\item[Definition~\thref{d:distance}]is a direct dependency of:\\
  Lemma~\thref{l:iterated-triangle-inequality},\\
  Lemma~\thref{l:singleton-is-closed},\\
  Lemma~\thref{l:variant-of-point-separation},\\
  Lemma~\thref{l:limit-is-unique},\\
  Lemma~\thref{l:closure-is-limit-of-sequences},\\
  Lemma~\thref{l:stationary-sequence-is-convergent},\\
  Lemma~\thref{l:equivalent-definition-of-cauchy-sequence},\\
  Lemma~\thref{l:convergent-sequence-is-cauchy},\\
  Theorem~\thref{t:equivalent-definition-of-lipschitz-continuity},\\
  Lemma~\thref{l:zero-lipschitz-continuous-is-constant},\\
  Theorem~\thref{t:fixed-point},\\
  Lemma~\thref{l:norm-gives-distance},\\
  Lemma~\thref{l:scalar-multiplication-is-continuous}.

\item[Definition~\thref{d:metric-space}]is a direct dependency of:\\
  Lemma~\thref{l:norm-gives-distance}.

\item[Lemma~\thref{l:iterated-triangle-inequality}]is a direct dependency of:\\
  Theorem~\thref{t:fixed-point}.

\item[Definition~\thref{d:closed-ball}]is a direct dependency of:\\
  Lemma~\thref{l:closure-is-limit-of-sequences},\\
  Lemma~\thref{l:equivalent-definition-of-closed-unit-ball}.

\item[Definition~\thref{d:sphere}]is a direct dependency of:\\
  Lemma~\thref{l:equivalent-definition-of-unit-sphere}.

\item[Definition~\thref{d:open-subset}]is a direct dependency of:\\
  Lemma~\thref{l:equivalent-definition-of-closed-subset}.

\item[Definition~\thref{d:closed-subset}]is a direct dependency of:\\
  Lemma~\thref{l:equivalent-definition-of-closed-subset},\\
  Lemma~\thref{l:closed-equals-closure}.

\item[Lemma~\thref{l:equivalent-definition-of-closed-subset}]is a direct dependency of:\\
  Lemma~\thref{l:singleton-is-closed},\\
  Lemma~\thref{l:closed-equals-closure}.

\item[Lemma~\thref{l:singleton-is-closed}]is a direct dependency of:\\
  Lemma~\thref{l:linear-span-is-closed},\\
  Lemma~\thref{l:continuous-linear-maps-have-closed-kernel}.

\item[Definition~\thref{d:closure}]is a direct dependency of:\\
  Lemma~\thref{l:closure-is-limit-of-sequences},\\
  Lemma~\thref{l:closed-equals-closure},\\
  Lemma~\thref{l:closed-is-limit-of-sequences}.

\item[Definition~\thref{d:convergent-sequence}]is a direct dependency of:\\
  Lemma~\thref{l:limit-is-unique},\\
  Lemma~\thref{l:closure-is-limit-of-sequences},\\
  Lemma~\thref{l:stationary-sequence-is-convergent},\\
  Lemma~\thref{l:convergent-sequence-is-cauchy},\\
  Lemma~\thref{l:compatibility-of-limit-with-continuous-functions},\\
  Lemma~\thref{l:convergent-iterated-function-sequence},\\
  Lemma~\thref{l:linear-span-is-closed},\\
  Lemma~\thref{l:complete-normed-space-of-continuous-linear-maps}.

\item[Lemma~\thref{l:variant-of-point-separation}]is a direct dependency of:\\
  Lemma~\thref{l:limit-is-unique}.

\item[Lemma~\thref{l:limit-is-unique}]is a direct dependency of:\\
  Lemma~\thref{l:convergent-sequence-is-cauchy},\\
  Lemma~\thref{l:convergent-iterated-function-sequence},\\
  Lemma~\thref{l:linear-span-is-closed}.

\item[Lemma~\thref{l:closure-is-limit-of-sequences}]is a direct dependency of:\\
  Lemma~\thref{l:closed-is-limit-of-sequences},\\
  Lemma~\thref{l:closed-subset-of-complete-is-complete}.

\item[Lemma~\thref{l:closed-equals-closure}]is a direct dependency of:\\
  Lemma~\thref{l:closed-is-limit-of-sequences},\\
  Lemma~\thref{l:closed-subset-of-complete-is-complete}.

\item[Lemma~\thref{l:closed-is-limit-of-sequences}]is a direct dependency of:\\
  Lemma~\thref{l:linear-span-is-closed},\\
  Lemma~\thref{l:sum-of-complete-subspace-and-linear-span-is-closed}.

\item[Definition~\thref{d:stationary-sequence}]is a direct dependency of:\\
  Lemma~\thref{l:stationary-sequence-is-convergent},\\
  Lemma~\thref{l:stationary-iterated-function-sequence},\\
  Lemma~\thref{l:convergent-iterated-function-sequence},\\
  Theorem~\thref{t:fixed-point}.

\item[Lemma~\thref{l:stationary-sequence-is-convergent}]is a direct dependency of:\\
  Lemma~\thref{l:convergent-iterated-function-sequence},\\
  Theorem~\thref{t:fixed-point},\\
  Lemma~\thref{l:complete-normed-space-of-continuous-linear-maps}.

\item[Definition~\thref{d:cauchy-sequence}]is a direct dependency of:\\
  Lemma~\thref{l:equivalent-definition-of-cauchy-sequence},\\
  Lemma~\thref{l:convergent-sequence-is-cauchy},\\
  Lemma~\thref{l:linear-span-is-closed},\\
  Lemma~\thref{l:complete-normed-space-of-continuous-linear-maps},\\
  Theorem~\thref{t:orth-proj-onto-complete-convex}.

\item[Lemma~\thref{l:equivalent-definition-of-cauchy-sequence}]is a direct dependency of:\\
  Theorem~\thref{t:fixed-point}.

\item[Lemma~\thref{l:convergent-sequence-is-cauchy}]is a direct dependency of:\\
  Lemma~\thref{l:linear-span-is-closed}.

\item[Definition~\thref{d:complete-subset}]is a direct dependency of:\\
  Lemma~\thref{l:closed-subset-of-complete-is-complete},\\
  Theorem~\thref{t:fixed-point},\\
  Lemma~\thref{l:linear-span-is-closed},\\
  Lemma~\thref{l:complete-normed-space-of-continuous-linear-maps},\\
  Theorem~\thref{t:orth-proj-onto-complete-convex}.

\item[Definition~\thref{d:complete-metric-space}]is a direct dependency of:\\
  Lemma~\thref{l:closed-subset-of-complete-is-complete},\\
  Theorem~\thref{t:fixed-point},\\
  Lemma~\thref{l:complete-normed-space-of-continuous-linear-maps}.

\item[Lemma~\thref{l:closed-subset-of-complete-is-complete}]is a direct dependency of:\\
  Lemma~\thref{l:closed-hilbert-subspace},\\
  Lemma~\thref{l:finite-dimensional-subspace-in-hilbert-space-is-closed}.

\item[Definition~\thref{d:continuity-in-a-point}]is a direct dependency of:\\
  Lemma~\thref{l:compatibility-of-limit-with-continuous-functions},\\
  Lemma~\thref{l:uniform-continuous-is-continuous},\\
  Lemma~\thref{l:vector-addition-is-continuous},\\
  Lemma~\thref{l:scalar-multiplication-is-continuous},\\
  Theorem~\thref{t:continuous-linear-map}.

\item[Definition~\thref{d:pointwise-continuity}]is a direct dependency of:\\
  Lemma~\thref{l:uniform-continuous-is-continuous},\\
  Lemma~\thref{l:vector-addition-is-continuous},\\
  Lemma~\thref{l:scalar-multiplication-is-continuous},\\
  Theorem~\thref{t:continuous-linear-map}.

\item[Lemma~\thref{l:compatibility-of-limit-with-continuous-functions}]is a direct dependency of:\\
  Lemma~\thref{l:complete-normed-space-of-continuous-linear-maps},\\
  Theorem~\thref{t:orth-proj-onto-complete-convex},\\
  Lemma~\thref{l:sum-of-complete-subspace-and-linear-span-is-closed}.

\item[Definition~\thref{d:uniform-continuity}]is a direct dependency of:\\
  Lemma~\thref{l:uniform-continuous-is-continuous},\\
  Lemma~\thref{l:lipschitz-continuous-is-uniform-continuous}.

\item[Definition~\thref{d:lipschitz-continuity}]is a direct dependency of:\\
  Theorem~\thref{t:equivalent-definition-of-lipschitz-continuity},\\
  Lemma~\thref{l:zero-lipschitz-continuous-is-constant},\\
  Lemma~\thref{l:lipschitz-continuous-is-uniform-continuous},\\
  Lemma~\thref{l:iterate-lipschitz-continuous-mapping},\\
  Lemma~\thref{l:convergent-iterated-function-sequence},\\
  Theorem~\thref{t:fixed-point},\\
  Lemma~\thref{l:norm-is-one-lipschitz-continuous},\\
  Theorem~\thref{t:continuous-linear-map},\\
  Lemma~\thref{l:orth-proj-is-continuous-linear-map},\\
  Theorem~\thref{t:lax-milgram}.

\item[Theorem~\thref{t:equivalent-definition-of-lipschitz-continuity}]

\item[Definition~\thref{d:contraction}]is a direct dependency of:\\
  Theorem~\thref{t:fixed-point},\\
  Theorem~\thref{t:lax-milgram}.

\item[Lemma~\thref{l:uniform-continuous-is-continuous}]is a direct dependency of:\\
  Lemma~\thref{l:norm-is-continuous},\\
  Theorem~\thref{t:continuous-linear-map}.

\item[Lemma~\thref{l:zero-lipschitz-continuous-is-constant}]is a direct dependency of:\\
  Lemma~\thref{l:lipschitz-continuous-is-uniform-continuous},\\
  Lemma~\thref{l:convergent-iterated-function-sequence},\\
  Theorem~\thref{t:fixed-point}.

\item[Lemma~\thref{l:lipschitz-continuous-is-uniform-continuous}]is a direct dependency of:\\
  Lemma~\thref{l:norm-is-uniformly-continuous},\\
  Theorem~\thref{t:continuous-linear-map}.

\item[Definition~\thref{d:iterated-function-sequence}]is a direct dependency of:\\
  Lemma~\thref{l:stationary-iterated-function-sequence},\\
  Lemma~\thref{l:iterate-lipschitz-continuous-mapping},\\
  Lemma~\thref{l:convergent-iterated-function-sequence}.

\item[Lemma~\thref{l:stationary-iterated-function-sequence}]is a direct dependency of:\\
  Theorem~\thref{t:fixed-point}.

\item[Lemma~\thref{l:iterate-lipschitz-continuous-mapping}]is a direct dependency of:\\
  Theorem~\thref{t:fixed-point}.

\item[Lemma~\thref{l:convergent-iterated-function-sequence}]is a direct dependency of:\\
  Theorem~\thref{t:fixed-point}.

\item[Theorem~\thref{t:fixed-point}]is a direct dependency of:\\
  Theorem~\thref{t:lax-milgram}.

\item[Definition~\thref{d:space}]is a direct dependency of:\\
  Lemma~\thref{l:zero-times-yields-zero},\\
  Lemma~\thref{l:minus-times-yields-opposite-vector},\\
  Lemma~\thref{l:times-zero-yields-zero},\\
  Lemma~\thref{l:zero-product-property},\\
  Lemma~\thref{l:closed-under-vector-operations-is-subspace},\\
  Lemma~\thref{l:closed-under-linear-combination-is-subspace},\\
  Lemma~\thref{l:equivalent-definition-of-direct-sum},\\
  Lemma~\thref{l:product-is-space},\\
  Lemma~\thref{l:space-of-mappings-to-space},\\
  Lemma~\thref{l:linear-map-preserves-linear-combinations},\\
  Lemma~\thref{l:space-of-linear-maps},\\
  Lemma~\thref{l:kernel-is-subspace},\\
  Lemma~\thref{l:injective-linear-map-has-zero-kernel},\\
  Lemma~\thref{l:norm-preserves-zero},\\
  Lemma~\thref{l:norm-is-nonnegative},\\
  Lemma~\thref{l:linear-span-is-closed},\\
  Lemma~\thref{l:zero-on-unit-sphere-is-zero},\\
  Lemma~\thref{l:vector-addition-is-continuous},\\
  Lemma~\thref{l:scalar-multiplication-is-continuous},\\
  Lemma~\thref{l:complete-normed-space-of-continuous-linear-maps},\\
  Lemma~\thref{l:representation-for-bounded-bilinear-form},\\
  Lemma~\thref{l:inner-product-with-zero-is-zero},\\
  Theorem~\thref{t:orth-proj-onto-complete-convex},\\
  Lemma~\thref{l:characterization-of-orth-proj-onto-convex},\\
  Theorem~\thref{t:orth-proj-onto-complete-subspace},\\
  Lemma~\thref{l:characterization-of-orth-proj-onto-subspace},\\
  Lemma~\thref{l:orth-proj-is-continuous-linear-map},\\
  Theorem~\thref{t:direct-sum-with-orth-compl-when-complete},\\
  Lemma~\thref{l:sum-is-orth-sum},\\
  Theorem~\thref{t:riesz-frechet},\\
  Theorem~\thref{t:lax-milgram},\\
  Lemma~\thref{l:cea}.

\item[Definition~\thref{d:set-of-mappings-to-space}]is a direct dependency of:\\
  Lemma~\thref{l:space-of-mappings-to-space}.

\item[Definition~\thref{d:linear-map}]is a direct dependency of:\\
  Lemma~\thref{l:linear-map-preserves-zero},\\
  Lemma~\thref{l:linear-map-preserves-linear-combinations},\\
  Lemma~\thref{l:space-of-linear-maps},\\
  Lemma~\thref{l:identity-map-is-linear-map},\\
  Lemma~\thref{l:composition-of-linear-maps-is-bilinear},\\
  Lemma~\thref{l:injective-linear-map-has-zero-kernel},\\
  Lemma~\thref{l:zero-on-unit-sphere-is-zero},\\
  Lemma~\thref{l:norm-of-image-of-unit-vector},\\
  Theorem~\thref{t:continuous-linear-map},\\
  Lemma~\thref{l:complete-normed-space-of-continuous-linear-maps},\\
  Lemma~\thref{l:bra-ket-is-bilinear-map},\\
  Theorem~\thref{t:riesz-frechet}.

\item[Definition~\thref{d:set-of-linear-maps}]is a direct dependency of:\\
  Lemma~\thref{l:space-of-linear-maps}.

\item[Definition~\thref{d:linear-form}]is a direct dependency of:\\
  Lemma~\thref{l:representation-for-bounded-bilinear-form},\\
  Theorem~\thref{t:riesz-frechet}.

\item[Definition~\thref{d:bilinear-map}]is a direct dependency of:\\
  Lemma~\thref{l:composition-of-linear-maps-is-bilinear},\\
  Lemma~\thref{l:bra-ket-is-bilinear-map},\\
  Lemma~\thref{l:representation-for-bounded-bilinear-form},\\
  Lemma~\thref{l:inner-product-with-zero-is-zero},\\
  Lemma~\thref{l:square-expansion-plus},\\
  Lemma~\thref{l:square-expansion-minus},\\
  Lemma~\thref{l:cauchy-schwarz-inequality},\\
  Lemma~\thref{l:characterization-of-orth-proj-onto-convex},\\
  Lemma~\thref{l:characterization-of-orth-proj-onto-subspace},\\
  Lemma~\thref{l:orth-proj-is-continuous-linear-map},\\
  Lemma~\thref{l:orth-compl-is-subspace},\\
  Theorem~\thref{t:direct-sum-with-orth-compl-when-complete},\\
  Theorem~\thref{t:riesz-frechet},\\
  Theorem~\thref{t:lax-milgram},\\
  Lemma~\thref{l:galerkin-orthogonality},\\
  Lemma~\thref{l:cea}.

\item[Definition~\thref{d:bilinear-form}]is a direct dependency of:\\
  Lemma~\thref{l:representation-for-bounded-bilinear-form}.

\item[Definition~\thref{d:set-of-bilinear-forms}]is a direct dependency of:\\
  Lemma~\thref{l:representation-for-bounded-bilinear-form}.

\item[Lemma~\thref{l:zero-times-yields-zero}]is a direct dependency of:\\
  Lemma~\thref{l:minus-times-yields-opposite-vector},\\
  Lemma~\thref{l:zero-product-property},\\
  Lemma~\thref{l:linear-map-preserves-zero},\\
  Lemma~\thref{l:linear-map-preserves-linear-combinations},\\
  Lemma~\thref{l:norm-preserves-zero},\\
  Lemma~\thref{l:scalar-multiplication-is-continuous}.

\item[Lemma~\thref{l:minus-times-yields-opposite-vector}]is a direct dependency of:\\
  Lemma~\thref{l:closed-under-vector-operations-is-subspace},\\
  Lemma~\thref{l:equivalent-definition-of-direct-sum},\\
  Theorem~\thref{t:lax-milgram}.

\item[Definition~\thref{d:vector-subtraction}]is a direct dependency of:\\
  Lemma~\thref{l:times-zero-yields-zero},\\
  Lemma~\thref{l:equivalent-definition-of-direct-sum},\\
  Lemma~\thref{l:injective-linear-map-has-zero-kernel},\\
  Lemma~\thref{l:norm-gives-distance},\\
  Lemma~\thref{l:linear-span-is-closed},\\
  Theorem~\thref{t:continuous-linear-map},\\
  Lemma~\thref{l:representation-for-bounded-bilinear-form},\\
  Lemma~\thref{l:inner-product-with-zero-is-zero},\\
  Lemma~\thref{l:square-expansion-minus},\\
  Theorem~\thref{t:orth-proj-onto-complete-convex},\\
  Lemma~\thref{l:characterization-of-orth-proj-onto-convex},\\
  Lemma~\thref{l:characterization-of-orth-proj-onto-subspace},\\
  Theorem~\thref{t:riesz-frechet},\\
  Theorem~\thref{t:lax-milgram},\\
  Lemma~\thref{l:cea}.

\item[Definition~\thref{d:scalar-division}]is a direct dependency of:\\
  Lemma~\thref{l:normalization-by-nonzero},\\
  Theorem~\thref{t:orth-proj-onto-complete-convex},\\
  Theorem~\thref{t:riesz-frechet}.

\item[Lemma~\thref{l:times-zero-yields-zero}]is a direct dependency of:\\
  Lemma~\thref{l:zero-product-property},\\
  Lemma~\thref{l:space-of-linear-maps},\\
  Lemma~\thref{l:kernel-is-subspace},\\
  Lemma~\thref{l:zero-on-unit-sphere-is-zero}.

\item[Lemma~\thref{l:zero-product-property}]is a direct dependency of:\\
  Lemma~\thref{l:closed-under-linear-combination-is-subspace},\\
  Lemma~\thref{l:direct-sum-with-linear-span},\\
  Theorem~\thref{t:riesz-frechet},\\
  Theorem~\thref{t:lax-milgram}.

\item[Definition~\thref{d:subspace}]is a direct dependency of:\\
  Lemma~\thref{l:trivial-subspaces},\\
  Lemma~\thref{l:closed-under-vector-operations-is-subspace},\\
  Lemma~\thref{l:inner-product-subspace},\\
  Theorem~\thref{t:orth-proj-onto-complete-subspace},\\
  Lemma~\thref{l:characterization-of-orth-proj-onto-subspace},\\
  Lemma~\thref{l:orth-proj-is-continuous-linear-map},\\
  Lemma~\thref{l:closed-hilbert-subspace},\\
  Lemma~\thref{l:galerkin-orthogonality}.

\item[Lemma~\thref{l:trivial-subspaces}]is a direct dependency of:\\
  Lemma~\thref{l:trivial-orth-compls}.

\item[Lemma~\thref{l:closed-under-vector-operations-is-subspace}]is a direct dependency of:\\
  Lemma~\thref{l:closed-under-linear-combination-is-subspace},\\
  Lemma~\thref{l:equivalent-definition-of-direct-sum},\\
  Lemma~\thref{l:direct-sum-with-linear-span},\\
  Theorem~\thref{t:normed-space-of-continuous-linear-maps},\\
  Theorem~\thref{t:riesz-frechet}.

\item[Lemma~\thref{l:closed-under-linear-combination-is-subspace}]is a direct dependency of:\\
  Lemma~\thref{l:space-of-linear-maps},\\
  Lemma~\thref{l:kernel-is-subspace},\\
  Lemma~\thref{l:subspace-is-convex},\\
  Lemma~\thref{l:orth-compl-is-subspace},\\
  Lemma~\thref{l:sum-is-orth-sum},\\
  Lemma~\thref{l:sum-of-complete-subspace-and-linear-span-is-closed}.

\item[Definition~\thref{d:linear-span}]is a direct dependency of:\\
  Lemma~\thref{l:direct-sum-with-linear-span},\\
  Lemma~\thref{l:linear-span-is-closed},\\
  Lemma~\thref{l:sum-is-orth-sum},\\
  Lemma~\thref{l:sum-of-complete-subspace-and-linear-span-is-closed}.

\item[Definition~\thref{d:sum-of-subspaces}]is a direct dependency of:\\
  Theorem~\thref{t:direct-sum-with-orth-compl-when-complete},\\
  Lemma~\thref{l:sum-is-orth-sum},\\
  Lemma~\thref{l:sum-of-complete-subspace-and-linear-span-is-closed}.

\item[Definition~\thref{d:finite-dimensional-subspace}]is a direct dependency of:\\
  Lemma~\thref{l:finite-dimensional-subspace-in-hilbert-space-is-closed}.

\item[Definition~\thref{d:direct-sum-of-subspaces}]is a direct dependency of:\\
  Lemma~\thref{l:equivalent-definition-of-direct-sum},\\
  Theorem~\thref{t:direct-sum-with-orth-compl-when-complete}.

\item[Lemma~\thref{l:equivalent-definition-of-direct-sum}]is a direct dependency of:\\
  Lemma~\thref{l:direct-sum-with-linear-span},\\
  Theorem~\thref{t:direct-sum-with-orth-compl-when-complete}.

\item[Lemma~\thref{l:direct-sum-with-linear-span}]is a direct dependency of:\\
  Lemma~\thref{l:sum-of-complete-subspace-and-linear-span-is-closed}.

\item[Definition~\thref{d:product-vector-operations}]is a direct dependency of:\\
  Lemma~\thref{l:product-is-space},\\
  Lemma~\thref{l:product-is-normed-space},\\
  Lemma~\thref{l:vector-addition-is-continuous}.

\item[Lemma~\thref{l:product-is-space}]is a direct dependency of:\\
  Lemma~\thref{l:composition-of-linear-maps-is-bilinear},\\
  Lemma~\thref{l:product-is-normed-space},\\
  Lemma~\thref{l:bra-ket-is-bilinear-map}.

\item[Definition~\thref{d:inherited-vector-operations}]is a direct dependency of:\\
  Lemma~\thref{l:space-of-mappings-to-space},\\
  Lemma~\thref{l:space-of-linear-maps},\\
  Lemma~\thref{l:composition-of-linear-maps-is-bilinear},\\
  Theorem~\thref{t:normed-space-of-continuous-linear-maps},\\
  Lemma~\thref{l:complete-normed-space-of-continuous-linear-maps},\\
  Lemma~\thref{l:bra-ket-is-bilinear-map},\\
  Lemma~\thref{l:representation-for-bounded-bilinear-form}.

\item[Lemma~\thref{l:space-of-mappings-to-space}]is a direct dependency of:\\
  Lemma~\thref{l:space-of-linear-maps}.

\item[Lemma~\thref{l:linear-map-preserves-zero}]is a direct dependency of:\\
  Lemma~\thref{l:kernel-is-subspace},\\
  Lemma~\thref{l:injective-linear-map-has-zero-kernel},\\
  Lemma~\thref{l:zero-on-unit-sphere-is-zero},\\
  Theorem~\thref{t:continuous-linear-map},\\
  Lemma~\thref{l:operator-norm-estimation},\\
  Lemma~\thref{l:complete-normed-space-of-continuous-linear-maps},\\
  Theorem~\thref{t:riesz-frechet}.

\item[Lemma~\thref{l:linear-map-preserves-linear-combinations}]is a direct dependency of:\\
  Lemma~\thref{l:composition-of-linear-maps-is-bilinear},\\
  Lemma~\thref{l:kernel-is-subspace},\\
  Lemma~\thref{l:complete-normed-space-of-continuous-linear-maps},\\
  Lemma~\thref{l:representation-for-bounded-bilinear-form},\\
  Lemma~\thref{l:orth-proj-is-continuous-linear-map},\\
  Theorem~\thref{t:riesz-frechet}.

\item[Lemma~\thref{l:space-of-linear-maps}]is a direct dependency of:\\
  Lemma~\thref{l:composition-of-linear-maps-is-bilinear}.

\item[Definition~\thref{d:identity-map}]is a direct dependency of:\\
  Lemma~\thref{l:identity-map-is-linear-map},\\
  Lemma~\thref{l:identity-map-is-linear-isometry}.

\item[Lemma~\thref{l:identity-map-is-linear-map}]is a direct dependency of:\\
  Lemma~\thref{l:identity-map-is-linear-isometry}.

\item[Lemma~\thref{l:composition-of-linear-maps-is-bilinear}]is a direct dependency of:\\
  Lemma~\thref{l:compatibility-of-composition-with-continuity}.

\item[Definition~\thref{d:isomorphism}]is a direct dependency of:\\
  Theorem~\thref{t:riesz-frechet}.

\item[Definition~\thref{d:kernel}]is a direct dependency of:\\
  Lemma~\thref{l:kernel-is-subspace},\\
  Lemma~\thref{l:injective-linear-map-has-zero-kernel},\\
  Lemma~\thref{l:continuous-linear-maps-have-closed-kernel},\\
  Theorem~\thref{t:riesz-frechet}.

\item[Lemma~\thref{l:kernel-is-subspace}]is a direct dependency of:\\
  Theorem~\thref{t:riesz-frechet}.

\item[Lemma~\thref{l:injective-linear-map-has-zero-kernel}]is a direct dependency of:\\
  Theorem~\thref{t:riesz-frechet}.

\item[Lemma~\thref{l:k-is-space}]is a direct dependency of:\\
  Lemma~\thref{l:bra-ket-is-bilinear-map},\\
  Lemma~\thref{l:representation-for-bounded-bilinear-form}.

\item[Definition~\thref{d:norm}]is a direct dependency of:\\
  Lemma~\thref{l:k-is-normed-space},\\
  Lemma~\thref{l:norm-preserves-zero},\\
  Lemma~\thref{l:norm-is-nonnegative},\\
  Lemma~\thref{l:normalization-by-nonzero},\\
  Lemma~\thref{l:norm-gives-distance},\\
  Lemma~\thref{l:linear-span-is-closed},\\
  Lemma~\thref{l:reverse-triangle-inequality},\\
  Lemma~\thref{l:product-is-normed-space},\\
  Lemma~\thref{l:vector-addition-is-continuous},\\
  Lemma~\thref{l:scalar-multiplication-is-continuous},\\
  Lemma~\thref{l:norm-of-image-of-unit-vector},\\
  Lemma~\thref{l:norm-of-image-of-unit-sphere},\\
  Theorem~\thref{t:continuous-linear-map},\\
  Theorem~\thref{t:normed-space-of-continuous-linear-maps},\\
  Lemma~\thref{l:operator-norm-estimation},\\
  Lemma~\thref{l:complete-normed-space-of-continuous-linear-maps},\\
  Lemma~\thref{l:coercivity-constant-is-less-than-continuity-constant},\\
  Lemma~\thref{l:inner-product-gives-norm},\\
  Theorem~\thref{t:orth-proj-onto-complete-convex},\\
  Lemma~\thref{l:orth-proj-is-continuous-linear-map},\\
  Theorem~\thref{t:riesz-frechet},\\
  Theorem~\thref{t:lax-milgram},\\
  Lemma~\thref{l:cea}.

\item[Definition~\thref{d:normed-space}]is a direct dependency of:\\
  Lemma~\thref{l:k-is-normed-space},\\
  Lemma~\thref{l:norm-preserves-zero},\\
  Lemma~\thref{l:norm-is-nonnegative},\\
  Lemma~\thref{l:reverse-triangle-inequality},\\
  Lemma~\thref{l:product-is-normed-space},\\
  Theorem~\thref{t:normed-space-of-continuous-linear-maps},\\
  Lemma~\thref{l:representation-for-bounded-bilinear-form},\\
  Lemma~\thref{l:inner-product-gives-norm},\\
  Theorem~\thref{t:riesz-frechet},\\
  Theorem~\thref{t:lax-milgram}.

\item[Lemma~\thref{l:k-is-normed-space}]is a direct dependency of:\\
  Lemma~\thref{l:topological-dual-is-complete-normed-space}.

\item[Lemma~\thref{l:norm-preserves-zero}]is a direct dependency of:\\
  Lemma~\thref{l:norm-of-image-of-unit-sphere},\\
  Theorem~\thref{t:continuous-linear-map},\\
  Lemma~\thref{l:operator-norm-estimation},\\
  Theorem~\thref{t:riesz-frechet},\\
  Theorem~\thref{t:lax-milgram},\\
  Lemma~\thref{l:cea}.

\item[Lemma~\thref{l:norm-is-nonnegative}]is a direct dependency of:\\
  Lemma~\thref{l:normalization-by-nonzero},\\
  Lemma~\thref{l:norm-gives-distance},\\
  Lemma~\thref{l:linear-span-is-closed},\\
  Lemma~\thref{l:product-is-normed-space},\\
  Lemma~\thref{l:norm-of-image-of-unit-vector},\\
  Lemma~\thref{l:operator-norm-is-nonnegative},\\
  Theorem~\thref{t:normed-space-of-continuous-linear-maps},\\
  Lemma~\thref{l:operator-norm-estimation},\\
  Lemma~\thref{l:representation-for-bounded-bilinear-form},\\
  Lemma~\thref{l:coercivity-constant-is-less-than-continuity-constant},\\
  Theorem~\thref{t:orth-proj-onto-complete-convex},\\
  Lemma~\thref{l:characterization-of-orth-proj-onto-convex},\\
  Lemma~\thref{l:orth-proj-is-continuous-linear-map},\\
  Theorem~\thref{t:riesz-frechet},\\
  Theorem~\thref{t:lax-milgram},\\
  Lemma~\thref{l:cea}.

\item[Lemma~\thref{l:normalization-by-nonzero}]is a direct dependency of:\\
  Lemma~\thref{l:zero-on-unit-sphere-is-zero},\\
  Lemma~\thref{l:norm-of-image-of-unit-vector},\\
  Theorem~\thref{t:riesz-frechet}.

\item[Definition~\thref{d:distance-associated-with-norm}]is a direct dependency of:\\
  Lemma~\thref{l:norm-gives-distance},\\
  Lemma~\thref{l:linear-span-is-closed},\\
  Lemma~\thref{l:equivalent-definition-of-closed-unit-ball},\\
  Lemma~\thref{l:equivalent-definition-of-unit-sphere},\\
  Lemma~\thref{l:norm-is-one-lipschitz-continuous},\\
  Lemma~\thref{l:vector-addition-is-continuous},\\
  Lemma~\thref{l:scalar-multiplication-is-continuous},\\
  Lemma~\thref{l:complete-normed-space-of-continuous-linear-maps}.

\item[Lemma~\thref{l:norm-gives-distance}]is a direct dependency of:\\
  Lemma~\thref{l:equivalent-definition-of-closed-unit-ball},\\
  Lemma~\thref{l:equivalent-definition-of-unit-sphere},\\
  Lemma~\thref{l:complete-normed-space-of-continuous-linear-maps},\\
  Definition~\thref{d:hilbert-space}.

\item[Lemma~\thref{l:linear-span-is-closed}]is a direct dependency of:\\
  Lemma~\thref{l:sum-of-complete-subspace-and-linear-span-is-closed},\\
  Lemma~\thref{l:finite-dimensional-subspace-in-hilbert-space-is-closed}.

\item[Definition~\thref{d:closed-unit-ball}]is a direct dependency of:\\
  Lemma~\thref{l:equivalent-definition-of-closed-unit-ball}.

\item[Lemma~\thref{l:equivalent-definition-of-closed-unit-ball}]is a direct dependency of:\\
  Theorem~\thref{t:continuous-linear-map}.

\item[Definition~\thref{d:unit-sphere}]is a direct dependency of:\\
  Lemma~\thref{l:equivalent-definition-of-unit-sphere}.

\item[Lemma~\thref{l:equivalent-definition-of-unit-sphere}]is a direct dependency of:\\
  Lemma~\thref{l:zero-on-unit-sphere-is-zero},\\
  Lemma~\thref{l:norm-of-image-of-unit-sphere},\\
  Theorem~\thref{t:continuous-linear-map},\\
  Lemma~\thref{l:compatibility-of-composition-with-continuity},\\
  Lemma~\thref{l:representation-for-bounded-bilinear-form}.

\item[Lemma~\thref{l:zero-on-unit-sphere-is-zero}]is a direct dependency of:\\
  Theorem~\thref{t:normed-space-of-continuous-linear-maps}.

\item[Lemma~\thref{l:reverse-triangle-inequality}]is a direct dependency of:\\
  Lemma~\thref{l:norm-is-one-lipschitz-continuous}.

\item[Lemma~\thref{l:norm-is-one-lipschitz-continuous}]is a direct dependency of:\\
  Lemma~\thref{l:norm-is-uniformly-continuous}.

\item[Lemma~\thref{l:norm-is-uniformly-continuous}]is a direct dependency of:\\
  Lemma~\thref{l:norm-is-continuous}.

\item[Lemma~\thref{l:norm-is-continuous}]is a direct dependency of:\\
  Lemma~\thref{l:complete-normed-space-of-continuous-linear-maps},\\
  Theorem~\thref{t:orth-proj-onto-complete-convex}.

\item[Definition~\thref{d:linear-isometry}]is a direct dependency of:\\
  Lemma~\thref{l:identity-map-is-linear-isometry},\\
  Lemma~\thref{l:linear-isometry-is-continuous},\\
  Theorem~\thref{t:riesz-frechet},\\
  Theorem~\thref{t:lax-milgram}.

\item[Lemma~\thref{l:identity-map-is-linear-isometry}]is a direct dependency of:\\
  Lemma~\thref{l:identity-map-is-continuous}.

\item[Definition~\thref{d:product-norm}]is a direct dependency of:\\
  Lemma~\thref{l:product-is-normed-space},\\
  Lemma~\thref{l:vector-addition-is-continuous}.

\item[Lemma~\thref{l:product-is-normed-space}]is a direct dependency of:\\
  Lemma~\thref{l:vector-addition-is-continuous}.

\item[Lemma~\thref{l:vector-addition-is-continuous}]is a direct dependency of:\\
  Lemma~\thref{l:complete-normed-space-of-continuous-linear-maps}.

\item[Lemma~\thref{l:scalar-multiplication-is-continuous}]is a direct dependency of:\\
  Lemma~\thref{l:complete-normed-space-of-continuous-linear-maps}.

\item[Lemma~\thref{l:norm-of-image-of-unit-vector}]is a direct dependency of:\\
  Lemma~\thref{l:norm-of-image-of-unit-sphere}.

\item[Lemma~\thref{l:norm-of-image-of-unit-sphere}]is a direct dependency of:\\
  Lemma~\thref{l:equivalent-definition-of-operator-norm}.

\item[Definition~\thref{d:operator-norm}]is a direct dependency of:\\
  Lemma~\thref{l:equivalent-definition-of-operator-norm},\\
  Theorem~\thref{t:continuous-linear-map},\\
  Lemma~\thref{l:operator-norm-estimation},\\
  Lemma~\thref{l:complete-normed-space-of-continuous-linear-maps},\\
  Theorem~\thref{t:riesz-frechet}.

\item[Lemma~\thref{l:equivalent-definition-of-operator-norm}]is a direct dependency of:\\
  Lemma~\thref{l:operator-norm-is-nonnegative},\\
  Theorem~\thref{t:continuous-linear-map},\\
  Theorem~\thref{t:normed-space-of-continuous-linear-maps}.

\item[Lemma~\thref{l:operator-norm-is-nonnegative}]is a direct dependency of:\\
  Theorem~\thref{t:continuous-linear-map}.

\item[Definition~\thref{d:bounded-linear-map}]is a direct dependency of:\\
  Theorem~\thref{t:continuous-linear-map},\\
  Lemma~\thref{l:finite-operator-norm-is-continuous},\\
  Lemma~\thref{l:complete-normed-space-of-continuous-linear-maps},\\
  Lemma~\thref{l:representation-for-bounded-bilinear-form},\\
  Theorem~\thref{t:riesz-frechet}.

\item[Definition~\thref{d:linear-map-bounded-on-unit-ball}]is a direct dependency of:\\
  Theorem~\thref{t:continuous-linear-map},\\
  Lemma~\thref{l:finite-operator-norm-is-continuous}.

\item[Definition~\thref{d:linear-map-bounded-on-unit-sphere}]is a direct dependency of:\\
  Theorem~\thref{t:continuous-linear-map},\\
  Lemma~\thref{l:finite-operator-norm-is-continuous},\\
  Theorem~\thref{t:normed-space-of-continuous-linear-maps}.

\item[Theorem~\thref{t:continuous-linear-map}]is a direct dependency of:\\
  Lemma~\thref{l:finite-operator-norm-is-continuous},\\
  Lemma~\thref{l:complete-normed-space-of-continuous-linear-maps},\\
  Theorem~\thref{t:riesz-frechet}.

\item[Definition~\thref{d:set-of-continuous-linear-maps}]is a direct dependency of:\\
  Lemma~\thref{l:finite-operator-norm-is-continuous},\\
  Theorem~\thref{t:normed-space-of-continuous-linear-maps}.

\item[Lemma~\thref{l:finite-operator-norm-is-continuous}]is a direct dependency of:\\
  Lemma~\thref{l:linear-isometry-is-continuous},\\
  Theorem~\thref{t:normed-space-of-continuous-linear-maps},\\
  Lemma~\thref{l:compatibility-of-composition-with-continuity},\\
  Lemma~\thref{l:complete-normed-space-of-continuous-linear-maps},\\
  Lemma~\thref{l:representation-for-bounded-bilinear-form},\\
  Theorem~\thref{t:riesz-frechet}.

\item[Lemma~\thref{l:linear-isometry-is-continuous}]is a direct dependency of:\\
  Lemma~\thref{l:identity-map-is-continuous},\\
  Theorem~\thref{t:riesz-frechet}.

\item[Lemma~\thref{l:identity-map-is-continuous}]is a direct dependency of:\\
  Lemma~\thref{l:sum-of-complete-subspace-and-linear-span-is-closed},\\
  Theorem~\thref{t:lax-milgram}.

\item[Theorem~\thref{t:normed-space-of-continuous-linear-maps}]is a direct dependency of:\\
  Lemma~\thref{l:operator-norm-estimation},\\
  Lemma~\thref{l:complete-normed-space-of-continuous-linear-maps},\\
  Lemma~\thref{l:topological-dual-is-complete-normed-space},\\
  Lemma~\thref{l:representation-for-bounded-bilinear-form},\\
  Lemma~\thref{l:sum-of-complete-subspace-and-linear-span-is-closed},\\
  Theorem~\thref{t:lax-milgram}.

\item[Lemma~\thref{l:operator-norm-estimation}]is a direct dependency of:\\
  Lemma~\thref{l:compatibility-of-composition-with-continuity},\\
  Theorem~\thref{t:lax-milgram}.

\item[Lemma~\thref{l:continuous-linear-maps-have-closed-kernel}]is a direct dependency of:\\
  Theorem~\thref{t:riesz-frechet}.

\item[Lemma~\thref{l:compatibility-of-composition-with-continuity}]is a direct dependency of:\\
  Theorem~\thref{t:lax-milgram}.

\item[Lemma~\thref{l:complete-normed-space-of-continuous-linear-maps}]is a direct dependency of:\\
  Lemma~\thref{l:topological-dual-is-complete-normed-space}.

\item[Definition~\thref{d:topological-dual}]is a direct dependency of:\\
  Lemma~\thref{l:representation-for-bounded-bilinear-form},\\
  Theorem~\thref{t:riesz-frechet}.

\item[Definition~\thref{d:dual-norm}]is a direct dependency of:\\
  Lemma~\thref{l:topological-dual-is-complete-normed-space},\\
  Lemma~\thref{l:representation-for-bounded-bilinear-form},\\
  Theorem~\thref{t:riesz-frechet}.

\item[Lemma~\thref{l:topological-dual-is-complete-normed-space}]is a direct dependency of:\\
  Lemma~\thref{l:bra-ket-is-bilinear-map},\\
  Lemma~\thref{l:representation-for-bounded-bilinear-form},\\
  Theorem~\thref{t:riesz-frechet},\\
  Theorem~\thref{t:lax-milgram}.

\item[Definition~\thref{d:bra-ket-notation}]is a direct dependency of:\\
  Lemma~\thref{l:bra-ket-is-bilinear-map},\\
  Lemma~\thref{l:representation-for-bounded-bilinear-form},\\
  Theorem~\thref{t:riesz-frechet}.

\item[Lemma~\thref{l:bra-ket-is-bilinear-map}]is a direct dependency of:\\
  Lemma~\thref{l:representation-for-bounded-bilinear-form},\\
  Theorem~\thref{t:riesz-frechet}.

\item[Definition~\thref{d:bounded-bilinear-form}]is a direct dependency of:\\
  Lemma~\thref{l:representation-for-bounded-bilinear-form},\\
  Lemma~\thref{l:coercivity-constant-is-less-than-continuity-constant},\\
  Theorem~\thref{t:lax-milgram},\\
  Lemma~\thref{l:cea}.

\item[Lemma~\thref{l:representation-for-bounded-bilinear-form}]is a direct dependency of:\\
  Theorem~\thref{t:lax-milgram}.

\item[Definition~\thref{d:coercive-bilinear-form}]is a direct dependency of:\\
  Lemma~\thref{l:coercivity-constant-is-less-than-continuity-constant},\\
  Theorem~\thref{t:lax-milgram},\\
  Lemma~\thref{l:cea}.

\item[Lemma~\thref{l:coercivity-constant-is-less-than-continuity-constant}]is a direct dependency of:\\
  Theorem~\thref{t:lax-milgram}.

\item[Definition~\thref{d:inner-product}]is a direct dependency of:\\
  Lemma~\thref{l:inner-product-subspace},\\
  Lemma~\thref{l:inner-product-with-zero-is-zero},\\
  Lemma~\thref{l:square-expansion-plus},\\
  Lemma~\thref{l:square-expansion-minus},\\
  Lemma~\thref{l:cauchy-schwarz-inequality},\\
  Lemma~\thref{l:squared-norm},\\
  Lemma~\thref{l:cauchy-schwarz-inequality-with-norms},\\
  Lemma~\thref{l:triangle-inequality},\\
  Lemma~\thref{l:inner-product-gives-norm},\\
  Lemma~\thref{l:characterization-of-orth-proj-onto-convex},\\
  Lemma~\thref{l:characterization-of-orth-proj-onto-subspace},\\
  Lemma~\thref{l:orth-proj-is-continuous-linear-map},\\
  Lemma~\thref{l:trivial-orth-compls},\\
  Lemma~\thref{l:orth-compl-is-subspace},\\
  Lemma~\thref{l:zero-intersection-with-orth-compl},\\
  Theorem~\thref{t:direct-sum-with-orth-compl-when-complete},\\
  Theorem~\thref{t:riesz-frechet},\\
  Theorem~\thref{t:lax-milgram},\\
  Lemma~\thref{l:cea}.

\item[Definition~\thref{d:inner-product-space}]is a direct dependency of:\\
  Lemma~\thref{l:inner-product-subspace},\\
  Theorem~\thref{t:orth-proj-onto-complete-convex},\\
  Lemma~\thref{l:characterization-of-orth-proj-onto-convex},\\
  Theorem~\thref{t:riesz-frechet}.

\item[Lemma~\thref{l:inner-product-subspace}]is a direct dependency of:\\
  Lemma~\thref{l:closed-hilbert-subspace}.

\item[Lemma~\thref{l:inner-product-with-zero-is-zero}]is a direct dependency of:\\
  Lemma~\thref{l:trivial-orth-compls},\\
  Lemma~\thref{l:orth-compl-is-subspace},\\
  Theorem~\thref{t:direct-sum-with-orth-compl-when-complete},\\
  Theorem~\thref{t:riesz-frechet}.

\item[Lemma~\thref{l:square-expansion-plus}]is a direct dependency of:\\
  Lemma~\thref{l:square-expansion-minus},\\
  Lemma~\thref{l:parallelogram-identity},\\
  Lemma~\thref{l:cauchy-schwarz-inequality},\\
  Lemma~\thref{l:triangle-inequality},\\
  Lemma~\thref{l:characterization-of-orth-proj-onto-convex},\\
  Theorem~\thref{t:lax-milgram}.

\item[Lemma~\thref{l:square-expansion-minus}]is a direct dependency of:\\
  Lemma~\thref{l:parallelogram-identity}.

\item[Lemma~\thref{l:parallelogram-identity}]is a direct dependency of:\\
  Theorem~\thref{t:orth-proj-onto-complete-convex}.

\item[Lemma~\thref{l:cauchy-schwarz-inequality}]is a direct dependency of:\\
  Lemma~\thref{l:cauchy-schwarz-inequality-with-norms}.

\item[Definition~\thref{d:square-root-of-inner-square}]is a direct dependency of:\\
  Lemma~\thref{l:squared-norm},\\
  Lemma~\thref{l:cauchy-schwarz-inequality-with-norms},\\
  Lemma~\thref{l:inner-product-gives-norm},\\
  Definition~\thref{d:hilbert-space}.

\item[Lemma~\thref{l:squared-norm}]is a direct dependency of:\\
  Lemma~\thref{l:triangle-inequality},\\
  Theorem~\thref{t:orth-proj-onto-complete-convex},\\
  Lemma~\thref{l:characterization-of-orth-proj-onto-convex},\\
  Lemma~\thref{l:orth-proj-is-continuous-linear-map},\\
  Theorem~\thref{t:riesz-frechet}.

\item[Lemma~\thref{l:cauchy-schwarz-inequality-with-norms}]is a direct dependency of:\\
  Lemma~\thref{l:triangle-inequality},\\
  Lemma~\thref{l:orth-proj-is-continuous-linear-map},\\
  Theorem~\thref{t:riesz-frechet}.

\item[Lemma~\thref{l:triangle-inequality}]is a direct dependency of:\\
  Lemma~\thref{l:inner-product-gives-norm}.

\item[Lemma~\thref{l:inner-product-gives-norm}]is a direct dependency of:\\
  Definition~\thref{d:hilbert-space},\\
  Theorem~\thref{t:lax-milgram}.

\item[Definition~\thref{d:convex-subset}]is a direct dependency of:\\
  Theorem~\thref{t:orth-proj-onto-complete-convex},\\
  Lemma~\thref{l:characterization-of-orth-proj-onto-convex},\\
  Lemma~\thref{l:subspace-is-convex}.

\item[Theorem~\thref{t:orth-proj-onto-complete-convex}]is a direct dependency of:\\
  Theorem~\thref{t:orth-proj-onto-complete-subspace}.

\item[Lemma~\thref{l:characterization-of-orth-proj-onto-convex}]is a direct dependency of:\\
  Lemma~\thref{l:characterization-of-orth-proj-onto-subspace}.

\item[Lemma~\thref{l:subspace-is-convex}]is a direct dependency of:\\
  Theorem~\thref{t:orth-proj-onto-complete-subspace},\\
  Lemma~\thref{l:characterization-of-orth-proj-onto-subspace}.

\item[Theorem~\thref{t:orth-proj-onto-complete-subspace}]is a direct dependency of:\\
  Definition~\thref{d:orth-proj-onto-complete-subspace},\\
  Lemma~\thref{l:orth-proj-is-continuous-linear-map},\\
  Theorem~\thref{t:direct-sum-with-orth-compl-when-complete},\\
  Lemma~\thref{l:sum-is-orth-sum},\\
  Lemma~\thref{l:sum-of-complete-subspace-and-linear-span-is-closed},\\
  Theorem~\thref{t:riesz-frechet}.

\item[Definition~\thref{d:orth-proj-onto-complete-subspace}]is a direct dependency of:\\
  Lemma~\thref{l:orth-proj-is-continuous-linear-map},\\
  Theorem~\thref{t:direct-sum-with-orth-compl-when-complete},\\
  Theorem~\thref{t:riesz-frechet}.

\item[Lemma~\thref{l:characterization-of-orth-proj-onto-subspace}]is a direct dependency of:\\
  Lemma~\thref{l:orth-proj-is-continuous-linear-map},\\
  Theorem~\thref{t:direct-sum-with-orth-compl-when-complete}.

\item[Lemma~\thref{l:orth-proj-is-continuous-linear-map}]is a direct dependency of:\\
  Lemma~\thref{l:sum-of-complete-subspace-and-linear-span-is-closed}.

\item[Definition~\thref{d:orth-compl}]is a direct dependency of:\\
  Lemma~\thref{l:trivial-orth-compls},\\
  Lemma~\thref{l:orth-compl-is-subspace},\\
  Lemma~\thref{l:zero-intersection-with-orth-compl},\\
  Theorem~\thref{t:direct-sum-with-orth-compl-when-complete},\\
  Theorem~\thref{t:riesz-frechet},\\
  Theorem~\thref{t:lax-milgram}.

\item[Lemma~\thref{l:trivial-orth-compls}]is a direct dependency of:\\
  Theorem~\thref{t:riesz-frechet},\\
  Theorem~\thref{t:lax-milgram}.

\item[Lemma~\thref{l:orth-compl-is-subspace}]is a direct dependency of:\\
  Theorem~\thref{t:riesz-frechet}.

\item[Lemma~\thref{l:zero-intersection-with-orth-compl}]is a direct dependency of:\\
  Theorem~\thref{t:direct-sum-with-orth-compl-when-complete},\\
  Lemma~\thref{l:sum-of-complete-subspace-and-linear-span-is-closed}.

\item[Theorem~\thref{t:direct-sum-with-orth-compl-when-complete}]is a direct dependency of:\\
  Lemma~\thref{l:sum-is-orth-sum},\\
  Lemma~\thref{l:sum-of-complete-subspace-and-linear-span-is-closed},\\
  Theorem~\thref{t:riesz-frechet}.

\item[Lemma~\thref{l:sum-is-orth-sum}]is a direct dependency of:\\
  Lemma~\thref{l:finite-dimensional-subspace-in-hilbert-space-is-closed}.

\item[Lemma~\thref{l:sum-of-complete-subspace-and-linear-span-is-closed}]is a direct dependency of:\\
  Lemma~\thref{l:finite-dimensional-subspace-in-hilbert-space-is-closed}.

\item[Definition~\thref{d:hilbert-space}]is a direct dependency of:\\
  Lemma~\thref{l:closed-hilbert-subspace},\\
  Theorem~\thref{t:riesz-frechet},\\
  Theorem~\thref{t:lax-milgram},\\
  Lemma~\thref{l:finite-dimensional-subspace-in-hilbert-space-is-closed}.

\item[Lemma~\thref{l:closed-hilbert-subspace}]is a direct dependency of:\\
  Theorem~\thref{t:riesz-frechet},\\
  Theorem~\thref{t:lax-milgram-closed-subspace}.

\item[Theorem~\thref{t:riesz-frechet}]is a direct dependency of:\\
  Theorem~\thref{t:lax-milgram}.

\item[Lemma~\thref{l:compatible-rho-for-lax-milgram}]is a direct dependency of:\\
  Theorem~\thref{t:lax-milgram}.

\item[Theorem~\thref{t:lax-milgram}]is a direct dependency of:\\
  Theorem~\thref{t:lax-milgram-closed-subspace},\\
  Theorem~\thref{t:lax-milgram-cea-finite-dimensional-subspace}.

\item[Lemma~\thref{l:galerkin-orthogonality}]is a direct dependency of:\\
  Lemma~\thref{l:cea}.

\item[Theorem~\thref{t:lax-milgram-closed-subspace}]is a direct dependency of:\\
  Lemma~\thref{l:cea},\\
  Theorem~\thref{t:lax-milgram-cea-finite-dimensional-subspace}.

\item[Lemma~\thref{l:cea}]is a direct dependency of:\\
  Theorem~\thref{t:lax-milgram-cea-finite-dimensional-subspace}.

\item[Lemma~\thref{l:finite-dimensional-subspace-in-hilbert-space-is-closed}]is a direct dependency of:\\
  Theorem~\thref{t:lax-milgram-cea-finite-dimensional-subspace}.

\item[Theorem~\thref{t:lax-milgram-cea-finite-dimensional-subspace}]

\end{description}

\end{document}